\chardef\@x10\chardef\@xv60
\def\tcitime{
\def\@time{%
  \@minute\time\@hour\@minute\divide\@hour\@xv
  \ifnum\@hour<\@x 0\fi\the\@hour:%
  \multiply\@hour\@xv\advance\@minute-\@hour
  \ifnum\@minute<\@x 0\fi\the\@minute
  }}%
\def\QCTOpt[#1]#2{%
  \def\QCTOptB{#1}
  \def\QCTOptA{#2}
}
\def\QCTNOpt#1{%
  \def\QCTOptA{#1}
  \let\QCTOptB\empty
}
\def\Qct{%
  \@ifnextchar[{%
    \QCTOpt}{\QCTNOpt}
}
\def\QCBOpt[#1]#2{%
  \def\QCBOptB{#1}
  \def\QCBOptA{#2}
}
\def\QCBNOpt#1{%
  \def\QCBOptA{#1}
  \let\QCBOptB\empty
}
\def\Qcb{%
  \@ifnextchar[{%
    \QCBOpt}{\QCBNOpt}
}
\def\PrepCapArgs{%
  \ifx\QCBOptA\empty
    \ifx\QCTOptA\empty
      {}%
    \else
      \ifx\QCTOptB\empty
        {\QCTOptA}%
      \else
        [\QCTOptB]{\QCTOptA}%
      \fi
    \fi
  \else
    \ifx\QCBOptA\empty
      {}%
    \else
      \ifx\QCBOptB\empty
        {\QCBOptA}%
      \else
        [\QCBOptB]{\QCBOptA}%
      \fi
    \fi
  \fi
}
\def\GRAPHICSPS#1{%
 \ifcase\GRAPHICSTYPE
   \special{ps: #1}%
 \or
   \special{language "PS", include "#1"}%
 \fi
}%
\def\graffile#1#2#3#4{%
    \leavevmode
    \raise -#4 \BOXTHEFRAME{%
        \hbox to #2{\raise #3\hbox to #2{\null #1\hfil}}}%
}%
\def\draftbox#1#2#3#4{%
 \leavevmode\raise -#4 \hbox{%
  \frame{\rlap{\protect\tiny #1}\hbox to #2%
   {\vrule height#3 width\z@ depth\z@\hfil}%
  }%
 }%
}%
\newif\ifwasdraft
\def\GRAPHIC#1#2#3#4#5{%
 \ifnum\draft=\@ne\draftbox{#2}{#3}{#4}{#5}%
  \else\graffile{#1}{#3}{#4}{#5}%
  \fi
 }%
\def\addtoLaTeXparams#1{%
    \edef\LaTeXparams{\LaTeXparams #1}}%
\newif\ifBoxFrame \BoxFramefalse
\newif\ifOverFrame \OverFramefalse
\newif\ifUnderFrame \UnderFramefalse
\def\BOXTHEFRAME#1{%
   \hbox{%
      \ifBoxFrame
         \frame{#1}%
      \else
         {#1}%
      \fi
   }%
}
\def\doFRAMEparams#1{\BoxFramefalse\OverFramefalse\UnderFramefalse\readFRAMEparams#1\end}%
\def\readFRAMEparams#1{%
 \ifx#1\end%
  \let\next=\relax
  \else
  \ifx#1i\dispkind=\z@\fi
  \ifx#1d\dispkind=\@ne\fi
  \ifx#1f\dispkind=\tw@\fi
  \ifx#1t\addtoLaTeXparams{t}\fi
  \ifx#1b\addtoLaTeXparams{b}\fi
  \ifx#1p\addtoLaTeXparams{p}\fi
  \ifx#1h\addtoLaTeXparams{h}\fi
  \ifx#1X\BoxFrametrue\fi
  \ifx#1O\OverFrametrue\fi
  \ifx#1U\UnderFrametrue\fi
  \ifx#1w
    \ifnum\draft=1\wasdrafttrue\else\wasdraftfalse\fi
    \draft=\@ne
  \fi
  \let\next=\readFRAMEparams
  \fi
 \next
 }%
\def\IFRAME#1#2#3#4#5#6{%
      \bgroup
      \let\QCTOptA\empty
      \let\QCTOptB\empty
      \let\QCBOptA\empty
      \let\QCBOptB\empty
      #6%
      \parindent=0pt%
      \leftskip=0pt
      \rightskip=0pt
      \setbox0 = \hbox{\QCBOptA}%
      \@tempdima = #1\relax
      \ifOverFrame
          \typeout{This is not implemented yet}%
          \show\HELP
      \else
         \ifdim\wd0>\@tempdima
            \advance\@tempdima by \@tempdima
            \ifdim\wd0 >\@tempdima
               \textwidth=\@tempdima
               \setbox1 =\vbox{%
                  \noindent\hbox to \@tempdima{\hfill\GRAPHIC{#5}{#4}{#1}{#2}{#3}\hfill}\\%
                  \noindent\hbox to \@tempdima{\parbox[b]{\@tempdima}{\QCBOptA}}%
               }%
               \wd1=\@tempdima
            \else
               \textwidth=\wd0
               \setbox1 =\vbox{%
                 \noindent\hbox to \wd0{\hfill\GRAPHIC{#5}{#4}{#1}{#2}{#3}\hfill}\\%
                 \noindent\hbox{\QCBOptA}%
               }%
               \wd1=\wd0
            \fi
         \else
            \ifdim\wd0>0pt
              \hsize=\@tempdima
              \setbox1 =\vbox{%
                \unskip\GRAPHIC{#5}{#4}{#1}{#2}{0pt}%
                \break
                \unskip\hbox to \@tempdima{\hfill \QCBOptA\hfill}%
              }%
              \wd1=\@tempdima
           \else
              \hsize=\@tempdima
              \setbox1 =\vbox{%
                \unskip\GRAPHIC{#5}{#4}{#1}{#2}{0pt}%
              }%
              \wd1=\@tempdima
           \fi
         \fi
         \@tempdimb=\ht1
         \advance\@tempdimb by \dp1
         \advance\@tempdimb by -#2%
         \advance\@tempdimb by #3%
         \leavevmode
         \raise -\@tempdimb \hbox{\box1}%
      \fi
      \egroup%
}%
\def\DFRAME#1#2#3#4#5{%
 \begin{center}
     \let\QCTOptA\empty
     \let\QCTOptB\empty
     \let\QCBOptA\empty
     \let\QCBOptB\empty
     \ifOverFrame
        #5\QCTOptA\par
     \fi
     \GRAPHIC{#4}{#3}{#1}{#2}{\z@}
     \ifUnderFrame
        \nobreak\par #5\QCBOptA
     \fi
 \end{center}%
 }%
\def\FFRAME#1#2#3#4#5#6#7{%
 \begin{figure}[#1]%
  \let\QCTOptA\empty
  \let\QCTOptB\empty
  \let\QCBOptA\empty
  \let\QCBOptB\empty
  \ifOverFrame
    #4
    \ifx\QCTOptA\empty
    \else
      \ifx\QCTOptB\empty
        \caption{\QCTOptA}%
      \else
        \caption[\QCTOptB]{\QCTOptA}%
      \fi
    \fi
    \ifUnderFrame\else
      \label{#5}%
    \fi
  \else
    \UnderFrametrue%
  \fi
  \begin{center}\GRAPHIC{#7}{#6}{#2}{#3}{\z@}\end{center}%
  \ifUnderFrame
    #4
    \ifx\QCBOptA\empty
      \caption{}%
    \else
      \ifx\QCBOptB\empty
        \caption{\QCBOptA}%
      \else
        \caption[\QCBOptB]{\QCBOptA}%
      \fi
    \fi
    \label{#5}%
  \fi
  \end{figure}%
 }%
\def\makeactives{
  \catcode`\"=\active
  \catcode`\;=\active
  \catcode`\:=\active
  \catcode`\'=\active
  \catcode`\~=\active
}
   \gdef\activesoff{%
      \def"{\string"}
      \def;{\string;}
      \def:{\string:}
      \def'{\string'}
      \def~{\string~}
    }
\def\FRAME#1#2#3#4#5#6#7#8{%
 \bgroup
 \@ifundefined{bbl@deactivate}{}{\activesoff}
 \ifnum\draft=\@ne
   \wasdrafttrue
 \else
   \wasdraftfalse%
 \fi
 \def\LaTeXparams{}%
 \dispkind=\z@
 \def\LaTeXparams{}%
 \doFRAMEparams{#1}%
 \ifnum\dispkind=\z@\IFRAME{#2}{#3}{#4}{#7}{#8}{#5}\else
  \ifnum\dispkind=\@ne\DFRAME{#2}{#3}{#7}{#8}{#5}\else
   \ifnum\dispkind=\tw@
    \edef\@tempa{\noexpand\FFRAME{\LaTeXparams}}%
    \@tempa{#2}{#3}{#5}{#6}{#7}{#8}%
    \fi
   \fi
  \fi
  \ifwasdraft\draft=1\else\draft=0\fi{}%
  \egroup
 }%
\def\TEXUX#1{"texux"}
\long\def\QQQ#1#2{%
     \long\expandafter\def\csname#1\endcsname{#2}}%
\long\def\QQA#1#2{}%
\def\QTR#1#2{{\csname#1\endcsname #2}}
\def\EXPAND#1[#2]#3{}%
\def\NOEXPAND#1[#2]#3{}%
\def\LaTeXparent#1{}%
\def\ChildStyles#1{}%
\def\ChildDefaults#1{}%
\def\QTagDef#1#2#3{}%
\def\QQfnmark#1{\footnotemark}
\def\makeatletter\input gnuindex.sty\makeatother\makeindex{\makeatletter\input gnuindex.sty\makeatother\makeindex}%
\def\initial#1{\bigbreak{\raggedright\large\bf #1}\kern 2\p@\penalty3000}}%
 \def\abstract{%
  \if@twocolumn
   \section*{Abstract (Not appropriate in this style!)}%
   \else \small
   \begin{center}{\bf Abstract\vspace{-.5em}\vspace{\z@}}\end{center}%
   \quotation
   \fi
  }%
   \def\registered{\relax\ifmmode{}\r@gistered
                    \else$\m@th\r@gistered$\fi}%
 \def\r@gistered{^{\ooalign
  {\hfil\raise.07ex\hbox{$\scriptstyle\rm\text{R}$}\hfil\crcr
  \mathhexbox20D}}}}{}%
\newdimen\theight
\def\Column{%
 \vadjust{\setbox\z@=\hbox{\scriptsize\quad\quad tcol}%
  \theight=\ht\z@\advance\theight by \dp\z@\advance\theight by \lineskip
  \kern -\theight \vbox to \theight{%
   \rightline{\rlap{\box\z@}}%
   \vss
   }%
  }%
 }%
\def\qed{%
 \ifhmode\unskip\nobreak\fi\ifmmode\ifinner\else\hskip5\p@\fi\fi
 \hbox{\hskip5\p@\vrule width4\p@ height6\p@ depth1.5\p@\hskip\p@}%
 }%
\def\miss{\hbox{\vrule height2\p@ width 2\p@ depth\z@}}%
\def\tcol#1{{\baselineskip=6\p@ \vcenter{#1}} \Column}  %
\def\newfmtname{LaTeX2e}
\def\chkcompat{%
   \if@compatibility
   \else
     \usepackage{latexsym}
   \fi
}
  \DeclareOldFontCommand{\rm}{\normalfont\rmfamily}{\mathrm}
  \DeclareOldFontCommand{\sf}{\normalfont\sffamily}{\mathsf}
  \DeclareOldFontCommand{\tt}{\normalfont\ttfamily}{\mathtt}
  \DeclareOldFontCommand{\bf}{\normalfont\bfseries}{\mathbf}
  \DeclareOldFontCommand{\it}{\normalfont\itshape}{\mathit}
  \DeclareOldFontCommand{\sl}{\normalfont\slshape}{\@nomath\sl}
  \DeclareOldFontCommand{\sc}{\normalfont\scshape}{\@nomath\sc}
\def\alpha{{\Greekmath 010B}}%
\def\beta{{\Greekmath 010C}}%
\def\gamma{{\Greekmath 010D}}%
\def\delta{{\Greekmath 010E}}%
\def\epsilon{{\Greekmath 010F}}%
\def\zeta{{\Greekmath 0110}}%
\def\eta{{\Greekmath 0111}}%
\def\theta{{\Greekmath 0112}}%
\def\iota{{\Greekmath 0113}}%
\def\kappa{{\Greekmath 0114}}%
\def\lambda{{\Greekmath 0115}}%
\def\mu{{\Greekmath 0116}}%
\def\nu{{\Greekmath 0117}}%
\def\xi{{\Greekmath 0118}}%
\def\pi{{\Greekmath 0119}}%
\def\rho{{\Greekmath 011A}}%
\def\sigma{{\Greekmath 011B}}%
\def\tau{{\Greekmath 011C}}%
\def\upsilon{{\Greekmath 011D}}%
\def\phi{{\Greekmath 011E}}%
\def\chi{{\Greekmath 011F}}%
\def\psi{{\Greekmath 0120}}%
\def\omega{{\Greekmath 0121}}%
\def\varepsilon{{\Greekmath 0122}}%
\def\vartheta{{\Greekmath 0123}}%
\def\varpi{{\Greekmath 0124}}%
\def\varrho{{\Greekmath 0125}}%
\def\varsigma{{\Greekmath 0126}}%
\def\varphi{{\Greekmath 0127}}%
\def\nabla{{\Greekmath 0272}}
\def\FindBoldGroup{%
   {\setbox0=\hbox{$\mathbf{x\global\edef\theboldgroup{\the\mathgroup}}$}}%
}
\def\Greekmath#1#2#3#4{%
    \if@compatibility
        \ifnum\mathgroup=\symbold
           \mathchoice{\mbox{\boldmath$\displaystyle\mathchar"#1#2#3#4$}}%
                      {\mbox{\boldmath$\textstyle\mathchar"#1#2#3#4$}}%
                      {\mbox{\boldmath$\scriptstyle\mathchar"#1#2#3#4$}}%
                      {\mbox{\boldmath$\scriptscriptstyle\mathchar"#1#2#3#4$}}%
        \else
           \mathchar"#1#2#3#4%
        \fi
    \else
        \FindBoldGroup
        \ifnum\mathgroup=\theboldgroup 
           \mathchoice{\mbox{\boldmath$\displaystyle\mathchar"#1#2#3#4$}}%
                      {\mbox{\boldmath$\textstyle\mathchar"#1#2#3#4$}}%
                      {\mbox{\boldmath$\scriptstyle\mathchar"#1#2#3#4$}}%
                      {\mbox{\boldmath$\scriptscriptstyle\mathchar"#1#2#3#4$}}%
        \else
           \mathchar"#1#2#3#4%
        \fi     	
	  \fi}
\newif\ifGreekBold  \GreekBoldfalse
\let\SAVEPBF=\pbf
\def\pbf{\GreekBoldtrue\SAVEPBF}%
  \newcounter{equationnumber}
  \def\mathletters{%
     \addtocounter{equation}{1}
     \edef\@currentlabel{\theequation}%
     \setcounter{equationnumber}{\c@equation}
     \setcounter{equation}{0}%
     \edef\theequation{\@currentlabel\noexpand\alph{equation}}%
  }
    \def\BibTeX{{\rm B\kern-.05em{\sc i\kern-.025em b}\kern-.08em
                 T\kern-.1667em\lower.7ex\hbox{E}\kern-.125emX}}}{}%
\def\AmS{{\protect\usefont{OMS}{cmsy}{m}{n}%
                A\kern-.1667em\lower.5ex\hbox{M}\kern-.125emS}}}{}%
\let\DOTSI\relax
\def\RIfM@{\relax\ifmmode}%
\def\FN@{\futurelet\next}%
\def\iint{\DOTSI\intno@\tw@\FN@\ints@}%
\def\iiint{\DOTSI\intno@\thr@@\FN@\ints@}%
\def\iiiint{\DOTSI\intno@4 \FN@\ints@}%
\def\idotsint{\DOTSI\intno@\z@\FN@\ints@}%
\def\ints@{\findlimits@\ints@@}%
\newif\iflimtoken@
\newif\iflimits@
\def\findlimits@{\limtoken@true\ifx\next\limits\limits@true
 \else\ifx\next\nolimits\limits@false\else
 \limtoken@false\ifx\ilimits@\nolimits\limits@false\else
 \ifinner\limits@false\else\limits@true\fi\fi\fi\fi}%
\def\multint@{\int\ifnum\intno@=\z@\intdots@                          
 \else\intkern@\fi                                                    
 \ifnum\intno@>\tw@\int\intkern@\fi                                   
 \ifnum\intno@>\thr@@\int\intkern@\fi                                 
 \int}
\def\multintlimits@{\intop\ifnum\intno@=\z@\intdots@\else\intkern@\fi
 \ifnum\intno@>\tw@\intop\intkern@\fi
 \ifnum\intno@>\thr@@\intop\intkern@\fi\intop}%
\def\intic@{%
    \mathchoice{\hskip.5em}{\hskip.4em}{\hskip.4em}{\hskip.4em}}%
\def\negintic@{\mathchoice
 {\hskip-.5em}{\hskip-.4em}{\hskip-.4em}{\hskip-.4em}}%
\def\ints@@{\iflimtoken@                                              
 \def\ints@@@{\iflimits@\negintic@
   \mathop{\intic@\multintlimits@}\limits                             
  \else\multint@\nolimits\fi                                          
  \eat@}
 \else                                                                
 \def\ints@@@{\iflimits@\negintic@
  \mathop{\intic@\multintlimits@}\limits\else
  \multint@\nolimits\fi}\fi\ints@@@}%
\def\intkern@{\mathchoice{\!\!\!}{\!\!}{\!\!}{\!\!}}%
\def\plaincdots@{\mathinner{\cdotp\cdotp\cdotp}}%
\def\intdots@{\mathchoice{\plaincdots@}%
 {{\cdotp}\mkern1.5mu{\cdotp}\mkern1.5mu{\cdotp}}%
 {{\cdotp}\mkern1mu{\cdotp}\mkern1mu{\cdotp}}%
 {{\cdotp}\mkern1mu{\cdotp}\mkern1mu{\cdotp}}}%
\def\RIfM@{\relax\protect\ifmmode}
\def\text{\RIfM@\expandafter\text@\else\expandafter\mbox\fi}
\let\nfss@text\text
\def\text@#1{\mathchoice
   {\textdef@\displaystyle\f@size{#1}}%
   {\textdef@\textstyle\tf@size{\firstchoice@false #1}}%
   {\textdef@\textstyle\sf@size{\firstchoice@false #1}}%
   {\textdef@\textstyle \ssf@size{\firstchoice@false #1}}%
   \glb@settings}
\def\textdef@#1#2#3{\hbox{{%
                    \everymath{#1}%
                    \let\f@size#2\selectfont
                    #3}}}
\newif\iffirstchoice@
\def\Let@{\relax\iffalse{\fi\let\\=\cr\iffalse}\fi}%
\def\vspace@{\def\vspace##1{\crcr\noalign{\vskip##1\relax}}}%
\def\multilimits@{\bgroup\vspace@\Let@
 \baselineskip\fontdimen10 \scriptfont\tw@
 \advance\baselineskip\fontdimen12 \scriptfont\tw@
 \lineskip\thr@@\fontdimen8 \scriptfont\thr@@
 \lineskiplimit\lineskip
 \vbox\bgroup\ialign\bgroup\hfil$\m@th\scriptstyle{##}$\hfil\crcr}%
\def\Sb{_\multilimits@}%
\def\endSb{\crcr\egroup\egroup\egroup}%
\def\Sp{^\multilimits@}%
\newdimen\ex@
\def\rightarrowfill@#1{$#1\m@th\mathord-\mkern-6mu\cleaders
 \hbox{$#1\mkern-2mu\mathord-\mkern-2mu$}\hfill
 \mkern-6mu\mathord\rightarrow$}%
\def\leftarrowfill@#1{$#1\m@th\mathord\leftarrow\mkern-6mu\cleaders
 \hbox{$#1\mkern-2mu\mathord-\mkern-2mu$}\hfill\mkern-6mu\mathord-$}%
\def\leftrightarrowfill@#1{$#1\m@th\mathord\leftarrow
\mkern-6mu\cleaders
 \hbox{$#1\mkern-2mu\mathord-\mkern-2mu$}\hfill
 \mkern-6mu\mathord\rightarrow$}%
\def\overrightarrow{\mathpalette\overrightarrow@}%
\def\overrightarrow@#1#2{\vbox{\ialign{##\crcr\rightarrowfill@#1\crcr
 \noalign{\kern-\ex@\nointerlineskip}$\m@th\hfil#1#2\hfil$\crcr}}}%
\def\overleftarrow{\mathpalette\overleftarrow@}%
\def\overleftarrow@#1#2{\vbox{\ialign{##\crcr\leftarrowfill@#1\crcr
 \noalign{\kern-\ex@\nointerlineskip}$\m@th\hfil#1#2\hfil$\crcr}}}%
\def\overleftrightarrow{\mathpalette\overleftrightarrow@}%
\def\overleftrightarrow@#1#2{\vbox{\ialign{##\crcr
   \leftrightarrowfill@#1\crcr
 \noalign{\kern-\ex@\nointerlineskip}$\m@th\hfil#1#2\hfil$\crcr}}}%
\def\underrightarrow{\mathpalette\underrightarrow@}%
\def\underrightarrow@#1#2{\vtop{\ialign{##\crcr$\m@th\hfil#1#2\hfil
  $\crcr\noalign{\nointerlineskip}\rightarrowfill@#1\crcr}}}%
\def\underleftarrow{\mathpalette\underleftarrow@}%
\def\underleftarrow@#1#2{\vtop{\ialign{##\crcr$\m@th\hfil#1#2\hfil
  $\crcr\noalign{\nointerlineskip}\leftarrowfill@#1\crcr}}}%
\def\underleftrightarrow{\mathpalette\underleftrightarrow@}%
\def\underleftrightarrow@#1#2{\vtop{\ialign{##\crcr$\m@th
  \hfil#1#2\hfil$\crcr
 \noalign{\nointerlineskip}\leftrightarrowfill@#1\crcr}}}%
\def\qopnamewl@#1{\mathop{\operator@font#1}\nlimits@}
\let\nlimits@\displaylimits
\def\setboxz@h{\setbox\z@\hbox}
\def\varlim@#1#2{\mathop{\vtop{\ialign{##\crcr
 \hfil$#1\m@th\operator@font lim$\hfil\crcr
 \noalign{\nointerlineskip}#2#1\crcr
 \noalign{\nointerlineskip\kern-\ex@}\crcr}}}}
 \def\rightarrowfill@#1{\m@th\setboxz@h{$#1-$}\ht\z@\z@
  $#1\copy\z@\mkern-6mu\cleaders
  \hbox{$#1\mkern-2mu\box\z@\mkern-2mu$}\hfill
  \mkern-6mu\mathord\rightarrow$}
\def\leftarrowfill@#1{\m@th\setboxz@h{$#1-$}\ht\z@\z@
  $#1\mathord\leftarrow\mkern-6mu\cleaders
  \hbox{$#1\mkern-2mu\copy\z@\mkern-2mu$}\hfill
  \mkern-6mu\box\z@$}
\def\projlim{\qopnamewl@{proj\,lim}}
\def\injlim{\qopnamewl@{inj\,lim}}
\def\varinjlim{\mathpalette\varlim@\rightarrowfill@}
\def\varprojlim{\mathpalette\varlim@\leftarrowfill@}
\def\varliminf{\mathpalette\varliminf@{}}
\def\varliminf@#1{\mathop{\underline{\vrule\@depth.2\ex@\@width\z@
   \hbox{$#1\m@th\operator@font lim$}}}}
\def\varlimsup{\mathpalette\varlimsup@{}}
\def\varlimsup@#1{\mathop{\overline
  {\hbox{$#1\m@th\operator@font lim$}}}}
\def\align{\@verbatim \frenchspacing\@vobeyspaces \@alignverbatim
You are using the "align" environment in a style in which it is not defined.}
\let\csname endalign*\endcsname =\endtrivlist
\def\alignat{\@verbatim \frenchspacing\@vobeyspaces \@alignatverbatim
You are using the "alignat" environment in a style in which it is not defined.}
\let\csname endalignat*\endcsname =\endtrivlist
\def\xalignat{\@verbatim \frenchspacing\@vobeyspaces \@xalignatverbatim
You are using the "xalignat" environment in a style in which it is not defined.}
\let\csname endxalignat*\endcsname =\endtrivlist
\def\gather{\@verbatim \frenchspacing\@vobeyspaces \@gatherverbatim
You are using the "gather" environment in a style in which it is not defined.}
\let\csname endgather*\endcsname =\endtrivlist
\def\multiline{\@verbatim \frenchspacing\@vobeyspaces \@multilineverbatim
You are using the "multiline" environment in a style in which it is not defined.}
\let\csname endmultiline*\endcsname =\endtrivlist
\def\arrax{\@verbatim \frenchspacing\@vobeyspaces \@arraxverbatim
You are using a type of "array" construct that is only allowed in AmS-LaTeX.}
\def\tabulax{\@verbatim \frenchspacing\@vobeyspaces \@tabulaxverbatim
You are using a type of "tabular" construct that is only allowed in AmS-LaTeX.}
\let\csname endarrax*\endcsname =\endtrivlist
\let\csname endtabulax*\endcsname =\endtrivlist
\def\@@eqncr{\let\@tempa\relax
    \ifcase\@eqcnt \def\@tempa{& & &}\or \def\@tempa{& &}%
      \else \def\@tempa{&}\fi
     \@tempa
     \if@eqnsw
        \iftag@
           \@taggnum
        \else
           \@eqnnum\stepcounter{equation}%
        \fi
     \fi
     \global\tag@false
     \global\@eqnswtrue
     \global\@eqcnt\z@\cr}
 \def\endequation{%
     \ifmmode\ifinner 
      \iftag@
        \addtocounter{equation}{-1} 
        $\hfil
           \displaywidth\linewidth\@taggnum\egroup \endtrivlist
        \global\tag@false
        \global\@ignoretrue
      \else
        $\hfil
           \displaywidth\linewidth\@eqnnum\egroup \endtrivlist
        \global\tag@false
        \global\@ignoretrue
      \fi
     \else
      \iftag@
        \addtocounter{equation}{-1} 
        \eqno \hbox{\@taggnum}
        \global\tag@false%
        $$\global\@ignoretrue
      \else
        \eqno \hbox{\@eqnnum}
        $$\global\@ignoretrue
      \fi
     \fi\fi
 }
 \newif\iftag@ \tag@false
 \def\tag{\@ifnextchar*{\@tagstar}{\@tag}}
 \def\@tag#1{%
     \global\tag@true
     \global\def\@taggnum{(#1)}}
 \def\@tagstar*#1{%
     \global\tag@true
     \global\def\@taggnum{#1}%
}
\newtheorem{satz}{Theorem}[section]
\newtheorem{definition}[satz]{Definition}
\newtheorem{lemma}[satz]{Lemma}
\newtheorem{koro}[satz]{Corollary}
\newtheorem{bemerkung}[satz]{Remark}
\newtheorem{proposition}[satz]{Proposition}
\newtheorem{notation}[satz]{Notation}
\newenvironment{proof}{\par\noindent {\it Proof:} \hspace{7pt}}{\hfill\hbox{\vrule width 7pt depth 0pt height 7pt}
\par\vspace{10pt}}
\begin{document}

\title{Lieb--Robinson Bounds for Multi--Commutators and Applications to
Response Theory}
\author{J.-B. Bru \and W. de Siqueira Pedra}
\date{\today }
\maketitle

\begin{abstract}
We generalize to multi--com%
\-%
mutators the usual Lieb--Robin%
\-%
son bounds for commutators. In the spirit of constructive QFT, this is done
so as to allow the use of combinatorics of minimally connected graphs (tree
expansions) in order to estimate time--dependent multi--com%
\-%
mutators for interacting fermions.\ Lieb--Robinson bounds for
multi--commutators are effective mathematical tools to handle analytic
aspects of the dynamics of quantum particles with interactions which are
non--vanishing in the whole space and possibly time--dependent. To
illustrate this, we prove that the bounds for multi--commutators of order
three yield existence of fundamental solutions for the corresponding
non--auto%
\-%
nomous initial value problems for observables of interacting fermions on
lattices. We further show how bounds for multi--commutators of an order
higher than two can be used to study linear and non--linear responses of
interacting fermions to external perturbations. All results also apply to
quantum spin systems, with obvious modifications. However, we only explain
the fermionic case in detail, in view of applications to microscopic quantum
theory of electrical conduction discussed here and because this case is
technically more involved.
\end{abstract}

\noindent\textbf{MSC2010:} (Primary) 82C10, 82C20, 82C22, 47D06, 58D25; (Secondary) 82C70, 82C44, 34G10
\newpage
\tableofcontents%

\section{Introduction\label{Sectino introduction}}

\emph{Lieb--Robinson bounds} are upper--bounds on time--dependent
commutators and were originally used to estimate propagation velocities of
information in quantum spin systems. They have first been derived in 1972 by
Lieb and Robinson \cite{liebrobinsonbounds}. Nowadays, they are widely used
in quantum information and condensed matter physics. Phenomenological
consequences of Lieb--Robinson bounds have been experimentally observed in
recent years, see \cite{expLR}. For an historical overview on
Lieb--Robinson bounds, see \cite{VL} and references therein.

For the reader's convenience and completeness, we start by deriving such
bounds for fermions on the lattice with (possibly non--autonomous)
interactions. As explained in \cite{NS} in the context of quantum spin
systems, Lieb--Robinson bounds are only expected to hold true for systems
with short--range interactions. We thus define Banach spaces $\mathcal{W}$
of short--range interactions and prove Lieb--Robinson bounds for the
corresponding fermion systems. The spaces $\mathcal{W}$ include
density--density interactions resulting from the second quantization of
two--body interactions defined via a real--valued and integrable interaction
kernel $v\left( r\right) :[0,\infty )\rightarrow \mathbb{R}$. Considering
fermions with spin $1/2$, our setting includes, for instance, the celebrated
Hubbard model (and any other system with finite--range interactions) or
models with Yukawa--type potentials. Two--body interactions decaying
polynomially fast in space with sufficiently large degree are also allowed,
but the Coulomb potential is excluded because it is not summable at large
distances. The method of proof we use to get Lieb--Robinson bounds for
non--autonomous $C^{\ast }$--dynamical systems related to lattice fermions
is, up to simple adaptations, the one used in \cite{NS} for (autonomous)
quantum spin systems. Compare Theorem \ref{Theorem Lieb-Robinson}, Lemma \ref%
{Theorem Lieb-Robinson copy(2)}, Theorem \ref{Theorem Lieb-Robinsonnew} and
Corollary \ref{Theorem Lieb-Robinson copy(4)} with \cite[Theorems 2.3.\ and
3.1.]{NS}. See also \cite{BMNS} where (usual) Lieb--Robinson bounds for
non--autonomous quantum spin systems have already been derived \cite[%
Theorems 4.6]{BMNS}.

Once the Lieb--Robinson bounds for commutators are established, we combine
them with results of the theory of strongly continuous semigroups to derive
properties of the infinite--volume dynamics. These allow us to extend
Lieb--Robinson bounds to time--dependent \emph{multi}--com%
\-%
mutators, see Theorems \ref{Theorem Lieb-Robinson copy(1)}--\ref{theorem exp
tree decay copy(1)} and \ref{Theorem Lieb-Robinson copy(1)-bis}. The new
bounds on multi--commutators make possible rigorous studies of dynamical
properties that are relevant for response theory of interacting fermion
systems. For instance, they yield tree--decay bounds in the sense of \cite[%
Section 4]{OhmI} if interactions decay sufficiently fast in space (typically
some polynomial decay with large enough degree is needed). In fact, by using
the Lieb--Robinson bounds for multi--commutators, we extend in \cite%
{OhmV,OhmVI} our results \cite{OhmI,OhmII,OhmIII,OhmIV} on free fermions to
interacting particles with short--range interactions. This is an important
application of such new bounds: The rigorous microscopic derivation of Ohm
and Joule's laws for \emph{interacting} fermions, in the AC--regime. See
Section \ref{Section Ohm law} and \cite{brupedrahistoire} for a historical
perspective on this subject.

Via Theorems \ref{Thm para regularite} and \ref{thm moment mu}, we show, for
example, how Lieb--Robinson bounds for multi--com%
\-%
mutators can be applied to derive decay properties of the so--called \emph{%
AC--con%
\-%
ductivity measure} at high frequencies. This result is new and is obtained
in Section \ref{Section Ohm law}. Cf. \cite{OhmV,OhmVI}. Lieb--Robinson
bounds for multi--commutators have, moreover, further applications which go
beyond the use on linear response theory presented in Section \ref{Section
Ohm law}. For instance, as explained in Sections \ref{section Energy
Increments as Power Series} and \ref{Section existence dynamics}, they also
make possible the study of \emph{non--linear} corrections to linear
responses to external perturbations.

The new bounds can also be applied to \emph{non--auto%
\-%
nomous systems}. Indeed, the existence of a fundamental solution for the
non--auto%
\-%
nomous initial value problem related to infinite systems of fermions with
time--dependent interactions is usually a non--trivial problem because the
corresponding generators are time--dependent unbounded operators. The
time--dependency cannot, in general, be isolated into a bounded perturbation
around some unbounded time--constant generator and usual perturbation theory
cannot be applied. In many important cases, the time--dependent part of the
generator is not even relatively bounded with respect to (w.r.t.) the
constant part. In fact, no unified theory of non--auto%
\-%
nomous evolution equations that gives a complete characterization of the
existence of fundamental solutions in terms of properties of generators,
analogously to the Hille--Yosida generation theorems for the autonomous
case, is available. See, e.g., \cite{Katobis,Caps,Schnaubelt1,Pazy,Bru-Bach}
and references therein. Note that the existence of a fundamental solution
implies the well--posedness of the initial value problem related to states
or observables of interacting lattice fermions, provided the corresponding
evolution equation has a unique solution for any initial condition.

The Lieb--Robinson bounds on multi--com%
\-%
mutators we derive here yield the existence of fundamental solutions as well
as other general results on non--auto%
\-%
nomous initial value problems related to fermion systems on lattices with
interactions which are non--vanishing in the whole space and
time--dependent. This is done in a rather constructive way, by considering
the large volume limit of finite--volume dynamics, without using standard
sufficient conditions for existence of fundamental solutions of
non--autonomous linear evolution equations. If interactions decay
exponentially fast in space, then we moreover show, also by using
Lieb--Robinson bounds on multi--com%
\-%
mutators, that the \emph{non--autonomous} dynamics is smooth w.r.t. its
generator on the dense set of local observables. See Theorem \ref{thm non
auto copy(1)}. Note that the generator of the (non--autonomous) dynamics
generally has, in our case, a time--dependent domain, and the existence of a
dense set of smooth vectors is a priori not at all clear.

Observe that the evolution equations for lattice fermions are not of
parabolic type, in the precise sense formulated in \cite{parabolic def},
because the corresponding generators do not generate analytic semigroups.
They seem to be rather related to Kato's hyperbolic case \cite%
{Kato,Kato1973,Katobis}. Indeed, by structural reasons -- more precisely,
the fact that the generators are derivations on a $C^{\ast }$--algebra --
the time--dependent generator defines a stable family of operators in the
sense of Kato. Moreover, this family always possesses a common core. In some
specific situations one can directly show that the completion of this core
w.r.t. a conveniently chosen norm defines a so--called admissible Banach
space $\mathcal{Y}$ of the generator at any time, which satisfies further
technical conditions leading to Kato's hyperbolic conditions \cite%
{Kato,Kato1973,Katobis}. See also \cite[Sect. 5.3.]{Pazy} and \cite[Sect.
VII.1]{Bru-Bach}. Nevertheless, the existence of such a Banach space $%
\mathcal{Y}$ is a priori unclear in the general case treated here (Theorem %
\ref{thm non auto}).

Our central results are Theorems \ref{Theorem Lieb-Robinson copy(1)}--\ref%
{theorem exp tree decay copy(1)} and \ref{Theorem Lieb-Robinson copy(1)-bis}%
. Other important assertions are Corollary \ref{theorem exp tree decay} and
Theorems \ref{thm non auto}--\ref{thm non auto copy(1)}, \ref{Thm Heat
production as power series copy(3)}--\ref{Thm Heat production as power
series copy(4)}, \ref{Thm para regularite}, \ref{thm moment mu}. The
manuscript is organized as follows:

\begin{itemize}
\item In order to make our results accessible to a wide audience, in
particular to students in Mathematics with little Physics background,
Section \ref{Algebraic Quantum Mechanics} presents basics of Quantum
Mechanics, keeping in mind its algebraic formulation.

\item Section \ref{sect 2.1 copy(1)} introduces the algebraic setting for
fermions, in particular the CAR $C^{\ast }$--algebra. Other standard objects
(like fermions, bosons, Fock space, CAR, etc.) of quantum theory are also
presented, for pedagogical reasons.

\item Section \ref{Generalized Lieb--Robinson Bounds} is devoted to
Lieb--Robinson bounds, which are generalized to multi--commutators. We also
give a proof of the existence of the infinite--volume dynamics as well as
some applications of such bounds. The tree--decay bounds on time--dependent
multi--commutators (Corollary \ref{theorem exp tree decay}) are proven here.
However, only the autonomous dynamics is considered in this section.

\item Section \ref{section LR non-auto} extends results of Section \ref%
{Generalized Lieb--Robinson Bounds} to the non--auto%
\-%
nomous case. We prove, in particular, the existence of a fundamental
solution for the non--auto%
\-%
nomous initial value problems related to infinite interacting systems of
fermions on lattices with time--dependent interactions (Theorem \ref{thm non
auto}). This implies well--posedness of the corresponding initial value
problems for states and observables, provided their solutions are unique for
any initial condition. Applications in (possibly non--linear) response
theory (Theorems \ref{Thm Heat production as power series copy(3)}--\ref{Thm
Heat production as power series copy(4)}) are discussed as well.

\item Finally, Section \ref{Section Ohm law} explains how Lieb--Robinson
bounds for multi--commu%
\-%
tators can be applied to study (quantum) charged transport properties within
the AC--regime. This analysis yields, in particular, the asymptotics at high
frequencies of the so--called AC--con%
\-%
ductivity measure. See Theorems \ref{Thm para regularite} and \ref{thm
moment mu}.
\end{itemize}

\begin{notation}
\label{remark constant}\mbox{
}\newline
\emph{(i)} We denote by $D$ any positive and finite generic constant. These
constants do not need to be the same from one statement to another. \newline
\emph{(ii)} A norm on the generic vector space $\mathcal{X}$ is denoted by $%
\Vert \cdot \Vert _{\mathcal{X}}$ and the identity map of $\mathcal{X}$ by $%
\mathbf{1}_{\mathcal{X}}$.
\index{Space of bounded operators}The $C^{\ast }$--algebra of all bounded
linear operators on $(\mathcal{X},\Vert \cdot \Vert _{\mathcal{X}}\mathcal{)}
$ is denoted by $\mathcal{B}(\mathcal{X})$. The scalar product on a Hilbert
space $\mathcal{X}$ is denoted by $\langle \cdot ,\cdot \rangle _{\mathcal{X}%
}$. \newline
\emph{(iii)} If $O$ is an operator, $\Vert \cdot \Vert _{O}$ stands for the
graph norm on its domain. \newline
\emph{(iv)} By a slight abuse of notation, we denote in the sequel elements $%
X_{i}\in Y$ depending on the index $i\in I$ by expressions of the form $%
\{X_{i}\}_{i\in I}\subset Y$ (instead of $(X_{i})_{i\in I}\subset I\times Y$%
).
\end{notation}

\section{Algebraic Quantum Mechanics\label{Algebraic Quantum Mechanics}}

\subsection{Emergence of Quantum Mechanics}

The main principles of physics were considered as well--founded by the end
of the nineteenth century, even with, for instance, no satisfactory
explanation of the phenomenon of thermal radiation, first discovered in 1860
by G. Kirchhoff. In contrast to classical physics, which deals with
continuous quantities, Planck's intuition was to introduce an intrinsic
discontinuity of energy and a unsual\footnote{%
in regards to Boltzmann's studies, which meanwhile have strongly influenced
Planck's work. In modern terms M.K.E.L. Planck used the celebrated
Bose--Einstein statistics.} statistics (without any conceptual foundation,
in a \textit{ad hoc} way) to explain thermal radiation in 1900. Assuming the
existence of a quantum of action $h$, the celebrated Planck's constant, and
this pivotal statistics he derived the well--known Planck's law of thermal
radiation. Inspired by Planck's ideas, Einstein presented his famous
discrete (corpuscular) theory of light to explain the photoelectric effect.

Emission spectra of chemical elements had also been known since the
nineteenth century and no theoretical explanation was available at that
time. It became clear that electrons play a key role in this phenomenon.
However, the classical solar system model of the atom failed to explain the
emitted or absorbed radiation. Following again Planck's ideas, N. Bohr
proposed in 1913 an atomic model based on discrete energies that
characterize electron orbits. It became clear that the main principles of
classical physics are unable to describe atomic physics.

Planck's quantum of action, Einstein's quanta of light (photons), and Bohr's
atomic model could not be a simple extension of classical physics, which, in
turn, could also not be questioned in its field of validity. N. Bohr tried
during almost a decade to conciliate the paradoxical--looking microscopic
phenomena by defining a radically different kind of logic. Bohr's concept of
complementarity gave in 1928 a conceptual solution to that problem and
revolutionized the usual vision of nature. See, e.g., \cite{chevalley}. For
more details on the emergence of quantum mechanics, see also \cite%
{shrodinger}. Classical logic should be replaced by quantum logic as claimed
\cite{BvonNeu} by G. Birkhoff and J. von Neumann in 1936. See also \cite%
{Flori}.

On the level of theoretical physics, until 1925, quantum corrections were
systematically included, in a rather \emph{ad hoc} manner, into classical
theories to allow explicit discontinuous properties. Then, two apparently
complementary directions were taken by W.K. Heisenberg and E. Shr\"{o}%
dinger, respectively, to establish basic principles of the new quantum
physics, in contrast with the \textquotedblleft old quantum
theory\textquotedblright\ starting in 1900. Indeed, even with the so--called
correspondence principle of N. Bohr, \textquotedblleft many problems, even
quite central ones like the spectrum of helium atom, proved inaccessible to
any solution, no matter how elaborate the conversion\textquotedblright , see
\cite[p. 18]{shrodinger}.

These parallel theories elaborated almost at the same time were in
competition to be the new quantum theory until their equivalence became
clear, thanks to J. von Neumann who strongly contributed to the mathematical
foundations of Quantum Mechanics in the years following 1926. They are
nowadays known in any textbook on Quantum Mechanics as the Schr\"{o}dinger
and Heisenberg pictures of Quantum Mechanics. Schr\"{o}dinger's view point
is generally the most known and refers to the approach we first explain.

\subsection{Schr\"{o}dinger Picture of Quantum Mechanics (S1)\label{Section
Schro}%
\index{Schr\"{o}dinger picture!S1}}

Following de Broglie's studies on (Rutherford--) Bohr's model and Einstein's
theory of gases, E. Schr\"{o}dinger took into account the wave theory of
matter in 1925. Indeed, by learning from wave optics in Classical Physics as
well as from de Broglie's hypothesis on the wave property of matter, he
derived the celebrated
\index{Schr\"{o}dinger equation}\emph{Schr\"{o}dinger equation}, which
describes the time evolution of the wave behavior of all quantum objects. In
mathematical words, this time--dependent behavior is described by some
family $\{\psi \left( t\right) \}_{t\in \mathbb{R}}$\ of wave functions%
\index{Wave functions} within some Hilbert space $\mathcal{H}$, which
depends on the quantum system under consideration. This evolution is fixed
by a (possibly unbounded) self--adjoint operator%
\index{Hamiltonian!S1-H2} $\mathrm{H}=\mathrm{H}^{\ast }$ acting on $%
\mathcal{H}$: Indeed, for any initial wave function $\psi \left( 0\right)
\in \mathcal{H}$ at $t=0$, the wave function at arbitrary time $t\in \mathbb{%
R}$ is uniquely determined by the Schr\"{o}dinger equation%
\index{Schr\"{o}dinger equation}%
\begin{equation}
i\partial _{t}\psi \left( t\right) =\mathrm{H}\psi \left( t\right) \ ,%
\text{\qquad }t\in \mathbb{R}\ .  \label{Schrodinger equation}
\end{equation}%
This implies in particular that the time evolution is \emph{unitary}:
\begin{equation}
\psi \left( t\right) =\mathrm{e}^{-it\mathrm{H}}\psi \left( 0\right) \ ,%
\text{\qquad }t\in \mathbb{R}\ .  \label{Schrodinger equation2}
\end{equation}%
A typical example is given by $\mathcal{H}=L^{2}(\mathbb{R}^{3})$ with $\psi
\left( 0\right) $ being taken to be a normalized vector of $\mathcal{H}$. In
this case, $|\psi \left( t,x\right) |^{2}$ is interpreted as the probability
density to detect the quantum particle at time $t\in \mathbb{R}$ and space
position $x\in \mathbb{R}^{3}$.

\subsection{Heisenberg Picture of Quantum Mechanics (H2)\label{Section Hei}%
\index{Heisenberg picture!H2}}

Quantities like position, momentum, energy, etc., are represented by \emph{%
self--adjoint} operators acting on $\mathcal{H}$ and are called
\index{Observable!S1-H2}\emph{observables}. They refer to all properties of
the physical system that can be measured. An important one is of course the
energy observable, also named \emph{Hamiltonian}, in reference to the
celebrated Hamiltonian mechanics.
\index{Hamiltonian!S1-H2}It is, by definition, the self--adjoint operator $%
\mathrm{H}$\ in the Schr\"{o}dinger equation (\ref{Schrodinger equation}).

In this context, the outcomes of measurements of the physical quantity
associated with an arbitrary observable $B$ have a random character, the
statistical distribution of which is completely described by the family $%
\{\psi \left( t\right) \}_{t\in \mathbb{R}}$ of wave functions solving (\ref%
{Schrodinger equation}). At time $t\in \mathbb{R}$, its expectation value is
given by the real number%
\index{Observable!S1-H2!expectation value}
\begin{equation}
\langle \psi \left( t\right) ,B\psi \left( t\right) \rangle _{\mathcal{H}%
}=\langle \psi \left( 0\right) ,\mathrm{e}^{it\mathrm{H}}B\mathrm{e}^{-it%
\mathrm{H}}\psi \left( 0\right) \rangle _{\mathcal{H}}\ .
\label{expectation value}
\end{equation}%
See (\ref{Schrodinger equation2}). Here, $\langle \cdot ,\cdot \rangle _{%
\mathcal{H}}$ denotes the scalar product of $\mathcal{H}$. Viewing the state
as time--dependent and the observable fixed, like in
\index{Schr\"{o}dinger picture!S1}Schr\"{o}dinger's picture of Quantum
Mechanics, is equivalent to viewing the state as being fixed and the
observable evolving as follows:
\begin{equation}
B\mapsto \tau _{t}\left( B\right) \doteq \mathrm{e}^{it\mathrm{H}}B\mathrm{e}%
^{-it\mathrm{H}}\ ,%
\text{\qquad }t\in \mathbb{R}\ .  \label{automorphism}
\end{equation}

The latter refers to Heisenberg's view point: For every bounded Hamiltonians
$\mathrm{H}\in \mathcal{B}(\mathcal{H})$, the map (\ref{automorphism})
defines a one--parameter continuous group $\{\tau _{t}\}_{t\in \mathbb{R}}$
acting on $\mathcal{B}(\mathcal{H})$, the Banach space $\mathcal{B}(\mathcal{%
H})$ of all bounded linear operators on $\mathcal{H}$, and satisfying the
(autonomous) evolution equation%
\index{Evolution equation!autonomous}
\begin{equation}
\forall t\in \mathbb{R}:\quad \partial _{t}\tau _{t}=\tau _{t}\circ \delta
=\delta \circ \tau _{t}\ ,%
\text{\qquad }\tau _{0}=\mathbf{1}_{\mathcal{B}(\mathcal{H})}\ ,
\label{automorphism2}
\end{equation}%
where $\delta $ is the generator defined by
\begin{equation}
\delta \left( B\right) \doteq i\left[ \mathrm{H},B\right] \doteq \mathrm{H}%
B-B\mathrm{H}\ ,\text{\qquad }B\in \mathcal{B}(\mathcal{H})\ .
\label{automorphism3}
\end{equation}%
Note that $\{\tau _{t}\}_{t\in \mathbb{R}}$ is a family of isomorphims of
the Banach space $\mathcal{B}(\mathcal{H})$ and, for all $B_{1},B_{2}\in
\mathcal{B}(\mathcal{H})$,
\begin{equation}
\delta (B_{1}^{\ast })=\delta (B_{1})^{\ast }\quad \text{and}\quad \delta
(B_{1}B_{2})=\delta (B_{1})B_{2}+B_{1}\delta (B_{2})\ .
\label{symmetric derivation}
\end{equation}%
A linear operator $\delta $ acting on any algebra with involution (like $%
\mathcal{B}(\mathcal{H})$, see Section \ref{sect Algebraic Formulation of
Quantum Mechanics}) that satisfies such properties is called
\index{Symmetric derivation}\emph{symmetric derivation}. (The symmetry
property refers to the first equality.) Indeed, generators of groups of
automorphims of $C^{\ast }$--algebras (Section \ref{sect Algebraic
Formulation of Quantum Mechanics}) are necessarily symmetric derivations.

In this approach the wave function is then fixed for all times. This view
point took its origin in Heisenberg's study of the dispersion relation done
in 1925. Schr\"{o}dinger's \emph{wave} mechanics dovetailed with
Heisenberg's \emph{matrix} mechanics.

\begin{bemerkung}[Unbounded Hamiltonians]
\mbox{
}\newline
If $\mathrm{H}=\mathrm{H}^{\ast }$ is unbounded%
\index{Hamiltonian!S1-H2} then it is not clear whether (\ref{automorphism})
defines a $C_{0}$--group (that is, a strongly continuous group) $\{\tau
_{t}\}_{t\in \mathbb{R}}$ of automorphisms of $\mathcal{B}(\mathcal{H})$ or
not. This fact is, however, not important here. Indeed, one starts
\index{Schr\"{o}dinger picture!S1}(S1) either with Schr\"{o}dinger's
equation and (\ref{automorphism}) is well--defined,
\index{Heisenberg picture!H1}(H1) or with a $C_{0}$--group $\{\tau
_{t}\}_{t\in \mathbb{R}}$ of automorphisms generated by a (possibly
unbounded) symmetric derivation $\delta $, see (\ref{automorphism2}) and (%
\ref{symmetric derivation}). The latter uses the semigroup theory \cite%
{BratteliRobinsonI,EngelNagel} and refers to the algebraic formulation of
Quantum Mechanics explained in\ Section \ref{sect Algebraic Formulation of
Quantum Mechanics}.
\end{bemerkung}

\subsection{Non--Autonomous Quantum Dynamics\label{sect non auto}}

If
\index{Hamiltonian!S1-H2}$\mathrm{H}_{t}=\mathrm{H}_{t}^{\ast }$ is now a
time--dependent self--adjoint operator acting on some Hilbert space $%
\mathcal{H}$ for any time $t\in \mathbb{R}$, the Schr\"{o}dinger equation%
\index{Schr\"{o}dinger equation!non--autonomous}%
\index{Wave functions}
\begin{equation*}
i\partial _{t}\psi \left( t\right) =\mathrm{H}_{t}\psi \left( t\right) \ ,%
\text{\qquad }t\in \mathbb{R}\ ,
\end{equation*}%
\emph{formally} leads to a solution
\begin{equation}
\psi \left( t\right) =\mathrm{U}_{t,0}\psi \left( 0\right) \ ,\text{\qquad }%
t\in \mathbb{R}\ ,  \label{wave functino non auto}
\end{equation}%
with $\{\mathrm{U}_{t,s}\}_{s,t\in \mathbb{R}}$ being, a priori, the
two--parameter group of unitary operators on $\mathcal{H}$ generated by the
(anti--self--adjoint) operator $-i\mathrm{H}_{t}$:%
\index{Evolution equation!non--autonomous}
\begin{equation}
\forall s,t\in {\mathbb{R}}:\quad \partial _{t}\mathrm{U}_{t,s}=-i\mathrm{H}%
_{t}\mathrm{U}_{t,s}\ ,\quad \mathrm{U}_{s,s}\doteq \mathbf{1}_{\mathcal{H}%
}\ .  \label{non auto explaination1}
\end{equation}%
This two--parameter\ family satisfies the cocycle (Chapman--Kolmogorov)
property%
\index{Cocycle property}
\begin{equation}
\forall s,r,t\in \mathbb{R}:\qquad \mathrm{U}_{t,s}=\mathrm{U}_{t,r}\mathrm{U%
}_{r,s}\ .  \label{Chapman--Kolmogorov}
\end{equation}

Equation (\ref{non auto explaination1}) is a \emph{non--autonomous}
evolution equation. The well--posedness of such non--auto%
\-%
nomous initial value problems requires some regularity properties of the
family $\{\mathrm{H}_{t}\}_{t\in \mathbb{R}}$ of self--adjoint operators.
For instance, if $\{\mathrm{H}_{t}\}_{t\in \mathbb{R}}\in C\left( \mathbb{R};%
\mathcal{B}(\mathcal{H})\right) $ is a continuous family of bounded
operators, the existence, uniqueness and even an explicit form of the
solution of (\ref{non auto explaination1}) on the space $\mathcal{B}(%
\mathcal{H})$ (that is, in the norm/uniform topology) is given by the
Dyson--Phillips series:%
\index{Dyson--Phillips series}
\begin{equation}
\mathrm{U}_{t,s}\doteq \mathbf{1}_{\mathcal{H}}+\sum\limits_{k\in {\mathbb{N}%
}}\left( -i\right) ^{k}\int_{s}^{t}\mathrm{d}s_{1}\cdots \int_{s}^{s_{k-1}}%
\mathrm{d}s_{k}\ \mathrm{H}_{s_{1}}\cdots \mathrm{H}_{s_{k}}\ ,\qquad s,t\in
\mathbb{R}\ .  \label{dyson series}
\end{equation}%
In this case, $\{\mathrm{U}_{t,s}\}_{s,t\in \mathbb{R}}$ is a
norm--continuous two--parameter group of \emph{unitary} operators. In
particular, the norm $\Vert \psi \left( t\right) \Vert _{\mathcal{H}}$\ of (%
\ref{wave functino non auto}) is constant for all times $t\in \mathbb{R}$
and the statistical interpretation of this wave function is still
meaningful. Moreover, since the map $B\mapsto B^{\ast }$ from $\mathcal{B}(%
\mathcal{H})$ to $\mathcal{B}(\mathcal{H})$ is continuous (in the
norm/uniform topology, see \cite[Theorem VI.3 (e)]{ReedSimon}),%
\index{Evolution equation!autonomous}
\begin{equation}
\forall s,t\in {\mathbb{R}}:\quad \partial _{t}\mathrm{U}_{t,s}^{\ast }=i%
\mathrm{U}_{t,s}^{\ast }\mathrm{H}_{t}\ ,\quad \mathrm{U}_{s,s}^{\ast
}\doteq \mathbf{1}_{\mathcal{H}}\ .  \label{non auto explaination2}
\end{equation}%
(This property is not that clear in the strong topology since the map $%
B\mapsto B^{\ast }$ is not continuous anymore, but it could still be proven.
See as an example \cite[Lemma 68]{Bru-Bach}.)

However, the well--posedness of non--autonomous evolution equations like (%
\ref{non auto explaination1}) is much more delicate for \emph{unbounded}
generators. It has been studied, after the first result of Kato in 1953 \cite%
{Kato1953}, for decades by many authors (Kato again \cite{Kato,Kato1973} but
also Yosida, Tanabe, Kisynski, Hackman, Kobayasi, Ishii, Goldstein,
Acquistapace, Terreni, Nickel, Schnaubelt, Caps, Tanaka, Zagrebnov,
Neidhardt, etc.), see, e.g., \cite%
{Bru-Bach,Katobis,Caps,Schnaubelt1,Pazy,Neidhardt-zagrebnov} and the
corresponding references cited therein. Yet, no unified theory of such
linear evolution equations that gives a complete characterization
analogously to the Hille--Yosida generation theorems \cite{EngelNagel} is
known.

Assuming the well--posedness of the non--autonomous evolution equation (\ref%
{non auto explaination1}), the expectation value of any observable $B$
(i.e., a self--adjoint operator acting on $\mathcal{H}$) is given, similarly
to (\ref{expectation value}), by the real number%
\index{Observable!S1-H2!expectation value}
\begin{equation}
\langle \psi \left( t\right) ,B\psi \left( t\right) \rangle _{\mathcal{H}%
}=\langle \psi \left( s\right) ,\mathrm{U}_{t,s}^{\ast }B\mathrm{U}%
_{t,s}\psi \left( s\right) \rangle _{\mathcal{H}}\ ,%
\text{\qquad }s,t\in \mathbb{R}\ .  \label{expectation value2}
\end{equation}%
By (\ref{non auto explaination1}) and (\ref{non auto explaination2}), in the
Heisenberg picture of Quantum Mechanics (H2), we observe for any family $\{%
\mathrm{H}_{t}\}_{t\in \mathbb{R}}\in C\left( \mathbb{R};\mathcal{B}(%
\mathcal{H})\right) $ that
\begin{equation*}
B\mapsto \tau _{t,s}\left( B\right) \doteq \mathrm{U}_{t,s}^{\ast }B\mathrm{U%
}_{t,s}\ ,\text{\qquad }s,t\in \mathbb{R}\ ,
\end{equation*}%
defines a two--parameter family $\{\tau _{t,s}\}_{s,t\in \mathbb{R}}$ of
automorphisms of $\mathcal{B}(\mathcal{H})$ satisfying the (reverse) cocycle
property%
\index{Cocycle property!reverse}
\begin{equation}
\forall s,r,t\in \mathbb{R}:\qquad \tau _{t,s}=\tau _{r,s}\tau _{t,r}\ ,
\label{reverse cocycle}
\end{equation}%
(cf. (\ref{Chapman--Kolmogorov})) as well as the evolution equation%
\index{Evolution equation!non--autonomous (H1)}%
\begin{equation}
\forall s,t\in {\mathbb{R}}:\quad \partial _{t}\tau _{t,s}=\tau _{t,s}\circ
\delta _{t}\ ,\qquad \tau _{s,s}=\mathbf{1}_{\mathcal{B}(\mathcal{H})}\ .
\label{non auto explaination3}
\end{equation}%
Here, $\delta _{t}$ is the time--dependent generator defined by
\begin{equation}
\delta _{t}\left( B\right) \doteq i\left[ \mathrm{H}_{t},B\right] \doteq
\mathrm{H}_{t}B-B\mathrm{H}_{t}\ ,%
\text{\qquad }B\in \mathcal{B}(\mathcal{H})\ .
\label{non auto explaination3bis}
\end{equation}%
Compare with Equations (\ref{automorphism2}) and (\ref{automorphism3}).
Equation (\ref{non auto explaination3}) is \emph{another type} of
non--autonomous evolution equation on the Banach space $\mathcal{B}(\mathcal{%
H})$, the well--posedness of which \emph{is much more easier} to prove than
the one of (\ref{non auto explaination1}) for unbounded generators.

Indeed, non--autonomous evolution equations in mathematics usually refer to
non--auto%
\-%
nomous initial value problems%
\index{Evolution equation!non--autonomous}%
\begin{equation}
\forall t\geq s:\qquad \partial _{t}\mathfrak{U}_{t,s}=-\mathfrak{G}_{t}%
\mathfrak{U}_{t,s}\ ,\quad \mathfrak{U}_{s,s}\doteq \mathbf{1}_{\mathcal{X}%
}\ ,  \label{non auto explaination4}
\end{equation}%
with generators $\mathfrak{G}_{t}$ acting on some Banach space $\mathcal{X}$
for times $t\geq s$. One important mathematical issue of (\ref{non auto
explaination1}) or (\ref{non auto explaination4}) for unbounded generators
is to find sufficient conditions to ensure that $\mathrm{H}_{t}\mathrm{U}%
_{t,s}$ or $\mathfrak{G}_{t}\mathfrak{U}_{t,s}$ are always well--defined on
some (possibly time--dependent) dense subset $\mathfrak{D}$\ of $\mathcal{H}$
or $\mathcal{X}$.

This problem does not appear in the non--auto%
\-%
nomous initial value problem (\ref{non auto explaination3}). In particular,
if the non--autonomous evolution equation%
\index{Evolution equation!non--autonomous}%
\begin{equation*}
\forall s,t\in {\mathbb{R}}:\qquad \partial _{s}\tau _{t,s}=-\delta
_{s}\circ \tau _{t,s}\ ,\qquad \tau _{t,t}=\mathbf{1}_{\mathcal{B}(\mathcal{H%
})}\ ,
\end{equation*}%
is well--posed for some possibly unbounded family $\{\delta _{t}\}_{t\in {%
\mathbb{R}}}$ of generators, then (\ref{non auto explaination3}) is also
well--posed, see for instance \cite[Lemma 93]{Bru-Bach}. The converse does
\emph{not} hold true, in general. Indeed, in contrast with (\ref{non auto
explaination1}) and (\ref{non auto explaination4}), there is no domain
conservation in (\ref{non auto explaination3}) to take care even if $%
\{\delta _{t}\}_{t\in {\mathbb{R}}}$ is a family of unbounded generators. An
example is given in Section \ref{section LR non-auto}, compare in particular
Corollary \ref{Theorem Lieb-Robinson copy(4)} (iii) with Theorem \ref{thm
non auto}.

As a consequence, for non--autonomous dynamics the Heisenberg picture of
Quantum Mechanics is mathematically more natural or technically advantageous
as compared to the Schr\"{o}dinger picture. This gives a first argument to
start the quantum formalism with the Heisenberg picture, instead of the Schr%
\"{o}dinger one as it is done in many elementary textbooks on quantum
physics. This approach refers to the so--called algebraic formulation of
Quantum Mechanics widely used in Quantum Statistical Mechanics and Quantum
Field Theory.

\subsection{Algebraic Formulation of Quantum Mechanics (H1--S2)\label{sect
Algebraic Formulation of Quantum Mechanics}}

Algebraic Quantum Mechanics is an approach, starting in the forties (cf. GNS
construction), which \emph{reverses} the view point presented in Sections %
\ref{Section Schro}--\ref{sect non auto} by taking the Heisenberg picture of
Quantum Mechanics (H1) as the more fundamental one. Therefore, instead of
starting with Hilbert spaces and the Schr\"{o}dinger equation, one uses $%
C^{\ast }$--dynamical systems, that is, a pair constituted of a $C^{\ast }$%
--algebra and a group of $\ast $--automorphisms. The first generalizes the
Banach space $\mathcal{B}(\mathcal{H})$ of all bounded linear operators
acting on some Hilbert space $\mathcal{H}$ and the second, the map (\ref%
{automorphism}). They are defined as follows:\medskip

\noindent
\underline{(i):} Let $\mathcal{X}\equiv (\mathcal{X},+,\cdot _{{\mathbb{C}}%
}) $ be a complex vector space with a product map defined on the Cartesian
product $\mathcal{X}\times \mathcal{X}$ by
\begin{equation*}
(B_{1},B_{2})\mapsto B_{1}B_{2}\ .
\end{equation*}%
$\mathcal{X}$ is an associative and distributive\emph{\ }algebra, when, for
any $B_{1},B_{2},B_{3}\in \mathcal{X}$ and all complex numbers $\alpha
_{1},\alpha _{2}\in {\mathbb{C}}$,%
\begin{eqnarray*}
(B_{1}+B_{2})B_{3} &=&B_{1}B_{3}+B_{2}B_{3}\ ,\quad
(B_{1}B_{2})B_{3}=B_{1}(B_{2}B_{3})\ , \\
B_{3}(B_{1}+B_{2}) &=&B_{3}B_{1}+B_{3}B_{2}\ ,\quad \alpha _{1}\alpha
_{2}(B_{1}B_{2})=(\alpha _{1}B_{1})(\alpha _{2}B_{2})\ .
\end{eqnarray*}%
In the sequel, an algebra%
\index{Algebra} carries, by definition, an associative and distributive
product. $\mathcal{X}$ is a commutative algebra if $B_{1}B_{2}=B_{2}B_{1}$
for any $B_{1},B_{2}\in \mathcal{X}$. $\mathbf{1}\in \mathcal{X}$ is the
unit (or identity) of $\mathcal{X}$ when $B\mathbf{1}=\mathbf{1}B=B$ for all
$B\in \mathcal{X}$. If $\mathbf{1}\in \mathcal{X}$ exists then it is unique
and $\mathcal{X}$ is named a unital algebra.%
\index{Unital algebra} \medskip

\noindent
\underline{(ii):} An involution%
\index{Involution} is a map $B\mapsto B^{\ast }$ from an algebra $\mathcal{X}
$ to $\mathcal{X}$ that, by definition, satisfies, for any $B_{1},B_{2}\in
\mathcal{X}$ and $\alpha _{1},\alpha _{2}\in {\mathbb{C}}$,
\begin{equation*}
(B_{1}^{\ast })^{\ast }=B_{1}\ ,\quad (B_{1}B_{2})^{\ast }=B_{2}^{\ast
}B_{1}^{\ast }\ ,\quad (\alpha _{1}B_{1}+\alpha _{2}B_{2})^{\ast }=%
\overline{\alpha _{1}}B_{1}^{\ast }+\overline{\alpha _{2}}B_{2}^{\ast }\ .
\end{equation*}%
An algebra $\mathcal{X}$ equipped with an involution is a $\ast $--algebra%
\index{*--algebra@$\ast $--algebra} and $B\in \mathcal{X}$ is self--adjoint
when $B=B^{\ast }$.\ In this case, by uniqueness of the unit, one checks
that a unit $\mathbf{1}$ has to be self--adjoint. \medskip

\noindent
\underline{(iii):} Let $\Vert \cdot \Vert _{\mathcal{X}}$ be a norm on a
vector space $\mathcal{X}$. Then, $\mathcal{X}\equiv (\mathcal{X},\Vert
\cdot \Vert _{\mathcal{X}})$ is a normed algebra whenever $\mathcal{X}$ is
an algebra and
\begin{equation*}
\left\Vert B_{1}B_{2}\right\Vert _{\mathcal{X}}\leq \left\Vert
B_{1}\right\Vert _{\mathcal{X}}\left\Vert B_{2}\right\Vert _{\mathcal{X}}\ ,%
\text{\qquad }B_{1},B_{2}\in \mathcal{X}\ .
\end{equation*}%
A normed algebra $\mathcal{X}$ is a Banach algebra%
\index{Banach algebra} if $\mathcal{X}$ is complete with respect to (w.r.t.)
the norm $\Vert \cdot \Vert _{\mathcal{X}}$. A Banach algebra $\mathcal{X}$
equipped with an involution such that
\begin{equation*}
\left\Vert B\right\Vert _{\mathcal{X}}=\left\Vert B^{\ast }\right\Vert _{%
\mathcal{X}}\ ,%
\text{\qquad }B\in \mathcal{X}\ ,
\end{equation*}%
is a Banach $\ast $--algebra. Then, a Banach $\ast $--algebra%
\index{C*--algebra@$C^{\ast }$--algebra} $\mathcal{X}$ is a $C^{\ast }$%
--algebra whenever
\begin{equation}
\left\Vert B^{\ast }B\right\Vert _{\mathcal{X}}=\left\Vert B\right\Vert _{%
\mathcal{X}}^{2}\ ,%
\text{\qquad }B\in \mathcal{X}\ .  \label{def c*}
\end{equation}%
If $\mathcal{X}$ is a Banach $\ast $--algebra, then there is a unique norm $%
\Vert \cdot \Vert _{\mathcal{X}}$ on $\mathcal{X}$ such that $(\mathcal{X}%
,\Vert \cdot \Vert _{\mathcal{X}})$ is a $C^{\ast }$--algebra. Note also
that in $C^{\ast }$--algebras there is a natural notion of spectrum, which
is a real subset for any self--adjoint element. \medskip

\noindent \underline{(iv):} Let $\mathcal{X}$ and $\mathcal{Y}$ be two $%
C^{\ast }$--algebras. A linear map $\mathbf{\pi }:\mathcal{X}\rightarrow
\mathcal{Y}$ is a $\ast $--homomorphism%
\index{Homomorphism} when it preserves the product and involution of the $%
C^{\ast }$--algebras, i.e., if, for all $B_{1},B_{2}\in \mathcal{X}$,%
\begin{equation*}
\mathbf{\pi }\left( B_{1}B_{2}\right) =\mathbf{\pi }\left( B_{1}\right)
\mathbf{\pi }\left( B_{2}\right) \qquad
\text{and}\qquad \mathbf{\pi }\left( B_{1}^{\ast }\right) =\mathbf{\pi }%
\left( B_{1}\right) ^{\ast }\ .
\end{equation*}%
Such maps $\mathbf{\pi }$ are automatically contractive \cite[Proposition
2.3.1]{BratteliRobinsonI} and even isometric when $\mathbf{\pi }$ is
injective \cite[Proposition 2.3.3]{BratteliRobinsonI}. Bijective $\ast $%
--homomorphisms are called $\ast $--isomorphisms.%
\index{Isomorphism} $C^{\ast }$--algebras $\mathcal{X}$ and $\mathcal{Y}$
are said to be $\ast $--isomorphic whenever there exists a $\ast $%
--isomorphism $\mathbf{\pi }:\mathcal{X}\rightarrow \mathcal{Y}$. $\ast $%
--isomorphisms from $\mathcal{X}$ to $\mathcal{X}$ are named $\ast $%
--automorphisms%
\index{Automorphism} of the $C^{\ast }$--algebra $\mathcal{X}$. \medskip

\noindent For more details on the theory of $C^{\ast }$--algebras, see,
e.g., \cite{BratteliRobinsonI,KR1,KR2}.

A well--known example of unital $C^{\ast }$--algebra is given by the
\index{Space of bounded operators}Banach space $\mathcal{B}(\mathcal{H})$ of
all bounded linear operators acting on some Hilbert space $\mathcal{H}$. The
norm on $\mathcal{B}(\mathcal{H})$ is of course the operator norm, as
before, and the involution is defined by taking the adjoint of operators.
The complex vector space of complex--valued, measurable, bounded functions
on some set equipped with the sup--norm and the point--wise product can also
be seen as a unital commutative $C^{\ast }$--algebra.

We are now in position to explain the algebraic approach of Quantum\
Mechanics, which starts as follows. \bigskip

\noindent \textbf{Heisenberg Picture of Quantum Mechanics (H1).}%
\index{Heisenberg picture!H1} A physical system is described by its physical
properties, i.e., by a non--empty set $\mathcal{O}\not=\emptyset $ of all
physical quantifies that can be measured in this system, as well as by the
relations between them. Elements $B\in \mathcal{O}$ are called
\index{Observable!H1-S2}\emph{observables} and are taken as self--adjoint
elements of a unital\footnote{%
The existence of a unit $\mathbf{1}\in \mathcal{X}$ is assumed to simplify
discussions.} $C^{\ast }$--algebra $\mathcal{X}$. Each self--adjoint element
$B$ represents some apparatus (or measuring device) and its spectrum
corresponds to all values that can come up by measuring the corresponding
physical quantity. The quantum dynamics is given by a $C_{0}$--group (that
is, a strongly continuous group) $\tau \doteq \{\tau _{t}\}_{t\in {\mathbb{R}%
}}$ of $\ast $--automorphisms generated \cite[1.2 Definition, 1.4 Theorem]%
{EngelNagel} by a symmetric derivation $\delta $ acting on the $C^{\ast }$%
--algebra $\mathcal{X}$. In particular, by \cite[1.3 Lemma (ii)]{EngelNagel}%
, it satisfies the (autonomous) evolution equation%
\index{Evolution equation!autonomous}
\begin{equation*}
\forall t\in \mathbb{R}:\quad \partial _{t}\tau _{t}=\tau _{t}\circ \delta
=\delta \circ \tau _{t}\ ,%
\text{\qquad }\tau _{0}=\mathbf{1}_{\mathcal{X}}\ ,
\end{equation*}%
with $\delta $ being a possibly unbounded operator acting on $\mathcal{X}$.
Compare with Equations (\ref{automorphism2})--(\ref{symmetric derivation}).
Recall also that symmetric derivations refer to (linear) operators
satisfying properties (\ref{symmetric derivation}) on $\mathcal{X}$. The
pair $(\mathcal{X},\tau )$ is known as a (autonomous)
\index{C*--dynamical system@$C^{\ast }$--dynamical system}$C^{\ast }$\emph{%
--dynamical system}. A similar automorphism family can be defined for
non--autonomous dynamics by using (\ref{reverse cocycle}) and (\ref{non auto
explaination3}) on the domain $\mathrm{Dom}(\delta _{t})\subseteq \mathcal{X}
$ of a time--dependent symmetric derivation $\delta _{t}$ for $t\in \mathbb{R%
}$. See for instance Corollary \ref{Theorem Lieb-Robinson copy(4)} (iii). In
this case, one speaks about non--autonomous $C^{\ast }$--dynamical systems.
\bigskip

\noindent \textbf{Schr\"{o}dinger Picture of Quantum Mechanics (S2).}%
\index{Schr\"{o}dinger picture!S2} States are not anymore defined from a
wave function within some Hilbert space, like in Section \ref{Section Schro}%
.
\index{States}States on the $C^{\ast }$--algebra $\mathcal{X}$ are, by
definition, continuous linear functionals $\rho \in \mathcal{X}^{\ast }$
which are normalized and positive, i.e., $\rho (\mathbf{1})=1$ and $\rho
(B^{\ast }B)\geq 0$ for all $B\in \mathcal{X}$. They represent the state of
the physical system. Observe for instance that Equation (\ref{expectation
value}), or (\ref{expectation value2}) in the non--autonomous situation,
defines a continuous linear functional on the $C^{\ast }$--algebra $\mathcal{%
B}(\mathcal{H})$ that is positive and normalized, provided $\Vert \psi
\left( 0\right) \Vert _{\mathcal{H}}=1$. Thus, a state $\rho $ represents
the statistical distribution of all measures of any observable%
\index{Observable!H1-S2!expectation value} $B\in \mathcal{X}$. For
commutative $C^{\ast }$--algebras, it corresponds to a probability
distribution. If $\{\tau _{t}\}_{t\in {\mathbb{R}}}$ is a $C_{0}$--group of $%
\ast $--automorphism of $\mathcal{X}$, then, for any time $t\in {\mathbb{R}}$
and state $\rho \in \mathcal{X}^{\ast }$,%
\begin{equation*}
\rho _{t}\doteq \rho \circ \tau _{t}\in \mathcal{X}^{\ast }
\end{equation*}%
is also a state. The same holds true if the dynamics would have been
non--autonomous. In the Schr\"{o}dinger picture, the dynamics is
consequently given by the family $\{\rho _{t}\}_{t\in {\mathbb{R}}}$ of
states.\bigskip

Therefore, in the algebraic formulation of Quantum Mechanics (H1--S2), there
is no a priori Hilbert space structure appearing in the mathematical
framework, in contrast with the approach S1--H2 presented in Sections \ref%
{Section Schro}--\ref{sect non auto}. In fact, by S1--H2 one fixes a unique
Hilbert space right from the beginning, whereas the use of H1--S2 can lead
to a (not necessarily unique) Hilbert space that depends on the choice of
the state.

By \cite[p. 274]{Haag-memoires}, I.E. Segal was the first who proposed to
leave the Hilbert space approach to consider quantum observables as elements
of certain involutive Banach algebras, now known as $C^{\ast }$--algebras.
The relation between the algebraic formulation and the usual Hilbert space
based formulation of Quantum Mechanics has been established via one
important result obtained in the forties: The celebrated \emph{GNS}
(Gel'fand--Naimark--Segal) representation of states%
\index{States!GNS}.

Indeed, by \cite[Lemma 2.3.10]{BratteliRobinsonI}, a positive linear
functional $\rho $ over a $\ast $--algebra $\mathcal{X}$ satisfies
\begin{equation*}
\rho (B_{1}^{\ast }B_{2})=%
\overline{\rho (B_{2}^{\ast }B_{1})}\ ,\text{\qquad }B_{1},B_{2}\in \mathcal{%
X}\ ,
\end{equation*}%
and the Cauchy--Schwarz inequality:
\begin{equation*}
|\rho (B_{1}^{\ast }B_{2})|^{2}\leq \rho (B_{1}^{\ast }B_{1})\rho
(B_{2}^{\ast }B_{2})\ ,\text{\qquad }B_{1},B_{2}\in \mathcal{X}\ .
\end{equation*}%
Therefore, if $\mathcal{X}$ is a unital $C^{\ast }$--algebra and $\rho \in
\mathcal{X}^{\ast }$ is a state then
\begin{equation}
\mathcal{L}_{\rho }\doteq \{B\in \mathcal{X}\;:\;\rho (B^{\ast }B)=0\}
\label{kernel}
\end{equation}%
is a closed left--ideal of $\mathcal{X}$, i.e., $\mathcal{L}_{\rho }$ is a
closed subspace such that $\mathcal{XL}_{\rho }\subset \mathcal{L}_{\rho }$,
and one can define a scalar product on the quotient $\mathcal{X}/\mathcal{L}%
_{\rho }$, which can be completed to get a Hilbert space $\mathcal{H}_{\rho
} $. For more details on the GNS construction, see \cite%
{BratteliRobinsonI,KR1}.

The GNS representation has led to very important applications of the
Tomita--Takesaki theory (see, e.g., \cite{BratteliRobinsonI,KR2}), developed
in seventies, to Quantum Field Theory and Statistical Mechanics. These
developments mark the beginning of the algebraic approach to Quantum
Mechanics and Quantum Field Theory. For more details, see, e.g., \cite{Emch}%
. In fact, the algebraic formulation turned out to be extremely important
and fruitful for the mathematical foundations of Quantum Statistical
Mechanics. See for instance discussions of Section \ref{section CAR}, in
particular Lemmata \ref{equivalence lemma1} and \ref{equivalence lemma2}. In
particular, it has been an important branch of research during decades with
lots of works on quantum spin and Fermi systems. See, e.g., \cite%
{BratteliRobinson,Israel} (spin) and \cite%
{Araki-Moriya,BruPedra2,BruPedra-homog} (Fermi).

\subsection{Representation Theory -- The importance of the Algebraic
Approach for Infinite Systems\label{Sec Representation theory}}

We discuss here how $C^{\ast }$--algebras can be represented by spaces of
bounded operators acting Hilbert spaces. A \emph{representation}%
\index{Representation} on the Hilbert space $\mathcal{H}$ of a $C^{\ast }$%
--algebra $\mathcal{X}$ is, by definition \cite[Definition 2.3.2]%
{BratteliRobinsonI}, a $\ast $--homomorphism $\mathbf{\pi }$ from $\mathcal{X%
}$ to the unital $C^{\ast }$--algebra $\mathcal{B}(\mathcal{H})$ of all
bounded linear operators acting on $\mathcal{H}$. In this case, $\mathcal{H}$
is named the representation (Hilbert) space and if it is finite (resp.
infinite) dimensional then we have a finite (resp. infinite) dimensional
representation of $\mathcal{X}$. Injective representations are called
faithful.

By the Gelfand--Naimark theorem \cite{Dix}, each $C^{\ast }$--algebra has,
at least, one faithful representation.%
\index{Gelfand--Naimark theorem} In particular, since faithful
representations are isometric \cite[Proposition 2.3.3]{BratteliRobinsonI},
any $C^{\ast }$--algebra can be identified with some $C^{\ast }$--subalgebra
of the $C^{\ast }$--algebra $\mathcal{B}(\mathcal{H})$\ of all bounded
linear operators acting on some Hilbert space $\mathcal{H}$. In fact, as
mentioned in Section \ref{sect Algebraic Formulation of Quantum Mechanics},
the algebraic formulation of Quantum Mechanics (H1--S2) leads to a Hilbert
space $\mathcal{H}_{\rho }$ for any state $\rho $, via its GNS
representation.%
\index{States!GNS} A faithful representation can be derived in this way if
there exists a state $\rho $ for which $\mathcal{L}_{\rho }=\{0\}$ (\ref%
{kernel}), i.e., if a faithful state exists for the algebra under
consideration.

Uniqueness of representations of $C^{\ast }$--algebras is clearly wrong.
Indeed, for any representation $\mathbf{\pi :}\mathcal{X}\rightarrow
\mathcal{B}(\mathcal{H})$, we can construct another one by doubling the
Hilbert space $\mathcal{H}$ and the map $\mathbf{\pi }$, via a direct sum $%
\mathcal{H}_{1}\oplus \mathcal{H}_{2}$ with $\mathcal{H}_{1},\mathcal{H}_{2}$
being two copies of $\mathcal{H}$. Therefore, one uses a notion of
\textquotedblleft minimal\textquotedblright\ representations of $C^{\ast }$%
--algebras: If $\mathbf{\pi }:\mathcal{X}\rightarrow \mathcal{B}(\mathcal{H}%
) $ is a representation of a $C^{\ast }$--algebra $\mathcal{X}$ on the
Hilbert space $\mathcal{H}$, we say that it is \emph{irreducible}, whenever $%
\{0\}$ and $\mathcal{H}$ are the only closed subspaces of $\mathcal{H}$
which are invariant w.r.t. to any operator of $\mathbf{\pi }(\mathcal{X}%
)\subset \mathcal{B}(\mathcal{H})$.

Now, it is well--known \cite{Na} that if a $C^{\ast }$--algebra $\mathcal{X}$
is isomorphic to the $C^{\ast }$--algebra $\mathcal{K}(\mathcal{H})\subset
\mathcal{B}(\mathcal{H})$ of all compact operators on some Hilbert space $%
\mathcal{H}$ , then, up to unitary equivalence, $\mathcal{X}$ has \emph{only
one} irreducible representation (the canonical one on $\mathcal{H}$). The
converse is true for \emph{separable} Hilbert spaces: If $\mathcal{X}$ is a $%
C^{\ast }$--algebra with a faithful representation on a separable Hilbert
space $\mathcal{H}$ and if all irreducible representations of $\mathcal{X}$
are unitarily equivalent, then $\mathcal{X}$ is isomorphic to the $C^{\ast }$%
--algebra $\mathcal{K}(\mathcal{H})$ of compact operators on some Hilbert
space $\mathcal{H}$. This result is known as the Rosenberg theorem \cite{Ros}%
. In other words, one gets the following theorem:

\begin{satz}[Uniqueness of irreducible representations -- I]
\label{Uniqueness of irreducible representations}\mbox{
}\newline
If a $C^{\ast }$--algebra $\mathcal{X}$ has a faithful representation on a
separable Hilbert space, then its irreducible representation is unique (up
to unitary equivalence) iff $\mathcal{X}$ is isomorphic to some $C^{\ast }$%
--algebra of compact operators on some Hilbert space.
\end{satz}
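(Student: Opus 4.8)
The statement is an equivalence under the standing hypothesis that $\mathcal{X}$ admits a faithful representation on a separable Hilbert space, and I would prove the two implications independently, reading the right--hand side (as in the discussion preceding the theorem) as ``$\mathcal{X}\cong\mathcal{K}(\mathcal{H})$ for some Hilbert space $\mathcal{H}$''. The first implication is elementary; the second is the substantial one.

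For ``$\mathcal{X}\cong\mathcal{K}(\mathcal{H})\Rightarrow$ uniqueness'' (the content attributed above to \cite{Na}), the plan is to show that every irreducible representation $\pi:\mathcal{K}(\mathcal{H})\to\mathcal{B}(\mathcal{H}')$ is unitarily equivalent to the identity representation on $\mathcal{H}$. Since $\mathcal{K}(\mathcal{H})$ is simple, $\pi$ is faithful on it. Fix a unit vector $\eta\in\mathcal{H}$ and the rank--one projection $p$ onto $\mathbb{C}\eta$; then $p\,\mathcal{K}(\mathcal{H})\,p=\mathbb{C}p$, hence $\pi(p)\pi(\mathcal{K}(\mathcal{H}))\pi(p)=\mathbb{C}\pi(p)$, and by irreducibility (via Kadison transitivity) $\pi(p)$ must be a rank--one projection; let $\xi$ span its range, $\Vert\xi\Vert=1$. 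Using $p\,A^{\ast}A\,p=\Vert A\eta\Vert^{2}\,p$ and $\pi(p)\xi=\xi$ one gets $\Vert\pi(A)\xi\Vert=\Vert A\eta\Vert$ for all $A\in\mathcal{K}(\mathcal{H})$, so $A\eta\mapsto\pi(A)\xi$ is a well--defined isometry $U:\mathcal{H}\to\mathcal{H}'$ (note already $\mathcal{K}(\mathcal{H})\eta=\mathcal{H}$), with dense range by irreducibility, hence unitary, and $UA=\pi(A)U$ for all $A$. Uniqueness follows.

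For the converse, assume every irreducible representation of $\mathcal{X}$ is unitarily equivalent to a fixed one $\pi_{0}$ on $\mathcal{H}_{0}$. First I would note $\pi_{0}$ is faithful: decomposing a faithful representation on a separable space as a direct integral of irreducibles, all fibres are equivalent to $\pi_{0}$, so $\ker\pi_{0}=\{0\}$ and $\mathcal{X}$ is primitive; it then suffices to identify $\pi_{0}(\mathcal{X})$ with $\mathcal{K}(\mathcal{H}_{0})$. Granting the key fact that $\pi_{0}(\mathcal{X})$ contains a nonzero compact operator, the classical result that an irreducibly acting $C^{\ast}$--algebra containing one nonzero compact contains all compacts gives $\mathcal{K}(\mathcal{H}_{0})\subseteq\pi_{0}(\mathcal{X})$, so $I\doteq\pi_{0}^{-1}(\mathcal{K}(\mathcal{H}_{0}))$ is a nonzero closed two--sided ideal of $\mathcal{X}$. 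If $I\neq\mathcal{X}$, the nonzero quotient $\mathcal{X}/I$ carries an irreducible representation, which pulls back to an irreducible representation of $\mathcal{X}$ vanishing on $I$; by hypothesis it is equivalent to $\pi_{0}$, whose kernel is $\{0\}$, contradicting $I\neq\{0\}$. Hence $I=\mathcal{X}$ and $\pi_{0}$ is a $\ast$--isomorphism onto $\mathcal{K}(\mathcal{H}_{0})$.

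The hard part is exactly the step ``$\pi_{0}(\mathcal{X})$ contains a nonzero compact operator'', and this is where separability is indispensable: without it the equivalence fails (this is Naimark's problem, with a counterexample of density character $\aleph_{1}$ due to Akemann and Weaver). Rather than reproving it, I would invoke Rosenberg's theorem \cite{Ros} for this step; its proof rests on direct--integral theory together with Glimm's type I / non--type I dichotomy for separable $C^{\ast}$--algebras, from which the single--point spectrum forces $\mathcal{X}$ to be liminal and then, being primitive, elementary. If a self--contained treatment were required, that dichotomy would be the genuine obstacle and is what I would cite in full.
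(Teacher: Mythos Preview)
The paper does not actually prove this theorem: the statement is merely a summary of the paragraph preceding it, which attributes the direction ``$\mathcal{X}\cong\mathcal{K}(\mathcal{H})\Rightarrow$ uniqueness'' to \cite{Na} and the converse (under the separability hypothesis) to Rosenberg \cite{Ros}, with no further argument given. Your proposal is a correct and reasonably detailed elaboration of exactly those two citations---you invoke \cite{Na} and \cite{Ros} at the same points the paper does, and you are transparent about which step (existence of a nonzero compact in $\pi_{0}(\mathcal{X})$) is the genuinely deep one requiring separability and the type~I dichotomy. In that sense you go strictly beyond the paper rather than diverging from it.
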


The question whether all $C^{\ast }$--algebras with a unique (up to unitary
equivalence) irreducible representation is isomorphic to an algebra of
compact operators on a non--separable Hilbert space is known as
\textquotedblleft Naimark's problem\textquotedblright . Indeed, this
question is highly non--trivial. It depends on the continuum hypothesis and
not only on the axioms of the Zermelo--Fraenkel set theory with the axiom of
choice (ZFC) \cite[Chapter 19]{Wea}.

In the finite dimensional situation, the $C^{\ast }$--algebra of compact
operators is of course equal to the whole $C^{\ast }$--algebra of bounded
operators. Therefore, Theorem \ref{Uniqueness of irreducible representations}
implies the following assertion:

\begin{koro}[Uniqueness of irreducible representations -- II]
\label{Uniqueness of irreducible representations --II}\mbox{
}\newline
If the $C^{\ast }$--algebra $\mathcal{X}$ is isomorphic to $\mathcal{B}(%
\mathcal{H})$ for some finite dimensional Hilbert space $\mathcal{H}$, then
its irreducible representation is unique, up to unitary equivalence. Any
isomorphism $\mathcal{X}\rightarrow \mathcal{B}(\mathcal{H})$ of $C^{\ast }$%
--algebras is such a irreducible representation.
\end{koro}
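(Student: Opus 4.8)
The plan is to obtain the corollary as a direct specialization of Theorem~\ref{Uniqueness of irreducible representations}, by checking that a finite--dimensional $\mathcal{B}(\mathcal{H})$ meets its hypotheses, and then to verify separately that the defining representation is irreducible.

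First I would record the two elementary facts that make the finite--dimensional case fit the theorem: when $\dim \mathcal{H}<\infty$ the Hilbert space $\mathcal{H}$ is (trivially) separable, and every bounded operator on $\mathcal{H}$ is compact, so that $\mathcal{B}(\mathcal{H})=\mathcal{K}(\mathcal{H})$. Hence, if $\mathbf{\pi}_{0}:\mathcal{X}\rightarrow \mathcal{B}(\mathcal{H})$ is an isomorphism of $C^{\ast}$--algebras, then $\mathbf{\pi}_{0}$ is in particular a faithful representation of $\mathcal{X}$ on a separable Hilbert space, and $\mathcal{X}$ is isomorphic to the $C^{\ast}$--algebra $\mathcal{K}(\mathcal{H})$ of compact operators. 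Theorem~\ref{Uniqueness of irreducible representations} then applies verbatim and yields that $\mathcal{X}$ possesses, up to unitary equivalence, exactly one irreducible representation.

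Next I would show that $\mathbf{\pi}_{0}$ itself is irreducible, which together with the uniqueness just obtained gives the last sentence of the statement. Since $\mathbf{\pi}_{0}(\mathcal{X})=\mathcal{B}(\mathcal{H})$, the closed subspaces of $\mathcal{H}$ invariant under $\mathbf{\pi}_{0}(\mathcal{X})$ are exactly those invariant under all of $\mathcal{B}(\mathcal{H})$, so it suffices to argue that the identity representation of $\mathcal{B}(\mathcal{H})$ on $\mathcal{H}$ is irreducible. For this, let $V\subseteq \mathcal{H}$ be a closed subspace invariant under every operator of $\mathcal{B}(\mathcal{H})$ with $V\neq \{0\}$, pick a nonzero $v\in V$, and take an arbitrary $w\in \mathcal{H}$; the rank--one operator $u\mapsto \langle v,u\rangle_{\mathcal{H}}\,w$ belongs to $\mathcal{B}(\mathcal{H})$ and sends $v$ to $\Vert v\Vert_{\mathcal{H}}^{2}\,w$, whence $w\in V$ and $V=\mathcal{H}$. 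Thus $\{0\}$ and $\mathcal{H}$ are the only invariant closed subspaces, so the defining representation is irreducible and hence so is $\mathbf{\pi}_{0}$, and likewise any other isomorphism $\mathcal{X}\rightarrow \mathcal{B}(\mathcal{H})$.

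I do not expect any real obstacle: the content is entirely a reduction to Theorem~\ref{Uniqueness of irreducible representations} plus the standard fact that the defining representation of a (finite--dimensional) $\mathcal{B}(\mathcal{H})$ is irreducible. The only points that deserve a line of care are verifying that the two hypotheses of that theorem — a faithful representation on a separable Hilbert space, and isomorphy to an algebra of compact operators — are genuinely satisfied in the finite--dimensional setting, and stating the conclusion with the qualifier ``up to unitary equivalence''.
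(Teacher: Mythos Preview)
Your proposal is correct and follows exactly the paper's approach: the paper simply observes that in the finite--dimensional case the compact operators coincide with all bounded operators, and then invokes Theorem~\ref{Uniqueness of irreducible representations}. You supply more detail than the paper does, in particular the explicit rank--one argument for irreducibility of the defining representation, which the paper leaves as an implicit claim.
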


As a consequence, in the finite dimensional situation, the algebraic and
Hilbert space based approaches turns out to be equivalent to each other.
However, this is not anymore the case in the infinite dimensional situation
for unital $C^{\ast }$--algebras because the $C^{\ast }$--algebra of all
compact operators \emph{cannot} have a unit:

\begin{koro}[Non--uniqueness of irreducible representations]
\label{Uniqueness of irreducible representations --III}\mbox{
}\newline
Any unital $C^{\ast }$--algebra $\mathcal{X}$ with an infinite dimensional
faithful representation on a separable Hilbert space has more than one
unitarily non--equivalent irreducible representation.
\end{koro}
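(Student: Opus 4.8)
The plan is to argue by contradiction, resting entirely on Theorem~\ref{Uniqueness of irreducible representations}. So I would suppose that the unital $C^{\ast}$--algebra $\mathcal{X}$ has, up to unitary equivalence, only one irreducible representation. Since by hypothesis $\mathcal{X}$ admits a faithful representation on a separable Hilbert space, Theorem~\ref{Uniqueness of irreducible representations} applies and provides a $\ast$--isomorphism $\pi\colon\mathcal{X}\to\mathcal{A}$ onto a $C^{\ast}$--algebra $\mathcal{A}$ all of whose elements are compact operators on some Hilbert space $\mathcal{H}_{0}$, i.e.\ $\mathcal{A}\subseteq\mathcal{K}(\mathcal{H}_{0})$.

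The next step exploits unitality. A $\ast$--isomorphism carries the unit to the unit, hence $P\doteq\pi(\mathbf{1}_{\mathcal{X}})$ is the identity of $\mathcal{A}$; being a self--adjoint idempotent lying in $\mathcal{K}(\mathcal{H}_{0})$, it is a \emph{finite--rank} orthogonal projection, say of rank $r<\infty$. Because $P$ is the unit of $\mathcal{A}$, every $a\in\mathcal{A}$ satisfies $a=PaP$, so $\mathcal{A}\subseteq P\mathcal{B}(\mathcal{H}_{0})P$, and $P\mathcal{B}(\mathcal{H}_{0})P$ is $\ast$--isomorphic to the full matrix algebra $\mathcal{B}(\mathbb{C}^{r})$. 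Therefore $\mathcal{X}\cong\mathcal{A}$ is finite--dimensional, of dimension at most $r^{2}$.

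To close the argument I would invoke the structure theory of finite--dimensional $C^{\ast}$--algebras: any such algebra is $\ast$--isomorphic to a finite direct sum $\bigoplus_{j=1}^{k}\mathcal{B}(\mathbb{C}^{n_{j}})$, which has exactly $k$ pairwise unitarily inequivalent irreducible representations, one per block. Uniqueness of the irreducible representation forces $k=1$, i.e.\ $\mathcal{X}\cong\mathcal{B}(\mathbb{C}^{n})$ for some $n\in\mathbb{N}$ --- in particular $\mathcal{X}$ is finite--dimensional, contradicting that $\mathcal{X}$ is (genuinely) infinite--dimensional, which is what the existence of an infinite--dimensional faithful representation is meant to encode here. (Compare also Corollary~\ref{Uniqueness of irreducible representations --II}, which records the converse direction for $\mathcal{X}\cong\mathcal{B}(\mathbb{C}^{n})$.)

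The one place that requires care --- and which I would regard as the main obstacle --- is the passage from ``$\mathcal{X}\cong\mathcal{B}(\mathbb{C}^{n})$'' back to a genuine contradiction: one must use the hypothesis in the strong form that $\mathcal{X}$ is not $\ast$--isomorphic to any full matrix algebra (equivalently, $\mathcal{X}$ is infinite--dimensional), since the matrix algebras $\mathcal{B}(\mathbb{C}^{n})$ themselves do carry faithful representations on separable infinite--dimensional Hilbert spaces --- take the $n$--dimensional irreducible representation with countably infinite multiplicity. Modulo this point, all ingredients --- Theorem~\ref{Uniqueness of irreducible representations}, compactness forcing a finite--rank unit, and the Wedderburn--type decomposition --- are entirely standard, so the remainder of the proof is short.
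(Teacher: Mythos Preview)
Your argument is correct and follows essentially the same line as the paper's, which offers no explicit proof but only the remark preceding the corollary that ``the $C^{\ast}$--algebra of all compact operators \emph{cannot} have a unit'' (on an infinite--dimensional Hilbert space). Your observation about the hypothesis is apt: the corollary must be read as asserting that $\mathcal{X}$ itself is infinite--dimensional, since, as you note, a full matrix algebra does admit faithful representations on infinite--dimensional separable Hilbert spaces; the paper leaves this implicit. The detour through the Wedderburn decomposition is not needed --- once the unit is a finite--rank projection you already have $\dim\mathcal{X}<\infty$, which is the contradiction --- but it does no harm.
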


Because of Corollary \ref{Uniqueness of irreducible representations --III},
the algebraic approach is more general than the Hilbert space based
approach, in the case of infinite dimensional unital underlying $C^{\ast }$%
--algebras. In condensed matter physics the non--uniqueness of irreducible
representations is intimately related to the existence of various
thermodynamically stable phases of the same material. Because of this, no
reasonable microscopic theory of first order phase transitions is possible
within the Hilbert space based approach, and the use of the algebraic
setting is imperative.

This fact was first observed by Haag in 1962 \cite{Haag62}, who established
that the non--uniqueness of the ground state of the BCS model in infinite
volume is related to the existence of several inequivalent irreducible
representations \cite[Definition 2.3.2]{BratteliRobinsonI} of the
Hamiltonian, see also \cite{ThirWeh67,Emch}.

\section{Algebraic Setting for Interacting Fermions on the Lattice\label%
{sect 2.1 copy(1)}\label{Section main results}}

\subsection{{Single Quantum Particle on }Lattices\label{Section Qingle part}}

All quantum particles carry an intrinsic form of angular momentum, the
so--called \emph{spin}, first introduced by W. Pauli in the twenties. It is
reflected by a
\index{Spin quantum number}spin quantum number $\mathfrak{s}\in \mathbb{N}/2$
which gives rise to the finite spin set%
\index{Spin set}
\begin{equation}
\mathrm{S}\doteq \left\{ -\mathfrak{s},-\mathfrak{s}+1,\ldots \mathfrak{s}-1,%
\mathfrak{s}\right\} \subset \mathbb{N}\ .  \label{spin set}
\end{equation}%
In fact, $\mathrm{S}$ is the spectrum of the spin observable associated with
the quantum particle.

If $\mathfrak{s}\notin \mathbb{N}$ is half--integer then the corresponding
particles are named
\index{Fermion}\emph{fermions} while $\mathfrak{s}\in \mathbb{N}$ means by
definition that we have
\index{Boson}\emph{bosons}. For instance, among all elementary particles of
the standard model in Particle Physics, quarks and leptons (like
\index{Electron}electrons, $\mathfrak{s}=1/2$) are fermions while all the
other ones -- the gluon, photon, Z-- and W-- bosons as well as the Higgs
bosons -- are bosons. Many known composite particles like protons ($%
\mathfrak{s}=1/2$) are fermions. Others are bosons, like for instance Helium
4.

By the celebrated spin--statistics theorem, fermionic wave functions are
antisymmetric with respect to (w.r.t.) permutations of particles, whereas
the bosonic ones are symmetric. In the sequel, we consider the fermionic
case which is \emph{only} defined here via the antisymmetry of many--body
wave functions (Section \ref{sect Quantum Many--Body Systems}), or
equivalently by the Canonical Anti--commutation Relations (CAR) in the
algebraic formulation (Section \ref{section CAR}). Therefore, in order to
simplify notation, we omit the spin property of quantum particles because it
is \emph{completely irrelevant} in all our proofs, up to obvious
modifications. So, we consider the case $\mathfrak{s}=0$, i.e., $\mathrm{S}%
\doteq \{0\}$, even if this is not coherent with the definition explained
just above of fermions in Physics.

Additionally, the host material for the quantum particle is a cubic crystal,
i.e., a lattice%
\index{Lattice}
\begin{equation*}
\mathfrak{L}\doteq \mathbb{Z}^{d}\times \mathrm{S}\equiv \mathbb{Z}^{d}\
,\qquad d\in \mathbb{N}\ .
\end{equation*}%
This special choice is again not essential in our proofs. In fact, we could
take instead of $\mathbb{Z}^{d}$ any countable metric space $\mathfrak{L}$
as soon as it is regular, as defined in \cite[Section 3.1]{NS}. See also
Section \ref{Section Banach space interaction} for more details. (If $%
\mathfrak{s}\neq 0$ then it would suffice to equip the set $\mathfrak{L}%
\doteq \mathbb{Z}^{d}\times \mathrm{S}\neq \mathbb{Z}^{d}$ with the metric
of $\mathbb{Z}^{d}$ while omitting the spin variable.)

Therefore, the one--particle Hilbert space representing the set of all wave
functions of any quantum particle on the lattice is given by the space%
\index{One--particle Hilbert space}
\begin{equation*}
\ell ^{2}\left( \mathfrak{L}\right) \doteq \left\{ \psi :\mathfrak{L}%
\rightarrow \mathbb{C}%
\text{ such that }\sum\limits_{x\in \mathfrak{L}}\left\vert \psi \left(
x\right) \right\vert ^{2}<\infty \right\}
\end{equation*}%
of square--summable functions on the lattice $\mathfrak{L}$. Here, the
scalar product of $\ell ^{2}(\mathfrak{L})$ is defined by%
\begin{equation*}
\left\langle \psi ,\varphi \right\rangle _{\ell ^{2}\left( \mathfrak{L}%
\right) }\doteq \sum\limits_{x\in \mathfrak{L}}\overline{\psi \left(
x\right) }\varphi \left( x\right) \ ,\qquad \psi ,\varphi \in \ell
^{2}\left( \mathfrak{L}\right) \ .
\end{equation*}%
The canonical orthonormal basis of $\ell ^{2}(\mathfrak{L})$ is given by the
family $\left\{ \mathfrak{e}_{x}\right\} _{x\in \mathfrak{L}}$ defined by%
\index{One--particle Hilbert space!orthonormal basis}
\begin{equation}
\mathfrak{e}_{x}(y)\doteq \delta _{x,y}\ ,\qquad x,y\in \mathfrak{L}\ .
\label{ONB}
\end{equation}%
Here, $\delta _{k,l}$ is the Kronecker delta, that is, the function of two
variables defined by $\delta _{k,l}=1$ if $k=l$ and $\delta _{k,l}=0$
otherwise.

In real systems, the quantum particle is contained in an arbitrary large but
finite region. Therefore, we use the notation $\mathcal{P}_{f}(\mathfrak{L}%
)\subset 2^{\mathfrak{L}}$ for the set of all \emph{finite} subsets of $%
\mathfrak{L}$ and we meanwhile denote by%
\index{One--particle Hilbert space}
\begin{equation}
\ell ^{2}\left( \Lambda \right) \doteq \left\{ \psi \in \ell ^{2}\left(
\mathfrak{L}\right) :%
\text{ }\psi |_{\Lambda ^{c}}=0\right\} \subseteq \ell ^{2}\left( \mathfrak{L%
}\right)  \label{l2 space}
\end{equation}%
the Hilbert subspace of square--summable functions on any \emph{possibly
infinite} subset $\Lambda \subseteq \mathfrak{L}$ with complement $\Lambda
^{c}\doteq \mathfrak{L}\backslash \Lambda $. Clearly, the Hilbert subspace $%
\ell ^{2}\left( \Lambda \right) $ has $\left\{ \mathfrak{e}_{x}\right\}
_{x\in \Lambda }$ as canonical orthonormal basis and, for any $\Lambda \in
\mathcal{P}_{f}(\mathfrak{L})$, its dimension thus equals\ the volume $%
|\Lambda |$ of $\Lambda $.

Then, as explained in Section \ref{Section Schro}, the quantum dynamics is
defined by the Schr\"{o}dinger equation (\ref{Schrodinger equation}) for
some one--particle Hamiltonian $\mathrm{H}_{1}$ acting on $\mathcal{H}=\ell
^{2}\left( \Lambda \right) $ for any (possibly infinite) subset $\Lambda
\subseteq \mathfrak{L}$. A standard example of such self--adjoint operators
is given by%
\index{Kinetic term}%
\index{Hamiltonian!S1-H2!one--particle}
\begin{equation}
\lbrack \mathrm{H}_{1}(\psi )](x)=\sum\limits_{y\in \mathfrak{L}}h\left(
\left\vert x-y\right\vert \right) \psi (y)\ ,%
\text{\qquad }x\in \Lambda ,\ \psi \in \ell ^{2}\left( \Lambda \right) \ ,
\label{one particule Hamil}
\end{equation}%
for any function $h:[0,\infty )\rightarrow \mathbb{R}$, the absolute value
of which decreases sufficiently fast at infinity. This example includes $d$%
--dimensional discrete Laplacians, see Section \ref{Section Inter dis media}.

\subsection{Quantum Many--Body Systems on Lattices\label{sect Quantum
Many--Body Systems}}

Assume that quantum particles are within some (possibly infinite) subset $%
\Lambda \subseteq \mathfrak{L}$. A priori, the Hilbert space representing
the set of all wave functions of $n\in \mathbb{N}$ identical particles is
given by the\ Hilbert space%
\index{Many--body Hilbert space}
\begin{equation*}
\ell ^{2}(\Lambda )^{\otimes n}\doteq \ell ^{2}(\Lambda )\otimes \cdots
\otimes \ell ^{2}(\Lambda )\ ,
\end{equation*}%
the $n$--fold tensor product of $\ell ^{2}(\Lambda )$ with scalar product
defined by
\begin{equation*}
\left\langle \psi _{1}\otimes \cdots \otimes \psi _{n},\varphi _{1}\otimes
\cdots \otimes \varphi _{n}\right\rangle _{\ell ^{2}(\Lambda )^{\otimes
n}}\doteq \left\langle \psi _{1},\varphi _{1}\right\rangle _{\ell ^{2}\left(
\Lambda \right) }\cdots \left\langle \psi _{n},\varphi _{n}\right\rangle
_{\ell ^{2}\left( \Lambda \right) }\ ,
\end{equation*}%
for any $\psi _{1},\ldots ,\psi _{n},\varphi _{1},\ldots ,\varphi _{1}\in
\ell ^{2}\left( \Lambda \right) $. A canonical orthonormal basis of $\ell
^{2}(\Lambda )^{\otimes n}$ is given by the family%
\index{Many--body Hilbert space!orthonormal basis}
\begin{equation}
\left\{ \mathfrak{e}_{x_{1}}\otimes \cdots \otimes \mathfrak{e}%
_{x_{n}}\right\} _{x_{1},\ldots ,x_{n}\in \Lambda }\ ,  \label{ONBbis}
\end{equation}%
where we recall that $\left\{ \mathfrak{e}_{x}\right\} _{x\in \Lambda }$ is
the (canonical) orthonormal basis of $\ell ^{2}(\Lambda )$ defined by (\ref%
{ONB}). Because of (\ref{l2 space}), note that
\begin{equation*}
\ell ^{2}(\Lambda )^{\otimes n}\subseteq \ell ^{2}(\mathfrak{L})^{\otimes n}
\end{equation*}%
for any $\Lambda \subseteq \mathfrak{L}$ and $n\in \mathbb{N}$.

In Quantum Mechanics, however, quantum particles are \emph{indistinguishable}
(or indiscernible), i.e., we cannot distinguish them, even in principle.
Indistinguishability is a concept already used in Classical Mechanics, for
instance in Botlzmann's `Combinatorial Approach' to derive the so--called
Maxwell--%
\-%
Boltz%
\-%
mann statistics. Two classical objects are indeed indistinguishable when
they share the same properties, up to their spatio--temporal location. In
particular, by some form of impenetrability assumption, their
indistinguishability does not prevent them from being two \emph{different}
individuals and so, a spatio--temporal permutation of the two objects yields
another physical state.

This property is no longer true in Quantum Mechanics. Quoting E. Schr\"{o}%
dinger \cite{Shrodinger}: \textit{\textquotedblleft You cannot mark an
electron, you cannot paint it red. Indeed, you must not even think of it as
marked.\textquotedblright\ }This has an important mathematical consequence
on the modelling of composite quantum objects, the individuality of which
becomes philosophically questionable. This was implicitly used by M.K.E.L.
Planck in his famous study of thermal radiation law, but rather in \textit{%
ad hoc} way\footnote{%
He may have discovered it by working backwards from the thermal radiation
law, see \cite[p. 86]{FK}.}, without conceptual foundations. For more
details on that issue, including references, we strongly recommend \cite{FK}.

In fact, the expectation value (\ref{expectation value}) of any observable
must not depend on the arbitrary numbering of particles. In other words, the
wave function $\psi ^{(n)}\in \ell ^{2}(\Lambda )^{\otimes n}$ have to
satisfy the equality%
\index{Observable!S1-H2!expectation value}%
\begin{equation}
\langle \psi ^{(n)},B\psi ^{(n)}\rangle _{\ell ^{2}(\Lambda )^{\otimes
n}}=\langle S_{\pi }\psi ^{(n)},BS_{\pi }\psi ^{(n)}\rangle _{\ell
^{2}(\Lambda )^{\otimes n}}  \label{property}
\end{equation}%
for all $B=B^{\ast }\in \mathcal{B}(\ell ^{2}(\Lambda )^{\otimes n})$, where
$S_{\pi }\in \mathcal{B}(\ell ^{2}(\mathfrak{L})^{\otimes n})$ is the
unitary operator defined for any permutation $\pi $ of $n\in \mathbb{N}$
elements by the conditions%
\begin{equation}
S_{\pi }\left( \psi _{1}\otimes \cdots \otimes \psi _{n}\right) =\psi _{\pi
\left( 1\right) }\otimes \cdots \otimes \psi _{\pi \left( n\right) }\
,\qquad \psi _{1},\ldots ,\psi _{n}\in \ell ^{2}\left( \mathfrak{L}\right) \
.  \label{Spi}
\end{equation}%
This yields two drastically different situations:

\begin{itemize}
\item[(b)] For any permutation $\pi $ of $n\in \mathbb{N}$ elements, $S_{\pi
}\psi ^{(n)}=\psi ^{(n)}$, i.e., $\psi ^{(n)}$ is a \emph{symmetric} $n$%
--particle wave function. It corresponds to the
\index{Boson}\emph{boson} case.

\item[(f)] For any permutation $\pi $ of $n\in \mathbb{N}$ elements with
sign $(-1)^{\pi }$, $S_{\pi }\psi ^{(n)}=(-1)^{\pi }\psi ^{(n)}$, i.e., $%
\psi ^{(n)}$ is an \emph{antisymmetric} $n$--particle wave function. Quantum
particles are
\index{Fermion}\emph{fermions}.
\end{itemize}

\noindent Indeed, in contrast with particles with integer spins (boson
case), physical particles with half--integer spins (fermion case), obey the
\index{Pauli exclusion principle}Pauli exclusion principle, which says that
two identical fermions cannot occupy the same quantum state simultaneously.
The latter is reflected in the antisymmetry property of many--fermion wave
functions.

Therefore, we mathematically distinguish fermions and bosons only with
symmetry properties of wave functions w.r.t. to permutations. In fact, as
already mentioned in Section \ref{Section Qingle part}, the spin dependence
is, from the technical point of view of proofs, irrelevant here (up to
obvious modifications) and without loss of generality (w.l.o.g.) we consider
fermions without taking into account its spin in our notation.

\begin{bemerkung}[Anyons]
\mbox{
}\newline
\index{Anyon}By implementing the permutation symmetry property in the
configuration space before the \textquotedblleft
quantization\textquotedblright , in the two dimensional space $\mathbb{R}%
^{2} $, one has a continuum of (fractional) statistics ranging from the
fermionic to the bosonic cases. This refers to the existence of anyons \cite%
{G,LM,Wi}, which has been observed in the context of the fractional quantum
Hall effect. Anyons (like bosons as well) do not play any role in the sequel.
\end{bemerkung}

\begin{bemerkung}[Parastatistics]
\mbox{
}\newline
\index{Parastatistics}If $\left[ B,S_{\pi }\right] =0$ then Equation (\ref%
{property}) trivially holds true for all states $\psi ^{(n)}\in \ell
^{2}(\Lambda )^{\otimes n}$. One could thus assume that, for any permutation
$\pi $ of $n\in \mathbb{N}$ elements, the states $\psi ^{(n)}$ and $S_{\pi
}\psi ^{(n)}$ cannot be distinguished by any experiment. This view point
restricts the set of possible observables to those commuting with all
permutation operators $S_{\pi }$. Different statistics, again ranging from
the fermionic to the bosonic cases, can then be found from a mathematical
perspective. This refers to the so--called parastatistics (which is
invariant under the quantum dynamics). Philosophically, this view point has
the advantage to restore the individuality of quantum particles, in the
classical sense. A historical overview on this approach is given in \cite[%
Section 3.8]{FK}.
\end{bemerkung}

Therefore, for any fixed $n\in \mathbb{N}$, we define the orthogonal
projection $P_{n}\in \mathcal{B}(\ell ^{2}(\mathfrak{L})^{\otimes n})$ onto
the subspace of antisymmetric $n$--particle wave functions in $\ell ^{2}(%
\mathfrak{L})^{\otimes n}$ by%
\begin{equation}
P_{n}\doteq
\frac{1}{n!}\sum_{\pi \in \mathcal{P}_{n}}(-1)^{\pi }S_{\pi }
\label{definition projection operator}
\end{equation}%
with $S_{\pi }$ being the operator defined via (\ref{Spi}) and where%
\begin{equation}
\mathcal{P}_{n}\doteq \left\{ \pi :\left\{ 1,\ldots ,n\right\} \rightarrow
\left\{ 1,\ldots ,n\right\} \text{ bijective}\right\}  \label{permutations}
\end{equation}%
denotes the set of all permutations $\pi $ of $n\in \mathbb{N}$\ elements.
Then, for $n\in \mathbb{N}$, the Hilbert space representing the set of all $%
n $--fermion wave functions is given by the\ Hilbert subspace%
\index{Many--fermion Hilbert space}
\begin{equation*}
P_{n}\ell ^{2}(\Lambda )^{\otimes n}\subseteq \ell ^{2}(\Lambda )^{\otimes n}
\end{equation*}%
for any (possibly infinite) subset $\Lambda \subseteq \mathfrak{L}$.

As explained in Section \ref{Section Schro}, the quantum dynamics is defined
by the Schr\"{o}%
\-%
dinger equation (\ref{Schrodinger equation}) for some Hamiltonian $\mathrm{H}%
^{\otimes n}$ acting on $\mathcal{H}=P_{n}\ell ^{2}(\Lambda )^{\otimes n}$
at fixed $n\in \mathbb{N}$ and $\Lambda \subseteq \mathfrak{L}$. A standard
example of such self--adjoint operators is given by%
\index{Hamiltonian!S1-H2!many-fermion}
\begin{equation}
\lbrack \mathrm{H}^{\otimes n}(\psi )](x)=\mathrm{H}_{1}\otimes \mathbf{1}%
_{\ell ^{2}(\Lambda )}\cdots \otimes \mathbf{1}_{\ell ^{2}(\Lambda )}+\cdots
+\mathbf{1}_{\ell ^{2}(\Lambda )}\otimes \cdots \otimes \mathbf{1}_{\ell
^{2}(\Lambda )}\otimes \mathrm{H}_{1}+\mathrm{I}^{\otimes n}\ ,
\label{family Hn}
\end{equation}%
with the one--particle Hamiltonian $\mathrm{H}_{1}$ defined by (\ref{one
particule Hamil}) while $\mathrm{I}^{\otimes n}$ is defined by the conditions%
\index{Interparticle interactions}%
\index{Density--density interaction}
\begin{equation*}
\mathrm{I}^{\otimes n}P_{n}\left( \mathfrak{e}_{x_{1}}\otimes \cdots \otimes
\mathfrak{e}_{x_{n}}\right) =\sum\limits_{1\leq j<k\leq n}v\left( \left\vert
x_{j}-x_{k}\right\vert \right) P_{n}\left( \mathfrak{e}_{x_{1}}\otimes
\cdots \otimes \mathfrak{e}_{x_{n}}\right)
\end{equation*}%
for any $x_{1},\ldots ,x_{n}\in \Lambda \subseteq \mathfrak{L}$. $\mathrm{I}%
^{\otimes n}$ represents some interparticle forces which are characterized
by a function $v:[0,\infty )\rightarrow \mathbb{R}$, the absolute value of
which decreases sufficiently fast at infinity.

\subsection{Fermion Fock Spaces%
\index{Fermion Fock Space}}

In Quantum Statistical Mechanics we are interested in understanding the
physical behavior of macroscopic systems from the laws of Quantum Mechanics.
This means here that one studies physical properties in the limit $%
n\rightarrow \infty $ of infinite particles. However, the quantum dynamics
and even the mathematical framework, that is, the Hilbert space $P_{n}\ell
^{2}(\Lambda )^{\otimes n}$ of antisymmetric $n$--particle wave functions,
strongly depend on the particle number $n$, which may additionally be
unknown. Moreover, one is often interested in time--dependent particle
numbers, as in Quantum Field Theory.

To this end, in 1932 V.A. Fock introduced a space now known as the Fock
space defined for Fermi systems by (a priori infinite) direct sums:%
\begin{equation*}
\mathcal{F}_{\Lambda }\doteq \bigoplus_{n\in \mathbb{N}_{0}}P_{n}\ell
^{2}(\Lambda )^{\otimes n}\ ,%
\text{\qquad with\qquad }\ell ^{2}(\Lambda )^{\otimes 0}\doteq \mathbb{C}%
\text{\quad and\quad }P_{0}\doteq \mathbf{1}_{\mathbb{C}}\ ,
\end{equation*}%
for any (possibly infinite) subset $\Lambda $ of $\mathfrak{L}$. It is an
Hilbert space with scalar product defined on $\mathcal{F}_{\Lambda }\times
\mathcal{F}_{\Lambda }$ by
\begin{equation*}
\langle \psi ,\varphi \rangle _{\mathcal{F}_{\Lambda }}\doteq
\sum\limits_{n\in \mathbb{N}_{0}}\langle \psi ^{(n)},\varphi ^{(n)}\rangle
_{\ell ^{2}(\mathfrak{L})^{\otimes n}}\ ,\text{\qquad }\varphi =(\varphi
^{(n)})_{n\in \mathbb{N}_{0}},\psi =(\psi ^{(n)})_{n\in \mathbb{N}_{0}}\in
\mathcal{F}_{\Lambda }\ .
\end{equation*}%
The element $(1,0,0,\ldots )\in \mathcal{F}_{\Lambda }$ is the
zero--particle state, i.e., the so--called \emph{vacuum} vector of the Fock
space.

For any \emph{finite} subset $\Lambda \in \mathcal{P}_{f}(\mathfrak{L})$,
recall that $\ell ^{2}\left( \Lambda \right) $ (cf. (\ref{l2 space})) has
dimension equal to the volume $|\Lambda |$ of $\Lambda $. Therefore, because
of the antisymmetry of the $n$--particle wave function in $\mathcal{F}%
_{\Lambda }$,
\begin{equation}
\mathcal{F}_{\Lambda }=\bigoplus_{n=0}^{\left\vert \Lambda \right\vert
}P_{n}\ell ^{2}(\Lambda )^{\otimes n}\ ,\text{\qquad }\Lambda \in \mathcal{P}%
_{f}(\mathfrak{L})\ .  \label{loca fock space}
\end{equation}%
Using some elementary combinatorics, one checks in this case that the
fermion Fock space $\mathcal{F}_{\Lambda }$ is a finite dimensional Hilbert
space with dimension equal to $2^{\left\vert \Lambda \right\vert }$ for any $%
\Lambda \in \mathcal{P}_{f}(\mathfrak{L})$.

For any possibly infinite subset $\Lambda \subseteq \mathfrak{L}$, the
particle number becomes a self--adjoint (possibly unbounded)
\index{Particle number operator}operator $\mathbf{N}_{\Lambda }$ defined by%
\begin{equation}
\left( \mathbf{N}_{\Lambda }\psi \right) ^{(n)}\doteq n\psi ^{(n)}\ ,%
\text{\qquad }n\in \mathbb{N}_{0}\ ,  \label{paticle number operator1}
\end{equation}%
on the domain
\begin{equation}
\mathrm{Dom}(\mathbf{N}_{\Lambda })\doteq \left\{ \psi =(\psi ^{(n)})_{n\in
\mathbb{N}_{0}}\in \mathcal{F}_{\Lambda }:\sum\limits_{n\in \mathbb{N}%
_{0}}n^{2}\langle \psi ^{(n)},\psi ^{(n)}\rangle _{\ell ^{2}(\mathfrak{L}%
)^{\otimes n}}<\infty \right\} \ .  \label{paticle number operator2}
\end{equation}

Any family $\{\mathrm{H}^{\otimes n}\}_{n\in \mathbb{N}_{0}}$ of
Hamiltonians acting on $P_{n}\ell ^{2}(\Lambda )^{\otimes n}$, like those
defined by (\ref{family Hn}) for $n\in \mathbb{N}$, gives rise to an
operator $\mathbf{H}_{\Lambda }$ defined for any $n\in \mathbb{N}_{0}$ by%
\index{Hamiltonian!S1-H2!many-fermion}
\begin{equation}
\mathbf{H}_{\Lambda }\psi ^{(n)}\doteq \mathrm{H}^{\otimes n}\psi ^{(n)}\
,\qquad \psi ^{(n)}\in P_{n}\ell ^{2}(\Lambda )^{\otimes n}\subset \mathcal{F%
}_{\Lambda }\ .  \label{defininition H lambda}
\end{equation}%
It is clearly a symmetric operator on the subspace of $\mathcal{F}_{\Lambda
} $ constituted of sequences that eventually vanish. If $\Lambda \in
\mathcal{P}_{f}(\mathfrak{L})$, it means that $\mathbf{H}_{\Lambda }$ is
self--adjoint, by finite dimensionality of $\mathcal{F}_{\Lambda }$.

If $\Lambda \subseteq \mathfrak{L}$ is an infinite subset of $\mathfrak{L}$,
then $\mathbf{H}_{\Lambda }$ is closable because it is in any case
symmetric, see \cite[Theorem VIII.1, Section VIII.2]{ReedSimon}. By
self--adjointness of $\mathrm{H}^{\otimes n}$, there is additionally a dense
set of analytic vectors \cite[Section X.6]{ReedSimon2} on the subspace of $%
\mathcal{F}_{\Lambda }$ constituted of sequences that eventually vanish.
Therefore, by Nelson's analytic vector theorem \cite[Theorem X.39]%
{ReedSimon2}, $\mathbf{H}_{\Lambda }$ has a self--adjoint closure, again
denoted by $\mathbf{H}_{\Lambda }$.

Therefore, in any case, we obtain from (\ref{defininition H lambda}) a
Hamiltonian $\mathbf{H}_{\Lambda }$ that again defines a quantum dynamics on
the Hilbert space $\mathcal{H}=\mathcal{F}_{\Lambda }$ via the Schr\"{o}%
dinger equation (\ref{Schrodinger equation}). Typically, such kind of
Hamiltonian conserves the particle number in the sense that
\begin{equation*}
\mathrm{e}^{it\mathbf{H}_{\Lambda }}\mathbf{N}_{\Lambda }\mathrm{e}^{-it%
\mathbf{H}_{\Lambda }}=\mathbf{N}_{\Lambda }\ ,\qquad t\in \mathbb{R}\ .
\end{equation*}
In this case, the expectation value of the particle number observable $%
\mathbf{N}_{\Lambda }$ w.r.t. any solution of the Schr\"{o}dinger equation
equals%
\begin{equation*}
\langle \psi \left( 0\right) ,\mathrm{e}^{it\mathbf{H}_{\Lambda }}\mathbf{N}%
_{\Lambda }\mathrm{e}^{-it\mathbf{H}_{\Lambda }}\psi \left( 0\right) \rangle
_{\mathcal{F}_{\Lambda }}=\langle \psi \left( 0\right) ,\mathbf{N}_{\Lambda
}\psi \left( 0\right) \rangle _{\mathcal{F}_{\Lambda }}\ ,\qquad \psi \left(
0\right) \in \mathcal{F}_{\Lambda }\ .
\end{equation*}%
See Equations (\ref{Schrodinger equation}), (\ref{Schrodinger equation2})
and (\ref{expectation value}).

Compared to the first approach described in Section \ref{sect Quantum
Many--Body Systems}\ it is still not clear why the use of the Fock space can
be advantageous when the particle number is conserved. The utility of Fock
spaces comes from the use of so--called\emph{\ creation/annihilation
operators} explained below.

For more details on Fock spaces, see for instance \cite[Section 5.2.1]%
{BratteliRobinson}.

\subsection{Creation/Annihilation Operators\label{section creation ann}%
\index{Annihilation operator}%
\index{Creation operator}}

Apart from the fact that the Hilbert spaces of Section \ref{sect Quantum
Many--Body Systems} depend on the particle number $n$, one has always to
care about combinatorial issues because of the antisymmetry of wave
functions. This property of the wave function is encoded in the Fock space
in algebraic properties of the so--called creation and annihilation
operators $a_{x}^{\ast },a_{x}\in \mathcal{B}(\mathcal{F}_{\mathfrak{L}})$
of a fermion at lattice site $x\in \mathfrak{L}$: \medskip

\noindent
\underline{(i):} The
\index{Annihilation operator}\emph{annihilation} operator $a_{x}\in \mathcal{%
B}(\mathcal{F}_{\mathfrak{L}})$ of a fermion at lattice site $x\in \mathfrak{%
L}$ is the (linear) operator uniquely defined by the conditions $%
a_{x}(1,0,0,\ldots )=0$ and%
\begin{equation}
a_{x}\left( P_{n}\left( \psi _{1}\otimes \cdots \otimes \psi _{n}\right)
\right) \doteq
\frac{\sqrt{n}}{n!}\sum_{\pi \in \mathcal{P}_{n}}(-1)^{\pi }\left\langle
\mathfrak{e}_{x},\psi _{\pi \left( 1\right) }\right\rangle _{\ell ^{2}(%
\mathfrak{L})}\psi _{\pi \left( 2\right) }\otimes \cdots \otimes \psi _{\pi
\left( n\right) }  \label{annhilation}
\end{equation}%
for any $n\in \mathbb{N}$ and $\psi _{1},\ldots ,\psi _{n}\in \ell
^{2}\left( \mathfrak{L}\right) $, where we recall that $P_{n}$ is the
orthogonal projection (\ref{definition projection operator}) onto the
subspace of antisymmetric $n$--particle wave functions and $\mathcal{P}_{n}$
is the set of all permutations $\pi $ of $n$\ elements, see (\ref%
{permutations}). \medskip

\noindent \underline{(ii):} The
\index{Creation operator}\emph{creation} operator $a_{x}^{\ast }\in \mathcal{%
B}(\mathcal{F}_{\mathfrak{L}})$ of a fermion at lattice site $x\in \mathfrak{%
L}$, which turns out to be the adjoint of $a_{x}$, is defined by
\begin{equation}
\left( a_{x}^{\ast }\psi \right) ^{(0)}\doteq 0\qquad
\text{and}\qquad \left( a_{x}^{\ast }\psi \right) ^{(n)}\doteq \sqrt{n}%
P_{n}\left( \psi ^{(n-1)}\otimes \mathfrak{e}_{x}\right)  \label{creation}
\end{equation}%
for $n\in \mathbb{N}$, with $\psi =(\psi ^{(n)})_{n\in \mathbb{N}_{0}}\in
\mathcal{F}_{\mathfrak{L}}$. \medskip

Because of the antisymmetry property, note that, for any $x\in \mathfrak{L}$%
, $a_{x}^{\ast }a_{x}^{\ast }=0$, which reflects the
\index{Pauli exclusion principle}Pauli exclusion principle. In fact,
straightforward computations show that the family $\{a_{x},a_{x}^{\ast
}\}_{x\in \mathfrak{L}}\subset \mathcal{B}(\mathcal{F}_{\mathfrak{L}})$
satisfies the celebrated Canonical Anti--commutation Relations (CAR):%
\index{CAR} For any $x,y\in \mathfrak{L}$,%
\begin{equation}
a_{x}a_{y}+a_{y}a_{x}=0\ ,\qquad a_{x}a_{y}^{\ast }+a_{y}^{\ast
}a_{x}=\delta _{x,y}\mathbf{1}_{\mathcal{F}_{\mathfrak{L}}}\ .  \label{CAR}
\end{equation}%
They are pivotal relations coming from the antisymmetry property of wave
functions in the fermion Fock space. For instance, one deduces from (\ref%
{def c*}) and (\ref{CAR}) that in the $C^{\ast }$--algebra $\mathcal{B}(%
\mathcal{F}_{\mathfrak{L}})$, for any $x\in \mathfrak{L}$,
\begin{equation*}
\Vert a_{x}\Vert _{\mathcal{B}(\mathcal{F}_{\mathfrak{L}})}^{4}=\Vert
(a_{x}^{\ast }a_{x})^{2}\Vert _{\mathcal{B}(\mathcal{F}_{\mathfrak{L}%
})}=\Vert a_{x}^{\ast }a_{x}\Vert _{\mathcal{B}(\mathcal{F}_{\mathfrak{L}%
})}=\Vert a_{x}\Vert _{\mathcal{B}(\mathcal{F}_{\mathfrak{L}})}^{2}
\end{equation*}%
and since $a_{x}\neq 0$, we obtain that
\begin{equation}
\left\Vert a_{x}\right\Vert _{\mathcal{B}(\mathcal{F}_{\mathfrak{L}%
})}=\left\Vert a_{x}^{\ast }\right\Vert _{\mathcal{B}(\mathcal{F}_{\mathfrak{%
L}})}=1\ ,\qquad x\in \mathfrak{L}\ .  \label{equation stupide}
\end{equation}%
For more details on creation/annihilation operators, see \cite[Section 5.2.1]%
{BratteliRobinson}.

Meanwhile, the particle number operator (\ref{paticle number operator1})--(%
\ref{paticle number operator2}) in the (possibly infinite) subset $\Lambda
\subseteq \mathfrak{L}$ can be\ \emph{formally} written on the subspace of
antisymmetric $n$--particle wave functions ($n\in \mathbb{N}$) as%
\index{Particle number operator}
\begin{equation*}
\mathbf{N}_{\Lambda }\psi ^{(n)}=\left( \sum\limits_{x\in \Lambda }\mathbf{n}%
_{x}\right) \psi ^{(n)},\qquad \psi ^{(n)}\in P_{n}\ell ^{2}(\Lambda
)^{\otimes n}\ ,
\end{equation*}%
with
\begin{equation}
\mathbf{n}_{x}\doteq a_{x}^{\ast }a_{x}\in \mathcal{B}(\mathcal{F}_{%
\mathfrak{L}})  \label{particle number operators0}
\end{equation}%
being the so--called particle number operator at lattice site $x\in
\mathfrak{L}$. Because of the CAR (\ref{CAR}), note that $\mathbf{n}_{x}$ is
a projection operator.

For any finite subset $\Lambda \in \mathcal{P}_{f}(\mathfrak{L})$, $\mathbf{N%
}_{\Lambda }\in \mathcal{B}(\mathcal{F}_{\Lambda })$, by finite
dimensionality of the local fermion Fock space $\mathcal{F}_{\Lambda }$ (see
(\ref{loca fock space})) and we can see this particle number operator as
\begin{equation}
\mathbf{N}_{\Lambda }\equiv \sum\limits_{x\in \Lambda }\mathbf{n}_{x}\in
\mathcal{B}(\mathcal{F}_{\mathfrak{L}})\ .  \label{particle number operators}
\end{equation}

In the same way, the operator $\mathbf{H}_{\Lambda }$ (\ref{defininition H
lambda}) can be written in terms of creation and annihilation operators. For
instance, if $\mathbf{H}_{\Lambda }$ is defined from the Hamiltonian (\ref%
{family Hn}) for any $n\in \mathbb{N}$, then it can be \emph{formally}
written on the subspace of antisymmetric $n$--particle wave functions ($n\in
\mathbb{N}$) as%
\index{Hamiltonian!S1-H2!many-fermion}
\begin{equation*}
\mathbf{H}_{\Lambda }\psi ^{(n)}=\left( \sum\limits_{x,y\in \Lambda }h\left(
\left\vert x-y\right\vert \right) a_{x}^{\ast }a_{y}+\sum\limits_{x,y\in
\Lambda }v(|x-y|)\mathbf{n}_{x}\mathbf{n}_{y}\right) \psi ^{(n)}
\end{equation*}%
for all $\psi ^{(n)}\in P_{n}\ell ^{2}(\Lambda )^{\otimes n}$ and any
(possibly infinite) subset $\Lambda \subseteq \mathfrak{L}$.

If $\Lambda \in \mathcal{P}_{f}(\mathfrak{L})$ is a finite subset then $%
\mathbf{H}_{\Lambda }\in \mathcal{B}(\mathcal{F}_{\Lambda })\subset \mathcal{%
B}(\mathcal{F}_{\mathfrak{L}})$ can be seen as the operator
\begin{equation}
\mathbf{H}_{\Lambda }\equiv \sum\limits_{x,y\in \Lambda }h\left( \left\vert
x-y\right\vert \right) a_{x}^{\ast }a_{y}+\sum\limits_{x,y\in \Lambda
}v(|x-y|)\mathbf{n}_{x}\mathbf{n}_{y}\ ,  \label{H lambda}
\end{equation}%
again by finite dimensionality of $\mathcal{F}_{\Lambda }$. This formulation
of $\mathbf{H}_{\Lambda }$ can easily be interpreted: The term $a_{x}^{\ast
}a_{y}$ destroy a fermion at lattice site $y$ to create another one at
lattice site $x$. It thus gives rise to fermion transport properties in the
physical system and it is related to the (usual)
\index{Kinetic term}\emph{kinetic terms}. The second term depends on the
occupation number ($0$ or $1$) at lattice sites $x$ and $y$, and yields the
interaction energy. It is a so--called
\index{Interparticle interactions}%
\index{Density--density interaction}\emph{density--density interaction}.

\subsection{The Lattice CAR $C^{\ast }$--Algebra\label{section CAR}%
\index{CAR algebras}}

Sections \ref{Section Qingle part}--\ref{section creation ann} was
preliminary sections presenting many--fermion systems on lattices in the
usual context of Quantum Mechanics, as explained in Sections \ref{Section
Schro}--\ref{sect non auto}. In the sequel, however, we avoid to speak about
Hilbert space structures, Fock spaces, etc., by using the algebraic
formulation of Quantum Mechanics as explained in Section \ref{sect Algebraic
Formulation of Quantum Mechanics}. To this end, we have to define a $C^{\ast
}$--algebra, named the lattice CAR $C^{\ast }$--algebra, defined as follows:
\medskip

\noindent
\underline{(i):} Recall that $a_{x}^{\ast },a_{x}$ are the so--called
creation and annihilation operators of a fermion at lattice site $x\in
\mathfrak{L}$. In Section \ref{section creation ann} we explicitly define
them by using the fermion Fock space. In the algebraic approach, however, we
only assume the existence of a unit $\mathbf{1}$ and a
\index{Annihilation operator}%
\index{Creation operator}family $\{a_{x},a_{x}^{\ast }\}_{x\in \mathfrak{L}}$
satisfying the CAR:%
\index{CAR} For any $x,y\in \mathfrak{L}$,%
\begin{equation}
a_{x}a_{y}+a_{y}a_{x}=0\ ,\qquad a_{x}a_{y}^{\ast }+a_{y}^{\ast
}a_{x}=\delta _{x,y}\mathbf{1}\ .  \label{CARbis}
\end{equation}%
Compare with (\ref{CAR}), by observing that the unit $\mathbf{1}$ refers to
the identity map $\mathbf{1}_{\mathcal{F}_{\mathfrak{L}}}$ of $\mathcal{F}_{%
\mathfrak{L}}$. Such commutation relations are indeed sufficient to
characterize fermion systems via the Pauli exclusion principle. \medskip

\noindent
\underline{(ii):} \emph{Every} physical system of fermions on the lattice is
associated with some \emph{finite} region $\Lambda $ of lattice space. The
set of all finite subsets of the lattice $\mathfrak{L}$ is denoted by $%
\mathcal{P}_{f}(\mathfrak{L})\subset 2^{\mathfrak{L}}$.
\index{Observable!H1-S2}Observables one can then measure on many--fermion
systems within any $\Lambda \in \mathcal{P}_{f}(\mathfrak{L})$ are finite
sums of monomials of $\{a_{x}^{\ast },a_{x}\}_{x\in \Lambda }$ and $\mathbf{1%
}$. See (\ref{particle number operators0}), (\ref{particle number operators}%
) and (\ref{H lambda}) for explicit examples in the Fock space
representation.

For $\Lambda \in \mathcal{P}_{f}(\mathfrak{L})$, this yields to the \emph{%
local} CAR $C^{\ast }$--algebra $\mathcal{U}_{\Lambda }$
\index{CAR algebras!local}as the set of all finite sums of monomials
constructed from $\{a_{x}^{\ast },a_{x}\}_{x\in \Lambda }$ and the unit $%
\mathbf{1}$. See Section \ref{sect Algebraic Formulation of Quantum
Mechanics} for the definition of $C^{\ast }$--algebras. In particular, the
particle number operators $\mathbf{n}_{x}\doteq a_{x}^{\ast }a_{x}$, $x\in
\Lambda $, and the Hamiltonian of the fermion system are self--adjoint
elements of $\mathcal{U}_{\Lambda }$. See again (\ref{H lambda}) for an
example of Hamiltonians in the Fock space representation.

Note that one can define annihilation and creation operators of fermions
with wave functions $\psi \in \ell ^{2}(\Lambda )\subset \ell ^{2}(\mathfrak{%
L})$ for any $\Lambda \in \mathcal{P}_{f}(\mathfrak{L})$ by%
\index{Annihilation operator}%
\index{Creation operator}
\begin{equation}
a(\psi )\doteq \sum\limits_{x\in \Lambda }%
\overline{\psi (x)}a_{x}\in \mathcal{U}_{\Lambda }\ ,\quad a^{\ast }(\psi
)\doteq \sum\limits_{x\in \Lambda }\psi (x)a_{x}^{\ast }\in \mathcal{U}%
_{\Lambda }\ .  \label{definition}
\end{equation}%
Clearly, $a^{\ast }(\psi )=a(\psi )^{\ast }$ for all $\psi \in \ell
^{2}(\Lambda )$ and on the canonical orthonormal basis $\left\{ \mathfrak{e}%
_{x}\right\} _{x\in \mathfrak{L}}$ (\ref{ONB}), $a(\mathfrak{e}_{x})=a_{x}$
at all $x\in \Lambda $. The map $\psi \mapsto a(\psi )$ (resp. $\psi \mapsto
a^{\ast }(\psi )$) from $\ell ^{2}(\Lambda )$\ to $\mathcal{U}_{\Lambda }$
is anti--linear (resp. linear) and because of (\ref{CARbis}),%
\index{CAR}%
\begin{equation}
a(\psi )a(\varphi )+a(\varphi )a(\psi )=0\ ,\qquad a(\psi )a(\varphi )^{\ast
}+a(\varphi )^{\ast }a(\psi )=\langle \psi ,\varphi \rangle _{\ell ^{2}(%
\mathfrak{L})}\mathbf{1}  \label{CARbisbis}
\end{equation}%
for any $\varphi ,\psi \in \ell ^{2}(\Lambda )\subset \ell ^{2}(\mathfrak{L}%
) $. These CAR are a generalization of (\ref{CARbis}).

The relation with the local fermion Fock space $\mathcal{F}_{\Lambda }$ (\ref%
{loca fock space}) is now clear:

\begin{lemma}[CAR algebras and fermion Fock spaces for finite systems]
\label{equivalence lemma1}\mbox{ }\newline
For any $\Lambda \in \mathcal{P}_{f}(\mathfrak{L})$, the local CAR $C^{\ast
} $--algebra $\mathcal{U}_{\Lambda }$ is $\ast $--isomorphic to the $C^{\ast
}$--algebra $\mathcal{B}(\mathcal{F}_{\Lambda })$ of bounded operators
acting on $\mathcal{F}_{\Lambda }$. In particular, its dimension equals $%
2^{2\left\vert \Lambda \right\vert }$.%
\index{CAR algebras!local}
\end{lemma}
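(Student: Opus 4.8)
The plan is to promote the Fock-space construction of Section~\ref{section creation ann} to a $\ast$-isomorphism, obtaining bijectivity from a dimension count together with an explicit spanning argument. First I would set up the candidate map: for $x\in\Lambda$ the operators $a_{x},a_{x}^{\ast}\in\mathcal{B}(\mathcal{F}_{\mathfrak{L}})$ leave the subspace $\mathcal{F}_{\Lambda}\subseteq\mathcal{F}_{\mathfrak{L}}$ invariant (cf. \eqref{loca fock space}), since the one-particle operations they implement preserve $\ell^{2}(\Lambda)$, and their restrictions — still written $a_{x},a_{x}^{\ast}$ — satisfy the CAR~\eqref{CAR} inside $\mathcal{B}(\mathcal{F}_{\Lambda})$. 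As $\mathcal{U}_{\Lambda}$ is by definition generated by a unit and a family $\{a_{x},a_{x}^{\ast}\}_{x\in\Lambda}$ subject only to the CAR~\eqref{CARbis}, the universal property yields a unital $\ast$-homomorphism $\pi:\mathcal{U}_{\Lambda}\to\mathcal{B}(\mathcal{F}_{\Lambda})$ sending generators to generators. Also, by the dimension count below~\eqref{loca fock space}, $\dim_{\mathbb{C}}\mathcal{B}(\mathcal{F}_{\Lambda})=(\dim_{\mathbb{C}}\mathcal{F}_{\Lambda})^{2}=(2^{|\Lambda|})^{2}=2^{2|\Lambda|}$.

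Next I would bound $\dim\mathcal{U}_{\Lambda}$ from above. Fix a total order on the finite set $\Lambda$. Since $a_{x}a_{x}=a_{x}^{\ast}a_{x}^{\ast}=0$ and the anticommutators~\eqref{CARbis} allow one to move every annihilation operator to the right of every creation operator, any monomial in $\{a_{x},a_{x}^{\ast}\}_{x\in\Lambda}$ can be rewritten as a linear combination of normal-ordered monomials $a_{x_{1}}^{\ast}\cdots a_{x_{p}}^{\ast}\,a_{y_{1}}\cdots a_{y_{q}}$ with $x_{1}<\cdots<x_{p}$ and $y_{1}<\cdots<y_{q}$ in $\Lambda$; such a monomial is determined by the pair of subsets $\{x_{1},\dots,x_{p}\},\{y_{1},\dots,y_{q}\}\subseteq\Lambda$. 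Hence $\mathcal{U}_{\Lambda}$ is spanned by at most $2^{|\Lambda|}\cdot 2^{|\Lambda|}=2^{2|\Lambda|}$ elements, so $\dim\mathcal{U}_{\Lambda}\le 2^{2|\Lambda|}$.

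Then I would show $\pi$ is onto, which forces $\dim\mathcal{U}_{\Lambda}\ge\dim\mathcal{B}(\mathcal{F}_{\Lambda})=2^{2|\Lambda|}$ and closes the argument. Write $\Omega\doteq(1,0,0,\dots)\in\mathcal{F}_{\Lambda}$ for the vacuum and, for $X=\{x_{1}<\cdots<x_{k}\}\subseteq\Lambda$, set $e_{X}\doteq a_{x_{1}}^{\ast}\cdots a_{x_{k}}^{\ast}\Omega$; by the CAR the family $\{e_{X}\}_{X\subseteq\Lambda}$ is the occupation-number orthonormal basis of $\mathcal{F}_{\Lambda}$. The rank-one projection onto $\mathbb{C}\Omega$ is $P_{\Omega}=\prod_{x\in\Lambda}(\mathbf{1}-\mathbf{n}_{x})=\prod_{x\in\Lambda}a_{x}a_{x}^{\ast}$, a polynomial in the generators, hence $P_{\Omega}\in\pi(\mathcal{U}_{\Lambda})$; and for any $X,Y\subseteq\Lambda$ the matrix unit $|e_{X}\rangle\langle e_{Y}|$ equals $\big(a_{x_{1}}^{\ast}\cdots a_{x_{k}}^{\ast}\big)\,P_{\Omega}\,\big(a_{y_{l}}\cdots a_{y_{1}}\big)\in\pi(\mathcal{U}_{\Lambda})$. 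Since these matrix units span $\mathcal{B}(\mathcal{F}_{\Lambda})$, $\pi$ is surjective. (Equivalently one argues that the Fock representation on $\mathcal{F}_{\Lambda}$ is irreducible — $\Omega$ is cyclic and any operator commuting with all $a_{x},a_{x}^{\ast}$ is diagonal in the occupation-number basis with equal entries, hence scalar — so $\pi(\mathcal{U}_{\Lambda})=\mathcal{B}(\mathcal{F}_{\Lambda})$ by Burnside's theorem.) Combining with the previous paragraph, $2^{2|\Lambda|}\ge\dim\mathcal{U}_{\Lambda}\ge\dim\pi(\mathcal{U}_{\Lambda})=\dim\mathcal{B}(\mathcal{F}_{\Lambda})=2^{2|\Lambda|}$, so all inequalities are equalities; in particular $\pi$ is also injective, hence a $\ast$-isomorphism, and $\dim\mathcal{U}_{\Lambda}=2^{2|\Lambda|}$.

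The hard part will be the surjectivity step: showing that every bounded operator on $\mathcal{F}_{\Lambda}$ is a polynomial in the CAR generators, i.e. producing the matrix units $P_{\Omega}$ and $|e_{X}\rangle\langle e_{Y}|$ explicitly (equivalently, verifying irreducibility of the Fock representation). The normal-ordering upper bound and the final dimension bookkeeping are routine, and the well-definedness of $\pi$ is immediate from the universal characterization of $\mathcal{U}_{\Lambda}$ by the CAR.
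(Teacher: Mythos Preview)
Your proof is correct and complete. The route you take, however, differs from the paper's. The paper does not construct the Fock representation $\pi$ directly nor perform the normal--ordering/matrix--unit analysis; instead it simply invokes \cite[Theorem 5.2.5]{BratteliRobinson}, which asserts that the CAR algebra over a finite--dimensional one--particle space of dimension $n$ is $\ast$--isomorphic to the full matrix algebra $M_{2^{n}}(\mathbb{C})$. Since $\dim\ell^{2}(\Lambda)=|\Lambda|$, this gives $\mathcal{U}_{\Lambda}\cong M_{2^{|\Lambda|}}(\mathbb{C})$ in one stroke, and the identification with $\mathcal{B}(\mathcal{F}_{\Lambda})$ then follows from $\dim\mathcal{F}_{\Lambda}=2^{|\Lambda|}$. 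Your argument is more elementary and entirely self--contained: the explicit vacuum projection $P_{\Omega}=\prod_{x\in\Lambda}a_{x}a_{x}^{\ast}$ and the matrix--unit formula effectively \emph{reprove} the cited theorem in this concrete setting. The paper's approach is shorter if one is willing to cite the reference; yours has the advantage of exhibiting the isomorphism constructively and making transparent exactly where the CAR relations are used (normal ordering for the upper bound, Pauli exclusion for the vacuum projection).
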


\begin{proof}
On the one hand, since $\ell ^{2}\left( \Lambda \right) $ has dimension $%
|\Lambda |$, for any $\Lambda \in \mathcal{P}_{f}(\mathfrak{L})$, we infer
from (\ref{definition})--(\ref{CARbisbis}) and \cite[Theorem 5.2.5]%
{BratteliRobinson} that $\mathcal{U}_{\Lambda }$\ is isomorphic to the $%
C^{\ast }$--algebra of $2^{\left\vert \Lambda \right\vert }\times
2^{\left\vert \Lambda \right\vert }$ complex matrices. $\mathcal{U}_{\Lambda
}$ has in particular dimension equal to $2^{2\left\vert \Lambda \right\vert
} $. On the other hand, as explained below Equation (\ref{loca fock space}),
for any $\Lambda \in \mathcal{P}_{f}(\mathfrak{L})$, the dimension of the
Fock space $\mathcal{F}_{\Lambda }$\ is equal to $2^{\left\vert \Lambda
\right\vert }$. Therefore, $\mathcal{U}_{\Lambda }$ is $\ast $--isomorphic
to $\mathcal{B}(\mathcal{F}_{\Lambda })$.
\end{proof}

\noindent Lemma \ref{equivalence lemma1} yields a \emph{faithful}
representation
\begin{equation}
\mathbf{\pi }_{\Lambda }:\mathcal{U}_{\Lambda }\rightarrow \mathcal{B}(%
\mathcal{F}_{\Lambda })  \label{map blabla}
\end{equation}%
of the local CAR $C^{\ast }$--algebra $\mathcal{U}_{\Lambda }$ on the
representation (Hilbert) space $\mathcal{F}_{\Lambda }$ for every $\Lambda
\in \mathcal{P}_{f}(\mathfrak{L})$. This representation is said to be \emph{%
canonical} when it maps $a_{x}^{\ast },a_{x}$ to creation and annihilation
operators defined by (\ref{annhilation}) and (\ref{creation}) on $\mathcal{F}%
_{\Lambda }$ for any $x\in \mathfrak{L}$. The (canonical) representation is
\emph{irreducible} and named the Fock representation. For more details on
the representation theory of $C^{\ast }$--algebras, see Section \ref{Sec
Representation theory}.

Because of Corollary \ref{Uniqueness of irreducible representations --II},
one can equivalently use the Fock space formulation within the usual context
of Quantum Mechanics or the algebraic approach, provided $\Lambda \in
\mathcal{P}_{f}(\mathfrak{L})$. Compare indeed Lemma \ref{equivalence lemma1}
with the Heisenberg picture of Quantum Mechanics described in\ Section \ref%
{Section Hei}. This fact is \emph{not anymore} true in the infinite--volume
situation, i.e., for \emph{infinite} subsets $\Lambda \subseteq \mathfrak{L}$%
, since the corresponding Fock space $\mathcal{F}_{\Lambda }$ has then
infinite dimension. See Corollary \ref{Uniqueness of irreducible
representations --III}. \medskip

\noindent
\underline{(iii):} Physical systems become macroscopic when they belong to
finite regions $\Lambda $ of lattice that become arbitrarily large.
Therefore, we consider a family of cubic boxes\footnote{%
It is a technically convenient choice to define the thermodynamic limit, but
one could also take other Van Hove nets. See for instance \cite[Remark 1.3]%
{BruPedra2}.} defined, for all $L\in \mathbb{R}_{0}^{+}$, by
\begin{equation}
\Lambda _{L}\doteq \{(x_{1},\ldots ,x_{d})\in \mathfrak{L}%
\,:\,|x_{1}|,\ldots ,|x_{d}|\leq L\}\in \mathcal{P}_{f}(\mathfrak{L})\ .
\label{eq:def lambda n}
\end{equation}%
Hence, $\{\mathcal{U}_{\Lambda _{L}}\}_{L\in \mathbb{R}_{0}^{+}}$ is an
increasing net of $C^{\ast }$--algebras and the set%
\begin{equation}
\mathcal{U}_{0}\doteq \underset{L\in \mathbb{R}_{0}^{+}}{\bigcup }\mathcal{U}%
_{\Lambda _{L}}  \label{simple}
\end{equation}%
of local elements is a normed $\ast $--algebra with $\left\Vert A\right\Vert
_{\mathcal{U}_{0}}=\left\Vert A\right\Vert _{\mathcal{U}_{\Lambda _{L}}}$for
all $A\in \mathcal{U}_{\Lambda _{L}}$ and $L\in \mathbb{R}_{0}^{+}$.\medskip

\noindent \underline{(iv):} For physical macroscopic systems one considers
the limit \textquotedblleft $\Lambda \rightarrow \mathfrak{L}$%
\textquotedblright . This is named the
\index{Thermodynamic limit}thermodynamic limit and one gets an infinite
fermion system. This approach yields the CAR $C^{\ast }$--algebra $\mathcal{U%
}$
\index{CAR algebras!non--local}of the infinite system, which is by
definition the completion of the normed $\ast $--algebra $\mathcal{U}_{0}$.
It is separable, by finite dimensionality of $\mathcal{U}_{\Lambda }$ for
any $\Lambda \in \mathcal{P}_{f}(\mathfrak{L})$. In other words, $\mathcal{U}
$ is the inductive limit of the finite dimensional $C^{\ast }$--algebras $\{%
\mathcal{U}_{\Lambda }\}_{\Lambda \in \mathcal{P}_{f}(\mathfrak{L})}$. In
this construction, $\mathcal{U}_{0}\subset \mathcal{U}$ can be seen as the
smallest normed $\ast $--algebra containing all generators $\{a_{x}\}_{x\in
\mathfrak{L}}$.

By replacing $\Lambda $ with $\mathfrak{L}$ in Equation (\ref{definition})
one can again define annihilation and creation operators $a(\psi ),a^{\ast
}(\psi )$ of fermions with wave functions $\psi \in \ell ^{2}(\mathfrak{L})$%
. These operators are still well--defined. Indeed, because of the CAR (\ref%
{CARbis}),
\begin{equation*}
\Vert a(\psi )\Vert _{\mathcal{U}}^{2}=\Vert a^{\ast }(\psi )\Vert _{%
\mathcal{U}}^{2}=\langle \psi ,\psi \rangle _{\ell ^{2}(\mathfrak{L})}\
,\qquad \psi \in \ell ^{2}(\mathfrak{L})\ .
\end{equation*}%
Compare with (\ref{equation stupide}). Hence, the anti--linear (resp.
linear) map $\psi \mapsto a(\psi )$ (resp. $\psi \mapsto a^{\ast }(\psi )$)
from $\ell ^{2}(\mathfrak{L})$\ to $\mathcal{U}$ is norm--continuous. Again,
$a^{\ast }(\psi )=a(\psi )^{\ast }$ for all $\psi \in \ell ^{2}(\mathfrak{L}%
) $ and the CAR (\ref{CARbisbis}) can be extended to all $\varphi ,\psi \in
\ell ^{2}(\mathfrak{L})$.%
\index{CAR}

The faithful and irreducible canonical representation $\mathbf{\pi }%
_{\Lambda }$ defined by (\ref{map blabla}) gives rise in the infinite volume
limit to a unique faithful and irreducible (canonical) representation $%
\mathbf{\pi }_{\mathfrak{L}}$ of the CAR $C^{\ast }$--algebra $\mathcal{U}$
such that
\begin{equation}
\mathbf{\pi }_{\mathfrak{L}}(\mathcal{U}_{\Lambda })=\mathbf{\pi }_{\Lambda
}(\mathcal{U}_{\Lambda })=\mathcal{B}(\mathcal{F}_{\Lambda })\ ,\qquad
\Lambda \in \mathcal{P}_{f}(\mathfrak{L})\ .  \label{simplesimple}
\end{equation}%
This representation is in fact defined via the unique $\ast $--isomorphisms $%
\mathbf{\pi }_{\{x\}}$, $x\in \mathfrak{L}$, mapping $a_{x}\in \mathcal{U}$
to the operator $\mathbf{\pi }_{\{x\}}(a_{x})\equiv a_{x}\in \mathcal{B}(%
\mathcal{F}_{\mathfrak{L}})$ defined by (\ref{annhilation}). It is again
named the Fock representation.

The Fock space $\mathcal{F}_{\mathfrak{L}}$ has infinite dimension and\ is
separable. Thus, by Corollary \ref{Uniqueness of irreducible representations
--III}, we emphasize again that the Fock space formulation within the usual
context of Quantum Mechanics and the algebraic approach are not equivalent
to each other. Again by Corollary \ref{Uniqueness of irreducible
representations --III}, the $C^{\ast }$--algebra $\mathcal{B}(\mathcal{F}_{%
\mathfrak{L}})$ has more than one unitarily non--equivalent irreducible
representation as well. Moreover, $\mathcal{U}$ is in some sense \emph{%
strictly smaller} than $\mathcal{B}(\mathcal{F}_{\mathfrak{L}})$:

\begin{lemma}[CAR algebras and fermion Fock spaces for infinite systems]
\label{equivalence lemma2}\mbox{ }\newline
\begin{equation*}
\mathbf{\pi }_{\mathfrak{L}}(\mathcal{U})=%
\overline{\underset{L\in \mathbb{R}_{0}^{+}}{\bigcup }\mathcal{B}(\mathcal{F}%
_{\Lambda _{L}})}\varsubsetneq \mathcal{B}(\mathcal{F}_{\mathfrak{L}})\ .
\end{equation*}
\end{lemma}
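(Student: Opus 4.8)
The statement splits into two halves: the identity $\mathbf{\pi }_{\mathfrak{L}}(\mathcal{U})=\overline{\bigcup_{L}\mathcal{B}(\mathcal{F}_{\Lambda _{L}})}$, and the strictness of the inclusion $\overline{\bigcup_{L}\mathcal{B}(\mathcal{F}_{\Lambda _{L}})}\subseteq \mathcal{B}(\mathcal{F}_{\mathfrak{L}})$. I would treat them in that order, both being essentially soft.

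For the identity, the plan is to use that $\mathbf{\pi }_{\mathfrak{L}}$ is faithful, hence injective, hence isometric by \cite[Proposition 2.3.3]{BratteliRobinsonI}; in particular $\mathbf{\pi }_{\mathfrak{L}}(\mathcal{U})$ is a norm--closed $\ast $--subalgebra of $\mathcal{B}(\mathcal{F}_{\mathfrak{L}})$. By construction $\mathcal{U}$ is the completion of the normed $\ast $--algebra $\mathcal{U}_{0}=\bigcup_{L\in \mathbb{R}_{0}^{+}}\mathcal{U}_{\Lambda _{L}}$, so $\mathcal{U}_{0}$ is dense in $\mathcal{U}$ and, by isometry, $\mathbf{\pi }_{\mathfrak{L}}(\mathcal{U}_{0})$ is dense in $\mathbf{\pi }_{\mathfrak{L}}(\mathcal{U})$. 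Now $(\ref{simplesimple})$ gives $\mathbf{\pi }_{\mathfrak{L}}(\mathcal{U}_{\Lambda _{L}})=\mathcal{B}(\mathcal{F}_{\Lambda _{L}})$ for every $L$ --- this is also where the identification of $\mathcal{B}(\mathcal{F}_{\Lambda _{L}})$ with a subalgebra of $\mathcal{B}(\mathcal{F}_{\mathfrak{L}})$ is fixed, through the construction of $\mathbf{\pi }_{\mathfrak{L}}$ from the $\ast $--isomorphisms $\mathbf{\pi }_{\{x\}}$ --- whence $\mathbf{\pi }_{\mathfrak{L}}(\mathcal{U}_{0})=\bigcup_{L}\mathcal{B}(\mathcal{F}_{\Lambda _{L}})$, and passing to norm--closures in $\mathcal{B}(\mathcal{F}_{\mathfrak{L}})$ yields the equality. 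The only remark needed is that the cubic boxes $\{\Lambda _{L}\}$ are cofinal in $\mathcal{P}_{f}(\mathfrak{L})$, so restricting the union to them costs nothing.

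For the strictness, the cleanest route is a separability argument. As recorded in the construction above, $\mathcal{U}$ is separable (each $\mathcal{U}_{\Lambda }$ is finite--dimensional, and the $\Lambda _{L}$ run through a countable cofinal family), and $\mathbf{\pi }_{\mathfrak{L}}$ is continuous, so $\mathbf{\pi }_{\mathfrak{L}}(\mathcal{U})$ is a separable subset of $\mathcal{B}(\mathcal{F}_{\mathfrak{L}})$. On the other hand $\mathcal{B}(\mathcal{F}_{\mathfrak{L}})$ is \emph{not} separable in the operator norm: since $\mathcal{F}_{\mathfrak{L}}$ is infinite--dimensional and separable (as noted just above the lemma), fix a countable orthonormal basis $\{f_{n}\}_{n\in \mathbb{N}}$ of it (e.g. the occupation--number basis) and, for each $S\subseteq \mathbb{N}$, let $P_{S}$ be the orthogonal projection onto $\overline{\mathrm{span}}\,\{f_{n}:n\in S\}$; then $\Vert P_{S}-P_{T}\Vert _{\mathcal{B}(\mathcal{F}_{\mathfrak{L}})}\geq 1$ whenever $S\neq T$, so $\{P_{S}\}_{S\subseteq \mathbb{N}}$ is an uncountable $1$--separated family. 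Hence $\mathbf{\pi }_{\mathfrak{L}}(\mathcal{U})\neq \mathcal{B}(\mathcal{F}_{\mathfrak{L}})$, which together with the obvious inclusion gives $\varsubsetneq $.

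I do not expect a genuine obstacle here once $(\ref{simplesimple})$ and the isometry of faithful representations are available; the only mildly delicate point is the embedding $\mathcal{B}(\mathcal{F}_{\Lambda _{L}})\hookrightarrow \mathcal{B}(\mathcal{F}_{\mathfrak{L}})$, which is already built into the notation. If one instead wanted to \emph{exhibit} a concrete element of $\mathcal{B}(\mathcal{F}_{\mathfrak{L}})\setminus \mathbf{\pi }_{\mathfrak{L}}(\mathcal{U})$ --- say the strong--operator limit of the products $\prod_{x\in \Lambda _{L}}\mathrm{e}^{\mathrm{i}\theta _{x}\mathbf{n}_{x}}$ implementing a gauge transformation ``at infinity'' with $\theta _{x}\not\to 0$ --- one would then have to prove that this unitary sits at positive distance from every local subalgebra $\mathcal{B}(\mathcal{F}_{\Lambda _{L}})$, which is appreciably more work; the separability argument conveniently avoids it.
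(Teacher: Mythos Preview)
Your proof is correct. For the equality $\mathbf{\pi}_{\mathfrak{L}}(\mathcal{U}) = \overline{\bigcup_L \mathcal{B}(\mathcal{F}_{\Lambda_L})}$ you and the paper argue identically (the paper simply invokes (\ref{simple}) and (\ref{simplesimple}); you spell out why isometry of $\mathbf{\pi}_{\mathfrak{L}}$ makes this work). For the strict inclusion, however, the approaches genuinely differ. You use the soft separability argument: $\mathcal{U}$ is separable, $\mathcal{B}(\mathcal{F}_{\mathfrak{L}})$ is not. The paper instead does exactly what your closing paragraph flags as ``appreciably more work'': it exhibits the global gauge unitary $\mathrm{e}^{i\theta \mathbf{N}_{\mathfrak{L}}}$ for non--vanishing $\theta \in \mathbb{R}/(2\pi\mathbb{Z})$ and shows, via the one--particle computation $\Vert(\mathrm{e}^{i\theta \mathbf{N}_{\mathfrak{L}}} - \mathrm{e}^{i\theta \mathbf{N}_{\Lambda}})\psi\Vert_{\mathcal{F}_{\mathfrak{L}}}^2 = |\mathrm{e}^{i\theta}-1|^2 \sum_{x\in\Lambda^c}|\psi(x)|^2$ and a contradiction argument, that no sequence from $\bigcup_L \mathcal{B}(\mathcal{F}_{\Lambda_L})$ can norm--approximate it. Your route is shorter and more robust; the paper's buys a concrete witness, and this witness is not incidental --- the very next paragraph of the paper uses $\mathrm{e}^{i\theta \mathbf{N}_{\mathfrak{L}}}$ to introduce the gauge automorphisms $\sigma_\theta$ on $\mathcal{U}$, so the explicit construction feeds directly into the subsequent development.
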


\begin{proof}
For any non--vanishing $\theta \in \mathbb{R}/(2\pi \mathbb{Z)}$, $\mathrm{e}%
^{i\theta \mathbf{N}_{\mathfrak{L}}}\in \mathcal{B}(\mathcal{F}_{\mathfrak{L}%
})$ but $\mathbf{\pi }_{\mathfrak{L}}^{-1}(\mathrm{e}^{i\theta \mathbf{N}_{%
\mathfrak{L}}})=\emptyset $, with the particle number operator $\mathbf{N}_{%
\mathfrak{L}}$ being the self--adjoint operator defined by (\ref{paticle
number operator1})--(\ref{paticle number operator2}) for $\Lambda =\mathfrak{%
L}$. Indeed, for all $\Lambda \in \mathcal{P}_{f}(\mathfrak{L})$ with
complement $\Lambda ^{c}\doteq \mathfrak{L}\backslash \Lambda $ and any
function $\psi \in \ell ^{2}(\mathfrak{L})\subset \mathcal{F}_{\mathfrak{L}}$%
, a direct computation shows that%
\begin{equation}
\left\Vert \left( \mathrm{e}^{i\theta \mathbf{N}_{\mathfrak{L}}}-\mathrm{e}%
^{i\theta \mathbf{N}_{\Lambda }}\right) \psi \right\Vert _{\mathcal{F}_{%
\mathfrak{L}}}^{2}=\left\vert \mathrm{e}^{i\theta }-1\right\vert
^{2}\sum\limits_{x\in \Lambda ^{c}}\left\vert \psi \left( x\right)
\right\vert ^{2}\ .  \label{eq conne}
\end{equation}%
Here, $\mathbf{N}_{\Lambda }\in \mathcal{B}(\mathcal{F}_{\mathfrak{L}})$ is
the particle number operator defined by (\ref{particle number operators})
for $\Lambda \in \mathcal{P}_{f}(\mathfrak{L})$. Therefore, $\mathrm{e}%
^{i\theta \mathbf{N}_{\Lambda }}$\ does not converge in $\mathcal{B}(%
\mathcal{F}_{\mathfrak{L}})$ (norm topology) to $\mathrm{e}^{i\theta \mathbf{%
N}_{\mathfrak{L}}}$ for $\theta \neq 0$.

Assume now that $\mathbf{\pi }_{\mathfrak{L}}^{-1}(\mathrm{e}^{i\theta
\mathbf{N}_{\mathfrak{L}}})\neq \emptyset $. Then, by Lemma \ref{equivalence
lemma1} and density of $\mathcal{U}_{0}$ in $\mathcal{U}$, there are two
families $\{\Lambda _{n}\}_{n\in \mathbb{N}}\subset \mathcal{P}_{f}(%
\mathfrak{L})$ and $\{U_{\Lambda _{n}}\}_{n\in \mathbb{N}}$ such that $%
U_{\Lambda _{n}}\in \mathcal{B}(\mathcal{F}_{\Lambda _{n}})\subset \mathcal{B%
}(\mathcal{F}_{\mathfrak{L}})$ converges in $\mathcal{B}(\mathcal{F}_{%
\mathfrak{L}})$ to $\mathrm{e}^{i\theta \mathbf{N}_{\mathfrak{L}}}$, as $%
n\rightarrow \infty $. From this and (\ref{eq conne}), one deduces that $%
\left( U_{\Lambda _{n}}-\mathrm{e}^{i\theta \mathbf{N}_{\Lambda
_{n}}}\right) $ must converge, as $n\rightarrow \infty $, to zero in $%
\mathcal{B}(\mathcal{F}_{\mathfrak{L}})$. The latter is not possible,
otherwise $\mathrm{e}^{i\theta \mathbf{N}_{\Lambda }}$\ would then converge
in $\mathcal{B}(\mathcal{F}_{\mathfrak{L}})$ to $\mathrm{e}^{i\theta \mathbf{%
N}_{\mathfrak{L}}}$.

Therefore, for any non--vanishing $\theta \in \mathbb{R}/(2\pi \mathbb{Z)}$,
$\mathbf{\pi }_{\mathfrak{L}}^{-1}(\mathrm{e}^{i\theta \mathbf{N}_{\mathfrak{%
L}}})=\emptyset $. The assertion then follows, by Equations (\ref{simple})
and (\ref{simplesimple}).
\end{proof}

\medskip

\noindent \underline{(v):} For any non--vanishing $\theta \in \mathbb{R}%
/(2\pi \mathbb{Z)}$, the unitary operator $\mathrm{e}^{i\theta \mathbf{N}_{%
\mathfrak{L}}}\notin \mathbf{\pi }_{\mathfrak{L}}(\mathcal{U})$ (see proof
of Lemma \ref{equivalence lemma2}) gives rise to a $\ast $--automorphism
\begin{equation*}
B\mapsto \mathrm{e}^{i\theta \mathbf{N}_{\mathfrak{L}}}\ B\ \mathrm{e}%
^{-i\theta \mathbf{N}_{\mathfrak{L}}}
\end{equation*}%
of $\mathcal{B}(\mathcal{F}_{\mathfrak{L}})$ defined via (\ref{paticle
number operator1})--(\ref{paticle number operator2}). One says that the
unitary operator $\mathrm{e}^{i\theta \mathbf{N}_{\mathfrak{L}}}\in \mathcal{%
B}(\mathcal{F}_{\mathfrak{L}})$ implements a \emph{global gauge
transformation}, see for instance \cite[Eq. (A.4)]{BruPedra1}.%
\index{Gauge transformation} A similar $\ast $--automorphism exists on the
CAR $C^{\ast }$--algebra $\mathcal{U}$: For any $\theta \in \mathbb{R}/(2\pi
\mathbb{Z)}$, it is the unique $\ast $--automorphism $\sigma _{\theta }$ of $%
\mathcal{U}$ satisfying the conditions%
\begin{equation}
\sigma _{\theta }(a_{x})=\mathrm{e}^{-i\theta }a_{x}\ ,%
\text{\qquad }x\in \mathfrak{L}\ .  \label{definition of gauge}
\end{equation}%
Note indeed that, using the Fock representation, one verifies that
\begin{equation*}
\mathbf{\pi }_{\mathfrak{L}}\left( \sigma _{\theta }(B)\right) =\mathrm{e}%
^{i\theta \mathbf{N}_{\mathfrak{L}}}\ \mathbf{\pi }_{\mathfrak{L}}\left(
B\right) \ \mathrm{e}^{-i\theta \mathbf{N}_{\mathfrak{L}}}\ ,\text{\qquad }%
B\in \mathcal{U}\ ,
\end{equation*}%
for any $\theta \in \mathbb{R}/(2\pi \mathbb{Z)}$.

A special role is played by $\sigma _{\pi }$: Elements $B_{1},B_{2}\in
\mathcal{U}$ satisfying $\sigma _{\pi }(B_{1})=B_{1}$ and $\sigma _{\pi
}(B_{2})=-B_{2}$ are respectively called \emph{even} and
\index{Algebra element!odd}\emph{odd}, while elements $B\in \mathcal{U}$
satisfying $\sigma _{\theta }(B)=B$ for any $\theta \in \mathbb{R}/(2\pi
\mathbb{Z)}$ are called
\index{Algebra element!gauge invariant}\emph{gauge invariant}. The set
\begin{equation}
\mathcal{U}^{+}\doteq \{B\in \mathcal{U}\;:\;B=\sigma _{\pi }(B)\}\subset
\mathcal{U}  \label{definition of even operators}
\end{equation}%
of all
\index{Algebra element!even}even elements and the set
\begin{equation}
\mathcal{U}^{\circ }\doteq \bigcap\limits_{\theta \in \mathbb{R}/(2\pi
\mathbb{Z)}}\{B\in \mathcal{U}\;:\;B=\sigma _{\theta }(B)\}\subset \mathcal{U%
}^{+}  \label{definition of gauge invariant operators}
\end{equation}%
of all gauge invariant elements are $\ast $--algebras. By continuity of $%
\sigma _{\theta }$, it follows that $\mathcal{U}^{+}$ and $\mathcal{U}%
^{\circ }$ are closed and hence $C^{\ast }$--algebras. $\mathcal{U}^{\circ }$
is known as the fermion
\index{Observable!H1-S2}\emph{observable} algebra because it equals the $%
C^{\ast }$--algebra of all self--adjoint elements of $\mathcal{U}$.

\subsection{Lattice Fermi versus Quantum Spin Systems\label{Quantum spin
systems}%
\index{Quantum spin systems}}

Quantum spin systems are models used to describe quantum phenomena appearing
at low temperatures in condensed matter physics. They are nowadays
particularly important in Quantum Information Theory. This subject appeared
right from the beginning, with the emergence of Quantum Mechanics in the
twenties. A concise introduction on its history is given in the paper \cite%
{N}, see also the corresponding references therein.

For completeness, we shortly recall that quantum spin systems are infinite
systems composed of elementary finite dimensional spaces, originally
referring to a spin variable (see (\ref{spin set})). Therefore, they are
constructed in a similar way as lattice Fermi systems. Mathematically
speaking, they are defined via the algebraic formulation of Quantum
Mechanics from the so--called \emph{spin} $C^{\ast }$--algebra $\mathcal{Q}$%
: \medskip

\noindent
\underline{(i):} With any lattice site $x\in \mathfrak{L}$ we associate a
finite dimensional Hilbert space $\mathcal{H}_{x}\equiv \mathbb{C}^{N}$ for
some $N\in \mathbb{N}$. Typically, the parameter $N$ is the cardinal $%
\left\vert \mathrm{S}\right\vert $ of the
\index{Spin set}spin set $\mathrm{S}$ (\ref{spin set}). Then, the algebra of
local observables over $\Lambda \in \mathcal{P}_{f}(\mathfrak{L})$ is the
subset of self--adjoint elements of the $C^{\ast }$--algebra%
\index{Spin algebra!local}%
\begin{equation*}
\mathcal{Q}_{\Lambda }\doteq \bigotimes_{x\in \Lambda }\mathcal{B}\left(
\mathcal{H}_{x}\right) \equiv \mathcal{B}\left( \bigotimes_{x\in \Lambda }%
\mathcal{H}_{x}\right) \ .
\end{equation*}%
Recall that $\mathcal{B}\left( \mathcal{H}_{x}\right) $ denotes the $C^{\ast
}$--algebra of bounded linear operators on $\mathcal{H}_{x}$ for $x\in
\mathfrak{L}$. The dimension of $\mathcal{Q}_{\Lambda }$ is equal to $%
N^{2\left\vert \Lambda \right\vert }$ for any $\Lambda \in \mathcal{P}_{f}(%
\mathfrak{L})$. Compare with the local CAR $C^{\ast }$--algebra $\mathcal{U}%
_{\Lambda }$, see in particular Lemma \ref{equivalence lemma1}. \medskip

\noindent
\underline{(ii):} For all $\Lambda ^{(1)},\Lambda ^{(2)}\in \mathcal{P}_{f}(%
\mathfrak{L})$ with $\Lambda ^{(1)}\subset \Lambda ^{(2)}$, there is a
(canonical) isometric inclusion $\mathcal{Q}_{\Lambda ^{(1)}}\hookrightarrow
\mathcal{Q}_{\Lambda ^{(2)}}$ defined by
\begin{equation*}
A\mapsto A\otimes \bigotimes_{x\in \Lambda ^{(2)}\backslash \Lambda ^{(1)}}%
\mathbf{1}_{\mathcal{H}_{x}}\;.
\end{equation*}%
In particular, using the sequence of cubic boxes defined by (\ref{eq:def
lambda n}) we observe that $\{\mathcal{Q}_{\Lambda _{L}}\}_{L\in \mathbb{R}%
_{0}^{+}}$ is also an increasing net of $C^{\ast }$--algebras. Compare with
the family $\{\mathcal{U}_{\Lambda _{L}}\}_{L\in \mathbb{R}_{0}^{+}}$.
\medskip

\noindent \underline{(iii):} Hence, the set%
\begin{equation*}
\mathcal{Q}_{0}\doteq \underset{L\in \mathbb{R}_{0}^{+}}{\bigcup }\mathcal{Q}%
_{\Lambda _{L}}
\end{equation*}%
of local elements is a normed $\ast $--algebra with $\left\Vert A\right\Vert
_{\mathcal{Q}_{0}}=\left\Vert A\right\Vert _{\mathcal{Q}_{\Lambda _{L}}}$for
all $A\in \mathcal{Q}_{\Lambda _{L}}$ and $L\in \mathbb{R}_{0}^{+}$. Compare
with the normed $\ast $--algebra $\mathcal{U}_{0}$ of local elements defined
by (\ref{simple}).

For any finite subsets $\Lambda ^{(1)},\Lambda ^{(2)}\in \mathcal{P}_{f}(%
\mathfrak{L})$ with $\Lambda ^{(1)}\cap \Lambda ^{(2)}=\emptyset $ we
observe that
\begin{equation*}
\left[ B_{1},B_{2}\right] \doteq B_{1}B_{2}-B_{2}B_{1}=0\ ,\qquad B_{1}\in
\mathcal{Q}_{\Lambda ^{(1)}},\ B_{2}\in \mathcal{Q}_{\Lambda ^{(2)}}\ .
\end{equation*}%
Because of the CAR (\ref{CARbis}), such a property is also satisfied for all
\emph{even} local elements $B_{1}\in \mathcal{U}_{\Lambda ^{(1)}}\cap
\mathcal{U}^{+}$ and $B_{2}\in \mathcal{U}_{\Lambda ^{(2)}}\cap \mathcal{U}%
^{+}$, see (\ref{definition of even operators}). However, it is wrong in
general for Fermi systems. For instance, the CAR (\ref{CARbis}) trivially
yield $[a_{x},a_{y}]=2a_{x}a_{y}$ for any $x,y\in \mathfrak{L}$.\medskip

\noindent \underline{(iv):} The \emph{spin} $C^{\ast }$--algebra $\mathcal{Q}
$
\index{Spin algebra!non--local}of the lattice $\mathfrak{L}$ is by
definition the completion of the normed $\ast $--algebra $\mathcal{Q}_{0}$.
It is separable, by finite dimensionality of $\mathcal{Q}_{\Lambda }$ for $%
\Lambda \in \mathcal{P}_{f}(\mathfrak{L})$. In other words, $\mathcal{Q}$ is
the inductive limit of the finite dimensional $C^{\ast }$--algebras $\{%
\mathcal{Q}_{\Lambda }\}_{\Lambda \in \mathcal{P}_{f}(\mathfrak{L})}$.
Compare with the CAR $C^{\ast }$--algebra $\mathcal{U}$.
\vspace{0.1cm}

Infinite--volume dynamics is then constructed via Lieb--Robinson bounds, as
done in Sections \ref{Generalized Lieb--Robinson Bounds}--\ref{section LR
non-auto} for CAR $C^{\ast }$--algebras. Here, we focus on lattice Fermi
systems which are, from a technical point of view, slightly more difficult
because of the non--commutativity of their elements on different lattice
sites, as explained above. However, all the results presented in Sections %
\ref{Generalized Lieb--Robinson Bounds}--\ref{section LR non-auto} hold true
for quantum spin systems, by restricting them on the $C^{\ast }$--algebra $%
\mathcal{U}^{+}\subset \mathcal{U}$ (\ref{definition of even operators}) of
all even elements and then by replacing $\mathcal{U}^{+}$ with the spin $%
C^{\ast }$--algebras $\mathcal{Q}$.

\section{Lieb--Robinson Bounds for Multi--Commutators\label{Generalized
Lieb--Robinson Bounds}}

Lieb--Robinson bounds for multi--commutators are studied here for fermion
systems, only. In the case of quantum spin systems, $\mathcal{U}$ has to be
replaced by the infinite tensor product $\mathcal{Q}$ of copies of some
finite dimensional $C^{\ast }$--algebra attached to each site $x\in
\mathfrak{L}$. See Section \ref{Quantum spin systems}. All results of this
section also hold in this situation. We concentrate our attention on fermion
algebras in view of applications to microscopic foundations of the theory of
electrical conduction \cite{brupedrahistoire, OhmV}. Moreover, as explained
in Section \ref{Quantum spin systems}, the fermionic case is, technically
speaking, more involved, because of the non--commutativity of elements of
the CAR algebra\ $\mathcal{U}$ sitting on different lattice sites.

\subsection{Interactions and Finite--Volume Dynamics\label{Finite Volume
Dynamics}}

Following the algebraic formulation of Quantum Mechanics (Section \ref{sect
Algebraic Formulation of Quantum Mechanics}), we have to define a $C_{0}$%
--group (that is, a strongly continuous group) $\{\tau _{t}\}_{t\in {\mathbb{%
R}}}$ of $\ast $--automorphisms of the CAR $C^{\ast }$--algebra $\mathcal{U}$%
. On the other hand, as explained in Section \ref{section CAR}, every
physical system of particles belongs to some finite region $\Lambda _{L}$ (%
\ref{eq:def lambda n}) of lattice space, and they become macroscopic when $%
L\rightarrow \infty $. Therefore, we define the $C_{0}$--group $\{\tau
_{t}\}_{t\in {\mathbb{R}}}$ as a limit $L\rightarrow \infty $ of
finite--volume dynamics.

We thus need to define a family of Hamiltonians $H_{L}\in \mathcal{U}%
_{\Lambda _{L}}$ for $L\in \mathbb{R}_{0}^{+}$. This is done by using the
notions of \emph{interactions} and \emph{potentials} defined as follows:

\begin{itemize}
\item Interactions
\index{Interaction}are by definition families $\Psi =\{\Psi _{\Lambda
}\}_{\Lambda \in \mathcal{P}_{f}(\mathfrak{L})}$ of even (cf. (\ref%
{definition of even operators})) and self--adjoint local elements $\Psi
_{\Lambda }=\Psi _{\Lambda }^{\ast }\in \mathcal{U}^{+}\cap \mathcal{U}%
_{\Lambda }$ with $\Psi _{\emptyset }=0$. Obviously, the set of all
interactions can be endowed with a real vector space structure:
\begin{equation*}
\left( \alpha _{1}\Phi +\alpha _{2}\Psi \right) _{\Lambda }\doteq \alpha
_{1}\Phi _{\Lambda }+\alpha _{2}\Psi _{\Lambda }
\end{equation*}%
for any interactions $\Phi $, $\Psi $, and any real numbers $\alpha
_{1},\alpha _{2}\in \mathbb{R}$.

\item By
\index{Potential}potential, we mean here a collection $\mathbf{V}\doteq \{%
\mathbf{V}_{\left\{ x\right\} }\}_{x\in \mathfrak{L}}$ of even (cf. (\ref%
{definition of even operators})) and self--adjoint elements such that $%
\mathbf{V}_{\left\{ x\right\} }=\mathbf{V}_{\left\{ x\right\} }^{\ast }\in
\mathcal{U}^{+}\cap \mathcal{U}_{\left\{ x\right\} }$ for all $x\in
\mathfrak{L}$. Such objects are sometimes called
\index{On--site interaction}\emph{on--site interactions}. Indeed, strictly
speaking, a potential is nothing but a special case of interaction. But, the
use of this special notion allows us to treat latter the cases for which (%
\ref{condition divergence}) holds true.
\end{itemize}

Take now any interaction $\Psi $ and\emph{\ }potential $\mathbf{V}$. With
such objects we associate the (internal) energy observable or Hamiltonian%
\index{Interaction!local Hamiltonian}%
\index{Potential!local Hamiltonian}%
\index{Hamiltonian!H1-S2!many-fermion}
\begin{equation}
H_{L}\doteq \sum\limits_{\Lambda \subseteq \Lambda _{L}}\Psi _{\Lambda
}+\sum\limits_{x\in \Lambda _{L}}\mathbf{V}_{\left\{ x\right\} }\ ,\qquad
L\in \mathbb{R}_{0}^{+}\ ,  \label{definition fininte vol dynam0}
\end{equation}%
of the cubic box $\Lambda _{L}$ defined by (\ref{eq:def lambda n}).

Then, similar to Equations (\ref{automorphism})--(\ref{automorphism3}), the
finite--volume dynamics corresponds to the continuous group $\{\tau
_{t}^{(L)}\}_{t\in {\mathbb{R}}}$ of $\ast $--auto%
\-%
morphisms of $\mathcal{U}$ defined by%
\index{Interaction!finite--volume dynamics}%
\index{Potential!finite--volume dynamics}
\begin{equation}
\tau _{t}^{(L)}(B)=\mathrm{e}^{itH_{L}}B\mathrm{e}^{-itH_{L}}\ ,\qquad B\in
\mathcal{U}\ ,  \label{definition fininte vol dynam}
\end{equation}%
for any $L\in \mathbb{R}_{0}^{+}$, interaction $\Psi $ and potential $%
\mathbf{V}$. Obviously, its generator is the bounded linear operator $\delta
^{(L)}$ defined on $\mathcal{U}$ by%
\index{Interaction!symmetric derivation}%
\index{Potential!symmetric derivation}
\begin{equation}
\delta ^{(L)}(B)\doteq i\sum\limits_{\Lambda \subseteq \Lambda _{L}}\left[
\Psi _{\Lambda },B\right] +i\sum\limits_{x\in \Lambda _{L}}\left[ \mathbf{V}%
_{\left\{ x\right\} },B\right] \ ,\qquad B\in \mathcal{U}\ .
\label{dynamic series}
\end{equation}%
It is a symmetric derivation on $\mathcal{U}$ because, for all $%
B_{1},B_{2}\in \mathcal{U}$,%
\index{Symmetric derivation}
\begin{equation*}
\delta ^{(L)}(B_{1}^{\ast })=\delta ^{(L)}(B_{1})^{\ast }\quad
\text{and}\quad \delta ^{(L)}(B_{1}B_{2})=\delta
^{(L)}(B_{1})B_{2}+B_{1}\delta ^{(L)}(B_{2})\ .
\end{equation*}%
Compare with Equation (\ref{symmetric derivation}).

Using two functions $h,v:[0,\infty )\rightarrow \mathbb{R}$, note that the
finite--volume Hamiltonian (\ref{definition fininte vol dynam0}) associated
with the interaction%
\index{Interaction!example} $\Psi ^{\left( h,v\right) }$ defined by
\begin{eqnarray}
\Psi _{\Lambda }^{\left( h,v\right) } &\doteq &h\left( \left\vert
x-y\right\vert \right) a_{x}^{\ast }a_{y}+\left( 1-\delta _{x,y}\right)
h\left( \left\vert x-y\right\vert \right) a_{y}^{\ast }a_{x}
\label{de interactino example} \\
&&+v\left( \left\vert x-y\right\vert \right) \left( a_{y}^{\ast
}a_{y}a_{x}^{\ast }a_{x}+\left( 1-\delta _{x,y}\right) a_{x}^{\ast
}a_{x}a_{y}^{\ast }a_{y}\right)  \notag
\end{eqnarray}%
whenever $\Lambda =\left\{ x,y\right\} $ for $x,y\in \mathfrak{L}$, and $%
\Psi _{\Lambda }^{\left( h,v\right) }\doteq 0$ otherwise, is equal in this
case to%
\index{Hamiltonian!H1-S2!many-fermion}
\begin{equation*}
H_{L}=\sum\limits_{x,y\in \Lambda _{L}}h\left( \left\vert x-y\right\vert
\right) a_{x}^{\ast }a_{y}+\sum\limits_{x,y\in \Lambda _{L}}v(|x-y|)\mathbf{n%
}_{x}\mathbf{n}_{y}\ ,\qquad L\in \mathbb{R}_{0}^{+}\ .
\end{equation*}%
Compare with (\ref{H lambda}) for $\Lambda =\Lambda _{L}$. This gives a very
important -- albeit very specific -- example of a Fermi model on the
lattice. For instance, it includes the celebrated
\index{Hubbard model}Hubbard model widely used in Physics. Other examples
are given in Section \ref{Section Inter dis media}.

\subsection{Banach Spaces of Short--Range Interactions\label{Section Banach
space interaction}%
\index{Banach space of interactions}%
\index{Interaction!short-range}}

The finite--volume dynamics we define in Section \ref{Finite Volume Dynamics}
should converge to an infinite--volume one to be able to understand
macroscopic systems. In other words, the limit $L\rightarrow \infty $ of the
continuous group $\{\tau _{t}^{(L)}\}_{t\in {\mathbb{R}}}$ of $\ast $--auto%
\-%
morphisms defined by (\ref{definition fininte vol dynam}) has to converge to
a $C_{0}$--group $\{\tau _{t}\}_{t\in {\mathbb{R}}}$ of $\ast $%
--automorphisms of the CAR $C^{\ast }$--algebra $\mathcal{U}$. In order to
ensure that property (cf. Section \ref{sect Lieb--Robinson}), we define
Banach spaces of short--range interactions by introducing specific norms for
interactions, taking into account space decay.

Following \cite[Eqs. (1.3)--(1.4)]{NOS}, we consider positive--valued and
non--increasing
\index{Decay function}decay functions $\mathbf{F}:\mathbb{R}%
_{0}^{+}\rightarrow \mathbb{R}^{+}$ satisfying the following properties:

\begin{itemize}
\item \emph{Summability on }$\mathfrak{L}$\emph{.}
\begin{equation}
\left\Vert \mathbf{F}\right\Vert _{1,\mathfrak{L}}\doteq \underset{y\in
\mathfrak{L}}{\sup }\sum_{x\in \mathfrak{L}}\mathbf{F}\left( \left\vert
x-y\right\vert \right) =\sum_{x\in \mathfrak{L}}\mathbf{F}\left( \left\vert
x\right\vert \right) <\infty \ .  \label{(3.1) NS}
\end{equation}

\item \emph{Bounded convolution constant.}
\index{Convolution constant}%
\begin{equation}
\mathbf{D}\doteq \underset{x,y\in \mathfrak{L}}{\sup }\sum_{z\in \mathfrak{L}%
}%
\frac{\mathbf{F}\left( \left\vert x-z\right\vert \right) \mathbf{F}\left(
\left\vert z-y\right\vert \right) }{\mathbf{F}\left( \left\vert
x-y\right\vert \right) }<\infty \ .  \label{(3.2) NS}
\end{equation}
\end{itemize}

\noindent Note that the idea of a bounded convolution constant is
also used in \cite[cf. Assumption 2.1.]{Hastings}.

In the case $\mathfrak{L}$ would be a general countable set with infinite
cardinality and some metric $\mathrm{d}$, the existence of such a function $%
\mathbf{F}$ satisfying (\ref{(3.1) NS})--(\ref{(3.2) NS}) with $\mathrm{d}%
(\cdot ,\cdot )$ instead of $\left\vert \cdot -\cdot \right\vert $ refers to
the so--called \emph{regular} property%
\index{Lattice!regular} of $\mathfrak{L}$. For any $d\in \mathbb{N}$, $%
\mathfrak{L}\doteq \mathbb{Z}^{d}$ is in this sense regular with the metric $%
\mathrm{d}(\cdot ,\cdot )$ $=$ $\left\vert \cdot -\cdot \right\vert $.
Indeed, a typical example of such a $\mathbf{F}$ for $\mathfrak{L}=\mathbb{Z}%
^{d}$, $d\in \mathbb{N}$, and the metric induced by $\left\vert \cdot
\right\vert $ is the function%
\index{Decay function!polynomial decay}
\begin{equation}
\mathbf{F}\left( r\right) \doteq \left( 1+r\right) ^{-(d+\epsilon )}\
,\qquad r\in \mathbb{R}_{0}^{+}\ ,  \label{example polynomial}
\end{equation}%
which has convolution constant $\mathbf{D}\leq 2^{d+1+\epsilon }\left\Vert
\mathbf{F}\right\Vert _{1,\mathfrak{L}}$ for $\epsilon \in \mathbb{R}^{+}$.
See \cite[Eq. (1.6)]{NOS} or \cite[Example 3.1]{S}. Note that the
exponential function $\mathbf{F}\left( r\right) =\mathrm{e}^{-\varsigma r}$,
$\varsigma \in \mathbb{R}^{+}$, satisfies (\ref{(3.1) NS}) but not (\ref%
{(3.2) NS}). Nevertheless, for every function $\mathbf{F}$ with bounded
convolution constant (\ref{(3.2) NS}) and any strictly positive parameter $%
\varsigma \in \mathbb{R}^{+}$, the function%
\index{Decay function!exponential decay}
\begin{equation*}
\mathbf{%
\tilde{F}}\left( r\right) =\mathrm{e}^{-\varsigma r}\mathbf{F}\left(
r\right) \ ,\qquad r\in \mathbb{R}_{0}^{+}\ ,
\end{equation*}%
clearly satisfies Assumption (\ref{(3.2) NS}) with a convolution constant
that is no bigger than the one of $\mathbf{F}$. In fact, as observed in \cite%
[Section 3.1]{S}, the multiplication of such a\ function $\mathbf{F}$ with a
non--increasing weight $f:\mathbb{R}_{0}^{+}\rightarrow \mathbb{R}^{+}$
satisfying $f\left( r+s\right) \geq f\left( r\right) f\left( s\right) $
\index{Logarithmically superadditive function}(logarithmically superadditive
function) does not increase the convolution constant $\mathbf{D}$.\ In the
sequel, (\ref{(3.1) NS})--(\ref{(3.2) NS}) are assumed to be satisfied.

The function $\mathbf{F}$ encodes the short--range property of interactions.
Indeed, an interaction $\Psi $ is said to be \emph{short--range} if%
\index{Interaction!norm}
\begin{equation}
\left\Vert \Psi \right\Vert _{\mathcal{W}}\doteq \underset{x,y\in \mathfrak{L%
}}{\sup }\sum\limits_{\Lambda \in \mathcal{P}_{f}(\mathfrak{L}),\;\Lambda
\supset \{x,y\}}%
\frac{\Vert \Psi _{\Lambda }\Vert _{\mathcal{U}}}{\mathbf{F}\left(
\left\vert x-y\right\vert \right) }<\infty \ .  \label{iteration0}
\end{equation}%
Since the map $\Psi \mapsto \Vert \Psi \Vert _{\mathcal{W}}$ defines a norm
on interactions, the space of short--range interactions w.r.t. to the decay
function $\mathbf{F}$ is the real separable Banach space $\mathcal{W}\equiv (%
\mathcal{W},\Vert \cdot \Vert _{\mathcal{W}}\mathcal{)}$ of all interactions
$\Psi $ with $\Vert \Psi \Vert _{\mathcal{W}}<\infty $. Note that a
short--range interaction $\Psi \in \mathcal{W}$ is not necessarily weak away
from the origin of $\mathfrak{L}$: Generally, the element $\Psi _{x+\Lambda
} $, $x\in \mathfrak{L}$, does not vanish when $|x|\rightarrow \infty $. It
turns out that all short--range interactions $\Psi \in \mathcal{W}$ define,
in a natural way, infinite--volume quantum dynamics, i.e., they define $%
C^{\ast }$--dynamical systems on $\mathcal{U}$. For more details, see
Section \ref{sect Lieb--Robinson}, in particular Theorem \ref{Theorem
Lieb-Robinson copy(3)}. (Recall that $C^{\ast }$--dynamical systems are
defined in Section \ref{sect Algebraic Formulation of Quantum Mechanics}.)

\begin{bemerkung}[Lattice Fermi models]
\mbox{
}\newline
The
\index{Interaction!example}interaction $\Psi ^{\left( h,v\right) }$ defined
in Section \ref{Finite Volume Dynamics} (see (\ref{de interactino example}))
belongs to $\mathcal{W}$ as soon as $h,v:[0,\infty )\rightarrow \mathbb{R}$
are real--valued and summable functions satisfying%
\begin{equation}
\underset{r\in \mathbb{R}_{0}^{+}}{\sup }\left\{
\frac{\left\vert h\left( r\right) \right\vert }{\mathbf{F}\left( r\right) }%
\right\} <\infty \text{\qquad and\qquad }\underset{r\in \mathbb{R}_{0}^{+}}{%
\sup }\left\{ \frac{\left\vert v\left( r\right) \right\vert }{\mathbf{F}%
\left( r\right) }\right\} <\infty \ .  \label{sdsdsdsds}
\end{equation}
\end{bemerkung}

\begin{bemerkung}[Quantum spin models]
\mbox{
}\newline
\index{Quantum spin systems!models}All important spin models with no mean
field term can be constructed from short--range interactions, as defined
above. As examples, we can mention the Ising model, the (quantum) Heisenberg
model, the XXZ model, the XY model, the XXZ model, the model \cite{AKLT},
etc. See for instance \cite{N} and references therein.
\end{bemerkung}

\subsection{Existence of Dynamics and Lieb--Robinson Bounds\label{sect
Lieb--Robinson}}

In Section \ref{Section Banach space interaction}, we define a Banach space $%
\mathcal{W}$ of short--range interactions by using a convenient norm $\Vert
\cdot \Vert _{\mathcal{W}}$ for interactions, see (\ref{iteration0}). $\Psi
\in \mathcal{W}$ ensures the existence of an infinite--volume derivation $%
\delta $ associated with $\Psi $ by taking the
\index{Thermodynamic limit}thermodynamic limit $L\rightarrow \infty $ of
commutators involving $\Psi _{\Lambda }$, $\Lambda \in \mathcal{P}_{f}(%
\mathfrak{L})$, see (\ref{dynamic series}). This also holds true for all
potentials $\mathbf{V}\doteq \{\mathbf{V}_{\left\{ x\right\} }\}_{x\in
\mathfrak{L}}$, as defined in\ Section \ref{Finite Volume Dynamics}. Every
generator of a $C^{\ast }$--dynamical system is a derivation, but the
converse does not generally hold. We show here that $\delta $ is the
generator of a $C^{\ast }$--dynamical system in $\mathcal{U}$ when $\Psi \in
\mathcal{W}$ and for all potentials $\mathbf{V}\doteq \{\mathbf{V}_{\left\{
x\right\} }\}_{x\in \mathfrak{L}}$. Note that the interaction representing $%
\mathbf{V}$ can possibly be outside $\mathcal{W}$ because we allow $\mathbf{V%
}$ to be unbounded, i.e., the case%
\begin{equation}
\sup_{x\in \mathfrak{L}}\left\Vert \mathbf{V}_{\left\{ x\right\}
}\right\Vert _{\mathcal{U}}=\infty  \label{condition divergence}
\end{equation}%
is included in the discussion below.

The key ingredient in this analysis are the so--called \emph{Lieb--Robinson
bounds}. Indeed, they lead, among other things, to the existence of the
infinite--volume dynamics for interacting particles. By using this, we
define a $C^{\ast }$--dynamical system in $\mathcal{U}$ for any short--range
interaction $\Psi \in \mathcal{W}$. These bounds are, moreover, a pivotal
ingredient to study transport properties of interacting fermion systems
later on. Thus, for the reader's convenience, below we review this topic in
detail.

It is convenient to introduce at this point the notation
\begin{equation}
\mathcal{S}_{\Lambda }(%
\tilde{\Lambda})\doteq \left\{ \mathcal{Z}\subset \Lambda :\mathcal{Z}\cap
\tilde{\Lambda}\neq 0\text{ and }\mathcal{Z}\cap \tilde{\Lambda}^{c}\neq
0\right\}  \label{notation1}
\end{equation}%
for any set $\tilde{\Lambda}\subset \Lambda \subset \mathfrak{L}$ with
complement $\tilde{\Lambda}^{c}\doteq \mathfrak{L}\backslash \tilde{\Lambda}$%
, as well as
\begin{equation*}
\partial _{\Psi }\Lambda \doteq \left\{ x\in \Lambda :\exists \mathcal{Z}\in
\mathcal{S}_{\mathfrak{L}}(\Lambda )\text{ with }x\in \mathcal{Z}\text{ and }%
\Psi _{\mathcal{Z}}\neq 0\right\}
\end{equation*}%
for any interaction $\Psi \doteq \{\Psi _{\mathcal{Z}}\}_{\mathcal{Z}\in
\mathcal{P}_{f}(\mathfrak{L})}$ and any finite subset $\Lambda \in \mathcal{P%
}_{f}(\mathfrak{L})$ of $\mathfrak{L}$. We are now in position to prove
Lieb--Robinson bounds for finite--volume fermion systems with short--range
interactions and in presence of potentials:%
\index{Lieb--Robinson bounds}

\begin{satz}[Lieb--Robinson bounds]
\label{Theorem Lieb-Robinson}\mbox{
}\newline
Let $\Psi \in \mathcal{W}$ and $\mathbf{V}$ be any potential. Then, for any $%
t\in \mathbb{R}$, $L\in \mathbb{R}_{0}^{+}$, and elements $B_{1}\in \mathcal{%
U}^{+}\cap \mathcal{U}_{\Lambda ^{(1)}}$, $B_{2}\in \mathcal{U}_{\Lambda
^{(2)}}$ with $\Lambda ^{(1)},\Lambda ^{(2)}\in \mathcal{P}_{f}(\mathfrak{L}%
) $ and $\Lambda ^{(1)}\cap \Lambda ^{(2)}=\emptyset $,%
\index{Lieb--Robinson bounds}
\begin{eqnarray}
\left\Vert \lbrack \tau _{t}^{(L)}\left( B_{1}\right) ,B_{2}]\right\Vert _{%
\mathcal{U}} &\leq &2\mathbf{D}^{-1}\left\Vert B_{1}\right\Vert _{\mathcal{U}%
}\left\Vert B_{2}\right\Vert _{\mathcal{U}}\left( \mathrm{e}^{2\mathbf{D}%
\left\vert t\right\vert \left\Vert \Psi \right\Vert _{\mathcal{W}}}-1\right)
\label{LR bounds} \\
&&\times \sum_{x\in \partial _{\Psi }\Lambda ^{(1)}}\sum_{y\in \Lambda
^{(2)}}\mathbf{F}\left( \left\vert x-y\right\vert \right) \ .  \notag
\end{eqnarray}%
The constant $\mathbf{D}\in \mathbb{R}^{+}$ is defined by (\ref{(3.2) NS}).
\end{satz}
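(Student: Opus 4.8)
The plan is to follow the Nachtergaele--Sims method (cf.\ \cite[Theorem 3.1]{NS}) with two adaptations forced by the fermionic setting with a possibly unbounded on--site potential. The two elementary facts that replace the ``disjoint supports commute'' input of the spin case are: (i) any \emph{even} element supported in a region commutes with any element supported in a disjoint region, so that $[B_{1},B_{2}]=0$ and $[\Psi _{Z},B_{1}]=0$ whenever $Z\cap \Lambda ^{(1)}=\emptyset $ (recall that $B_{1}$ and all $\Psi _{Z}$ are even); and (ii) for $x\notin X$ the on--site unitary $\mathrm{e}^{it\mathbf{V}_{\left\{ x\right\} }}$ commutes with $\mathcal{U}_{X}$, so the automorphism generated by the whole potential preserves supports and evenness of local elements. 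Because of (ii), the on--site potential together with the part of the interaction living \emph{inside} $\Lambda ^{(1)}$ can be absorbed into an auxiliary dynamics that does not move $B_{1}$ out of $\Lambda ^{(1)}$ and drops out of the commutator with $B_{2}$. Concretely, I would set $H^{\mathrm{int}}\doteq \sum_{Z\subseteq \Lambda ^{(1)}}\Psi _{Z}+\sum_{x\in \Lambda _{L}}\mathbf{V}_{\left\{ x\right\} }$ (a bounded self--adjoint even element, since $\Lambda _{L}$ is finite), let $\theta _{t}$ be the norm--continuous group of $\ast $--automorphisms it generates, and write $H_{L}=H^{\mathrm{int}}+W^{\mathrm{ext}}$ with $W^{\mathrm{ext}}\doteq \sum_{Z\subseteq \Lambda _{L},\,Z\not\subseteq \Lambda ^{(1)}}\Psi _{Z}$. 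From (i)--(ii) one checks that $\theta _{t}$ maps $\mathcal{U}^{+}\cap \mathcal{U}_{\Lambda ^{(1)}}$ into itself, and that $\Vert \lbrack \theta _{t}(Y),B_{2}]\Vert _{\mathcal{U}}=\Vert \lbrack Y,\theta _{-t}(B_{2})]\Vert _{\mathcal{U}}$ with $\theta _{-t}(B_{2})\in \mathcal{U}_{\Lambda ^{(2)}}$ of norm $\Vert B_{2}\Vert _{\mathcal{U}}$.

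Next I would expand the finite--volume dynamics by the interaction--picture (Dyson--Phillips) series of $\tau ^{(L)}=\mathrm{e}^{t(\delta ^{\mathrm{int}}+\delta ^{\mathrm{ext}})}$ relative to $\theta =\mathrm{e}^{t\delta ^{\mathrm{int}}}$, where $\delta ^{\mathrm{int}}\doteq i[H^{\mathrm{int}},\cdot \,]$ and $\delta ^{\mathrm{ext}}\doteq i[W^{\mathrm{ext}},\cdot \,]=i\sum_{Z\not\subseteq \Lambda ^{(1)}}[\Psi _{Z},\cdot \,]$; the series converges in operator norm because everything is bounded in finite volume. Applying it to $B_{1}$ and commuting with $B_{2}$, the zeroth term $\theta _{t}(B_{1})\in \mathcal{U}^{+}\cap \mathcal{U}_{\Lambda ^{(1)}}$ commutes with $B_{2}$ and drops; in the $k$--th term the outermost $\theta $ is transferred onto $B_{2}$ as above, leaving a $k$--fold time integral of a norm of the form $\Vert \lbrack \,[\Psi _{Z_{k}},\theta (\cdots \lbrack \Psi _{Z_{1}},\theta (B_{1})]\cdots )],\theta _{-(\cdot )}(B_{2})]\,\Vert _{\mathcal{U}}$. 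Here evenness forces $Z_{1}$ to meet $\Lambda ^{(1)}$ while $\delta ^{\mathrm{ext}}$ forces $Z_{1}\not\subseteq \Lambda ^{(1)}$, so $Z_{1}\in \mathcal{S}_{\mathfrak{L}}(\Lambda ^{(1)})$ and, for a nonzero contribution, $\Psi _{Z_{1}}\neq 0$, i.e.\ $Z_{1}$ contains a site of $\partial _{\Psi }\Lambda ^{(1)}$; each later $Z_{j}$ must meet $\Lambda ^{(1)}\cup Z_{1}\cup \cdots \cup Z_{j-1}$, and the cluster $\Lambda ^{(1)}\cup Z_{1}\cup \cdots \cup Z_{k}$ must reach $\Lambda ^{(2)}$ (again by evenness and disjointness) for the final commutator to be nonzero. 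Bounding each nested commutator by $2\Vert \Psi _{Z_{j}}\Vert _{\mathcal{U}}$ times the previous norm, the last one by $2\Vert B_{2}\Vert _{\mathcal{U}}$, and using $\int_{0\leq s_{1}\leq \cdots \leq s_{k}\leq |t|}\mathrm{d}s=|t|^{k}/k!$, I obtain
\begin{equation*}
\big\Vert \lbrack \tau _{t}^{(L)}(B_{1}),B_{2}]\big\Vert _{\mathcal{U}}\ \leq \ 2\Vert B_{1}\Vert _{\mathcal{U}}\Vert B_{2}\Vert _{\mathcal{U}}\sum_{k\geq 1}\frac{(2|t|)^{k}}{k!}\sum_{(Z_{1},\dots ,Z_{k})}\Vert \Psi _{Z_{1}}\Vert _{\mathcal{U}}\cdots \Vert \Psi _{Z_{k}}\Vert _{\mathcal{U}}\ ,
\end{equation*}
the inner sum running over the chains just described. (Equivalently, one could iterate an integral inequality for $\Vert \lbrack \tau ^{(L)}_{t}(\,\cdot\,),B_{2}]\Vert _{\mathcal{U}}$ in the manner of \cite{NS}; the bookkeeping is the same.)

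Finally I would carry out the standard combinatorial estimate of this Lieb--Robinson series: choose the connecting site of each $Z_{j}$ to the preceding cluster and its outgoing site, thereby extracting a sequence of $k+1$ sites $x_{0},x_{1},\dots ,x_{k}$ with $x_{0}\in \partial _{\Psi }\Lambda ^{(1)}$ and $x_{k}\in \Lambda ^{(2)}$, bound each single--interaction sum by $\sum_{Z\supseteq \{a,b\}}\Vert \Psi _{Z}\Vert _{\mathcal{U}}\leq \Vert \Psi \Vert _{\mathcal{W}}\mathbf{F}(|a-b|)$ (definition of $\Vert \cdot \Vert _{\mathcal{W}}$), and collapse the $k-1$ free intermediate summations by iterating the convolution bound $\sum_{z}\mathbf{F}(|a-z|)\mathbf{F}(|z-b|)\leq \mathbf{D}\,\mathbf{F}(|a-b|)$ of (\ref{(3.2) NS}). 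This yields $\sum_{(Z_{1},\dots ,Z_{k})}\prod_{j}\Vert \Psi _{Z_{j}}\Vert _{\mathcal{U}}\leq \Vert \Psi \Vert _{\mathcal{W}}^{k}\,\mathbf{D}^{k-1}\sum_{x\in \partial _{\Psi }\Lambda ^{(1)}}\sum_{y\in \Lambda ^{(2)}}\mathbf{F}(|x-y|)$, and resumming $\sum_{k\geq 1}(2\mathbf{D}|t|\Vert \Psi \Vert _{\mathcal{W}})^{k}/k!=\mathrm{e}^{2\mathbf{D}|t|\Vert \Psi \Vert _{\mathcal{W}}}-1$ gives exactly the claimed bound. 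The main obstacle is the second step: the splitting $H_{L}=H^{\mathrm{int}}+W^{\mathrm{ext}}$ must be arranged so that the (possibly unbounded) potential is entirely swallowed by $\theta _{t}$ and never reappears with a coefficient — this is what forces the exponent to depend only on $\Vert \Psi \Vert _{\mathcal{W}}$ and the spatial factor only on $\partial _{\Psi }\Lambda ^{(1)}$ — while keeping careful track, via evenness, of which commutators vanish; the ensuing resummation is routine.
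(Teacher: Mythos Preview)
Your plan is essentially the paper's: isolate the on--site potential together with the part of $\Psi$ living inside $\Lambda^{(1)}$ into an auxiliary dynamics that preserves $\mathcal U^{+}\cap\mathcal U_{\Lambda^{(1)}}$, so that only the boundary--crossing interaction terms $\Psi_{Z}$, $Z\in\mathcal S_{\Lambda_L}(\Lambda^{(1)})$, drive the commutator with $B_2$; then iterate and close with the convolution bound~(\ref{(3.2) NS}). The paper's $\tau^{(\Lambda^{(1)})}$ plays exactly the role of your $\theta$ (with the harmless difference that the paper puts only $\sum_{x\in\Lambda^{(1)}}\mathbf V_{\{x\}}$ into it; the remaining on--site terms commute with $\mathcal U_{\Lambda^{(1)}}$ anyway).

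There is, however, a gap in the combinatorial step as you wrote it. Your direct Dyson--Phillips expansion yields a sum over $(Z_1,\ldots,Z_k)$ with \emph{branching} constraints: each $Z_j$ only needs to meet the union $\Lambda^{(1)}\cup Z_1\cup\cdots\cup Z_{j-1}$, and only the union $\bigcup_j Z_j$ --- not $Z_k$ --- has to meet $\Lambda^{(2)}$. With these constraints a linear string $x_0,\ldots,x_k$ cannot in general be extracted: if already $Z_1\cap\Lambda^{(2)}\neq\emptyset$, then $Z_2,\ldots,Z_k$ are free branches attaching anywhere to the growing cluster, and their sum is not controlled by $\mathbf D^{k-1}\Vert\Psi\Vert_{\mathcal W}^{k}\sum_{x\in\partial_\Psi\Lambda^{(1)}}\sum_{y\in\Lambda^{(2)}}\mathbf F(|x-y|)$ (the dangling branches bring in factors like $|Z_1|$ or $|\partial_\Psi\Lambda^{(1)}|\,\Vert\mathbf F\Vert_{1,\mathfrak L}$ rather than a further convolution). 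So the claim that ``the bookkeeping is the same'' as for the iterative inequality is not correct.

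What the paper actually does --- and what you allude to parenthetically --- is to set
\[
C_{B_2}(\Lambda;t)\doteq\sup_{0\neq B\in\mathcal U^{+}\cap\mathcal U_{\Lambda}}\frac{\Vert[\tau_t^{(L)}(B),B_2]\Vert_{\mathcal U}}{\Vert B\Vert_{\mathcal U}}
\]
and derive, via a variation--of--constants argument applied to $f(t)=[\tau_t^{(L)}\circ\tau_{-t}^{(\Lambda^{(1)})}(B_1),B_2]$, the integral inequality
\[
C_{B_2}(\Lambda^{(1)};t)\ \le\ 2\sum_{Z\in\mathcal S_{\Lambda_L}(\Lambda^{(1)})}\Vert\Psi_Z\Vert_{\mathcal U}\int_0^{|t|}C_{B_2}(Z;\pm s)\,\mathrm ds\ .
\]
The crucial gain is that at each iteration the base set \emph{resets}: from $\Lambda^{(1)}$ to $Z_1$, then to $Z_2$, and so on. The iterated sum is therefore over \emph{linear} chains $Z_1\in\mathcal S(\Lambda^{(1)})$, $Z_2\in\mathcal S(Z_1)$, \ldots, $Z_k\in\mathcal S(Z_{k-1})$ with $Z_k\cap\Lambda^{(2)}\neq\emptyset$. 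For that sum your path extraction and convolution bound go through verbatim and give exactly the claimed constant. In short, switch your primary argument to the iterative one; the Dyson expansion you sketched upper--bounds the commutator but by a looser sum that does not yield the stated estimate directly.
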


\begin{proof}
The arguments are essentially the same as those proving \cite[Theorem 2.3.]%
{NS} for quantum spin systems. Here, we consider fermion systems and we give
the detailed proof for completeness and to prepare its extension to
time--dependent interactions and potentials, in Theorem \ref{Theorem
Lieb-Robinsonnew} (i). We fix $L\in \mathbb{R}_{0}^{+}$, $B_{1}\in \mathcal{U%
}^{+}\cap \mathcal{U}_{\Lambda ^{(1)}}$ and $B_{2}\in \mathcal{U}_{\Lambda
^{(2)}}$ with disjoint sets $\Lambda ^{(1)},\Lambda ^{(2)}\subsetneq \Lambda
_{L}$. [Note that $\Lambda ^{(1)}\cap \Lambda ^{(2)}=\emptyset $ yields $%
L\geq 1$.]

Let
\begin{equation*}
C_{B_{2}}\left( \Lambda ;t\right) \doteq \underset{B\in \mathcal{U}^{+}\cap
\mathcal{U}_{\Lambda },B\neq 0}{\sup }%
\frac{\left\Vert [\tau _{t}^{(L)}\left( B\right) ,B_{2}]\right\Vert _{%
\mathcal{U}}}{\left\Vert B\right\Vert _{\mathcal{U}}}\ ,\qquad t\in {\mathbb{%
R}}\ ,\ \Lambda \in \mathcal{P}_{f}(\mathfrak{L})\ .
\end{equation*}%
At time $t=0$, we observe that%
\begin{equation*}
\left\vert C_{B_{2}}\left( \Lambda ;0\right) \right\vert \leq 2\left\Vert
B_{2}\right\Vert _{\mathcal{U}}\mathbf{1}\left[ \Lambda \cap \Lambda
^{(2)}\neq \emptyset \right] \ ,
\end{equation*}%
while, for any $t\in {\mathbb{R}}$,
\begin{equation*}
C_{B_{2}}\left( \Lambda ;t\right) =\underset{B\in \mathcal{U}^{+}\cap
\mathcal{U}_{\Lambda },B\neq 0}{\sup }\frac{\left\Vert [\tau _{t}^{(L)}\circ
\tau _{-t}^{(\Lambda )}\left( B\right) ,B_{2}]\right\Vert _{\mathcal{U}}}{%
\left\Vert B\right\Vert _{\mathcal{U}}}\ .
\end{equation*}%
Here, $\{\tau _{t}^{(\Lambda )}\}_{t\in {\mathbb{R}}}$ is the continuous
group of $\ast $--automorphisms defined like $\{\tau _{t}^{(L)}\}_{t\in {%
\mathbb{R}}}$ by replacing the box $\Lambda _{L}$ with the (finite) set $%
\Lambda \in \mathcal{P}_{f}(\mathfrak{L})$.

Consider the function%
\begin{equation}
f\left( t\right) \doteq \left[ \tau _{t}^{(L)}\circ \tau _{-t}^{(\Lambda
^{(1)})}\left( B_{1}\right) ,B_{2}\right] \ ,\qquad t\in {\mathbb{R}}\ .
\label{iteration1new}
\end{equation}%
Then, using $B_{1}\in \mathcal{U}^{+}\cap \mathcal{U}_{\Lambda ^{(1)}}$ and $%
\Lambda ^{(1)}\subset \Lambda _{L}$, we deduce from (\ref{dynamic series})
and explicit computations that%
\begin{eqnarray}
\partial _{t}f\left( t\right) &=&i\sum\limits_{\mathcal{Z}\in \mathcal{S}%
_{\Lambda _{L}}(\Lambda ^{(1)})}\left[ \tau _{t}^{(L)}\left( \Psi _{\mathcal{%
Z}}\right) ,f\left( t\right) \right]  \label{derivativenew} \\
&&-i\sum\limits_{\mathcal{Z}\in \mathcal{S}_{\Lambda _{L}}(\Lambda ^{(1)})}%
\left[ \tau _{t}^{(L)}\circ \tau _{-t}^{(\Lambda ^{(1)})}\left( B_{1}\right)
,\left[ \tau _{t}^{(L)}\left( \Psi _{\mathcal{Z}}\right) ,B_{2}\right] %
\right] \ .  \notag
\end{eqnarray}%
Let $\mathfrak{g}_{t}\left( B\right) $ be the solution of
\begin{equation*}
\forall t\geq 0:\qquad \partial _{t}\mathfrak{g}_{t}\left( B\right)
=i\sum\limits_{\mathcal{Z}\in \mathcal{S}_{\Lambda _{L}}(\Lambda
^{(1)})}[\tau _{t}^{(L)}\left( \Psi _{\mathcal{Z}}\right) ,\mathfrak{g}%
_{t}\left( B\right) ]\ ,\qquad \mathfrak{g}_{0}\left( B\right) =B\in
\mathcal{U}\ .
\end{equation*}%
Since $\Vert \mathfrak{g}_{t}\left( B\right) \Vert _{\mathcal{U}}=\Vert
B\Vert _{\mathcal{U}}$ for any $B\in \mathcal{U}$, it follows from (\ref%
{derivativenew}), by variation of constants, that%
\begin{equation}
\left\Vert f\left( t\right) \right\Vert _{\mathcal{U}}\leq \left\Vert
f\left( 0\right) \right\Vert _{\mathcal{U}}+2\left\Vert B_{1}\right\Vert _{%
\mathcal{U}}\sum\limits_{\mathcal{Z}\in \mathcal{S}_{\Lambda _{L}}(\Lambda
^{(1)})}\int_{0}^{\left\vert t\right\vert }\left\Vert \left[ \tau _{\pm
s}^{(L)}\left( \Psi _{\mathcal{Z}}\right) ,B_{2}\right] \right\Vert _{%
\mathcal{U}}\mathrm{d}s\text{ }.  \label{iteration2new0}
\end{equation}%
[The sign of $s$ in $\pm s$ depends whether $t$ is positive or negative.]
Hence, as $\Lambda ^{(1)},\Lambda ^{(2)}$ are disjoint, for any $t\in {%
\mathbb{R}}$,
\begin{equation}
C_{B_{2}}\left( \Lambda ^{(1)};t\right) \leq 2\sum\limits_{\mathcal{Z}\in
\mathcal{S}_{\Lambda _{L}}(\Lambda ^{(1)})}\left\Vert \Psi _{\mathcal{Z}%
}\right\Vert _{\mathcal{U}}\int_{0}^{\left\vert t\right\vert
}C_{B_{2}}\left( \mathcal{Z};\pm s\right) \mathrm{d}s\ .
\label{iteration2new}
\end{equation}%
By estimating $C_{B_{2}}\left( \mathcal{Z};s\right) $ in a similar manner
and iterating this procedure, we show that, for every $L\in \mathbb{R}%
_{0}^{+}$, $t\in {\mathbb{R}}$ and all $B_{1}\in \mathcal{U}^{+}\cap
\mathcal{U}_{\Lambda ^{(1)}}$, $B_{2}\in \mathcal{U}_{\Lambda ^{(2)}}$ with
disjoint $\Lambda ^{(1)},\Lambda ^{(2)}\subset \Lambda _{L}$,
\begin{equation}
C_{B_{2}}\left( \Lambda ^{(1)};t\right) \leq 2\left\Vert B_{2}\right\Vert _{%
\mathcal{U}}\sum_{k\in \mathbb{N}}\frac{\left\vert 2t\right\vert ^{k}}{k!}%
u_{k}\ ,  \label{iteration3bis}
\end{equation}%
where, for any $k\in \mathbb{N}$,
\begin{equation*}
u_{k}\doteq \sum\limits_{\mathcal{Z}_{1}\in \mathcal{S}_{\Lambda
_{L}}(\Lambda ^{(1)})}\sum\limits_{\mathcal{Z}_{2}\in \mathcal{S}_{\Lambda
_{L}}(\mathcal{Z}_{1})}\cdots \sum\limits_{\mathcal{Z}_{k}\in \mathcal{S}%
_{\Lambda _{L}}(\mathcal{Z}_{k-1})}\mathbf{1}\left[ \mathcal{Z}_{k}\cap
\Lambda ^{(2)}\neq \emptyset \right] \underset{j=1}{\overset{k}{\prod }}%
\left\Vert \Psi _{\mathcal{Z}_{j}}\right\Vert _{\mathcal{U}}\ .
\end{equation*}%
The above series is absolutely and uniformly convergent for $L\in \mathbb{R}%
_{0}^{+}$ (with fixed $\Lambda ^{(1)},\Lambda ^{(2)}\subsetneq \Lambda _{L}$%
). Indeed, from straightforward estimates,%
\begin{equation}
u_{k}\leq \mathbf{D}^{k-1}\left\Vert \Psi \right\Vert _{\mathcal{W}%
}^{k}\sum_{x\in \partial _{\Psi }\Lambda ^{(1)}}\sum_{y\in \Lambda ^{(2)}}%
\mathbf{F}\left( \left\vert x-y\right\vert \right) \ ,  \label{final1}
\end{equation}%
by Equations (\ref{(3.2) NS}) and (\ref{iteration0}).

Note that (\ref{iteration3bis})--(\ref{final1}) yield (\ref{LR bounds}),
provided $\Lambda ^{(1)},\Lambda ^{(2)}\subsetneq \Lambda _{L}$. This last
condition can easily be removed by taking, at any fixed $L\in \mathbb{R}%
_{0}^{+}$, an interaction $\tilde{\Psi}^{(L)}\in \mathcal{W}$ defined by $%
\tilde{\Psi}_{\mathcal{Z}}^{(L)}\doteq \Psi _{\mathcal{Z}}$ for any $%
\mathcal{Z}\subseteq \Lambda _{L}$, while $\tilde{\Psi}_{\mathcal{Z}%
}^{(L)}\doteq 0$ when $\mathcal{Z}\nsubseteq \Lambda _{L}$. Indeed, for all $%
L\in \mathbb{R}_{0}^{+}$, we obviously have $\Vert \tilde{\Psi}^{(L)}\Vert _{%
\mathcal{W}}\leq \Vert \Psi \Vert _{\mathcal{W}}$. Furthermore, for all $L,%
\tilde{L}\in \mathbb{R}_{0}^{+}$ with $\tilde{L}>L$, $\tilde{\tau}_{t}^{(%
\tilde{L})}=\tau _{t}^{(L)}$, where $\{\tilde{\tau}_{t}^{(\tilde{L}%
)}\}_{t\in {\mathbb{R}}}$ is the (finite--volume) group of $\ast $--auto%
\-%
morphisms of $\mathcal{U}$ defined by (\ref{definition fininte vol dynam})
with $L=\tilde{L}$ and $\Psi =\tilde{\Psi}^{(L)}$. Therefore, it suffices to
apply (\ref{iteration3bis})--(\ref{final1}) to the interaction $\tilde{\Psi}%
^{(L)}$ for sufficiently large $\tilde{L}\in \mathbb{R}_{0}^{+}$ in order to
get the assertion without the condition $\Lambda ^{(1)},\Lambda
^{(2)}\subsetneq \Lambda _{L}$.
\end{proof}

As explained in \cite[Theorem 3.1]{NS} for quantum spin systems,
Lieb--Robinson bounds lead to the existence of the infinite--volume dynamics:

\begin{lemma}[Finite--volume dynamics as a Cauchy sequence]
\label{Theorem Lieb-Robinson copy(2)}\mbox{ }\newline
Let $\Psi \in \mathcal{W}$ and $\mathbf{V}$ be any potential. Then, for any $%
t\in \mathbb{R}$, $\Lambda \in \mathcal{P}_{f}(\mathfrak{L})$, $B\in
\mathcal{U}_{\Lambda }$ and $L_{1},L_{2}\in \mathbb{R}_{0}^{+}$ with $%
\Lambda \subset \Lambda _{L_{1}}\subsetneq \Lambda _{L_{2}}$,%
\index{Interaction!finite--volume dynamics}%
\index{Potential!finite--volume dynamics}
\begin{eqnarray*}
\left\Vert \tau _{t}^{(L_{2})}\left( B\right) -\tau _{t}^{(L_{1})}\left(
B\right) \right\Vert _{\mathcal{U}} &\leq &2\left\Vert B\right\Vert _{%
\mathcal{U}}\Vert \Psi \Vert _{\mathcal{W}}\left\vert t\right\vert \mathrm{e}%
^{4\mathbf{D}\left\vert t\right\vert \left\Vert \Psi \right\Vert _{\mathcal{W%
}}} \\
&&\times \sum\limits_{y\in \Lambda _{L_{2}}\backslash \Lambda
_{L_{1}}}\sum_{x\in \Lambda }\mathbf{F}\left( \left\vert x-y\right\vert
\right) \ .
\end{eqnarray*}
\end{lemma}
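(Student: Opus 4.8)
The plan is to establish this \emph{a priori} Cauchy estimate by the standard interpolation (telescoping) argument between the two finite--volume flows, reducing everything to the Lieb--Robinson bound already proved in Theorem \ref{Theorem Lieb-Robinson}. Without loss of generality take $t\geq 0$. First I would introduce the interpolating family $\Phi(s)\doteq \tau_{t-s}^{(L_2)}\circ\tau_s^{(L_1)}(B)$ for $s\in[0,t]$, so that $\Phi(0)=\tau_t^{(L_2)}(B)$ and $\Phi(t)=\tau_t^{(L_1)}(B)$. Differentiating and using the autonomous evolution equation $\partial_u\tau_u^{(L)}=\tau_u^{(L)}\circ\delta^{(L)}=\delta^{(L)}\circ\tau_u^{(L)}$ together with the fact that every $\tau_u^{(L)}$ is a $\ast$--automorphism, hence isometric, one gets $\Phi'(s)=\tau_{t-s}^{(L_2)}\bigl((\delta^{(L_1)}-\delta^{(L_2)})(\tau_s^{(L_1)}(B))\bigr)$, whence $\Vert\tau_t^{(L_2)}(B)-\tau_t^{(L_1)}(B)\Vert_{\mathcal{U}}\leq\int_0^t\Vert(\delta^{(L_2)}-\delta^{(L_1)})(\tau_s^{(L_1)}(B))\Vert_{\mathcal{U}}\,\mathrm{d}s$. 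The whole problem is thus reduced to estimating the integrand.

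Next I would exploit locality. Since $H_{L_1}\in\mathcal{U}_{\Lambda_{L_1}}$ (it is a finite sum of local self--adjoint elements supported in $\Lambda_{L_1}$, see (\ref{definition fininte vol dynam0})) and $B\in\mathcal{U}_{\Lambda}$ with $\Lambda\subseteq\Lambda_{L_1}$, the evolved observable $\tau_s^{(L_1)}(B)=\mathrm{e}^{\mathrm{i}sH_{L_1}}B\mathrm{e}^{-\mathrm{i}sH_{L_1}}$ stays in $\mathcal{U}_{\Lambda_{L_1}}$. Writing out $\delta^{(L_2)}-\delta^{(L_1)}$ from (\ref{dynamic series}), the potential part $\mathrm{i}\sum_{x\in\Lambda_{L_2}\setminus\Lambda_{L_1}}[\mathbf{V}_{\{x\}},\,\cdot\,]$ vanishes on $\mathcal{U}_{\Lambda_{L_1}}$, because each $\mathbf{V}_{\{x\}}$ is \emph{even} and supported on the singleton $\{x\}$ disjoint from $\Lambda_{L_1}$; likewise every interaction term $\Psi_{\Lambda'}$ with $\Lambda'\cap\Lambda_{L_1}=\emptyset$ drops out for the same reason. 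Hence only interaction terms indexed by $\Lambda'\in\mathcal{S}_{\Lambda_{L_2}}(\Lambda_{L_1})$ (notation (\ref{notation1})) survive, and $\Vert(\delta^{(L_2)}-\delta^{(L_1)})(\tau_s^{(L_1)}(B))\Vert_{\mathcal{U}}\leq\sum_{\Lambda'\in\mathcal{S}_{\Lambda_{L_2}}(\Lambda_{L_1})}\Vert[\Psi_{\Lambda'},\tau_s^{(L_1)}(B)]\Vert_{\mathcal{U}}$.

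Then I would bring in the Lieb--Robinson bound. Using the automorphism property, $\Vert[\Psi_{\Lambda'},\tau_s^{(L_1)}(B)]\Vert_{\mathcal{U}}=\Vert[\tau_{-s}^{(L_1)}(\Psi_{\Lambda'}),B]\Vert_{\mathcal{U}}$. For those $\Lambda'$ disjoint from $\Lambda$, $\Psi_{\Lambda'}\in\mathcal{U}^{+}\cap\mathcal{U}_{\Lambda'}$ is even, so Theorem \ref{Theorem Lieb-Robinson} applies with $L=L_1$, time $-s$, $B_1=\Psi_{\Lambda'}$, $B_2=B$, $\Lambda^{(1)}=\Lambda'$, $\Lambda^{(2)}=\Lambda$, giving the bound $2\mathbf{D}^{-1}\Vert\Psi_{\Lambda'}\Vert_{\mathcal{U}}\Vert B\Vert_{\mathcal{U}}(\mathrm{e}^{2\mathbf{D}|s|\Vert\Psi\Vert_{\mathcal{W}}}-1)\sum_{x\in\partial_{\Psi}\Lambda'}\sum_{y\in\Lambda}\mathbf{F}(|x-y|)$; for the remaining $\Lambda'$ that meet $\Lambda$ I would use the crude bound $2\Vert\Psi_{\Lambda'}\Vert_{\mathcal{U}}\Vert B\Vert_{\mathcal{U}}$. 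Now I would sum over $\Lambda'$: each $\Lambda'\in\mathcal{S}_{\Lambda_{L_2}}(\Lambda_{L_1})$ contains a point $w\in\Lambda_{L_2}\setminus\Lambda_{L_1}$, so fixing such a $w$ and a second point $z$, performing $\sum_{\Lambda'\supseteq\{z,w\}}\Vert\Psi_{\Lambda'}\Vert_{\mathcal{U}}\leq\Vert\Psi\Vert_{\mathcal{W}}\mathbf{F}(|z-w|)$ via (\ref{iteration0}) and then collapsing the intermediate sum over $z$ with the bounded--convolution property (\ref{(3.2) NS}), one arrives at $\Vert(\delta^{(L_2)}-\delta^{(L_1)})(\tau_s^{(L_1)}(B))\Vert_{\mathcal{U}}\leq 2\Vert B\Vert_{\mathcal{U}}\Vert\Psi\Vert_{\mathcal{W}}\,\mathrm{e}^{2\mathbf{D}s\Vert\Psi\Vert_{\mathcal{W}}}\sum_{y\in\Lambda}\sum_{w\in\Lambda_{L_2}\setminus\Lambda_{L_1}}\mathbf{F}(|y-w|)$. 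Integrating over $s\in[0,t]$ and using $\int_0^t\mathrm{e}^{2\mathbf{D}s\Vert\Psi\Vert_{\mathcal{W}}}\,\mathrm{d}s\leq t\,\mathrm{e}^{2\mathbf{D}t\Vert\Psi\Vert_{\mathcal{W}}}\leq t\,\mathrm{e}^{4\mathbf{D}t\Vert\Psi\Vert_{\mathcal{W}}}$ yields the asserted estimate.

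The main obstacle is the combinatorial bookkeeping in the last step rather than any conceptual point: one must organize the sum over all interaction supports $\Lambda'$ straddling $\partial\Lambda_{L_1}$ so that, after invoking (\ref{iteration0}) and the convolution estimate (\ref{(3.2) NS}) (and bounding $\partial_{\Psi}\Lambda'\subseteq\Lambda'$), it collapses \emph{exactly} to the double sum $\sum_{x\in\Lambda}\sum_{y\in\Lambda_{L_2}\setminus\Lambda_{L_1}}\mathbf{F}(|x-y|)$ running over the \emph{support of $B$} — not over $\Lambda_{L_1}$ — which is precisely what forces the right--hand side to vanish as $L_1\to\infty$ and hence delivers the Cauchy property; keeping the constants in the stated form and disposing of the overlap terms $\Lambda'\cap\Lambda\neq\emptyset$ is where care is needed.
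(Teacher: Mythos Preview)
Your argument is correct and in fact more direct than the paper's. The paper first passes to the \emph{interaction picture} with respect to the potential: it introduces unitaries $\mathbf{U}_L(t,s)=\mathrm{e}^{it\mathbf{V}_{\Lambda_L}}\mathrm{e}^{-i(t-s)H_L}\mathrm{e}^{-is\mathbf{V}_{\Lambda_L}}$ and the auxiliary dynamics $\tilde\tau_t^{(L)}(B)=\mathbf{U}_L(0,t)B\mathbf{U}_L(t,0)$, shows $\tau_t^{(L)}(B)=\tilde\tau_t^{(L)}(\mathrm{e}^{it\mathbf{V}_\Lambda}B\mathrm{e}^{-it\mathbf{V}_\Lambda})$, and only then runs the interpolation $s\mapsto\mathbf{U}_{L_2}(0,s)\mathbf{U}_{L_1}(s,t)B\mathbf{U}_{L_1}(t,s)\mathbf{U}_{L_2}(s,0)$ between the two $\tilde\tau$'s. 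After differentiating, the Lieb--Robinson bound (Theorem~\ref{Theorem Lieb-Robinson}) and the same two summation estimates you sketch (these are (\ref{estimate utile}) and (\ref{estimate utile2}) in the paper) finish the job.

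The difference is precisely how the potential is eliminated. You dispose of the extra potential terms $\mathbf{V}_{\{x\}}$, $x\in\Lambda_{L_2}\setminus\Lambda_{L_1}$, by the observation that they are even, supported off $\Lambda_{L_1}$, and hence commute with $\tau_s^{(L_1)}(B)\in\mathcal{U}_{\Lambda_{L_1}}$. The paper instead strips the potential out \emph{before} interpolating, via the interaction picture. Your route is shorter in the autonomous setting and even yields a slightly sharper exponent ($\mathrm{e}^{2\mathbf{D}|t|\Vert\Psi\Vert_{\mathcal{W}}}$ rather than $\mathrm{e}^{4\mathbf{D}|t|\Vert\Psi\Vert_{\mathcal{W}}}$). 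The paper's detour pays off later: the interaction--picture machinery (\ref{interactino picture0})--(\ref{interactino picture2}) is reused verbatim in the proof of Theorem~\ref{Theorem Lieb-Robinsonnew}~(ii), where the potential is time--dependent and one must handle evolution families $\mathcal{V}_{s,t}(\mathbf{V}_{\mathcal{Z}})$ rather than groups; your one--line commutator argument would need to be redone in that setting anyway.
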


\begin{proof}
Again, the arguments are those proving \cite[Theorem 3.1.]{NS} for quantum
spin systems. We give them for completeness, having also in mind the
extension of the lemma to time--dependent interactions and potentials, in
Theorem \ref{Theorem Lieb-Robinsonnew} (ii). We fix in all the proof $%
\Lambda \in \mathcal{P}_{f}(\mathfrak{L})$ and $B\in \mathcal{U}_{\Lambda }$.

For any $L\in \mathbb{R}_{0}^{+}$ and $s,t\in \mathbb{R}$, define the
unitary element%
\begin{equation}
\mathbf{U}_{L}\left( t,s\right) \doteq \mathrm{e}^{it\mathbf{V}_{\Lambda
_{L}}}\mathrm{e}^{-i\left( t-s\right) H_{L}}\mathrm{e}^{-is\mathbf{V}%
_{\Lambda _{L}}}\in \mathcal{U}_{\Lambda _{L}}  \label{unitary propagator}
\end{equation}%
with
\begin{equation*}
\mathbf{V}_{\mathcal{Z}}\doteq \sum\limits_{x\in \mathcal{Z}}\mathbf{V}%
_{\left\{ x\right\} }\in \mathcal{U}^{+}\cap \mathcal{U}_{\mathcal{Z}}\
,\qquad \mathcal{Z}\in \mathcal{P}_{f}(\mathfrak{L})\ .
\end{equation*}%
Clearly, $\mathbf{U}_{L}\left( t,t\right) =\mathbf{1}_{\mathcal{U}}$ for all
$t\in \mathbb{R}$ while%
\begin{equation*}
\partial _{t}\mathbf{U}_{L}\left( t,s\right) =-iG_{L}\left( t\right) \mathbf{%
U}_{L}\left( t,s\right)
\text{\quad and\quad }\partial _{s}\mathbf{U}_{L}\left( t,s\right) =i\mathbf{%
U}_{L}\left( t,s\right) G_{L}\left( s\right)
\end{equation*}%
with%
\begin{equation*}
G_{L}\left( t\right) \doteq \sum\limits_{\mathcal{Z}\subseteq \Lambda _{L}}%
\mathrm{e}^{it\mathbf{V}_{\Lambda _{L}}}\Psi _{\mathcal{Z}}\mathrm{e}^{-it%
\mathbf{V}_{\Lambda _{L}}}\ .
\end{equation*}%
Let%
\begin{equation*}
\tilde{\tau}_{t}^{(L)}\left( B\right) \doteq \mathbf{U}_{L}\left( 0,t\right)
B\mathbf{U}_{L}\left( t,0\right) \ ,\text{\qquad }B\in \mathcal{U}_{\Lambda
}\ .
\end{equation*}%
For any $t\in \mathbb{R}$ and $L\in \mathbb{R}_{0}^{+}$ such that $\Lambda
\subset \Lambda _{L}$,
\begin{equation*}
\tau _{t}^{(L)}\left( B\right) =\tilde{\tau}_{t}^{(L)}\left( \mathrm{e}^{it%
\mathbf{V}_{\Lambda _{L}}}B\mathrm{e}^{-it\mathbf{V}_{\Lambda _{L}}}\right) =%
\tilde{\tau}_{t}^{(L)}\left( \mathrm{e}^{it\mathbf{V}_{\Lambda }}B\mathrm{e}%
^{-it\mathbf{V}_{\Lambda }}\right)
\end{equation*}%
and it suffices to study the net $\{\tilde{\tau}_{t}^{(L)}\left( B\right)
\}_{L\in \mathbb{R}_{0}^{+}}$ in $\mathcal{U}$. The equality above is
related to the so--called \textquotedblleft interaction
picture\textquotedblright\ (w.r.t. potentials)\ of the time--evolution
defined by the $\ast $--automorphism $\tau _{t}^{(L)}$.

Fix $L_{1},L_{2}\in \mathbb{R}_{0}^{+}$ with $\Lambda \subset \Lambda
_{L_{1}}\varsubsetneq \Lambda _{L_{2}}$. Note that, for any $t\in \mathbb{R}$%
,%
\begin{equation}
\tilde{\tau}_{t}^{(L_{2})}\left( B\right) -\tilde{\tau}_{t}^{(L_{1})}\left(
B\right) =\int_{0}^{t}\partial _{s}\left\{ \mathbf{U}_{L_{2}}\left(
0,s\right) \mathbf{U}_{L_{1}}\left( s,t\right) B\mathbf{U}_{L_{1}}\left(
t,s\right) \mathbf{U}_{L_{2}}\left( s,0\right) \right\} \mathrm{d}s\ .\
\label{eqaulity dyna1}
\end{equation}%
Straightforward computations yield%
\begin{eqnarray}
&&\partial _{s}\left\{ \mathbf{U}_{L_{2}}\left( 0,s\right) \mathbf{U}%
_{L_{1}}\left( s,t\right) B\mathbf{U}_{L_{1}}\left( t,s\right) \mathbf{U}%
_{L_{2}}\left( s,0\right) \right\}  \notag \\
&=&i\mathbf{U}_{L_{2}}\left( 0,s\right)
\Big[%
G_{L_{2}}\left( s\right) -G_{L_{1}}\left( s\right) ,\mathbf{U}_{L_{1}}\left(
s,t\right) B\mathbf{U}_{L_{1}}\left( t,s\right)
\Big]%
\mathbf{U}_{L_{2}}\left( s,0\right)  \notag \\
&=&i\mathbf{U}_{L_{2}}\left( 0,s\right) \mathrm{e}^{is\mathbf{V}_{\Lambda
_{L_{1}}}}%
\Big[%
B_{s},\tau _{t-s}^{(L_{1})}(\tilde{B}_{t})%
\Big]%
\mathrm{e}^{-is\mathbf{V}_{\Lambda _{L_{1}}}}\mathbf{U}_{L_{2}}\left(
s,0\right) \ ,  \label{eqaulity dyna2}
\end{eqnarray}%
where, for any $s,t\in \mathbb{R}$, we define%
\begin{equation}
B_{s}\doteq \mathrm{e}^{-is\mathbf{V}_{\Lambda _{L_{1}}}}\left(
G_{L_{2}}\left( s\right) -G_{L_{1}}\left( s\right) \right) \mathrm{e}^{is%
\mathbf{V}_{\Lambda _{L_{1}}}}\quad \text{and}\quad \tilde{B}_{t}\doteq
\mathrm{e}^{-it\mathbf{V}_{\Lambda }}B\mathrm{e}^{it\mathbf{V}_{\Lambda }}\ .
\label{eqaulity dyna3}
\end{equation}%
Thus, we infer from Equations (\ref{eqaulity dyna1})--(\ref{eqaulity dyna3})
that
\begin{equation}
\left\Vert \tilde{\tau}_{t}^{(L_{2})}\left( B\right) -\tilde{\tau}%
_{t}^{(L_{1})}\left( B\right) \right\Vert _{\mathcal{U}}\leq
\int_{0}^{\left\vert t\right\vert }\left\Vert \left[ \tau _{\pm
s-t}^{(L_{1})}\left( B_{\pm s}\right) ,\tilde{B}_{t}\right] \right\Vert _{%
\mathcal{U}}\mathrm{d}s\ .  \label{inequality dyn5}
\end{equation}%
[The sign of $s$ in $\pm s$ depends whether $t$ is positive or negative.]
Note that $\tilde{B}_{t}\in \mathcal{U}_{\Lambda }$ and%
\begin{equation*}
B_{s}=\sum\limits_{\mathcal{Z}\subseteq \Lambda _{L_{2}},\ \mathcal{Z}\cap
(\Lambda _{L_{2}}\backslash \Lambda _{L_{1}})\neq \emptyset }\mathrm{e}^{is%
\mathbf{V}_{\Lambda _{L_{2}}\backslash \Lambda _{L_{1}}}}\Psi _{\mathcal{Z}}%
\mathrm{e}^{-is\mathbf{V}_{\Lambda _{L_{2}}\backslash \Lambda _{L_{1}}}}\in
\mathcal{U}^{+}\cap \mathcal{U}_{\Lambda _{L_{2}}}
\end{equation*}%
where, for any $\mathcal{Z}\subseteq \Lambda _{L_{2}}$,
\begin{equation*}
\mathrm{e}^{is\mathbf{V}_{\Lambda _{L_{2}}\backslash \Lambda _{L_{1}}}}\Psi
_{\mathcal{Z}}\mathrm{e}^{-is\mathbf{V}_{\Lambda _{L_{2}}\backslash \Lambda
_{L_{1}}}}\in \mathcal{U}_{\mathcal{Z}}\ .
\end{equation*}%
Now, we apply the Lieb--Robinson bounds given by Theorem \ref{Theorem
Lieb-Robinson} to deduce that, for any $\Lambda \in \mathcal{P}_{f}(%
\mathfrak{L})$, $s,t\in \mathbb{R}$, $B\in \mathcal{U}_{\Lambda }$ and $%
L_{1},L_{2}\in \mathbb{R}_{0}^{+}$ with $\Lambda \subset \Lambda
_{L_{1}}\varsubsetneq \Lambda _{L_{2}}$,%
\begin{eqnarray}
\frac{\left\Vert \left[ \tau _{s-t}^{(L_{1})}\left( B_{s}\right) ,\tilde{B}_{t}\right] \right\Vert _{\mathcal{U}}}{2\left\Vert B\right\Vert _{\mathcal{U}}} &\leq &\mathbf{D}^{-1}\left( \mathrm{e}^{2\mathbf{D}\left\vert
s-t\right\vert \left\Vert \Psi \right\Vert _{\mathcal{W}}}-1\right)
\label{toto inter 1} \\
&&\times \sum\limits_{{\mathcal{Z}\subseteq  \Lambda _{L_{2}},} \atop {\mathcal{Z}\cap (\Lambda _{L_{2}}\backslash \Lambda _{L_{1}})\neq \emptyset ,\
\mathcal{Z}\cap \Lambda =\emptyset }}\left\Vert \Psi _{\mathcal{Z}}\right\Vert _{\mathcal{U}}\sum_{z\in \partial _{\Psi }\mathcal{Z}}\sum_{x\in \Lambda }\mathbf{F}\left( \left\vert x-z\right\vert \right)
\notag \\
&&+\sum\limits_{{\mathcal{Z}\subseteq  \Lambda _{L_{2}},} \atop {\mathcal{Z}\cap
(\Lambda _{L_{2}}\backslash \Lambda _{L_{1}})\neq \emptyset ,\ \mathcal{Z}\cap \Lambda \neq \emptyset }}\left\Vert \Psi _{\mathcal{Z}}\right\Vert _{\mathcal{U}}\ .  \notag
\end{eqnarray}%
Direct estimates using (\ref{(3.2) NS}) and (\ref{iteration0}) show that%
\begin{eqnarray}
&&\sum\limits_{\mathcal{Z}\subseteq \Lambda _{L_{2}},\ \mathcal{Z}\cap
(\Lambda _{L_{2}}\backslash \Lambda _{L_{1}})\neq \emptyset }\left\Vert \Psi
_{\mathcal{Z}}\right\Vert _{\mathcal{U}}\sum_{z\in \partial _{\Psi }\mathcal{%
Z}}\sum_{x\in \Lambda }\mathbf{F}\left( \left\vert x-z\right\vert \right)
\notag \\
&\leq &\sum\limits_{y\in \Lambda _{L_{2}}\backslash \Lambda
_{L_{1}}}\sum\limits_{\mathcal{Z}\subseteq \Lambda _{L_{2}},\ \mathcal{Z}%
\supset \{y\}}\left\Vert \Psi _{\mathcal{Z}}\right\Vert _{\mathcal{U}%
}\sum_{z\in \mathcal{Z}}\sum_{x\in \Lambda }\mathbf{F}\left( \left\vert
x-z\right\vert \right)  \notag \\
&\leq &\sum\limits_{y\in \Lambda _{L_{2}}\backslash \Lambda
_{L_{1}}}\sum\limits_{z\in \Lambda _{L_{2}}}\sum\limits_{\mathcal{Z}%
\subseteq \Lambda _{L_{2}},\ \mathcal{Z}\supset \{y,z\}}\left\Vert \Psi _{%
\mathcal{Z}}\right\Vert _{\mathcal{U}}\sum_{x\in \Lambda }\mathbf{F}\left(
\left\vert x-z\right\vert \right)  \notag \\
&\leq &\Vert \Psi \Vert _{\mathcal{W}}\sum\limits_{y\in \Lambda
_{L_{2}}\backslash \Lambda _{L_{1}}}\sum_{x\in \Lambda }\sum\limits_{z\in
\Lambda _{L_{2}}}\mathbf{F}\left( \left\vert y-z\right\vert \right) \mathbf{F%
}\left( \left\vert x-z\right\vert \right)  \notag \\
&\leq &\mathbf{D}\Vert \Psi \Vert _{\mathcal{W}}\sum\limits_{y\in \Lambda
_{L_{2}}\backslash \Lambda _{L_{1}}}\sum_{x\in \Lambda }\mathbf{F}\left(
\left\vert x-y\right\vert \right) \ ,  \label{estimate utile}
\end{eqnarray}%
while, by using (\ref{iteration0}) only,
\begin{eqnarray}
&&\sum\limits_{\mathcal{Z}\subseteq \Lambda _{L_{2}},\ \mathcal{Z}\cap
(\Lambda _{L_{2}}\backslash \Lambda _{L_{1}})\neq \emptyset ,\ \mathcal{Z}%
\cap \Lambda \neq \emptyset }\left\Vert \Psi _{\mathcal{Z}}\right\Vert _{%
\mathcal{U}}  \notag \\
&\leq &\sum\limits_{y\in \Lambda _{L_{2}}\backslash \Lambda
_{L_{1}}}\sum_{x\in \Lambda }\sum\limits_{\mathcal{Z}\subseteq \Lambda
_{L_{2}},\ \mathcal{Z}\supset \{x,y\}}\left\Vert \Psi _{\mathcal{Z}%
}\right\Vert _{\mathcal{U}}  \notag \\
&\leq &\left\Vert \Psi \right\Vert _{\mathcal{W}}\sum\limits_{y\in \Lambda
_{L_{2}}\backslash \Lambda _{L_{1}}}\sum_{x\in \Lambda }\mathbf{F}\left(
\left\vert x-y\right\vert \right) \ .  \label{estimate utile2}
\end{eqnarray}%
The lemma is then a direct consequence of (\ref{inequality dyn5})--(\ref%
{toto inter 1}) combined with the upper bounds (\ref{estimate utile})--(\ref%
{estimate utile2}).
\end{proof}

The infinite--volume dynamics is obtained from Lemma \ref{Theorem
Lieb-Robinson copy(2)} and the completeness of $\mathcal{U}$. Indeed, from
the above lemma, for all $t\in {\mathbb{R}}$, $\tau _{t}^{(L)}$ converges
strongly on $\mathcal{U}_{0}$ to $\tau _{t}$, as $L\rightarrow \infty $. By
density of $\mathcal{U}_{0}$ in the Banach space $\mathcal{U}$ and the fact
that $\tau _{t}^{(L)}$ are isometries for all $L\in \mathbb{R}_{0}^{+}$ and $%
t\in {\mathbb{R}}$, the limit $\tau _{t}$, $t\in {\mathbb{R}}$, uniquely
defines a $\ast $--automorphism, also denoted by $\tau _{t}$, of the $%
C^{\ast }$--algebra $\mathcal{U}$. $\{\tau _{t}\}_{t\in {\mathbb{R}}}$ is
clearly a group of $\ast $--auto%
\-%
morphisms on $\mathcal{U}$. Again by the above lemma, for any element $B$ in
the dense subset $\mathcal{U}_{0}\subset $ $\mathcal{U}$, the convergence of
$\tau _{t}^{(L)}(B)$, as $L\rightarrow \infty $, is uniform for $t$ on
compacta and $\{\tau _{t}\}_{t\in {\mathbb{R}}}$ thus defines a $C_{0}$%
--group on $\mathcal{U}$, that is, a strongly continuous group on $\mathcal{U%
}$.

We need in the sequel an explicit characterization of the infinitesimal
generator of this $C_{0}$--group. Since the generator equals (\ref{dynamic
series}) at finite--volume, one expects that the infinitesimal generator
equals on $\mathcal{U}_{0}$ the linear map $\delta $ from $\mathcal{U}_{0}$
to $\mathcal{U}$ defined by%
\index{Interaction!symmetric derivation}%
\index{Potential!symmetric derivation}%
\begin{equation}
\delta (B)\doteq i\sum\limits_{\Lambda \in \mathcal{P}_{f}(\mathfrak{L})}%
\left[ \Psi _{\Lambda },B\right] +i\sum\limits_{x\in \mathfrak{L}}\left[
\mathbf{V}_{\left\{ x\right\} },B\right] \ ,\qquad B\in \mathcal{U}_{0}\ ,
\label{dynamic seriesbis}
\end{equation}%
for any $\Psi \in \mathcal{W}$ and potential $\mathbf{V}$. Indeed, for any $%
\Lambda \in \mathcal{P}_{f}(\mathfrak{L})$ and local element $B\in \mathcal{U%
}_{\Lambda }$,%
\begin{eqnarray}
&&\sum\limits_{\mathcal{Z}\in \mathcal{P}_{f}(\mathfrak{L})}\left\Vert \left[
\Psi _{\mathcal{Z}},B\right] \right\Vert _{\mathcal{U}}+\sum\limits_{x\in
\mathfrak{L}}\left\Vert \left[ \mathbf{V}_{\left\{ x\right\} },B\right]
\right\Vert _{\mathcal{U}}  \label{inequlity utile} \\
&\leq &2\left\Vert B\right\Vert _{\mathcal{U}}\left( \left\vert \Lambda
\right\vert \mathbf{F}\left( 0\right) \left\Vert \Psi \right\Vert _{\mathcal{%
W}}+\sum\limits_{x\in \Lambda }\left\Vert \mathbf{V}_{\left\{ x\right\}
}\right\Vert _{\mathcal{U}}\right)  \notag
\end{eqnarray}%
and the series (\ref{dynamic seriesbis}) is absolutely convergent for all $%
B\in \mathcal{U}_{0}$. Moreover, by (\ref{dynamic series}), we obviously
have
\begin{equation}
\delta (B)=\underset{L\rightarrow \infty }{\lim }\delta ^{(L)}(B)\ ,\qquad
B\in \mathcal{U}_{0}\ .  \label{core limit}
\end{equation}%
To prove that the closure of the linear map $\delta :\mathcal{U}%
_{0}\rightarrow \mathcal{U}$ is the generator of the $C_{0}$--group $\{\tau
_{t}\}_{t\in {\mathbb{R}}}$ of $\ast $--auto%
\-%
morphisms we use the second Trotter--Kato approximation theorem \cite[Chap.
III, Sect. 4.9]{EngelNagel}.

To this end, we first show that the (generally unbounded) operator $\delta $
on $\mathcal{U}$ with dense domain $\mathrm{Dom}(\delta )=\mathcal{U}_{0}$
is closable. Observe that both $\pm \delta $ are symmetric derivations and $%
\delta $ is thus conservative \cite[Definition 3.1.13.]{BratteliRobinsonI},
by structure of the set $\mathcal{U}_{0}$ of local elements:

\begin{lemma}[Conservative infinite--volume derivation]
\label{lemma dissipative}\mbox{ }\newline
Let $\Psi \in \mathcal{W}$ and $\mathbf{V}$ be any potential. Then, the
derivation $\delta $ defined on $\mathcal{U}_{0}$ by (\ref{dynamic seriesbis}%
) is a conservative symmetric derivation.
\end{lemma}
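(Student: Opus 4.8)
The plan is to verify three things in turn: that the series defining $\delta$ in (\ref{dynamic seriesbis}) converges and yields a well-defined linear operator on $\mathcal{U}_{0}$, that this operator is a symmetric derivation, and finally that it is conservative. The first point is immediate from the estimate (\ref{inequlity utile}): for $B\in\mathcal{U}_{\Lambda}$ with $\Lambda\in\mathcal{P}_{f}(\mathfrak{L})$, only the commutators $[\Psi_{\mathcal{Z}},B]$ with $\mathcal{Z}\cap\Lambda\neq\emptyset$ and $[\mathbf{V}_{\left\{ x\right\} },B]$ with $x\in\Lambda$ can be nonzero (an even element supported in a region commutes with any element supported in the complementary region, by the CAR), and $\Psi\in\mathcal{W}$ makes the resulting sum absolutely norm-convergent. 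Hence $\delta(B)\in\mathcal{U}$ is defined for every $B\in\mathcal{U}_{0}$ and $\delta:\mathcal{U}_{0}\rightarrow\mathcal{U}$ is linear.

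For the symmetric derivation property, I would argue termwise. Each map $B\mapsto i[\Psi_{\Lambda},B]$ and $B\mapsto i[\mathbf{V}_{\left\{ x\right\} },B]$ is a symmetric derivation on $\mathcal{U}$, since $\Psi_{\Lambda}$ and $\mathbf{V}_{\left\{ x\right\} }$ are self-adjoint: the Leibniz rule $i[A,B_{1}B_{2}]=i[A,B_{1}]B_{2}+B_{1}\,i[A,B_{2}]$ and the identity $(i[A,B])^{\ast}=i[A,B^{\ast}]$ are elementary. For $B_{1},B_{2}\in\mathcal{U}_{0}$ one has $B_{1}B_{2},B_{1}^{\ast}\in\mathcal{U}_{0}$ as well, and all the sums involved (for $\delta(B_{1}B_{2})$, for $\delta(B_{1})B_{2}+B_{1}\delta(B_{2})$, and for $\delta(B_{1}^{\ast})$ versus $\delta(B_{1})^{\ast}$) are absolutely convergent by the first step, so one may add up the termwise identities and pass the involution through the series using continuity of $B\mapsto B^{\ast}$. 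This shows $\delta(B_{1}B_{2})=\delta(B_{1})B_{2}+B_{1}\delta(B_{2})$ and $\delta(B_{1}^{\ast})=\delta(B_{1})^{\ast}$, i.e. $\delta$ is a symmetric derivation; the same reasoning applies verbatim to $-\delta$.

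Conservativeness is where the ``structure of $\mathcal{U}_{0}$'' enters, and it is the step I expect to need the most care. Recall that $\delta$ being conservative means $\Vert(\mathbf{1}_{\mathcal{U}}+\alpha\delta)(B)\Vert_{\mathcal{U}}\geq\Vert B\Vert_{\mathcal{U}}$ for all $B\in\mathcal{U}_{0}$ and all $\alpha\in\mathbb{R}$, equivalently that both $\pm\delta$ are dissipative, see \cite[Definition 3.1.13]{BratteliRobinsonI}. The key observation is that each finite-volume generator $\delta^{(L)}=i[H_{L},\cdot\,]$ from (\ref{dynamic series}) is a bounded symmetric derivation implemented by the self-adjoint element $H_{L}\in\mathcal{U}_{\Lambda_{L}}$ of (\ref{definition fininte vol dynam0}), hence it generates the norm-continuous one-parameter group $\{\tau_{t}^{(L)}\}_{t\in\mathbb{R}}$ of $\ast$-automorphisms of $\mathcal{U}$ given by (\ref{definition fininte vol dynam}). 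Since $\ast$-automorphisms of a $C^{\ast}$-algebra are isometric, $\{\tau_{t}^{(L)}\}_{t\in\mathbb{R}}$ is a group of isometries, so both $\delta^{(L)}$ and $-\delta^{(L)}$ generate contraction semigroups; the associated Hille--Yosida resolvent estimates (see, e.g., \cite{EngelNagel}) then give $\Vert(\mathbf{1}_{\mathcal{U}}+\alpha\delta^{(L)})(B)\Vert_{\mathcal{U}}\geq\Vert B\Vert_{\mathcal{U}}$ for every $B\in\mathcal{U}$ and every $\alpha\in\mathbb{R}$, i.e. $\delta^{(L)}$ is conservative. Finally, by (\ref{core limit}) one has $\delta^{(L)}(B)\rightarrow\delta(B)$ in norm as $L\rightarrow\infty$ for each $B\in\mathcal{U}_{0}$, so $(\mathbf{1}_{\mathcal{U}}+\alpha\delta^{(L)})(B)\rightarrow(\mathbf{1}_{\mathcal{U}}+\alpha\delta)(B)$ in norm and the displayed inequality survives the limit, yielding $\Vert(\mathbf{1}_{\mathcal{U}}+\alpha\delta)(B)\Vert_{\mathcal{U}}\geq\Vert B\Vert_{\mathcal{U}}$ for all $B\in\mathcal{U}_{0}$ and $\alpha\in\mathbb{R}$. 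Thus $\delta$ is a conservative symmetric derivation. The one delicate point is to keep track that conservativeness must hold for $\alpha$ of \emph{both} signs, which is precisely why one exploits that $\tau^{(L)}$ is a group of isometries rather than merely a contraction semigroup.
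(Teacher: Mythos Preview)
Your proof is correct, but it takes a different route from the paper's for the conservativeness step. The paper argues purely algebraically: for any positive $B\in\mathcal{U}_{0}$ one has $B\in\mathcal{U}_{\Lambda}$ for some $\Lambda\in\mathcal{P}_{f}(\mathfrak{L})$, and since $\mathcal{U}_{\Lambda}$ is a unital $C^{\ast}$--algebra the positive square root $B^{1/2}$ exists and again lies in $\mathcal{U}_{\Lambda}\subset\mathcal{U}_{0}=\mathrm{Dom}(\delta)$. A symmetric derivation whose domain is closed under the square root of positive elements is automatically conservative by \cite[Proposition 3.2.22]{BratteliRobinsonI}, and that is the entire argument.

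Your approach is instead dynamical: you exploit that each finite--volume $\delta^{(L)}$ generates a group of $\ast$--automorphisms, hence of isometries, so $\pm\delta^{(L)}$ are dissipative; then you pass the inequality $\Vert(\mathbf{1}_{\mathcal{U}}+\alpha\delta^{(L)})B\Vert_{\mathcal{U}}\geq\Vert B\Vert_{\mathcal{U}}$ to the limit via (\ref{core limit}). This is perfectly sound and more self--contained (it does not invoke an external structural result about derivations), but it does use the convergence (\ref{core limit}) as an input, whereas the paper's argument needs nothing about the finite--volume approximation at all --- only that each local algebra $\mathcal{U}_{\Lambda}$ is a unital $C^{\ast}$--subalgebra. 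In the paper's logical flow this is a small economy: conservativeness is established before any use of the dynamics, and is then fed into the Trotter--Kato machinery.
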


\begin{proof}
Let $B\in \mathcal{U}_{0}$ satisfying $B\geq 0$. By definition of $\mathcal{U%
}_{0}$, $B\in \mathcal{U}_{\Lambda }$ for some $\Lambda \in \mathcal{P}_{f}(%
\mathfrak{L})$. Since $\mathcal{U}_{\Lambda }$ is a unital $C^{\ast }$%
--algebra, there is $B^{1/2}\in \mathcal{U}_{\Lambda }\subset \mathcal{U}%
_{0} $ such that $B^{1/2}\geq 0$ and $(B^{1/2})^{2}=B$. Therefore, the lemma
follows from \cite[Proposition 3.2.22]{BratteliRobinsonI}.
\end{proof}

\noindent It follows that the symmetric derivation $\delta $ is (norm--)
closable:

\begin{lemma}[Closure of the infinite--volume derivation]
\label{lemma norm closure}\mbox{ }\newline
Let $\Psi \in \mathcal{W}$ and $\mathbf{V}$ be any potential. Then, the
derivations $\pm \delta $ defined on $\mathcal{U}_{0}$ by (\ref{dynamic
seriesbis}) are closable and their closures, again denoted for simplicity by
$\pm \delta $, are conservative.
\end{lemma}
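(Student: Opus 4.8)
The plan is to deduce closability of $\delta$ from the fact that it is a \emph{restriction} of the (automatically closed) infinitesimal generator of the $C_{0}$–group $\{\tau _{t}\}_{t\in \mathbb{R}}$ constructed just above from Lemma \ref{Theorem Lieb-Robinson copy(2)}, and then to obtain conservativity of the closure either from that generator or, more directly, from Lemma \ref{lemma dissipative} via a graph–limit argument. Since $-\delta $ is associated with the interaction $-\Psi $ and potential $-\mathbf{V}$ in exactly the same way, it suffices to treat $\delta $; the statement for $-\delta $ then follows verbatim, its $C_{0}$–group being $\{\tau _{-t}\}_{t\in \mathbb{R}}$.

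First I would fix $B\in \mathcal{U}_{0}$ and $L\in \mathbb{R}_{0}^{+}$ large enough that $B\in \mathcal{U}_{\Lambda _{L}}$, and use that $\delta ^{(L)}$ is the bounded generator of the norm–continuous group $\{\tau _{t}^{(L)}\}_{t\in \mathbb{R}}$ to write the Duhamel identity
\begin{equation*}
\tau _{t}^{(L)}(B)=B+\int_{0}^{t}\tau _{s}^{(L)}\!\left( \delta ^{(L)}(B)\right) \mathrm{d}s\ ,\qquad t\in \mathbb{R}\ .
\end{equation*}
Then I would let $L\rightarrow \infty $. By (\ref{core limit}) one has $\delta ^{(L)}(B)\rightarrow \delta (B)$ in $\mathcal{U}$, and by Lemma \ref{Theorem Lieb-Robinson copy(2)} and the discussion following it, $\tau _{s}^{(L)}\rightarrow \tau _{s}$ strongly and uniformly for $s$ in compacta, with $\Vert \tau _{s}^{(L)}\Vert _{\mathcal{B}(\mathcal{U})}=1$. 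Writing $\tau _{s}^{(L)}(\delta ^{(L)}(B))-\tau _{s}(\delta (B))=\tau _{s}^{(L)}(\delta ^{(L)}(B)-\delta (B))+(\tau _{s}^{(L)}-\tau _{s})(\delta (B))$, the first term is bounded in norm by $\Vert \delta ^{(L)}(B)-\delta (B)\Vert _{\mathcal{U}}\rightarrow 0$, while the second tends to $0$ uniformly for $s$ in compacta — here one uses that the uniform–on–compacta strong convergence of $\tau ^{(L)}$ extends from $\mathcal{U}_{0}$ to all of $\mathcal{U}$ by density and $\Vert \tau _{s}^{(L)}\Vert =1$, which is exactly the point requiring care because $\delta (B)$ is in general only an element of $\mathcal{U}$, not of $\mathcal{U}_{0}$. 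Dominated convergence then yields
\begin{equation*}
\tau _{t}(B)=B+\int_{0}^{t}\tau _{s}(\delta (B))\,\mathrm{d}s\ ,\qquad B\in \mathcal{U}_{0},\ t\in \mathbb{R}\ .
\end{equation*}
By the standard characterisation of infinitesimal generators \cite{EngelNagel} this says that $B\in \mathrm{Dom}(\hat{\delta})$ with $\hat{\delta}(B)=\delta (B)$, where $\hat{\delta}$ denotes the (closed, densely defined) generator of $\{\tau _{t}\}_{t\in \mathbb{R}}$. Hence $\delta \subseteq \hat{\delta}$, and being a restriction of a closed operator, $\delta $ is closable with $\overline{\delta }\subseteq \hat{\delta}$.

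It remains to see that $\overline{\pm \delta }$ are conservative, and I would give either of two interchangeable arguments. Abstractly: each $\tau _{t}$ is a $\ast $–automorphism of the $C^{\ast }$–algebra $\mathcal{U}$, hence isometric, so $\{\tau _{t}\}_{t\in \mathbb{R}}$ is a $C_{0}$–group of isometries and its generator satisfies $\Vert (\lambda \mathbf{1}_{\mathcal{U}}\mp \hat{\delta})(A)\Vert _{\mathcal{U}}\geq \lambda \Vert A\Vert _{\mathcal{U}}$ for all $\lambda >0$ and $A\in \mathrm{Dom}(\hat{\delta})$ by Hille–Yosida; restricting to $\mathrm{Dom}(\overline{\delta })\subseteq \mathrm{Dom}(\hat{\delta})$ gives conservativity of $\pm \overline{\delta }$. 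More elementarily: conservativity of $\pm \delta $ on $\mathcal{U}_{0}$ is Lemma \ref{lemma dissipative}, and the defining inequalities pass to the closure by continuity of the norm, since $B_{n}\in \mathcal{U}_{0}$ with $B_{n}\rightarrow B$ and $\delta (B_{n})\rightarrow \overline{\delta }(B)$ gives $\Vert (\lambda \mathbf{1}_{\mathcal{U}}\mp \delta )(B_{n})\Vert _{\mathcal{U}}\geq \lambda \Vert B_{n}\Vert _{\mathcal{U}}\Rightarrow \Vert (\lambda \mathbf{1}_{\mathcal{U}}\mp \overline{\delta })(B)\Vert _{\mathcal{U}}\geq \lambda \Vert B\Vert _{\mathcal{U}}$. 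The single genuinely delicate step is the one flagged above: justifying the exchange of $L\rightarrow \infty $ with the $s$–integral, i.e.\ the uniform–on–compacta convergence $\tau _{s}^{(L)}(\delta (B))\rightarrow \tau _{s}(\delta (B))$ for the non–local element $\delta (B)\in \mathcal{U}$.
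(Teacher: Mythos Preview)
Your proof is correct, but it takes a genuinely different route from the paper. The paper's proof is a two--line application of \cite[Proposition 3.1.15]{BratteliRobinsonI}: since $\pm\delta$ are densely defined and dissipative on $\mathcal{U}$ (this is exactly Lemma \ref{lemma dissipative}), that proposition immediately gives closability with dissipative (hence conservative) closure. No use whatsoever is made of the group $\{\tau_t\}_{t\in\mathbb{R}}$.

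You instead exploit the $C_0$--group $\{\tau_t\}_{t\in\mathbb{R}}$, which has indeed already been constructed in the discussion following Lemma \ref{Theorem Lieb-Robinson copy(2)}, and show via a Duhamel limit that $\delta\subseteq\hat\delta$, the (closed) generator of that group; closability then drops out, and conservativity follows either from Hille--Yosida for the contraction group or by the graph--limit argument --- the latter being, in effect, the content of \cite[Proposition 3.1.15]{BratteliRobinsonI} spelled out. Your delicate step (extending the uniform--on--compacta strong convergence of $\tau_s^{(L)}$ from $\mathcal{U}_0$ to the non--local element $\delta(B)\in\mathcal{U}$) is handled correctly by the standard $\varepsilon/3$ argument using $\Vert\tau_s^{(L)}\Vert=1$. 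The trade--off: the paper's argument is shorter and entirely self--contained at the level of dissipative operators, while yours establishes the extra inclusion $\overline{\delta}\subseteq\hat\delta$, which is a piece of the core statement proved later in Theorem \ref{Theorem Lieb-Robinson copy(3)} via Trotter--Kato. So you do more work here but recover information the paper only obtains downstream.
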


\begin{proof}
$\pm \delta $ are densely defined dissipative operators on the Banach space $%
\mathcal{U}$. Therefore, the lemma is an obvious application of \cite[%
Proposition 3.1.15.]{BratteliRobinsonI}.
\end{proof}

In order to apply the second Trotter--Kato approximation theorem \cite[Chap.
III, Sect. 4.9]{EngelNagel}, we also prove that the range $\mathrm{Ran}\{(x%
\mathbf{1}_{\mathcal{U}}\mp \delta )\}$ of the closed operators $x\mathbf{1}%
_{\mathcal{U}}\mp \delta $ are dense in the Banach space $\mathcal{U}$ for $%
x>0$. This is done in the following lemma:

\begin{lemma}[Range of the infinite--volume derivation]
\label{lemma range derivation}\mbox{ }\newline
Let $\Psi \in \mathcal{W}$ and $\mathbf{V}$ be any potential. Then, for any $%
x\in {\mathbb{R}}^{+}$,%
\begin{equation*}
\mathcal{U}_{0}\subseteq \mathrm{Ran}\{(x\mathbf{1}_{\mathcal{U}}\mp \delta
)\}\subseteq \mathcal{U}
\end{equation*}%
with $\mathbf{1}_{\mathcal{U}}$ being the identity on $\mathcal{U}$. In
particular, $\mathrm{Ran}\{(x\mathbf{1}_{\mathcal{U}}\mp \delta )\}$ is
dense in $\mathcal{U}$.
\end{lemma}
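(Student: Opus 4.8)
The plan is, given a local element $A\in\mathcal{U}_{0}$ (say $A\in\mathcal{U}_{\Lambda}$ with $\Lambda\in\mathcal{P}_{f}(\mathfrak{L})$) and $x\in\mathbb{R}^{+}$, to exhibit a preimage of $A$ under $x\mathbf{1}_{\mathcal{U}}\mp\delta$ as the Laplace transform of the \emph{finite--volume} dynamics, and then to pass to the limit $L\to\infty$. First I would set $B_{L}\doteq\int_{0}^{\infty}\mathrm{e}^{-xt}\,\tau_{\pm t}^{(L)}(A)\,\mathrm{d}t$ for $L\in\mathbb{R}_{0}^{+}$. Since $\delta^{(L)}$ (see (\ref{dynamic series})) is the bounded generator of the isometry group $\{\tau_{t}^{(L)}\}_{t\in\mathbb{R}}$, this Bochner integral converges and $(x\mathbf{1}_{\mathcal{U}}\mp\delta^{(L)})B_{L}=A$, i.e. $\delta^{(L)}(B_{L})=\pm(xB_{L}-A)$; moreover $\tau_{t}^{(L)}$ preserves $\mathcal{U}_{\Lambda_{L}}$, so $B_{L}\in\mathcal{U}_{\Lambda_{L}}\subseteq\mathcal{U}_{0}\subseteq\mathrm{Dom}(\delta)$. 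Using that $\tau_{t}^{(L)}(A)\to\tau_{t}(A)$ uniformly for $t$ in compacta (a consequence of Lemma \ref{Theorem Lieb-Robinson copy(2)}) together with $\Vert\tau_{t}^{(L)}(A)\Vert_{\mathcal{U}}=\Vert A\Vert_{\mathcal{U}}$ and dominated convergence, one sees that $B_{L}\to B\doteq\int_{0}^{\infty}\mathrm{e}^{-xt}\tau_{\pm t}(A)\,\mathrm{d}t$ in $\mathcal{U}$ for \emph{every} $x>0$. Because $\delta$ is closed (Lemma \ref{lemma norm closure}), it then suffices to show that $\delta(B_{L})$ converges in $\mathcal{U}$: the limit is then necessarily $\delta(B)$, and combined with $\delta^{(L)}(B_{L})=\pm(xB_{L}-A)\to\pm(xB-A)$ this will give $\delta(B)=\pm(xB-A)$, hence $(x\mathbf{1}_{\mathcal{U}}\mp\delta)B=A$.

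Second, I would compare $\delta(B_{L})=\delta^{(L)}(B_{L})+(\delta-\delta^{(L)})(B_{L})$ via the explicit forms (\ref{dynamic series}) and (\ref{dynamic seriesbis}). The interactions $\Psi_{\mathcal{Z}}$ and the on--site terms $\mathbf{V}_{\{x\}}$ are even and $B_{L}\in\mathcal{U}_{\Lambda_{L}}$; since an even element supported in a region commutes with every element supported in a disjoint region, the terms $[\mathbf{V}_{\{x\}},B_{L}]$ with $x\notin\Lambda_{L}$ vanish and the terms $[\Psi_{\mathcal{Z}},B_{L}]$ with $\mathcal{Z}\cap\Lambda_{L}=\emptyset$ vanish, while those with $\mathcal{Z}\subseteq\Lambda_{L}$ already occur in $\delta^{(L)}(B_{L})$. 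Hence $(\delta-\delta^{(L)})(B_{L})=i\sum_{\mathcal{Z}\in\mathcal{S}_{\mathfrak{L}}(\Lambda_{L})}[\Psi_{\mathcal{Z}},B_{L}]$, the sum of the interaction terms straddling the boundary of $\Lambda_{L}$ (cf. (\ref{notation1})). The whole problem is thus reduced to showing that $\sum_{\mathcal{Z}\in\mathcal{S}_{\mathfrak{L}}(\Lambda_{L})}\Vert[\Psi_{\mathcal{Z}},B_{L}]\Vert_{\mathcal{U}}\to 0$ as $L\to\infty$.

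Third, for this boundary estimate I would split $\mathcal{S}_{\mathfrak{L}}(\Lambda_{L})$ according to whether $\mathcal{Z}$ meets $\Lambda$ or not. If $\mathcal{Z}\cap\Lambda\neq\emptyset$, the trivial bound $\Vert[\Psi_{\mathcal{Z}},B_{L}]\Vert_{\mathcal{U}}\leq 2x^{-1}\Vert A\Vert_{\mathcal{U}}\Vert\Psi_{\mathcal{Z}}\Vert_{\mathcal{U}}$ (recall $\Vert B_{L}\Vert_{\mathcal{U}}\leq x^{-1}\Vert A\Vert_{\mathcal{U}}$) together with (\ref{iteration0}) yields a total bounded by $2x^{-1}\Vert A\Vert_{\mathcal{U}}\Vert\Psi\Vert_{\mathcal{W}}\sum_{y\in\Lambda}\sum_{y'\in\Lambda_{L}^{c}}\mathbf{F}(|y-y'|)$, which tends to $0$ since $\Lambda$ is finite and $\sum_{y'\in\mathfrak{L}}\mathbf{F}(|y-y'|)=\Vert\mathbf{F}\Vert_{1,\mathfrak{L}}<\infty$. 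If $\mathcal{Z}\cap\Lambda=\emptyset$, I would move the finite--volume evolution back onto $\Psi_{\mathcal{Z}}$, writing $[\Psi_{\mathcal{Z}},B_{L}]=\int_{0}^{\infty}\mathrm{e}^{-xt}\,\tau_{\pm t}^{(L)}\!\big([\tau_{\mp t}^{(L)}(\Psi_{\mathcal{Z}}),A]\big)\,\mathrm{d}t$, and apply the Lieb--Robinson bound of Theorem \ref{Theorem Lieb-Robinson} to $[\tau_{\mp t}^{(L)}(\Psi_{\mathcal{Z}}),A]$ (legitimate because $\Psi_{\mathcal{Z}}\in\mathcal{U}^{+}\cap\mathcal{U}_{\mathcal{Z}}$ is even, $\mathcal{Z}\in\mathcal{P}_{f}(\mathfrak{L})$, and $\mathcal{Z}\cap\Lambda=\emptyset$). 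Summing the resulting estimate over $\mathcal{Z}$ and using (\ref{(3.2) NS})--(\ref{iteration0}) and the bounded convolution constant $\mathbf{D}$, one arrives at a spatial weight of the form $\mathbf{D}\Vert\Psi\Vert_{\mathcal{W}}\sum_{y\in\Lambda}\sum_{y'\in\Lambda_{L}^{c}}\mathbf{F}(|y-y'|)$, which again vanishes as $L\to\infty$. This identifies the limit and proves $\mathcal{U}_{0}\subseteq\mathrm{Ran}(x\mathbf{1}_{\mathcal{U}}\mp\delta)$; the second inclusion $\mathrm{Ran}(x\mathbf{1}_{\mathcal{U}}\mp\delta)\subseteq\mathcal{U}$ is trivial, and density of $\mathrm{Ran}(x\mathbf{1}_{\mathcal{U}}\mp\delta)$ follows from density of $\mathcal{U}_{0}$ in $\mathcal{U}$.

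The hard part is the time integral in the last step. The Lieb--Robinson bound of Theorem \ref{Theorem Lieb-Robinson} grows like $\mathrm{e}^{2\mathbf{D}|t|\Vert\Psi\Vert_{\mathcal{W}}}$, while we only have the damping $\mathrm{e}^{-xt}$, so $\int_{0}^{\infty}\mathrm{e}^{-xt}(\mathrm{e}^{2\mathbf{D}t\Vert\Psi\Vert_{\mathcal{W}}}-1)\,\mathrm{d}t$ is finite only when $x>2\mathbf{D}\Vert\Psi\Vert_{\mathcal{W}}$; for such $x$ the boundary estimate closes immediately. To cover all $x>0$ one must interpolate the Lieb--Robinson bound with the trivial bound $2\Vert\Psi_{\mathcal{Z}}\Vert_{\mathcal{U}}\Vert A\Vert_{\mathcal{U}}$ — equivalently, truncate the time integral at the crossover time beyond which the Lieb--Robinson bound exceeds the trivial one — so that the surviving spatial weights stay summable over $\mathcal{Z}$ uniformly in $L$ and still vanish as $L\to\infty$. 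Alternatively, one establishes the range condition first for large $x$, deduces from the second Trotter--Kato approximation theorem that $\overline{\delta}$ generates the $C_{0}$--group $\{\tau_{t}\}_{t\in\mathbb{R}}$, and then recovers all $x>0$ from the resolvent of a $C_{0}$--group of isometries.
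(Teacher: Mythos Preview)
Your approach is essentially the paper's: define $B_{L}$ as the Laplace transform of the finite--volume dynamics, note $B_{L}\in\mathcal{U}_{\Lambda_{L}}$, write $(x\mathbf{1}_{\mathcal{U}}-\delta)(B_{L})=A+(\delta^{(L)}-\delta)(B_{L})$, and prove $(\delta-\delta^{(L)})(B_{L})\to 0$ by splitting the boundary sum and applying Theorem~\ref{Theorem Lieb-Robinson} together with the convolution estimate~(\ref{(3.2) NS}). The one point where you and the paper differ is the treatment of the time integral for small $x$: the paper does \emph{not} attempt your interpolation/truncation argument; it simply proves the boundary estimate only for $x>2\mathbf{D}\Vert\Psi\Vert_{\mathcal{W}}$ and then invokes the Lumer--Phillips theorem \cite[Theorem 3.1.16]{BratteliRobinsonI} (not Trotter--Kato) --- since $\pm\delta$ are dissipative (Lemma~\ref{lemma dissipative}) and the range is now dense for one $x>0$, $\delta$ generates a contraction $C_{0}$--group, so $\mathrm{Ran}(x\mathbf{1}_{\mathcal{U}}\mp\delta)=\mathcal{U}\supseteq\mathcal{U}_{0}$ for \emph{all} $x>0$. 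This is exactly the ``Alternatively'' you sketch at the end, and it is the cleaner route; your interpolation idea would need more care because the trivial bound alone is not summable over $\mathcal{Z}\in\mathcal{S}_{\mathfrak{L}}(\Lambda_{L})$.
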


\begin{proof}
We only give the proof for the range of the operator $x\mathbf{1_{\mathcal{U}%
}-}\delta $, since the other case uses similar arguments.

Note that $\Vert \tau _{t}^{(L)}\Vert _{\mathcal{B}(\mathcal{U})}=1$ for any
$L\in \mathbb{R}_{0}^{+}$ and $t\in {\mathbb{R}}$. Here, $\mathcal{B}(%
\mathcal{U})$ is the Banach space of bounded linear operators acting on $%
\mathcal{U}$. Thus, for any $L\in \mathbb{R}_{0}^{+}$, $x\in \mathbb{R}^{+}$%
, and $B\in \mathcal{U}$, the improper Riemann integral
\begin{equation*}
\int_{0}^{\infty }\mathrm{e}^{-xs}\tau _{s}^{(L)}\left( B\right) \mathrm{d}%
s\doteq \underset{t\rightarrow \infty }{\lim }\int_{0}^{t}\mathrm{e}%
^{-xs}\tau _{s}^{(L)}\left( B\right) \mathrm{d}s
\end{equation*}%
exists. By \cite[Chap. II, Sect. 1.10]{EngelNagel}, it follows that, for any
$L\in \mathbb{R}_{0}^{+}$ and $x\in \mathbb{R}^{+}$, the resolvent $(x%
\mathbf{1}_{\mathcal{U}}-\delta ^{(L)})^{-1}$ of the generator $\delta
^{(L)} $ of the group $\{\tau _{t}^{(L)}\}_{t\in {\mathbb{R}}}$ also exists
and satisfies
\begin{equation}
(x\mathbf{1}_{\mathcal{U}}\mathbf{-}\delta ^{(L)})^{-1}(B)=\int_{0}^{\infty }%
\mathrm{e}^{-xs}\tau _{s}^{(L)}\left( B\right) \mathrm{d}s\
\label{laplace transform derivation finite--volume}
\end{equation}%
for all $B\in \mathcal{U}$. Now, take $B\in \mathcal{U}_{0}$, $x\in {\mathbb{%
R}}^{+}$, and consider the element
\begin{equation}
B_{L}\doteq (x\mathbf{1}_{\mathcal{U}}\mathbf{-}\delta ^{(L)})^{-1}(B)\in
\mathcal{U}  \label{B_L}
\end{equation}%
for some sufficiently large parameter $L\in \mathbb{R}_{0}^{+}$ such that $%
B\in \mathcal{U}_{\Lambda _{L}}$. Note that $\tau _{s}^{(L)}(\mathcal{U}%
_{\Lambda _{L}})\subset \mathcal{U}_{\Lambda _{L}}$ and $B_{L}\in \mathcal{U}%
_{\Lambda _{L}}\subset \mathcal{U}_{0}$ because of (\ref{laplace transform
derivation finite--volume}). Then, we observe that
\begin{equation*}
(x\mathbf{1}_{\mathcal{U}}\mathbf{-}\delta )(B_{L})=B+(\delta ^{(L)}\mathbf{-%
}\delta )(B_{L})\ ,
\end{equation*}%
where we recall that $L\in \mathbb{R}_{0}^{+}$, $x\in \mathbb{R}^{+}$, and $%
B\in \mathcal{U}_{0}$. Now, by the Lumer--Phillips theorem \cite[Theorem
3.1.16]{BratteliRobinsonI} (see also its proof), if there is $x\in {\mathbb{R%
}}^{+}$ such that%
\begin{equation}
\underset{L\rightarrow \infty }{\lim }\left\Vert (\delta \mathbf{-}\delta
^{(L)})(B_{L})\right\Vert _{\mathcal{U}}=0  \label{assertion bis}
\end{equation}%
for all $B\in \mathcal{U}_{0}$ then we obtain the assertion. Indeed, by
using Lemma \ref{Theorem Lieb-Robinson copy(2)} together with $\Vert \tau
_{t}^{(L)}\Vert _{\mathcal{B}(\mathcal{U})}=1$ and (\ref{laplace transform
derivation finite--volume}), one verifies that $\{B_{L}\}_{L\in \mathbb{R}%
_{0}^{+}}$ is a Cauchy net, thus a convergent one in $\mathcal{U}$, while $x%
\mathbf{1}_{\mathcal{U}}-\delta $ is a closed operator, by Lemma \ref{lemma
norm closure}.

To prove (\ref{assertion bis}) we use Lieb--Robinson bounds (Theorem \ref%
{Theorem Lieb-Robinson}) as follows: Since $B_{L}\in \mathcal{U}_{\Lambda
_{L}}$ for sufficiently large $L\in \mathbb{R}_{0}^{+}$, we can combine (\ref%
{dynamic series}) and (\ref{dynamic seriesbis}) with (\ref{laplace transform
derivation finite--volume})--(\ref{B_L}) to compute that%
\begin{equation}
(\delta \mathbf{-}\delta ^{(L)})(B_{L})=i\sum\limits_{\mathcal{Z}\in
\mathcal{P}_{f}(\mathfrak{L}),\ \mathcal{Z}\cap \Lambda _{L}^{c}\neq
\emptyset }\int_{0}^{\infty }\mathrm{e}^{-xs}\left[ \Psi _{\mathcal{Z}},\tau
_{s}^{(L)}\left( B\right) \right] \mathrm{d}s  \label{inequ toto inter1}
\end{equation}%
for any $x\in {\mathbb{R}}^{+}$, sufficiently large $L\in \mathbb{R}_{0}^{+}$%
, and $B\in \mathcal{U}_{0}$. Here, $\Lambda _{L}^{c}\doteq \mathfrak{L}%
\backslash \Lambda _{L}$. It suffices to consider the case $B\neq 0$. Using
now Theorem \ref{Theorem Lieb-Robinson}, similar to (\ref{toto inter 1}),
one gets that, for all $s\in {\mathbb{R}}^{+}$ and any sufficiently large $%
L\in \mathbb{R}_{0}^{+}$ such that $B\in \mathcal{U}_{\Lambda }\subset
\mathcal{U}_{\Lambda _{L}}$ with $\Lambda \in \mathcal{P}_{f}(\mathfrak{L})$%
,
\begin{eqnarray}
&&\sum\limits_{\mathcal{Z}\in \mathcal{P}_{f}(\mathfrak{L}),\ \mathcal{Z}%
\cap \Lambda _{L}^{c}\neq \emptyset }%
\frac{\left\Vert \left[ \Psi _{\mathcal{Z}},\tau _{s}^{(L)}\left( B\right) %
\right] \right\Vert _{\mathcal{U}}}{2\left\Vert B\right\Vert _{\mathcal{U}}}
\label{assertion +} \\
&\leq &\mathbf{D}^{-1}\left( \mathrm{e}^{2\mathbf{D}\left\vert s\right\vert
\left\Vert \Psi \right\Vert _{\mathcal{W}}}-1\right) \sum\limits_{\mathcal{Z}%
\in \mathcal{P}_{f}(\mathfrak{L}),\ \mathcal{Z}\cap \Lambda _{L}^{c}\neq
\emptyset ,\ \mathcal{Z}\cap \Lambda =\emptyset }\left\Vert \Psi _{\mathcal{Z%
}}\right\Vert _{\mathcal{U}}\sum_{x\in \partial _{\Psi }\mathcal{Z}%
}\sum_{y\in \Lambda }\mathbf{F}\left( \left\vert x-y\right\vert \right)
\notag \\
&&+\sum\limits_{\mathcal{Z}\in \mathcal{P}_{f}(\mathfrak{L}),\ \mathcal{Z}%
\cap \Lambda _{L}^{c}\neq \emptyset ,\ \mathcal{Z}\cap \Lambda \neq
\emptyset }\left\Vert \Psi _{\mathcal{Z}}\right\Vert _{\mathcal{U}}\ .
\notag
\end{eqnarray}%
Similar to Inequalities (\ref{estimate utile})--(\ref{estimate utile2}), we
thus infer from (\ref{(3.2) NS}) and (\ref{iteration0}) that%
\begin{equation}
\sum\limits_{\mathcal{Z}\in \mathcal{P}_{f}(\mathfrak{L}),\ \mathcal{Z}\cap
\Lambda _{L}^{c}\neq \emptyset }\frac{\left\Vert \left[ \Psi _{\mathcal{Z}%
},\tau _{s}^{(L)}\left( B\right) \right] \right\Vert _{\mathcal{U}}}{%
2\left\Vert B\right\Vert _{\mathcal{U}}}\leq \Vert \Psi \Vert _{\mathcal{W}%
}\ \mathrm{e}^{2\mathbf{D}\left\vert s\right\vert \left\Vert \Psi
\right\Vert _{\mathcal{W}}}\sum\limits_{y\in \Lambda _{L}^{c}}\sum_{x\in
\Lambda }\mathbf{F}\left( \left\vert x-y\right\vert \right) \ ,
\label{assertion bisbisbis}
\end{equation}%
while
\begin{equation}
\underset{L\rightarrow \infty }{\lim }\sum\limits_{y\in \Lambda
_{L}^{c}}\sum_{x\in \Lambda }\mathbf{F}\left( \left\vert x-y\right\vert
\right) =0\ ,  \label{assertion bisbisbisbis}
\end{equation}%
because of (\ref{(3.1) NS}). Therefore, by (\ref{inequ toto inter1})--(\ref%
{assertion bisbisbisbis}), we deduce (\ref{assertion bis}) for all $x>2%
\mathbf{D}\Vert \Psi \Vert _{\mathcal{W}}$ and $B\in \mathcal{U}_{0}$.
\end{proof}

We now apply the second Trotter--Kato approximation theorem \cite[Chap. III,
Sect. 4.9]{EngelNagel} to deduce that $\delta $ is the generator of the
group $\{\tau _{t}\}_{t\in {\mathbb{R}}}$ of $\ast $--auto%
\-%
morphisms and resume all the main results, so far, in the following theorem:

\begin{satz}[Infinite--volume dynamics and its generator]
\label{Theorem Lieb-Robinson copy(3)}\mbox{
}\newline
Let $\Psi \in \mathcal{W}$, $\mathbf{V}$ be any potential, and $\mathbf{D}%
\in \mathbb{R}^{+}$ be defined by (\ref{(3.2) NS}).\newline
\emph{(i)}
\index{Interaction!infinite--volume dynamics}%
\index{Potential!infinite--volume dynamics}Infinite--volume dynamics. The
continuous groups $\{\tau _{t}^{(L)}\}_{t\in {\mathbb{R}}}$, $L\in \mathbb{R}%
_{0}^{+}$, defined by (\ref{definition fininte vol dynam}) converge strongly
to a $C_{0}$--group $\{\tau _{t}\}_{t\in {\mathbb{R}}}$ of $\ast $--auto%
\-%
morphisms with generator $\delta $.\newline
\emph{(ii)}
\index{Interaction!symmetric derivation}%
\index{Potential!symmetric derivation}Infinitesimal generator. $\delta $ is
a conservative closed symmetric derivation which is equal on its core $%
\mathcal{U}_{0}$ to
\begin{equation*}
\delta (B)=i\sum\limits_{\Lambda \in \mathcal{P}_{f}(\mathfrak{L})}\left[
\Psi _{\Lambda },B\right] +i\sum\limits_{x\in \mathfrak{L}}\left[ \mathbf{V}%
_{\left\{ x\right\} },B\right] \ ,\qquad B\in \mathcal{U}_{0}\ .
\end{equation*}%
\emph{(iii)} Rate of convergence. For any $\Lambda \in \mathcal{P}_{f}(%
\mathfrak{L})$, $B\in \mathcal{U}_{\Lambda }$ and $L\in \mathbb{R}_{0}^{+}$
such that $\Lambda \subset \Lambda _{L}$,%
\begin{equation*}
\left\Vert \tau _{t}\left( B\right) -\tau _{t}^{(L)}\left( B\right)
\right\Vert _{\mathcal{U}}\leq 2\left\Vert B\right\Vert _{\mathcal{U}}\Vert
\Psi \Vert _{\mathcal{W}}\left\vert t\right\vert \mathrm{e}^{4\mathbf{D}%
\left\vert t\right\vert \left\Vert \Psi \right\Vert _{\mathcal{W}%
}}\sum\limits_{y\in \mathfrak{L}\backslash \Lambda _{L}}\sum_{x\in \Lambda }%
\mathbf{F}\left( \left\vert x-y\right\vert \right) \ .
\end{equation*}%
\emph{(iv)}
\index{Lieb--Robinson bounds}Lieb--Robinson bounds. For any $t\in \mathbb{R}$
and $B_{1}\in \mathcal{U}^{+}\cap \mathcal{U}_{\Lambda ^{(1)}}$, $B_{2}\in
\mathcal{U}_{\Lambda ^{(2)}}$ with disjoint sets $\Lambda ^{(1)},\Lambda
^{(2)}\in \mathcal{P}_{f}(\mathfrak{L})$,
\begin{multline*}
\left\Vert \left[ \tau _{t}\left( B_{1}\right) ,B_{2}\right] \right\Vert _{%
\mathcal{U}}\leq 2\mathbf{D}^{-1}\left\Vert B_{1}\right\Vert _{\mathcal{U}%
}\left\Vert B_{2}\right\Vert _{\mathcal{U}}\left( \mathrm{e}^{2\mathbf{D}%
\left\vert t\right\vert \left\Vert \Psi \right\Vert _{\mathcal{W}}}-1\right)
\\
\times \sum_{x\in \partial _{\Psi }\Lambda ^{(1)}}\sum_{y\in \Lambda ^{(2)}}%
\mathbf{F}\left( \left\vert x-y\right\vert \right) \ .
\end{multline*}
\end{satz}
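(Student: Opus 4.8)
The plan is to assemble the ingredients prepared in Lemmata \ref{Theorem Lieb-Robinson copy(2)}--\ref{lemma range derivation} and hand them to the second Trotter--Kato approximation theorem \cite[Chap. III, Sect. 4.9]{EngelNagel}, as already announced in the text preceding the statement. First I would use the Cauchy estimate of Lemma \ref{Theorem Lieb-Robinson copy(2)}: for $B\in\mathcal{U}_{\Lambda}$ with $\Lambda\in\mathcal{P}_{f}(\mathfrak{L})$ and $\Lambda\subset\Lambda_{L_{1}}\subsetneq\Lambda_{L_{2}}$, using $\Lambda_{L_{2}}\setminus\Lambda_{L_{1}}\subseteq\mathfrak{L}\setminus\Lambda_{L_{1}}$ and $\mathbf{F}\geq 0$,
\[
\bigl\Vert\tau_{t}^{(L_{2})}(B)-\tau_{t}^{(L_{1})}(B)\bigr\Vert_{\mathcal{U}}\leq 2\left\Vert B\right\Vert_{\mathcal{U}}\left\Vert\Psi\right\Vert_{\mathcal{W}}\left\vert t\right\vert\mathrm{e}^{4\mathbf{D}\left\vert t\right\vert\left\Vert\Psi\right\Vert_{\mathcal{W}}}\sum_{y\in\mathfrak{L}\setminus\Lambda_{L_{1}}}\ \sum_{x\in\Lambda}\mathbf{F}\left(\left\vert x-y\right\vert\right)\ ,
\]
and the right-hand side tends to $0$ as $L_{1}\to\infty$ by the summability property (\ref{(3.1) NS}), uniformly for $t$ in any compact interval. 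Hence $\{\tau_{t}^{(L)}(B)\}_{L\in\mathbb{R}_{0}^{+}}$ is Cauchy in $\mathcal{U}$; I denote its limit by $\tau_{t}(B)$.

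\textbf{Step 1: the limit group, part (i).} Since each $\tau_{t}^{(L)}$ is an isometric $\ast$-automorphism and $\mathcal{U}_{0}$ is dense in $\mathcal{U}$, the map $B\mapsto\tau_{t}(B)$ defined on $\mathcal{U}_{0}$ extends uniquely to an isometric $\ast$-endomorphism of $\mathcal{U}$; running the argument with $-t$ exhibits $\tau_{-t}$ as its two-sided inverse, so $\tau_{t}$ is a $\ast$-automorphism. The group relation $\tau_{t}\circ\tau_{s}=\tau_{t+s}$ holds on $\mathcal{U}_{0}$ by passing to the limit in $\tau_{t}^{(L)}\circ\tau_{s}^{(L)}=\tau_{t+s}^{(L)}$ and extends to $\mathcal{U}$ by density and isometry. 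Strong continuity follows from the standard $\varepsilon/3$ argument: for $B\in\mathcal{U}_{0}$ the convergence $\tau_{t}^{(L)}(B)\to\tau_{t}(B)$ is uniform for $t$ on compacta (by the display above) and each $t\mapsto\tau_{t}^{(L)}(B)$ is norm-continuous, whence $t\mapsto\tau_{t}(B)$ is norm-continuous; density of $\mathcal{U}_{0}$ together with $\Vert\tau_{t}\Vert_{\mathcal{B}(\mathcal{U})}=1$ upgrades this to strong continuity of $\{\tau_{t}\}_{t\in\mathbb{R}}$ on all of $\mathcal{U}$.

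\textbf{Step 2: the generator (part (ii)) and the quantitative parts (iii), (iv).} Next I would invoke the second Trotter--Kato approximation theorem for the closure $\overline{\delta}$ of the symmetric derivation $\delta$ on $\mathcal{U}_{0}$, which exists and is conservative by Lemmata \ref{lemma dissipative}--\ref{lemma norm closure}: the bounded generators $\delta^{(L)}$ generate the isometry groups $\{\tau_{t}^{(L)}\}$, they converge strongly to $\delta$ on the core $\mathcal{U}_{0}$ by (\ref{core limit}), and the ranges $\mathrm{Ran}(x\mathbf{1}_{\mathcal{U}}\mp\delta)$ are dense in $\mathcal{U}$ for every $x>0$ by Lemma \ref{lemma range derivation}. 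The theorem then gives that $\overline{\delta}$ generates a $C_{0}$-group which is the strong limit of $\{\tau_{t}^{(L)}\}$; by uniqueness of limits it coincides with $\{\tau_{t}\}$ from Step 1, and $\mathcal{U}_{0}$ is a core for it, with the explicit formula (\ref{dynamic seriesbis}) (absolutely convergent by (\ref{inequlity utile})). For part (iii) I would fix $\Lambda\subset\Lambda_{L}$ and let $L_{2}\to\infty$ with $L_{1}=L$ in Lemma \ref{Theorem Lieb-Robinson copy(2)}: the left-hand side converges to $\Vert\tau_{t}(B)-\tau_{t}^{(L)}(B)\Vert_{\mathcal{U}}$ by Step 1, while $\sum_{y\in\Lambda_{L_{2}}\setminus\Lambda_{L}}$ increases to the finite sum $\sum_{y\in\mathfrak{L}\setminus\Lambda_{L}}$ by monotone convergence. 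For part (iv) I would take $L\to\infty$ in the finite-volume bound of Theorem \ref{Theorem Lieb-Robinson}: for disjoint $\Lambda^{(1)},\Lambda^{(2)}\in\mathcal{P}_{f}(\mathfrak{L})$ and any $L$ with $\Lambda^{(1)}\cup\Lambda^{(2)}\subset\Lambda_{L}$ the estimate holds with $\tau_{t}^{(L)}$, and since $\tau_{t}^{(L)}(B_{1})\to\tau_{t}(B_{1})$ in norm and $C\mapsto[C,B_{2}]$ is norm-continuous, the left-hand side converges to $\Vert[\tau_{t}(B_{1}),B_{2}]\Vert_{\mathcal{U}}$ while the right-hand side does not depend on $L$.

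\textbf{Main obstacle.} The only genuinely delicate point is the verification of the hypotheses of the second Trotter--Kato theorem, and within it the density of $\mathrm{Ran}(x\mathbf{1}_{\mathcal{U}}\mp\delta)$: this is precisely where the finite-volume Lieb--Robinson bound of Theorem \ref{Theorem Lieb-Robinson} re-enters, through the computation (\ref{inequ toto inter1})--(\ref{assertion bisbisbisbis}), to control the boundary terms $(\delta-\delta^{(L)})(B_{L})$ uniformly and show they vanish as $L\to\infty$. Everything else reduces to bookkeeping with the decay function $\mathbf{F}$ and the summability/convolution bounds (\ref{(3.1) NS})--(\ref{(3.2) NS}), together with soft functional-analytic facts (isometries, density, uniqueness of strong limits).
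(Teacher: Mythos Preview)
Your proposal is correct and follows essentially the same route as the paper: the paper also first constructs the limit $C_{0}$--group directly from the Cauchy estimate of Lemma~\ref{Theorem Lieb-Robinson copy(2)} (in the discussion immediately preceding the theorem), and then in the proof proper invokes Lemmata~\ref{lemma norm closure}--\ref{lemma range derivation}, (\ref{core limit}), and the second Trotter--Kato theorem \cite[Chap.~III, Sect.~4.9]{EngelNagel} to identify the generator, with (iii) and (iv) obtained by passing to the limit in Lemma~\ref{Theorem Lieb-Robinson copy(2)} and Theorem~\ref{Theorem Lieb-Robinson} respectively. Your identification of the range--density step (Lemma~\ref{lemma range derivation}) as the place where the Lieb--Robinson input is genuinely needed is also in line with the paper's presentation.
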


\begin{proof}
By Lemma \ref{lemma norm closure}, the set $\mathcal{U}_{0}$ of local
elements is a core of the dissipative derivation $\delta $ and one obtains
(ii), see (\ref{dynamic seriesbis}). Moreover, $\delta ^{(L)}\left( B\right)
\rightarrow \delta \left( B\right) $ for all $B\in \mathcal{U}_{0}$, see (%
\ref{core limit}). Recall that $\delta ^{(L)}$ is the generator of the group
$\{\tau _{t}^{(L)}\}_{t\in {\mathbb{R}}}$ for any $L\in \mathbb{R}_{0}^{+}$.
Therefore, since one also has Lemma \ref{lemma range derivation}, (i) is a
direct consequence of \cite[Chap. III, Sect. 4.9]{EngelNagel}. The third
statement (iii) thus follows from Lemma \ref{Theorem Lieb-Robinson copy(2)}.
(iv) is an obvious consequence of Theorem \ref{Theorem Lieb-Robinson} and
the first assertion (i).
\end{proof}

\subsection{Lieb--Robinson Bounds for Multi--Commutators\label{section LR
multi}}

Recall that multi--commutators are defined by induction as follows:%
\index{Multi--commutators}%
\begin{equation}
{[}B_{1},B_{0}{]}^{(2)}\doteq \lbrack B_{1},B_{0}]\doteq
B_{1}B_{0}-B_{0}B_{1}\ ,\qquad B_{0},B_{1}\in \mathcal{U}\ ,
\label{multi1-0}
\end{equation}%
and, for all integers $k\geq 2$,
\begin{equation}
{[}B_{k},B_{k-1},\ldots ,B_{0}{]}^{(k+1)}\doteq {[}B_{k},{[}B_{k-1},\ldots
,B_{0}{]}^{(k)}{]}\ ,\quad B_{0},\ldots ,B_{k}\in \mathcal{U}\ .
\label{multi2-0}
\end{equation}%
The aim of this subsection is to extend Theorem \ref{Theorem Lieb-Robinson
copy(3)} (iv) to multi--com%
\-%
mutators. The arguments we use below to prove Lieb--Robinson bounds for
multi--com%
\-%
mutators are not a generalization of the proof of Theorem \ref{Theorem
Lieb-Robinson} or Theorem \ref{Theorem Lieb-Robinson copy(3)} (iv). Instead,
we use a pivotal lemma deduced from Theorem \ref{Theorem Lieb-Robinson
copy(3)} (iii), which in turn results from finite--volume Lieb--Robinson
bounds of Theorem \ref{Theorem Lieb-Robinson}. This lemma expresses the $%
C_{0}$--group $\{\tau _{t}\}_{t\in {\mathbb{R}}}$ of Theorem \ref{Theorem
Lieb-Robinson copy(3)} (i) as \emph{telescoping} series.

To this end, it is convenient to introduce the family $\{\chi _{x}\}_{x\in
\mathfrak{L}}$ of $\ast $--automor%
\-%
phisms of $\mathcal{U}$, which implements the action of the
\index{CAR algebras!group of lattice translations}group of lattice
translations on the CAR $C^{\ast }$--algebra $\mathcal{U}$. This family is
uniquely defined by the conditions
\begin{equation}
\chi _{x}(a_{y})=a_{y+x}\ ,\qquad x,y\in \mathfrak{L}\ .  \label{translation}
\end{equation}%
We also define, for any $n\in \mathbb{N}_{0}$, $x\in \mathfrak{L}$, $\Psi
\in \mathcal{W}$ and potential $\mathbf{V}$, a \emph{space translated}
finite--volume dynamics which is the continuous group $\{\tau
_{t}^{(n,x)}\}_{t\in {\mathbb{R}}}$ of $\ast $--auto%
\-%
morphisms of $\mathcal{U}$ generated by the symmetric and bounded derivation
\begin{equation*}
\delta ^{(n,x)}(B)\doteq i\sum\limits_{\Lambda \subseteq x+\Lambda _{n}}
\left[ \Psi _{\Lambda },B\right] +i\sum\limits_{y\in x+\Lambda _{n}}\left[
\mathbf{V}_{\left\{ y\right\} },B\right] \ ,\qquad B\in \mathcal{U}\ .
\end{equation*}%
Note that the fermion system is generally \emph{not} translation invariant
and, in general,%
\begin{equation*}
\tau _{t}^{(n,x)}\circ \chi _{x}\neq \chi _{x}\circ \tau _{t}^{(n)}\ ,\qquad
x\in \mathfrak{L},\ n\in \mathbb{N}_{0}\ ,\ t\in {\mathbb{R}}\ .
\end{equation*}%
For $m\in \mathbb{N}_{0}$, $x\in \mathfrak{L}$, $B\in \mathcal{U}_{\Lambda
_{m}}$ and $t\in \mathbb{R}$, we finally introduce the local elements
\begin{equation}
\mathfrak{B}_{B,t,x}\left( m\right) \equiv \mathfrak{B}_{B,t,x}^{\left(
m\right) }\left( m\right) \doteq \tau _{t}^{(m,x)}\circ \chi _{x}\left(
B\right) \in \mathcal{U}_{\Lambda _{m}+x}  \label{def B frac1}
\end{equation}%
and
\begin{equation}
\mathfrak{B}_{B,t,x}\left( n\right) \equiv \mathfrak{B}_{B,t,x}^{\left(
m\right) }\left( n\right) \doteq (\tau _{t}^{(n,x)}-\tau
_{t}^{(n-1,x)})\circ \chi _{x}(B)\in \mathcal{U}_{\Lambda _{n}+x}\ ,\qquad
n\geq m+1\ .  \label{def B frac2}
\end{equation}%
The family $\{\mathfrak{B}_{B,t,x}\left( n\right) \}_{n\geq m}\subset
\mathcal{U}_{0}$ is used to define telescoping series:%
\index{Telescoping series}

\begin{lemma}[Infinite--volume dynamics as telescoping series]
\label{Lemma Series representation of the dynamics}\mbox{ }\newline
Let $\Psi \in \mathcal{W}$ and $\mathbf{V}$ be any potential. Then, for any $%
m\in \mathbb{N}_{0}$, $x\in \mathfrak{L}$, $B\in \mathcal{U}_{\Lambda _{m}}$
and $t\in \mathbb{R}$:
\begin{equation}
\sum_{n=m}^{\infty }\mathfrak{B}_{B,t,x}\left( n\right) =\tau _{t}\circ \chi
_{x}\left( B\right) \ .  \label{bound B fract0}
\end{equation}%
The above telescoping series is absolutely convergent in $\mathcal{U}$ with
\begin{equation}
\left\Vert \mathfrak{B}_{B,t,x}\left( n\right) \right\Vert _{\mathcal{U}%
}\leq 2\left\Vert B\right\Vert _{\mathcal{U}}\left\Vert \Psi \right\Vert _{%
\mathcal{W}}\left\vert t\right\vert \mathrm{e}^{4\mathbf{D}\left\vert
t\right\vert \left\Vert \Psi \right\Vert _{\mathcal{W}}}\sum\limits_{y\in
\Lambda _{n}\backslash \Lambda _{n-1}}\sum_{z\in \Lambda _{m}}\mathbf{F}%
\left( \left\vert z-y\right\vert \right)  \label{bound B fract}
\end{equation}%
for any $n\geq m+1$, while $\left\Vert \mathfrak{B}_{B,t,x}\left( m\right)
\right\Vert _{\mathcal{U}}=\left\Vert B\right\Vert _{\mathcal{U}}$.
\end{lemma}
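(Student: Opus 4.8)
The plan is to notice that, at each finite volume, the series \eqref{bound B fract0} telescopes \emph{exactly}, and then to feed in the convergence of the finite--volume dynamics already established in Section~\ref{sect Lieb--Robinson}. Concretely, I would fix $m\in \mathbb{N}_{0}$, $x\in \mathfrak{L}$, $B\in \mathcal{U}_{\Lambda _{m}}$, $t\in \mathbb{R}$ and observe that, by \eqref{def B frac1}--\eqref{def B frac2}, the partial sums collapse: for every integer $N\geq m$,
\[
\sum_{n=m}^{N}\mathfrak{B}_{B,t,x}\left( n\right) =\tau _{t}^{(m,x)}\circ \chi _{x}(B)+\sum_{n=m+1}^{N}\bigl(\tau _{t}^{(n,x)}-\tau _{t}^{(n-1,x)}\bigr)\circ \chi _{x}(B)=\tau _{t}^{(N,x)}\circ \chi _{x}(B)\ .
\]
So the lemma reduces to two points: the norm estimate \eqref{bound B fract}, which will also deliver absolute convergence, and the identity $\lim_{N\to \infty }\tau _{t}^{(N,x)}\circ \chi _{x}(B)=\tau _{t}\circ \chi _{x}(B)$ in $\mathcal{U}$.

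The key remark is that $\{\tau _{t}^{(n,x)}\}_{t\in \mathbb{R}}$ is simply the finite--volume dynamics of Section~\ref{Finite Volume Dynamics} attached to the cubic box $x+\Lambda _{n}$ in place of $\Lambda _{L}$ (see \eqref{definition fininte vol dynam0}--\eqref{definition fininte vol dynam}). Since $\mathfrak{L}=\mathbb{Z}^{d}$, the decay function $\mathbf{F}(\left\vert \cdot \right\vert )$ and the constants $\left\Vert \Psi \right\Vert _{\mathcal{W}}$, $\mathbf{D}$, $\left\Vert \mathbf{F}\right\Vert _{1,\mathfrak{L}}$ are unchanged by lattice translations --- they are defined through suprema over $\mathfrak{L}$, so $\Psi $ itself need not be translation invariant --- and the proofs of Theorem~\ref{Theorem Lieb-Robinson} and of Lemma~\ref{Theorem Lieb-Robinson copy(2)} only ever use that one finite region is contained in another, never that a box is centred at the origin. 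I would therefore first record that both of those results hold verbatim for an arbitrary nested pair of finite regions of $\mathfrak{L}$, and then apply them to the pairs $x+\Lambda _{n-1}\subsetneq x+\Lambda _{n}$ and $x+\Lambda _{n}\subset \Lambda _{L}$.

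Applying the (translated) Lemma~\ref{Theorem Lieb-Robinson copy(2)} to the local element $\chi _{x}(B)\in \mathcal{U}_{\Lambda _{m}+x}$ --- here $\chi _{x}$ is isometric, being a $\ast $--automorphism, and $\Lambda _{m}+x\subset \Lambda _{n-1}+x$ whenever $n\geq m+1$ --- and then shifting the summation indices by $-x$, yields precisely
\[
\bigl\Vert \mathfrak{B}_{B,t,x}\left( n\right) \bigr\Vert _{\mathcal{U}}\leq 2\left\Vert B\right\Vert _{\mathcal{U}}\left\Vert \Psi \right\Vert _{\mathcal{W}}\left\vert t\right\vert \mathrm{e}^{4\mathbf{D}\left\vert t\right\vert \left\Vert \Psi \right\Vert _{\mathcal{W}}}\sum_{y\in \Lambda _{n}\backslash \Lambda _{n-1}}\sum_{z\in \Lambda _{m}}\mathbf{F}\left( \left\vert z-y\right\vert \right) \ ,\qquad n\geq m+1\ ,
\]
which is \eqref{bound B fract}; the case $n=m$ is immediate since $\tau _{t}^{(m,x)}$ and $\chi _{x}$ are isometries. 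Summing over $n\geq m+1$ and using $\bigcup_{n>m}(\Lambda _{n}\backslash \Lambda _{n-1})=\mathfrak{L}\backslash \Lambda _{m}$ together with \eqref{(3.1) NS} bounds $\sum_{n\geq m}\Vert \mathfrak{B}_{B,t,x}(n)\Vert _{\mathcal{U}}$ by $\left\Vert B\right\Vert _{\mathcal{U}}(1+2\left\Vert \Psi \right\Vert _{\mathcal{W}}\left\vert t\right\vert \mathrm{e}^{4\mathbf{D}\left\vert t\right\vert \left\Vert \Psi \right\Vert _{\mathcal{W}}}\left\vert \Lambda _{m}\right\vert \left\Vert \mathbf{F}\right\Vert _{1,\mathfrak{L}})<\infty $, i.e. absolute convergence of the series. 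For the remaining limit I would compare $x+\Lambda _{n}$ with $\Lambda _{L}\supset x+\Lambda _{n}$ via Lemma~\ref{Theorem Lieb-Robinson copy(2)} and let $L\to \infty $ with the help of Theorem~\ref{Theorem Lieb-Robinson copy(3)}(iii), obtaining
\[
\bigl\Vert \tau _{t}\circ \chi _{x}(B)-\tau _{t}^{(n,x)}\circ \chi _{x}(B)\bigr\Vert _{\mathcal{U}}\leq 2\left\Vert B\right\Vert _{\mathcal{U}}\left\Vert \Psi \right\Vert _{\mathcal{W}}\left\vert t\right\vert \mathrm{e}^{4\mathbf{D}\left\vert t\right\vert \left\Vert \Psi \right\Vert _{\mathcal{W}}}\sum_{y\in \mathfrak{L}\backslash (x+\Lambda _{n})}\sum_{z\in \Lambda _{m}+x}\mathbf{F}\left( \left\vert z-y\right\vert \right) \ ,
\]
whose right--hand side tends to $0$ as $n\to \infty $, again by \eqref{(3.1) NS}. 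Combined with the telescoping identity of the first paragraph, this gives \eqref{bound B fract0}.

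I do not expect a genuine obstacle. The only point requiring some care is the verification in the second paragraph that the finite--volume Lieb--Robinson machinery is insensitive to lattice translations and, more generally, applies to an arbitrary nested pair of finite regions rather than only to the boxes $\Lambda _{L}$; this follows simply by inspecting the proofs of Theorem~\ref{Theorem Lieb-Robinson} and Lemma~\ref{Theorem Lieb-Robinson copy(2)}, since $\Psi$ restricted to subsets of $x+\Lambda _{n}$ still lies in $\mathcal{W}$ with norm at most $\left\Vert \Psi \right\Vert _{\mathcal{W}}$. Everything else is the bookkeeping above together with results already established in Section~\ref{sect Lieb--Robinson}.
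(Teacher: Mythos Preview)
Your proposal is correct and overlaps with the paper's proof in most respects: the telescoping identity for the partial sums, the observation that Theorem~\ref{Theorem Lieb-Robinson} and Lemma~\ref{Theorem Lieb-Robinson copy(2)} go through for the space--translated boxes $x+\Lambda_n$ (whence \eqref{bound B fract}), and the absolute--convergence bound via \eqref{(3.1) NS} are all exactly what the paper does.

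The one genuine difference lies in how you establish $\tau_t^{(N,x)}\circ\chi_x(B)\to\tau_t\circ\chi_x(B)$. The paper argues abstractly: it notes $\delta^{(N,x)}(B)\to\delta(B)$ on $\mathcal{U}_0$, invokes Lemma~\ref{lemma range derivation}, and then applies the second Trotter--Kato approximation theorem to conclude strong convergence of the translated groups. You instead compare $\tau_t^{(n,x)}$ directly with $\tau_t^{(L)}$ via the generalized Lemma~\ref{Theorem Lieb-Robinson copy(2)} for the nested pair $x+\Lambda_n\subset\Lambda_L$, and then let $L\to\infty$ using the already established Theorem~\ref{Theorem Lieb-Robinson copy(3)}. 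Your route is more elementary (no second appeal to Trotter--Kato) and even yields an explicit rate of convergence for the translated family, i.e., a translated analogue of Theorem~\ref{Theorem Lieb-Robinson copy(3)}(iii); the paper's route has the virtue of being uniform with the way the infinite--volume dynamics was constructed in the first place. Both are short, and the extra generality you need (Lemma~\ref{Theorem Lieb-Robinson copy(2)} for an arbitrary nested pair of finite regions rather than two centred boxes) is indeed immediate from its proof.
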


\begin{proof}
Since, for any $N\in \mathbb{N}_{0}$ so that $N\geq m$,
\begin{equation}
\sum_{n=m}^{N}\mathfrak{B}_{B,t,x}\left( n\right) =\tau _{t}^{(N,x)}\circ
\chi _{x}\left( B\right) \ ,  \label{telescoping series}
\end{equation}%
it suffices to study the limit $N\rightarrow \infty $ of the group $\{\tau
_{t}^{(N,x)}\}_{t\in {\mathbb{R}}}$ at any fixed $x\in \mathfrak{L}$.
Similar to the proof of Theorem \ref{Theorem Lieb-Robinson copy(3)} (i), $%
\delta ^{(N,x)}\left( B\right) \rightarrow \delta \left( B\right) $ for all $%
B\in \mathcal{U}_{0}$, as $N\rightarrow \infty $. By Lemma \ref{lemma range
derivation} and \cite[Chap. III, Sect. 4.9]{EngelNagel}, the translated
groups $\{\tau _{t}^{(N,x)}\}_{t\in {\mathbb{R}}}$, $N\in \mathbb{N}_{0}$,
converge strongly to the $C_{0}$--group $\{\tau _{t}\}_{t\in {\mathbb{R}}}$
for any $x\in \mathfrak{L}$. In other words, we deduce Equation (\ref{bound
B fract0}) from (\ref{telescoping series}) in the limit $N\rightarrow \infty
$. Moreover, one easily checks that Theorem \ref{Theorem Lieb-Robinson} and
thus Lemma \ref{Theorem Lieb-Robinson copy(2)} also hold for the (space
translated) groups $\{\tau _{t}^{(n,x)}\}_{t\in {\mathbb{R}}}$, $n\in
\mathbb{N}_{0}$, at any fixed $x\in \mathfrak{L}$. This yields Inequality (%
\ref{bound B fract}) for $n>m$, while $\left\Vert \mathfrak{B}_{B,t,x}\left(
m\right) \right\Vert _{\mathcal{U}}=\left\Vert B\right\Vert _{\mathcal{U}}$,
because $\tau _{t}^{(m,x)}$ is a $\ast $--auto%
\-%
morphism on $\mathcal{U}_{\Lambda _{m}}$. It follows that%
\begin{equation*}
\sum_{n=m+1}^{\infty }\left\Vert \mathfrak{B}_{B,t,x}\left( n\right)
\right\Vert _{\mathcal{U}}\leq 2\left\Vert B\right\Vert _{\mathcal{U}%
}\left\Vert \Psi \right\Vert _{\mathcal{W}}\left\vert t\right\vert \mathrm{e}%
^{4\mathbf{D}\left\vert t\right\vert \left\Vert \Psi \right\Vert _{\mathcal{W%
}}}\sum_{z\in \Lambda _{m}}\sum_{n\in \mathbb{N}}\sum\limits_{y\in \Lambda
_{n}\backslash \Lambda _{n-1}}\mathbf{F}\left( \left\vert z-y\right\vert
\right) \ .
\end{equation*}%
Finally, by Assumption (\ref{(3.1) NS}),
\begin{equation*}
\sum_{z\in \Lambda _{m}}\sum_{n\in \mathbb{N}}\sum\limits_{y\in \Lambda
_{n}\backslash \Lambda _{n-1}}\mathbf{F}\left( \left\vert z-y\right\vert
\right) \leq \sum_{z\in \Lambda _{m}}\sum_{y\in \mathfrak{L}}\mathbf{F}%
\left( \left\vert z-y\right\vert \right) =\left\vert \Lambda _{m}\right\vert
\left\Vert \mathbf{F}\right\Vert _{1,\mathfrak{L}}<\infty \ .
\end{equation*}
\end{proof}

To extend Lieb--Robinson bounds to multi--commutators we combine Lemma \ref%
{Lemma Series representation of the dynamics} with tree decompositions of
sequences of clustering subsets of $\mathfrak{L}$ (cf. (\ref{cluster-tree}%
)): Let $\mathcal{T}_{2}$ be the set of all (non--oriented) trees with
exactly two vertices. This set contains a unique tree $T=\{\{0,1\}\}$ which,
in turn, contains the unique bond $\{0,1\}$, i.e., $\mathcal{T}_{2}\doteq
\left\{ \left\{ \{0,1\}\right\} \right\} $.
\index{Trees}Then, for each integer $k\geq 2$, we recursively define a set $%
\mathcal{T}_{k+1}$ of trees with $k+1$ vertices by
\begin{equation}
\mathcal{T}_{k+1}\doteq \Big\{\{\{j,k\}\}\cup T%
\text{ }:\text{ }j=0,\ldots ,k-1,\quad T\in \mathcal{T}_{k}\Big\}\ .
\label{def.tree}
\end{equation}%
Therefore, for $k\in \mathbb{N}$ and any tree $T\in \mathcal{T}_{k+1}$,
there is a map%
\begin{equation}
\mathrm{P}_{T}:\{1,\ldots ,k\}\rightarrow \{0,\ldots ,k-1\}
\label{def tree2}
\end{equation}%
such that $\mathrm{P}_{T}(j)<j$, $\mathrm{P}_{T}(1)=0$, and
\begin{equation}
T=\bigcup_{j=1}^{k}\{\{\mathrm{P}_{T}(j),j\}\}\ .  \label{def tree3}
\end{equation}%
For any $k\in \mathbb{N}$, $T\in \mathcal{T}_{k+1}$, and every sequence $%
\left\{ (n_{j},x_{j})\right\} _{j=0}^{k}$ in $\mathbb{N}_{0}\times \mathfrak{%
L}$ with length $k+1$, we define
\begin{equation}
\varkappa _{T}\left( \left\{ (n_{j},x_{j})\right\} _{j=0}^{k}\right) \doteq
\overset{k}{\prod_{j=1}}\mathbf{1}\left[ (\Lambda _{n_{j}}+x_{j})\cap
(\Lambda _{n_{\mathrm{P}_{T}(j)}}+x_{\mathrm{P}_{T}(j)})\neq \emptyset %
\right] \in \{0,1\}\ ,  \label{definition chqrqcteristic}
\end{equation}%
while, for all $\ell \in \left\{ 1,\ldots ,k\right\} $,%
\begin{equation}
\mathcal{S}_{\ell ,k}\doteq \left\{ \pi \text{ }|\text{ }\pi :\left\{ \ell
,\ldots ,k\right\} \rightarrow \left\{ 1,\ldots ,k\right\} \text{ such that }%
\pi \left( i\right) <\pi \left( j\right) \text{ when }i<j\right\} \ .
\label{SLK}
\end{equation}%
Then, one gets the following bound on multi--commutators:

\begin{satz}[Lieb--Robinson bounds for multi--commutators -- Part I]
\label{Theorem Lieb-Robinson copy(1)}\mbox{
}\newline
Let $\Psi \in \mathcal{W}$ and $\mathbf{V}$ be any potential. Then, for any
integer $k\in \mathbb{N}$, $\{m_{j}\}_{j=0}^{k}\subset \mathbb{N}_{0}$,
times $\{s_{j}\}_{j=1}^{k}\subset \mathbb{R}$, lattice sites $%
\{x_{j}\}_{j=0}^{k}\subset \mathfrak{L}$, and local elements $B_{0}\in
\mathcal{U}_{0}$, $\{B_{j}\}_{j=1}^{k}\subset \mathcal{U}_{0}\cap \mathcal{U}%
^{+}$ such that $B_{j}\in \mathcal{U}_{\Lambda _{m_{j}}}$ for $j\in
\{0,\ldots ,k\}$,%
\index{Lieb--Robinson bounds!multi--commutators}%
\begin{eqnarray*}
&&\left\Vert \left[ \tau _{s_{k}}\circ \chi _{x_{k}}(B_{k}),\ldots ,\tau
_{s_{1}}\circ \chi _{x_{1}}(B_{1}),\chi _{x_{0}}(B_{0})\right]
^{(k+1)}\right\Vert _{\mathcal{U}} \\
&\leq &2^{k}\prod\limits_{j=0}^{k}\left\Vert B_{j}\right\Vert _{\mathcal{U}%
}\sum_{T\in \mathcal{T}_{k+1}}\left( \varkappa _{T}\left( \left\{
(m_{j},x_{j})\right\} _{j=0}^{k}\right) +\Re _{T,\Vert \Psi \Vert _{\mathcal{%
W}}}\right)
\end{eqnarray*}%
with, for any $\alpha \in \mathbb{R}_{0}^{+}$,
\begin{eqnarray}
\Re _{T,\alpha } &\doteq &\sum_{\ell =1}^{k}\left( 2\alpha \right) ^{k-\ell
+1}\sum_{\pi \in \mathcal{S}_{\ell ,k}}\left( \prod\limits_{j\in \{\pi (\ell
),\ldots ,\pi (k)\}}\left\vert s_{j}\right\vert \mathrm{e}^{4\mathbf{D}%
\alpha \left\vert s_{j}\right\vert }\right)  \label{def AT} \\
&&\sum_{n_{\pi \left( \ell \right) }=m_{\pi \left( \ell \right) }+1}^{\infty
}\sum_{z_{\pi \left( \ell \right) }\in \Lambda _{m_{\pi \left( \ell \right)
}}}\sum\limits_{y_{\pi \left( \ell \right) }\in \Lambda _{n_{\pi \left( \ell
\right) }}\backslash \Lambda _{n_{\pi \left( \ell \right) }-1}}\cdots  \notag
\\
&&\cdots \sum_{n_{\pi \left( k\right) }=m_{\pi \left( k\right) }+1}^{\infty
}\sum_{z_{\pi \left( k\right) }\in \Lambda _{m_{\pi \left( k\right)
}}}\sum\limits_{y_{\pi \left( k\right) }\in \Lambda _{n_{\pi \left( k\right)
}}\backslash \Lambda _{n_{\pi \left( k\right) }-1}}  \notag \\
&&\quad \quad \quad \quad \varkappa _{T}\left( \left\{ (n_{j},x_{j})\right\}
_{j=0}^{k}\right) \prod\limits_{j\in \{\pi (\ell ),\ldots ,\pi (k)\}}\mathbf{%
F}\left( \left\vert z_{j}-y_{j}\right\vert \right) \ .  \notag
\end{eqnarray}%
In the right--hand side (r.h.s.) of (\ref{def AT}), we set $n_{j}\doteq
m_{j} $ if
\begin{equation*}
j\in \left\{ 0,\ldots ,k\right\} \backslash \left\{ \pi \left( \ell \right)
,\ldots ,\pi \left( k\right) \right\} .
\end{equation*}%
The constant $\mathbf{D}\in \mathbb{R}^{+}$ is defined by (\ref{(3.2) NS}).
\end{satz}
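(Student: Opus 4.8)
The plan is to reduce the multi-commutator estimate to the commutator Lieb--Robinson bounds of Theorem \ref{Theorem Lieb-Robinson copy(3)}, which enter only through the telescoping representation of Lemma \ref{Lemma Series representation of the dynamics}, together with an induction on $k$ that produces the tree combinatorics. \emph{Step 1 (telescoping expansion).} For each $j\in\{1,\ldots,k\}$, Lemma \ref{Lemma Series representation of the dynamics} gives $\tau_{s_j}\circ\chi_{x_j}(B_j)=\sum_{n_j=m_j}^{\infty}\mathfrak{B}_{B_j,s_j,x_j}(n_j)$, the increments being local, $\mathfrak{B}_{B_j,s_j,x_j}(n_j)\in\mathcal{U}_{\Lambda_{n_j}+x_j}$, and lying in $\mathcal{U}^{+}$: evenness is inherited from $B_j\in\mathcal{U}^{+}$ because $\chi_{x_j}$ and the finite-volume groups $\tau_t^{(n,x)}$ preserve $\mathcal{U}^{+}$ (the generator $\delta^{(n,x)}$ is a commutator with even elements, and $\mathcal{U}^{+}$ is closed). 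Setting $n_0\doteq m_0$, so that $\chi_{x_0}(B_0)\in\mathcal{U}_{\Lambda_{m_0}+x_0}$, the norm bound (\ref{bound B fract}) and $\Vert\mathbf{F}\Vert_{1,\mathfrak{L}}<\infty$ make the $k$-fold series $\sum_{n_1\geq m_1}\cdots\sum_{n_k\geq m_k}$ of the multi-commutators below absolutely convergent; multilinearity of the multi-commutator then yields
\begin{multline*}
[\tau_{s_k}\circ\chi_{x_k}(B_k),\ldots,\chi_{x_0}(B_0)]^{(k+1)}\\
=\sum_{n_1\geq m_1}\cdots\sum_{n_k\geq m_k}[\mathfrak{B}_{B_k,s_k,x_k}(n_k),\ldots,\mathfrak{B}_{B_1,s_1,x_1}(n_1),\chi_{x_0}(B_0)]^{(k+1)}\ .
\end{multline*}

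\emph{Step 2 (tree bound for a single term).} I would show by induction on $k$ that, for local elements $A_0\in\mathcal{U}_{\Lambda^{(0)}}$ and $A_j\in\mathcal{U}_{\Lambda^{(j)}}\cap\mathcal{U}^{+}$ with $j\in\{1,\ldots,k\}$,
\begin{equation*}
\Vert[A_k,\ldots,A_1,A_0]^{(k+1)}\Vert_{\mathcal{U}}\leq 2^{k}\Big(\prod_{j=0}^{k}\Vert A_j\Vert_{\mathcal{U}}\Big)\sum_{T\in\mathcal{T}_{k+1}}\prod_{j=1}^{k}\mathbf{1}\big[\Lambda^{(j)}\cap\Lambda^{(\mathrm{P}_T(j))}\neq\emptyset\big]\ .
\end{equation*}
The case $k=1$ is the elementary estimate $\Vert[A_1,A_0]\Vert_{\mathcal{U}}\leq 2\Vert A_1\Vert_{\mathcal{U}}\Vert A_0\Vert_{\mathcal{U}}$ together with the fact that an even element commutes with every element supported on a disjoint region (a consequence of the CAR, cf. Section \ref{section CAR}), so that $[A_1,A_0]=0$ unless $\Lambda^{(1)}\cap\Lambda^{(0)}\neq\emptyset$, i.e.\ unless the unique $T\in\mathcal{T}_2$ satisfies the indicator. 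For the inductive step, write $X\doteq[A_{k-1},\ldots,A_0]^{(k)}\in\mathcal{U}_{\Lambda^{(0)}\cup\cdots\cup\Lambda^{(k-1)}}$; since $A_k$ is even, $[A_k,X]=0$ unless $\Lambda^{(k)}\cap\Lambda^{(j)}\neq\emptyset$ for some $j\leq k-1$, and otherwise $\Vert[A_k,X]\Vert_{\mathcal{U}}\leq 2\Vert A_k\Vert_{\mathcal{U}}\Vert X\Vert_{\mathcal{U}}$; applying the induction hypothesis to $X$ and using that adjoining a bond $\{j,k\}$ with $j\leq k-1$ to the trees of $\mathcal{T}_k$ produces exactly $\mathcal{T}_{k+1}$, by (\ref{def.tree}), closes the induction.

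\emph{Step 3 (reorganization).} Applying Step 2 to each summand of Step 1 with $\Lambda^{(j)}=\Lambda_{n_j}+x_j$ (and $n_0=m_0$) bounds the norm by $\sum_{n_1\geq m_1}\cdots\sum_{n_k\geq m_k}2^{k}\big(\prod_{j=0}^{k}\Vert\mathfrak{B}_{B_j,s_j,x_j}(n_j)\Vert_{\mathcal{U}}\big)\sum_{T\in\mathcal{T}_{k+1}}\varkappa_T(\{(n_j,x_j)\}_{j=0}^{k})$. I would then split the multi-index sum according to the set $J\subseteq\{1,\ldots,k\}$ of indices with $n_j>m_j$: for $j\notin J$ one has $n_j=m_j$ and $\Vert\mathfrak{B}_{B_j,s_j,x_j}(m_j)\Vert_{\mathcal{U}}=\Vert B_j\Vert_{\mathcal{U}}$, whereas for $j\in J$ the bound (\ref{bound B fract}) contributes the factor $2\Vert B_j\Vert_{\mathcal{U}}\Vert\Psi\Vert_{\mathcal{W}}\vert s_j\vert\mathrm{e}^{4\mathbf{D}\vert s_j\vert\Vert\Psi\Vert_{\mathcal{W}}}$ times the sum over $n_j$, $z_j\in\Lambda_{m_j}$ and $y_j\in\Lambda_{n_j}\backslash\Lambda_{n_j-1}$ of $\mathbf{F}(\vert z_j-y_j\vert)$. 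Writing $J=\{\pi(\ell),\ldots,\pi(k)\}$ with $\ell=k-|J|+1$ and $\pi\in\mathcal{S}_{\ell,k}$ the order-preserving enumeration of $J$ turns the sum over all nonempty $J$ into $\sum_{\ell=1}^{k}\sum_{\pi\in\mathcal{S}_{\ell,k}}$; factoring out $2^{k}\prod_{j=0}^{k}\Vert B_j\Vert_{\mathcal{U}}$ and collecting the $(2\Vert\Psi\Vert_{\mathcal{W}})^{|J|}$ then identifies the $J=\emptyset$ contribution with $\sum_{T}\varkappa_T(\{(m_j,x_j)\}_{j=0}^{k})$ and the remaining contribution with $\sum_{T}\Re_{T,\Vert\Psi\Vert_{\mathcal{W}}}$, under the convention $n_j=m_j$ for $j\notin\{0\}\cup J$; this is precisely the asserted inequality.

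The main obstacle is the bookkeeping in Step 3: one must verify that reorganizing the $k$-fold series over $(n_1,\ldots,n_k)$ by the ``moved'' index set $J$ matches the precise form of $\Re_{T,\alpha}$ in (\ref{def AT}) — the bijection between subsets of $\{1,\ldots,k\}$ of a given size and the order-preserving injections $\mathcal{S}_{\ell,k}$, the placement of $\varkappa_T$ inside the innermost sums over the $n_j$, and the convention fixing $n_j=m_j$ on the unmoved indices (including $j=0$). By contrast, the analytic ingredients — absolute convergence of the telescoping series, the elementary commutator inequality, and the propagation of evenness under $\chi_x$ and $\tau_t^{(n,x)}$ — are routine once Lemma \ref{Lemma Series representation of the dynamics} is granted.
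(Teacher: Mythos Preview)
Your proof is correct and follows essentially the same route as the paper. The only cosmetic difference is in Step~2: the paper first records the vanishing of the multi--commutator as the product of clustering indicators $\prod_{j=1}^{k}\mathbf{1}[\bigcup_{i<j}(\Lambda_{n_j}+x_j)\cap(\Lambda_{n_i}+x_i)\neq\emptyset]$ (their (\ref{ineq tot sup1})) and then bounds this product by $\sum_{T\in\mathcal{T}_{k+1}}\varkappa_T$ via the combinatorial inequality (\ref{cluster-tree}), whereas you fold both steps into a single induction on the norm estimate; the underlying induction is identical, since (\ref{cluster-tree}) is itself proved by the same recursion on $\mathcal{T}_{k+1}$ via (\ref{def.tree}). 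Your Step~3 spells out in full what the paper compresses into one sentence (``This inequality combined with (\ref{bound B fract}) yields the assertion''), and your bijection $J\leftrightarrow(\ell,\pi)$ with $\pi\in\mathcal{S}_{\ell,k}$ is exactly the bookkeeping needed to recover the form of $\Re_{T,\alpha}$.
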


\begin{proof}
Fix $k\in \mathbb{N}$, $\{m_{j}\}_{j=0}^{k}\subset \mathbb{N}_{0}$, $%
\{s_{j}\}_{j=1}^{k}\subset \mathbb{R}$, $\{x_{j}\}_{j=0}^{k}\subset
\mathfrak{L}$ and elements $\{B_{j}\}_{j=0}^{k}\subset \mathcal{U}_{0}$ such
that the conditions of the theorem are satisfied. From Lemma \ref{Lemma
Series representation of the dynamics},
\begin{eqnarray}
&&\left[ \tau _{s_{k}}\circ \chi _{x_{k}}(B_{k}),\ldots ,\tau _{s_{1}}\circ
\chi _{x_{1}}(B_{1}),\chi _{x_{0}}(B_{0})\right] ^{(k+1)}
\label{ineq tot sup} \\
&=&\sum_{n_{1}=m_{1}}^{\infty }\cdots \sum_{n_{k}=m_{k}}^{\infty }\left[
\mathfrak{B}_{B_{k},s_{k},x_{k}}\left( n_{k}\right) ,\ldots ,\mathfrak{B}%
_{B_{1},s_{1},x_{1}}\left( n_{1}\right) ,\chi _{x_{0}}(B_{0})\right]
^{(k+1)}\ .  \notag
\end{eqnarray}%
Since $B_{j}\in \mathcal{U}_{\Lambda _{m_{j}}}\cap \mathcal{U}^{+}$ for $%
j\in \{1,\ldots ,k\}$, we infer from (\ref{def B frac1})--(\ref{def B frac2}%
) that
\begin{eqnarray}
&&\left[ \mathfrak{B}_{B_{k},s_{k},x_{k}}\left( n_{k}\right) ,\ldots ,%
\mathfrak{B}_{B_{1},s_{1},x_{1}}\left( n_{1}\right) ,\chi _{x_{0}}(B_{0})%
\right] ^{(k+1)}  \notag \\
&=&\prod_{j=1}^{k}\mathbf{1}\left[ \bigcup_{i=0}^{j-1}\left( \Lambda
_{n_{j}}+x_{j}\right) \cap \left( \Lambda _{n_{i}}+x_{i}\right) \neq
\emptyset \right]  \label{ineq tot sup1} \\
&&\left[ \mathfrak{B}_{B_{k},s_{k},x_{k}}\left( n_{k}\right) ,\ldots ,%
\mathfrak{B}_{B_{1},s_{1},x_{1}}\left( n_{1}\right) ,\chi _{x_{0}}(B_{0})%
\right] ^{(k+1)}  \notag
\end{eqnarray}%
for all integers $\{n_{j}\}_{j=0}^{k}\subset \mathbb{N}_{0}$ with $%
n_{0}\doteq m_{0}$ and $n_{j}\geq m_{j}$ when $j\in \{1,\ldots ,k\}$. The
conditions inside characteristic functions in (\ref{ineq tot sup1}) refer to
the fact that the sequence of sets $\{\Lambda _{n_{j}}\}_{j=0}^{k}$ has to
be a cluster to have a non--zero multi--commutator. Note further that%
\begin{equation}
\prod_{j=1}^{k}\mathbf{1}\left[ \bigcup_{i=0}^{j-1}\left( \Lambda
_{n_{j}}+x_{j}\right) \cap \left( \Lambda _{n_{i}}+x_{i}\right) \neq
\emptyset \right] \leq \sum_{T\in \mathcal{T}_{k+1}}\varkappa _{T}\left(
\left\{ (n_{j},x_{j})\right\} _{j=0}^{k}\right)
\text{ }.  \label{cluster-tree}
\end{equation}%
Using (\ref{ineq tot sup})--(\ref{cluster-tree}) one then shows that%
\begin{eqnarray}
&&\left\Vert \left[ \tau _{s_{k}}\circ \chi _{x_{k}}(B_{k}),\ldots ,\tau
_{s_{1}}\circ \chi _{x_{1}}(B_{1}),\chi _{x_{0}}(B_{0})\right]
^{(k+1)}\right\Vert _{\mathcal{U}}  \notag \\
&\leq &2^{k}\left\Vert B_{0}\right\Vert _{\mathcal{U}}\sum_{T\in \mathcal{T}%
_{k+1}}\sum_{n_{1}=m_{1}}^{\infty }\cdots \sum_{n_{k}=m_{k}}^{\infty
}\varkappa _{T}\left( \left\{ (n_{j},x_{j})\right\} _{j=0}^{k}\right)  \notag
\\
&&\qquad \qquad \qquad \qquad \qquad \qquad \times
\prod\limits_{j=1}^{k}\left\Vert \mathfrak{B}_{B_{j},s_{j},x_{j}}\left(
n_{j}\right) \right\Vert _{\mathcal{U}}\ .  \label{inequality sup LR multi}
\end{eqnarray}%
This inequality combined with (\ref{bound B fract}) yields the assertion.
\end{proof}

\noindent The above theorem extends Lieb--Robinson bounds to
multi--commutators. Indeed, if $\mathbf{F}(r)$ decays fast enough as $%
r\rightarrow \infty $, then Theorem \ref{Theorem Lieb-Robinson copy(1)} and
Lebesgue's dominated convergence theorem imply that, for any $j\in
\{0,\ldots ,k\}$,
\begin{equation}
\underset{\left\vert x_{j}\right\vert \rightarrow \infty }{\lim }\left\Vert %
\left[ \tau _{s_{k}}\circ \chi _{x_{k}}(B_{k}),\ldots ,\tau _{s_{1}}\circ
\chi _{x_{1}}(B_{1}),\chi _{x_{0}}(B_{0})\right] ^{(k+1)}\right\Vert _{%
\mathcal{U}}=0\ .  \label{mult comm a borner}
\end{equation}

The rate of convergence if this multi--commutator towards zero is, however,
a priori unclear. Hence, to obtain bounds on the space decay of the above
multi--commutator, more in the spirit of the original Lieb--Robinson bounds
for commutators, we consider two situations w.r.t. the behavior of the
function $\mathbf{F}:\mathbb{R}_{0}^{+}\rightarrow \mathbb{R}^{+}$ at large
arguments:

\begin{itemize}
\item \emph{Polynomial decay.}
\index{Decay function!polynomial decay}There is a constant $\varsigma \in
\mathbb{R}^{+}$ and, for all $m\in \mathbb{N}_{0}$, an absolutely summable
sequence $\{\mathbf{u}_{n,m}\}_{n\in \mathbb{N}}\in \ell ^{1}(\mathbb{N})$
such that, for all $n\in \mathbb{N}$ with $n>m$,%
\begin{equation}
|\Lambda _{n}\backslash \Lambda _{n-1}|\sum_{z\in \Lambda _{m}}\max_{y\in
\Lambda _{n}\backslash \Lambda _{n-1}}\mathbf{F}\left( \left\vert
z-y\right\vert \right) \leq
\frac{\mathbf{u}_{n,m}}{\left( 1+n\right) ^{\varsigma }}\text{ }.
\label{(3.3) NS generalized0}
\end{equation}

\item \emph{Exponential decay.}
\index{Decay function!exponential decay}There is $\varsigma \in \mathbb{R}%
^{+}$ and, for $m\in \mathbb{N}_{0}$, a constant $\mathbf{C}_{m}\in \mathbb{R%
}^{+}$ such that, for all $n\in \mathbb{N}$ with $n>m$,%
\begin{equation}
|\Lambda _{n}\backslash \Lambda _{n-1}|\sum_{z\in \Lambda _{m}}\max_{y\in
\Lambda _{n}\backslash \Lambda _{n-1}}\mathbf{F}\left( \left\vert
z-y\right\vert \right) \leq \mathbf{C}_{m}\mathrm{e}^{-2\varsigma n}%
\text{ }.  \label{(3.3) NS generalized}
\end{equation}
\end{itemize}

\noindent For sufficiently large $\epsilon \in \mathbb{R}^{+}$, the function
(\ref{example polynomial}) clearly satisfies Condition (\ref{(3.3) NS
generalized0}), while (\ref{(3.3) NS generalized0})--(\ref{(3.3) NS
generalized}) hold for the choice
\begin{equation}
\mathbf{F}\left( r\right) =\mathrm{e}^{-2\varsigma r}(1+r)^{-(d+\epsilon )}\
,\qquad r\in \mathbb{R}_{0}^{+}\ ,  \label{example}
\end{equation}%
with arbitrary $\varsigma ,\epsilon \in \mathbb{R}^{+}$. Under one of these
both very general assumptions, one can put the upper bound of Theorem \ref%
{Theorem Lieb-Robinson copy(1)} in a much more convenient form. In fact, one
obtains an estimate on the norm of the multi--commutator (\ref{mult comm a
borner}) as a function of the distances between the points $\{x_{0},\ldots
,x_{k}\}$, like in the usual Lieb--Robinson bounds (i.e., the special case $%
k=2$). To formulate such bounds, we need some preliminary definitions
related to properties of trees.

For any $k\in \mathbb{N}$ and $T\in \mathcal{T}_{k+1}$, we define the
sequence $\mathfrak{d}_{T}\equiv \{\mathfrak{d}_{T}(j)\}_{j=0}^{k}$ in $%
\{1,\ldots ,k\}$ by%
\begin{equation*}
\mathfrak{d}_{T}(j)\doteq |\{b\in T\text{ }:\text{ }j\in b\}|\text{ },\qquad
j\in \{0,\ldots ,k\}\text{ },
\end{equation*}%
i.e., $\mathfrak{d}_{T}(j)$ is the
\index{Trees!degree of a vertex}\emph{degree} of the $j$--th vertex of the
tree $T$. For $k\in \mathbb{N}$ and $T\in \mathcal{T}_{k+1}$, observe that%
\begin{equation}
\mathfrak{d}_{T}(0)+\cdots +\mathfrak{d}_{T}(k)=2k\ .
\label{inequality easy}
\end{equation}%
We also introduce the following notation:%
\begin{equation*}
\mathfrak{d}_{T}!\doteq \mathfrak{d}_{T}(0)!\cdots \mathfrak{d}_{T}(k)!
\end{equation*}%
for any tree $T\in \mathcal{T}_{k+1}$, $k\in \mathbb{N}$. The degree of any
vertex of a tree is at least $1$, by connectedness of such a graph, and (\ref%
{inequality easy}) yields
\begin{equation}
\mathfrak{d}_{T}!\leq k!%
\text{ },\qquad k\in \mathbb{N}\ ,\ T\in \mathcal{T}_{k+1}\text{ }.
\label{inequality easy2}
\end{equation}%
For any $k\in \mathbb{N}$, $T\in \mathcal{T}_{k+1}$, and any sequence $f:%
\mathbb{N}_{0}\rightarrow \mathbb{R}^{+}$, note that%
\begin{equation}
\prod\limits_{j=0}^{k}\left\{ f\left( j\right) \right\} ^{\mathfrak{d}%
_{T}(j)}=\prod\limits_{j=1}^{k}f\left( j\right) f\left( \mathrm{P}_{T}\left(
j\right) \right) \ .  \label{equality pivotal}
\end{equation}%
This property is elementary but pivotal to estimate the remainder $\Re
_{T,\alpha }$, defined by (\ref{def AT}), of Theorem \ref{Theorem
Lieb-Robinson copy(1)}.

\begin{satz}[Lieb--Robinson bounds for multi--commutators -- Part II]
\label{theorem exp tree decay copy(1)}\mbox{
}\newline
Let $\alpha \in \mathbb{R}_{0}^{+}$, $k\in \mathbb{N}$, $\{m_{j}\}_{j=0}^{k}%
\subset \mathbb{N}_{0}$, $\{s_{j}\}_{j=1}^{k}\subset \mathbb{R}$, $%
\{x_{j}\}_{j=0}^{k}\subset \mathfrak{L}$, and $T\in \mathcal{T}_{k+1}$.
Depending on decay properties of the function $\mathbf{F}:\mathbb{R}%
_{0}^{+}\rightarrow \mathbb{R}^{+}$, the coefficient $\Re _{T,\alpha }\in
\mathbb{R}_{0}^{+}$ defined by (\ref{def AT}) satisfies the following bounds:%
\index{Lieb--Robinson bounds!multi--commutators}\newline
\emph{(i)} Polynomial decay: Assume (\ref{(3.3) NS generalized0}). Then,%
\begin{eqnarray*}
\Re _{T,\alpha } &\leq &d^{%
\frac{\varsigma k}{2}}\sum_{\ell =1}^{k}\left( 2\alpha \right) ^{k-\ell
+1}\sum_{\pi \in \mathcal{S}_{\ell ,k}}\left( \prod\limits_{j\in \{\pi (\ell
),\ldots ,\pi (k)\}}\left\Vert \mathbf{u}_{\cdot ,m_{j}}\right\Vert _{\ell
^{1}(\mathbb{N})}\left\vert s_{j}\right\vert \mathrm{e}^{4\mathbf{D}%
\left\vert s_{j}\right\vert \alpha }\right) \\
&&\left( \prod\limits_{j\in \left\{ 0,\ldots ,k\right\} \backslash \left\{
\pi \left( \ell \right) ,\ldots ,\pi \left( k\right) \right\}
}(1+m_{j})^{\varsigma }\right) \prod\limits_{\{j,l\}\in T}\frac{1}{%
(1+\left\vert x_{j}-x_{l}\right\vert )^{\varsigma \left( \max \{\mathfrak{d}%
_{T}(j),\mathfrak{d}_{T}(l)\}\right) ^{-1}}}\ .
\end{eqnarray*}%
\emph{(ii)} Exponential decay: Assume (\ref{(3.3) NS generalized}). Then,%
\begin{eqnarray*}
\Re _{T,\alpha } &\leq &\sum_{\ell =1}^{k}\left( \frac{2\alpha }{\mathrm{e}%
^{\varsigma }-1}\right) ^{k-\ell +1}\sum_{\pi \in \mathcal{S}_{\ell
,k}}\left( \prod\limits_{j\in \{\pi (\ell ),\ldots ,\pi (k)\}}\mathbf{C}%
_{m_{j}}\left\vert s_{j}\right\vert \mathrm{e}^{4\mathbf{D}\left\vert
s_{j}\right\vert \alpha -\varsigma m_{j}}\right) \\
&&\left( \prod\limits_{j\in \left\{ 0,\ldots ,k\right\} \backslash \left\{
\pi \left( \ell \right) ,\ldots ,\pi \left( k\right) \right\} }\mathrm{e}%
^{\varsigma m_{j}}\right) \prod\limits_{\{j,l\}\in T}\exp \left( -\frac{%
\varsigma \left\vert x_{j}-x_{l}\right\vert }{\sqrt{d}\max \{\mathfrak{d}%
_{T}(j),\mathfrak{d}_{T}(l)\}}\right) \ .
\end{eqnarray*}
\end{satz}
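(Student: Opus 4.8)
The plan is to estimate the combinatorial coefficient $\Re _{T,\alpha }$ of (\ref{def AT}) term by term, handling the polynomial and exponential cases in parallel. The outer summations over $\ell \in \{1,\dots ,k\}$ and over $\pi \in \mathcal{S}_{\ell ,k}$, together with the time factors $(2\alpha )^{k-\ell +1}\prod_{j}\left\vert s_{j}\right\vert \mathrm{e}^{4\mathbf{D}\alpha \left\vert s_{j}\right\vert }$, are simply carried along, so it suffices to bound, for fixed $\ell $ and $\pi $, the nested lattice sum over $\{n_{j}\}_{j\in A}$ (with $A\doteq \{\pi (\ell ),\dots ,\pi (k)\}$ and $n_{j}\doteq m_{j}$ for $j\notin A$) of $\varkappa _{T}(\{(n_{j},x_{j})\}_{j=0}^{k})$ times $\prod_{j\in A}\sum_{z_{j}\in \Lambda _{m_{j}}}\sum_{y_{j}\in \Lambda _{n_{j}}\setminus \Lambda _{n_{j}-1}}\mathbf{F}(\left\vert z_{j}-y_{j}\right\vert )$. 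The first step is to dominate each inner double sum, for $j\in A$, by $\mathbf{u}_{n_{j},m_{j}}(1+n_{j})^{-\varsigma }$ under (\ref{(3.3) NS generalized0}), respectively by $\mathbf{C}_{m_{j}}\mathrm{e}^{-2\varsigma n_{j}}$ under (\ref{(3.3) NS generalized}), using $\sum_{y\in \Lambda _{n}\setminus \Lambda _{n-1}}\mathbf{F}(\left\vert z-y\right\vert )\leq \left\vert \Lambda _{n}\setminus \Lambda _{n-1}\right\vert \max_{y\in \Lambda _{n}\setminus \Lambda _{n-1}}\mathbf{F}(\left\vert z-y\right\vert )$.

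Next I would extract from \emph{every} vertex $j\in \{0,\dots ,k\}$ a common ``distance budget'': one factor $(1+n_{j})^{-\varsigma }$ in the polynomial case, one factor $\mathrm{e}^{-\varsigma n_{j}}$ in the exponential case. For an active $j\in A$ this factor is already available --- in the exponential case one splits $\mathrm{e}^{-2\varsigma n_{j}}=\mathrm{e}^{-\varsigma n_{j}}\,\mathrm{e}^{-\varsigma n_{j}}$ and reserves the second factor for the summation in the last step. For an inactive $j\notin A$ (in particular the vertex $j=0$, which is always inactive since $\pi $ takes values in $\{1,\dots ,k\}$), where $n_{j}=m_{j}$ and no decay has been produced, one simply inserts $1=(1+m_{j})^{\varsigma }(1+m_{j})^{-\varsigma }$, respectively $1=\mathrm{e}^{\varsigma m_{j}}\mathrm{e}^{-\varsigma m_{j}}$, and sets the growing factor aside; this is exactly the origin of the products $\prod_{j\notin A}(1+m_{j})^{\varsigma }$, respectively $\prod_{j\notin A}\mathrm{e}^{\varsigma m_{j}}$, in the statement. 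Putting $f(j)\doteq (1+n_{j})^{-\varsigma /\mathfrak{d}_{T}(j)}$ (resp. $\mathrm{e}^{-\varsigma n_{j}/\mathfrak{d}_{T}(j)}$), the pivotal identity (\ref{equality pivotal}) rewrites the product $\prod_{j=0}^{k}\{f(j)\}^{\mathfrak{d}_{T}(j)}$ of all distance budgets as the edge product $\prod_{\{j,l\}\in T}f(j)f(l)$.

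The characteristic function $\varkappa _{T}$ now enters: by its definition (\ref{definition chqrqcteristic}), $\varkappa _{T}=1$ forces $(\Lambda _{n_{j}}+x_{j})\cap (\Lambda _{n_{l}}+x_{l})\neq \emptyset $, and since the $\Lambda _{n}$ are cubes this yields $n_{j}+n_{l}\geq \left\vert x_{j}-x_{l}\right\vert /\sqrt{d}$ for every edge $\{j,l\}\in T$. On each edge I first bound $f(j)f(l)$ from above by replacing both exponents $\varsigma /\mathfrak{d}_{T}(j)$ and $\varsigma /\mathfrak{d}_{T}(l)$ by the smaller $\varsigma /\max \{\mathfrak{d}_{T}(j),\mathfrak{d}_{T}(l)\}$ (legitimate since $\mathfrak{d}_{T}(j),\mathfrak{d}_{T}(l)\leq \max \{\mathfrak{d}_{T}(j),\mathfrak{d}_{T}(l)\}$), which gives $(1+n_{j}+n_{l})^{-\varsigma /\max }$ (resp. $\mathrm{e}^{-\varsigma (n_{j}+n_{l})/\max }$), and then invoke the above constraint to replace $n_{j}+n_{l}$ by $\left\vert x_{j}-x_{l}\right\vert /\sqrt{d}$. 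In the exponential case this directly produces the claimed factor $\exp (-\varsigma \left\vert x_{j}-x_{l}\right\vert /(\sqrt{d}\max ))$; in the polynomial case one tidies up using $1+\left\vert x_{j}-x_{l}\right\vert /\sqrt{d}\geq (1+\left\vert x_{j}-x_{l}\right\vert )/\sqrt{d}$ (valid as $d\geq 1$) and $\max \geq 1$, at a cost of a factor $d^{\varsigma /2}$ per edge, hence $d^{\varsigma k/2}$ overall since $T$ has exactly $k$ edges. Finally, for each $j\in A$ one carries out the remaining summation over $n_{j}>m_{j}$: in the polynomial case what is left is $\sum_{n_{j}>m_{j}}\mathbf{u}_{n_{j},m_{j}}\leq \left\Vert \mathbf{u}_{\cdot ,m_{j}}\right\Vert _{\ell ^{1}(\mathbb{N})}$, and in the exponential case $\sum_{n_{j}>m_{j}}\mathrm{e}^{-\varsigma n_{j}}=\mathrm{e}^{-\varsigma m_{j}}/(\mathrm{e}^{\varsigma }-1)$, which is the source of the factor $\mathrm{e}^{-\varsigma m_{j}}$ attached to each active vertex and of the overall constant $(\mathrm{e}^{\varsigma }-1)^{-(k-\ell +1)}$. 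Reassembling all the factors and restoring the outer sums over $\ell $, $\pi $ and the time factors yields the two asserted bounds.

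The routine parts are the inner $\mathbf{F}$-sum estimate and the final geometric, resp. $\ell ^{1}$, summations. The step requiring care is the budget bookkeeping in the middle: one must check that, after distributing the exponent $\varsigma $ equally over the $\mathfrak{d}_{T}(j)$ edges incident to each vertex $j$ and downgrading every per-edge exponent to $\varsigma /\max \{\mathfrak{d}_{T}(j),\mathfrak{d}_{T}(l)\}$, the total exponent charged to vertex $j$ does not exceed $\varsigma $ --- which follows because summing $\varsigma /\max $ over the $\mathfrak{d}_{T}(j)$ edges at $j$ gives at most $\mathfrak{d}_{T}(j)\cdot \varsigma /\mathfrak{d}_{T}(j)=\varsigma $, together with the degree identity (\ref{inequality easy}) --- and that the insertions at inactive vertices produce exactly the powers $(1+m_{j})^{\varsigma }$, resp. $\mathrm{e}^{\varsigma m_{j}}$, and nothing more. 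With the identity (\ref{equality pivotal}) as the organizing device, both inequalities then follow by direct estimation.
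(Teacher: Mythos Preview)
Your proposal is correct and follows essentially the same route as the paper's proof. The paper phrases the separation of the $\ell^{1}$-weight $\mathbf{u}_{n_{j},m_{j}}$ from the tree factor as an application of H\"{o}lder's inequality (bounding $\sum_{n}\mathbf{u}_{n,m}\,g(n)\leq \Vert \mathbf{u}_{\cdot ,m}\Vert_{\ell^{1}}\max_{n}g(n)$), whereas you bound the edge product uniformly in the $n_{j}$ first and then sum --- but these are the same manoeuvre, and the remaining steps (insertion of $(1+m_{j})^{\pm\varsigma}$ at inactive vertices, use of the identity (\ref{equality pivotal}), the geometric constraint $\sqrt{d}(n_{j}+n_{l})\geq |x_{j}-x_{l}|$ from $\varkappa_{T}=1$, and the final $d^{\varsigma/2}$ bookkeeping per edge) are identical to the paper's.
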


\begin{proof}
(i) Fix all parameters of the theorem. We infer from (\ref{def AT}) and (\ref%
{(3.3) NS generalized0}) that%
\begin{eqnarray*}
\Re _{T,\alpha } &\leq &\sum_{\ell =1}^{k}\left( 2\alpha \right) ^{k-\ell
+1}\sum_{\pi \in \mathcal{S}_{\ell ,k}}\left( \prod\limits_{j\in \{\pi (\ell
),\ldots ,\pi (k)\}}\left\vert s_{j}\right\vert \mathrm{e}^{4\mathbf{D}%
\left\vert s_{j}\right\vert \alpha }\right) \sum_{n_{\pi \left( \ell \right)
}=m_{\pi \left( \ell \right) }+1}^{\infty } \\
&&\cdots \sum_{n_{\pi \left( k\right) }=m_{\pi \left( k\right) }+1}^{\infty
}\varkappa _{T}\left( \left\{ (n_{j},x_{j})\right\} _{j=0}^{k}\right)
\prod\limits_{j\in \{\pi (\ell ),\ldots ,\pi (k)\}}\frac{\mathbf{u}%
_{n_{j},m_{j}}}{\left( 1+n_{j}\right) ^{\varsigma }}\ .
\end{eqnarray*}%
Recall that $n_{j}\doteq m_{j}$ when $j\in \left\{ 0,\ldots ,k\right\}
\backslash \left\{ \pi \left( \ell \right) ,\ldots ,\pi \left( k\right)
\right\} $. By H\"{o}lder's inequality, it follows that%
\begin{eqnarray}
\Re _{T,\alpha } &\leq &\sum_{\ell =1}^{k}\left( 2\alpha \right) ^{k-\ell
+1}\sum_{\pi \in \mathcal{S}_{\ell ,k}}\left( \prod\limits_{j\in \{\pi (\ell
),\ldots ,\pi (k)\}}\left\Vert \mathbf{u}_{\cdot ,m_{j}}\right\Vert _{\ell
^{1}(\mathbb{N})}\left\vert s_{j}\right\vert \mathrm{e}^{4\mathbf{D}%
\left\vert s_{j}\right\vert \alpha }\right)  \label{ohmV0} \\
&&\times \max_{n_{\pi (\ell )},\ldots ,n_{\pi (k)}\in \mathbb{N}}\left\{
\varkappa _{T}\left( \left\{ (n_{j},x_{j})\right\} _{j=0}^{k}\right)
\prod\limits_{j\in \left\{ \pi \left( \ell \right) ,\ldots ,\pi \left(
k\right) \right\} }\frac{1}{\left( 1+n_{j}\right) ^{\varsigma }}\right\} \ .
\notag
\end{eqnarray}%
Therefore, it suffices to bound the above maximum in an appropriate way.
Using (\ref{equality pivotal}), note that%
\begin{eqnarray}
\prod\limits_{j=0}^{k}\frac{1}{\left( 1+n_{j}\right) ^{\varsigma }}
&=&\prod\limits_{j=0}^{k}\left( \frac{1}{\left( 1+n_{j}\right) ^{\frac{%
\varsigma }{\mathfrak{d}_{T}(j)}}}\right) ^{\mathfrak{d}_{T}(j)}  \notag \\
&=&\prod\limits_{j=1}^{k}\frac{1}{\left( 1+n_{j}\right) ^{\frac{\varsigma }{%
\mathfrak{d}_{T}(j)}}\left( 1+n_{\mathrm{P}_{T}(j)}\right) ^{\frac{\varsigma
}{\mathfrak{d}_{T}(\mathrm{P}_{T}(j))}}}  \notag \\
&\leq &\prod\limits_{j=1}^{k}\frac{1}{\left( 1+n_{j}+n_{\mathrm{P}%
_{T}(j)}\right) ^{\frac{\varsigma }{\mathfrak{m}_{T}(j)}}}\text{ },
\label{ine cool ohmV1bis}
\end{eqnarray}%
where, for $k\in \mathbb{N}$, any tree $T\in \mathcal{T}_{k+1}$, and $j\in
\left\{ 1,\ldots ,k\right\} $,%
\begin{equation*}
\mathfrak{m}_{T}(j)\doteq \max \{\mathfrak{d}_{T}(j),\mathfrak{d}_{T}(%
\mathrm{P}_{T}(j))\}\text{ }.
\end{equation*}%
Meanwhile, the condition
\begin{equation*}
(\Lambda _{n_{j}}+x_{j})\cap (\Lambda _{n_{\mathrm{P}_{T}(j)}}+x_{\mathrm{P}%
_{T}(j)})\neq \emptyset
\end{equation*}%
implies
\begin{equation}
\sqrt{d}(n_{j}+n_{\mathrm{P}_{T}(j)})\geq |x_{j}-x_{\mathrm{P}_{T}(j)}|\ .
\label{ine cool ohmV2}
\end{equation}%
Therefore, we infer from (\ref{ine cool ohmV1bis})--(\ref{ine cool ohmV2})
that%
\begin{eqnarray*}
&&\max_{n_{\pi (\ell )},\ldots ,n_{\pi (k)}\in \mathbb{N}}\left\{ \varkappa
_{T}\left( \left\{ (n_{j},x_{j})\right\} _{j=0}^{k}\right)
\prod\limits_{j\in \left\{ \pi \left( \ell \right) ,\ldots ,\pi \left(
k\right) \right\} }\frac{1}{\left( 1+n_{j}\right) ^{\varsigma }}\right\} \\
&\leq &\left( \prod\limits_{j\in \left\{ 0,\ldots ,k\right\} \backslash
\left\{ \pi \left( \ell \right) ,\ldots ,\pi \left( k\right) \right\}
}(1+n_{j})^{\varsigma }\right) \prod\limits_{j=1}^{k}\frac{d^{\frac{%
\varsigma }{2}}}{(1+|x_{j}-x_{\mathrm{P}_{T}(j)}|)^{\frac{\varsigma }{%
\mathfrak{m}_{T}(j)}}}\ .
\end{eqnarray*}%
Combined with (\ref{ohmV0}), this last inequality yields Assertion (i).

(ii) The second assertion is proven exactly in the same way. We omit the
details.
\end{proof}

We defined in \cite[Section 4]{OhmI} the concept of \emph{tree--decay bounds}
for pairs $(\rho ,\tau )$, where $\rho \in \mathcal{U}^{\ast }$\ and $\tau
\equiv \{\tau _{t}\}_{t\in {\mathbb{R}}}$ are respectively any state and any
one--parameter group of $\ast $--automor%
\-%
phisms on the $C^{\ast }$--algebra $\mathcal{U}$. They are a useful tool to
control multi--commu%
\-%
tators of products of annihilation and creation operators. Such bounds are
related to cluster or graph expansions in statistical physics. For more
details see the preliminary discussions of \cite[Section 4]{OhmI}. As a
straightforward corollary of Theorems \ref{Theorem Lieb-Robinson copy(1)}--%
\ref{theorem exp tree decay copy(1)} we give below an extension of the
tree--decay bounds\emph{\ }\cite[Section 4]{OhmI} to the case of interacting
fermions on lattices:

\begin{koro}[Tree--decay bounds]
\label{theorem exp tree decay}\mbox{
}\newline
Let $\Psi \in \mathcal{W}$, $\mathbf{V}$ be any potential, $k\in \mathbb{N}$%
, $m_{0}\in \mathbb{N}_{0}$, $t\in \mathbb{R}_{0}^{+}$, $\{s_{j}\}_{j=1}^{k}%
\subset \lbrack -t,t]$, $B_{0}\subset \mathcal{U}_{\Lambda _{m_{0}}}$,\ and $%
\{x_{j}\}_{j=0}^{k},\{z_{j}\}_{j=1}^{k}\subset \mathfrak{L}$ such that $%
|z_{j}|=1$ for $j\in \{1,\ldots ,k\}$.%
\index{Tree--decay bounds} \newline
\emph{(i)} Polynomial decay: Assume (\ref{(3.3) NS generalized0}) for $m=1$.
Then,%
\begin{eqnarray*}
&&\left\Vert \left[ \tau _{s_{k}}(a_{x_{k}}^{\ast }a_{x_{k}+z_{k}}),\ldots
,\tau _{s_{1}}(a_{x_{1}}^{\ast }a_{x_{1}+z_{1}}),\chi _{x_{0}}(B_{0})\right]
^{(k+1)}\right\Vert _{\mathcal{U}} \\
&\leq &\left\Vert B_{0}\right\Vert _{\mathcal{U}}(1+m_{0})^{\varsigma }%
\mathbf{K}_{0}^{k}\sum_{T\in \mathcal{T}_{k+1}}\prod\limits_{\{j,l\}\in T}%
\frac{1}{(1+\left\vert x_{j}-x_{l}\right\vert )^{\varsigma \left( \max \{%
\mathfrak{d}_{T}(j),\mathfrak{d}_{T}(l)\}\right) ^{-1}}}
\end{eqnarray*}%
with
\begin{equation*}
\mathbf{K}_{0}\doteq 2d^{\frac{\varsigma }{2}}\left( 2^{\varsigma
}+2\left\Vert \mathbf{u}_{\cdot ,1}\right\Vert _{\ell ^{1}(\mathbb{N})}\Vert
\Psi \Vert _{\mathcal{W}}\left\vert t\right\vert \mathrm{e}^{4\mathbf{D}%
\left\vert t\right\vert \left\Vert \Psi \right\Vert _{\mathcal{W}}}\right) \
.
\end{equation*}%
\emph{(ii)} Exponential decay: Assume (\ref{(3.3) NS generalized}) for $m=1$%
. Then,%
\begin{eqnarray*}
&&\left\Vert \left[ \tau _{s_{k}}(a_{x_{k}}^{\ast }a_{x_{k}+z_{k}}),\ldots
,\tau _{s_{1}}(a_{x_{1}}^{\ast }a_{x_{1}+z_{1}}),\chi _{x_{0}}(B_{0})\right]
^{(k+1)}\right\Vert _{\mathcal{U}} \\
&\leq &\left\Vert B_{0}\right\Vert _{\mathcal{U}}\mathrm{e}^{m_{0}\varsigma }%
\mathbf{K}_{1}^{k}\sum_{T\in \mathcal{T}_{k+1}}\prod\limits_{\{j,l\}\in
T}\exp \left( -\frac{\varsigma \left\vert x_{j}-x_{l}\right\vert }{\sqrt{d}%
\max \{\mathfrak{d}_{T}(j),\mathfrak{d}_{T}(l)\}}\right)
\end{eqnarray*}%
with%
\begin{equation*}
\mathbf{K}_{1}\doteq 2\left( \mathrm{e}^{\varsigma }+\frac{2\mathbf{C}%
_{1}\Vert \Psi \Vert _{\mathcal{W}}\left\vert t\right\vert \mathrm{e}^{4%
\mathbf{D}\left\vert t\right\vert \left\Vert \Psi \right\Vert _{\mathcal{W}}}%
}{\mathrm{e}^{2\varsigma }-\mathrm{e}^{\varsigma }}\right) \ .
\end{equation*}
\end{koro}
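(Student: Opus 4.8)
The plan is to read this corollary off Theorems~\ref{Theorem Lieb-Robinson copy(1)} and~\ref{theorem exp tree decay copy(1)} by specializing the inner observables to nearest--neighbour hopping terms and then resumming the tree and permutation combinatorics into $k$--th powers. First I would rewrite, for each $j\in\{1,\ldots,k\}$, the operator $a_{x_{j}}^{\ast}a_{x_{j}+z_{j}}$ as $\chi_{x_{j}}(B_{j})$ with $B_{j}\doteq a_{0}^{\ast}a_{z_{j}}$, which is legitimate by~(\ref{translation}); here $|z_{j}|=1$ forces $\{0,z_{j}\}\subset\Lambda_{1}$, hence $B_{j}\in\mathcal{U}_{\Lambda_{1}}$, the element $B_{j}$ is gauge invariant and in particular even since $\sigma_{\theta}(a_{0}^{\ast}a_{z_{j}})=\mathrm{e}^{i\theta}a_{0}^{\ast}\,\mathrm{e}^{-i\theta}a_{z_{j}}=a_{0}^{\ast}a_{z_{j}}$, and $\Vert B_{j}\Vert_{\mathcal{U}}\leq\Vert a_{0}^{\ast}\Vert_{\mathcal{U}}\Vert a_{z_{j}}\Vert_{\mathcal{U}}=1$ by~(\ref{equation stupide}). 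Applying Theorem~\ref{Theorem Lieb-Robinson copy(1)} with these $B_{j}\in\mathcal{U}_{0}\cap\mathcal{U}^{+}$, with $m_{j}=1$ for $j\in\{1,\ldots,k\}$, and with $B_{0}$, $m_{0}$, $s_{j}$, $x_{j}$ as given then produces
\begin{equation*}
\left\Vert \left[ \tau _{s_{k}}(a_{x_{k}}^{\ast }a_{x_{k}+z_{k}}),\ldots ,\tau _{s_{1}}(a_{x_{1}}^{\ast }a_{x_{1}+z_{1}}),\chi _{x_{0}}(B_{0})\right] ^{(k+1)}\right\Vert _{\mathcal{U}}\leq 2^{k}\left\Vert B_{0}\right\Vert _{\mathcal{U}}\sum_{T\in \mathcal{T}_{k+1}}\Big(\varkappa _{T}(\{(m_{j},x_{j})\}_{j=0}^{k})+\Re _{T,\Vert \Psi \Vert _{\mathcal{W}}}\Big)
\end{equation*}
with $m_{j}=1$ for $j\geq 1$. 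It then suffices to bound the two summands for each fixed tree $T$.

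For $\Re _{T,\Vert \Psi \Vert _{\mathcal{W}}}$ I would invoke Theorem~\ref{theorem exp tree decay copy(1)}, its part~(i) under assumption~(\ref{(3.3) NS generalized0}) and its part~(ii) under assumption~(\ref{(3.3) NS generalized}), evaluated at $\alpha=\Vert \Psi \Vert _{\mathcal{W}}$ and at $m_{j}=1$ for $j\geq 1$ (so that only the hypotheses at $m=1$ are used). Then $\Vert \mathbf{u}_{\cdot ,m_{j}}\Vert _{\ell ^{1}(\mathbb{N})}=\Vert \mathbf{u}_{\cdot ,1}\Vert _{\ell ^{1}(\mathbb{N})}$ (resp. $\mathbf{C}_{m_{j}}=\mathbf{C}_{1}$) for $j\geq 1$, and monotonicity of $r\mapsto r\,\mathrm{e}^{4\mathbf{D}\alpha r}$ together with $|s_{j}|\leq t$ gives $|s_{j}|\,\mathrm{e}^{4\mathbf{D}\alpha |s_{j}|}\leq t\,\mathrm{e}^{4\mathbf{D}\alpha t}$. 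In the factor $\prod _{j\in \{0,\ldots ,k\}\setminus \{\pi (\ell ),\ldots ,\pi (k)\}}(1+m_{j})^{\varsigma }$ of~(\ref{def AT}) the index $j=0$ is always present, since $\pi \in \mathcal{S}_{\ell ,k}$ takes values in $\{1,\ldots ,k\}$ by~(\ref{SLK}); it contributes the single factor $(1+m_{0})^{\varsigma }$ (resp. $\mathrm{e}^{\varsigma m_{0}}$), while each of the remaining $\ell -1$ indices contributes $2^{\varsigma }$ (resp. $\mathrm{e}^{\varsigma }$). Using $|\mathcal{S}_{\ell ,k}|=\binom{k}{k-\ell +1}$ and putting $r=k-\ell +1$, the sum over $\ell $ becomes a binomial sum that collapses to the $k$--th power $(2^{\varsigma }+2\Vert \mathbf{u}_{\cdot ,1}\Vert _{\ell ^{1}(\mathbb{N})}\Vert \Psi \Vert _{\mathcal{W}}t\,\mathrm{e}^{4\mathbf{D}\alpha t})^{k}$ in the polynomial case; combined with the prefactor $d^{\varsigma k/2}$ of Theorem~\ref{theorem exp tree decay copy(1)}(i) and the tree product $\prod _{\{j,l\}\in T}(1+|x_{j}-x_{l}|)^{-\varsigma (\max \{\mathfrak{d}_{T}(j),\mathfrak{d}_{T}(l)\})^{-1}}$ there, this bounds $\Re _{T,\Vert \Psi \Vert _{\mathcal{W}}}$ by $(1+m_{0})^{\varsigma }$ times a constant absorbed into $\mathbf{K}_{0}^{k}$ times that tree product. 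The exponential case is identical, the geometric series $\sum _{n}\mathrm{e}^{-\varsigma n}=(\mathrm{e}^{\varsigma }-1)^{-1}$ accounting for the $(\mathrm{e}^{2\varsigma }-\mathrm{e}^{\varsigma })^{-1}$ appearing in $\mathbf{K}_{1}$.

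For the purely combinatorial term $\varkappa _{T}(\{(m_{j},x_{j})\}_{j=0}^{k})\in \{0,1\}$ I would use that it vanishes unless every tree bond $\{j,l\}\in T$ has $(\Lambda _{m_{j}}+x_{j})\cap (\Lambda _{m_{l}}+x_{l})\neq \emptyset $, which by the cube--intersection estimate behind~(\ref{ine cool ohmV2}) forces $|x_{j}-x_{l}|\leq \sqrt{d}\,(m_{j}+m_{l})$; since $m_{j}=1$ for $j\geq 1$, a bond not incident to vertex $0$ then only sees $|x_{j}-x_{l}|\leq 2\sqrt{d}$, while the bonds incident to $0$ see $|x_{0}-x_{l}|\leq \sqrt{d}\,(1+m_{0})$. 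Hence $\varkappa _{T}\leq \prod _{\{j,l\}\in T}\big((1+\sqrt{d}\,(m_{j}+m_{l}))/(1+|x_{j}-x_{l}|)\big)^{\varsigma /\max \{\mathfrak{d}_{T}(j),\mathfrak{d}_{T}(l)\}}$; bounding $1+\sqrt{d}(m_{j}+m_{l})\leq (1+\sqrt{d}\,m_{j})(1+\sqrt{d}\,m_{l})$ and noting that, in the spirit of~(\ref{equality pivotal}), the bonds incident to a vertex $j$ contribute to $(1+\sqrt{d}\,m_{j})$ a total exponent at most $\mathfrak{d}_{T}(j)\cdot \varsigma /\mathfrak{d}_{T}(j)=\varsigma $, the numerator is absorbed into a factor $(1+m_{0})^{\varsigma }$ (resp. $\mathrm{e}^{\varsigma m_{0}}$) times a further $\mathbf{K}_{0}^{k}$ (resp. $\mathbf{K}_{1}^{k}$), leaving the same tree product. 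Summing the two contributions, multiplying by $2^{k}\Vert B_{0}\Vert _{\mathcal{U}}$, and comparing with $\mathbf{K}_{0}=2d^{\varsigma /2}(2^{\varsigma }+2\Vert \mathbf{u}_{\cdot ,1}\Vert _{\ell ^{1}(\mathbb{N})}\Vert \Psi \Vert _{\mathcal{W}}t\,\mathrm{e}^{4\mathbf{D}t\Vert \Psi \Vert _{\mathcal{W}}})$ (and the analogous $\mathbf{K}_{1}$) yields the stated bounds. The one delicate point, which is the real content of the proof, is precisely this bookkeeping: one must arrange that both $\varkappa _{T}$ and the dynamical remainder $\Re _{T,\alpha }$ fit under the \emph{same} constant $\mathbf{K}_{0}$ (resp. $\mathbf{K}_{1}$) and the \emph{same} tree product, i.e. check that $2^{\varsigma }$ (resp. $\mathrm{e}^{\varsigma }$) is the correct ``free'' per--vertex constant and that the resummation over $\ell $ and over $\pi \in \mathcal{S}_{\ell ,k}$ closes; by contrast the emergence of a single $m_{0}$--dependent prefactor is immediate, vertex $0$ never lying in the range of the permutations $\pi $.
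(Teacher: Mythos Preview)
Your proposal is correct and follows essentially the same route as the paper: apply Theorem~\ref{Theorem Lieb-Robinson copy(1)} with $m_j=1$ for $j\geq 1$, bound $\Re_{T,\Vert\Psi\Vert_{\mathcal W}}$ via Theorem~\ref{theorem exp tree decay copy(1)}, bound $\varkappa_T$ by the same cube--intersection argument~(\ref{ine cool ohmV2}) used there, and then collapse the sum over $\ell$ and $\pi\in\mathcal S_{\ell,k}$ into a binomial $k$--th power. The paper's proof simply records the resulting $\varkappa_T$ estimates (\ref{ki trivial estimate1})--(\ref{ki trivial estimate2}) and declares the corollary a direct consequence; you have supplied the bookkeeping (evenness and localization of $a_0^\ast a_{z_j}$, the binomial resummation, the role of vertex $0$ never being in the range of $\pi$) that the paper leaves implicit. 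One cosmetic point: your bond--wise bound on $\varkappa_T$ yields per--vertex factors $(1+\sqrt d\,m_j)^\varsigma$ rather than the paper's $d^{k\varsigma/2}\prod_j(1+m_j)^\varsigma$; to land exactly on the stated $(1+m_0)^\varsigma$ and $\mathbf K_0$, it is cleanest to follow the paper's factorization (\ref{ine cool ohmV1bis}) verbatim, which gives precisely $d^{k\varsigma/2}(1+m_0)^\varsigma\, 2^{k\varsigma}$ for the $\varkappa_T$ contribution and then matches the $r=0$ term of the binomial.
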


\begin{proof}
For all $k\in \mathbb{N}$, $T\in \mathcal{T}_{k+1}$, and any sequence $%
\{\left( m_{j},x_{j}\right) \}_{j=0}^{k}$ in $\mathbb{N}_{0}\times \mathfrak{%
L}$ of length $k+1$, the following upper bounds hold for $\varkappa _{T}$
(see (\ref{definition chqrqcteristic})):
\begin{equation}
\varkappa _{T}\left( \left\{ \left( m_{j},x_{j}\right) \right\}
_{j=0}^{k}\right) \leq d^{\frac{k\varsigma }{2}}\prod%
\limits_{j=0}^{k}(1+m_{j})^{\varsigma }\prod\limits_{\{j,l\}\in T}\frac{1}{%
(1+|x_{j}-x_{l}|)^{\frac{\varsigma }{\max \{\mathfrak{d}_{T}(j),\mathfrak{d}%
_{T}(l)\}}}}  \label{ki trivial estimate1}
\end{equation}%
while%
\begin{equation}
\varkappa _{T}\left( \left\{ \left( m_{j},x_{j}\right) \right\}
_{j=0}^{k}\right) \leq \mathrm{e}^{(m_{0}+\cdots +m_{k})\varsigma
}\prod\limits_{\{j,l\}\in T}\exp \left( -\frac{\varsigma \left\vert
x_{j}-x_{l}\right\vert }{\sqrt{d}\max \{\mathfrak{d}_{T}(j),\mathfrak{d}%
_{T}(l)\}}\right) \text{ }.  \label{ki trivial estimate2}
\end{equation}%
Cf. proof of Theorem \ref{theorem exp tree decay copy(1)}. Therefore, the
corollary is a direct consequence of Theorems \ref{Theorem Lieb-Robinson
copy(1)} and \ref{theorem exp tree decay copy(1)} together with the two
previous inequalities.
\end{proof}

\noindent Up to the powers $1/\max \{\mathfrak{d}_{T}(j),\mathfrak{d}%
_{T}(l)\}$, Corollary \ref{theorem exp tree decay} gives for interacting
systems upper bounds for multi--commutators like \cite[Eq. (4.14)]{OhmI} for
the free case. We show in the next subsection how to use these bounds to
obtain results similar to\ \cite[Theorem 3.4]{OhmI} on the dynamics
perturbed by the presence of external electromagnetic fields.

\begin{bemerkung}
\label{rermar trivial1}\mbox{
}\newline
All results of this subsection depend on Theorem \ref{Theorem Lieb-Robinson
copy(3)} (iii), i.e., the rate of convergence, as $n\rightarrow \infty $, of
the family $\{\tau ^{(n,x)}\}_{n\in \mathbb{N}_{0}}$ of finite--volume
groups introduced in the preliminary discussions before Lemma \ref{Lemma
Series representation of the dynamics}. It is the only information on the
Fermi system we needed here.
\end{bemerkung}

\begin{bemerkung}
\mbox{
}\newline
The
\index{Lieb--Robinson bounds}Lieb--Robinson bound for multi--commutators
given by Theorems \ref{Theorem Lieb-Robinson copy(1)}--\ref{theorem exp tree
decay copy(1)} at $k=1$ is not as good as the previous Lieb--Robinson bound
of Theorem \ref{Theorem Lieb-Robinson copy(3)} (iv). Nevertheless, they are
qualitatively equivalent in the following sense: For interactions with
polynomial decay, the first bound also has polynomial decay, even if with
lower degree than the second one. For interactions with exponential decay,
both bounds are exponentially decaying, even if the first one has a worse
prefactor and exponential rate than the second one.
\end{bemerkung}

\subsection{Application to Perturbed Autonomous Dynamics\label{section
Energy Increments as Power Series}}

Let $\Psi \in \mathcal{W}$ and $\mathbf{V}$ be a potential. For any $l\in
\mathbb{R}_{0}^{+}$, we consider a map $\eta \mapsto \mathbf{W}^{(l,\eta )}$
from $\mathbb{R}$ to the subspace of self--adjoint elements of $\mathcal{U}%
_{\Lambda _{l}}$. In the case that interests us, the following property
holds:
\begin{equation}
\left\Vert \mathbf{W}^{(l,\eta )}\right\Vert _{\mathcal{U}}=\mathcal{O}(\eta
\left\vert \Lambda _{l}\right\vert )%
\text{ }.  \label{bound assumption}
\end{equation}%
More precisely, we consider elements $\mathbf{W}^{(l,\eta )}$ of the form%
\begin{equation}
\mathbf{W}^{(l,\eta )}\doteq \sum\limits_{x\in \Lambda
_{l}}\sum\limits_{z\in \mathfrak{L},|z|\leq 1}\mathbf{w}_{x,x+z}(\eta
)a_{x}^{\ast }a_{x+z}\ ,\qquad l\in \mathbb{R}_{0}^{+}\ ,
\label{bound assumption2}
\end{equation}%
where $\{\mathbf{w}_{x,y}\}_{x,y\in \mathfrak{L}}$ are complex--valued
functions of $\eta \in \mathbb{R}$ with
\begin{equation}
\overline{\mathbf{w}_{x,y}}=\mathbf{w}_{y,x}\qquad \text{and}\qquad \mathbf{w%
}_{x,y}(0)=0  \label{assumption boundedness00}
\end{equation}%
for all $x,y\in \mathfrak{L}$.

Equation (\ref{bound assumption2}) has the form
\begin{equation}
\mathbf{W}^{(l,\eta )}=\sum\limits_{x\in \Lambda _{l}}W_{x}(\eta )
\label{w more general}
\end{equation}%
where, for some fixed radius $R\in \mathbb{R}^{+}$ and any $x\in \mathfrak{L}
$, $W_{x}(\eta )$ is a self--adjoint even element of $\mathcal{U}_{x+\Lambda
_{R}}$ that depends on the real parameter $\eta $. All results below in this
subsection hold for the more general case (\ref{w more general}) as well,
with obvious modifications. Indeed, we could even consider more general
perturbations with $R=\infty $, see proofs of Inequality (\ref{estimate poly}%
) and Theorem \ref{thm non auto copy(1)}.

We refrain from treating cases more general than (\ref{bound assumption2})
to keep technical aspects as simple as possible. Observe that perturbations
due to the presence of external electromagnetic fields are included in the
class of perturbations defined by (\ref{bound assumption2}). In fact, as
discussed in the introduction, our final aim is the microscopic quantum
theory of electrical conduction \cite{brupedrahistoire, OhmV, OhmVI}.
Indeed, at fixed $l\in \mathbb{R}_{0}^{+}$, $\mathbf{W}^{(l,\eta )}$ defined
by (\ref{bound assumption2}) is related to perturbations of dynamics caused
by constant external electromagnetic fields that vanish outside the box $%
\Lambda _{l}$.

We assume that $\{\mathbf{w}_{x,y}\}_{x,y\in \mathfrak{L}}$ are uniformly
bounded and Lipschitz continuous: There is a constant $K_{1}\in \mathbb{R}%
^{+}$ such that, for all $\eta ,\eta _{0}\in \mathbb{R}$,%
\begin{equation}
\sup_{x,y\in \mathfrak{L}}\left\vert \mathbf{w}_{x,y}(\eta )-\mathbf{w}%
_{x,y}(\eta _{0})\right\vert \leq K_{1}\left\vert \eta -\eta _{0}\right\vert
\text{ \ \ and \ \ }\sup_{x,y\in \mathfrak{L}}\sup_{\eta \in \mathbb{R}%
}\left\vert \mathbf{w}_{x,y}(\eta )\right\vert \leq K_{1}\ .
\label{assumption boundedness0}
\end{equation}%
These two uniformity conditions could hold for parameters $\eta ,\eta _{0}$
on compact sets only, but we refrain again from considering this more
general case, for simplicity.

The perturbed dynamics is defined via the symmetric derivation
\begin{equation}
\delta ^{(l,\eta )}\doteq \delta +i\left[ \mathbf{W}^{(l,\eta )},\ \cdot \ %
\right] \ ,\qquad l\in \mathbb{R}_{0}^{+},\ \eta \in \mathbb{R}\ .
\label{bounded pertubation}
\end{equation}%
Recall that $\delta $ is the symmetric derivation of Theorem \ref{Theorem
Lieb-Robinson copy(3)} which generates the $C_{0}$--group $\{\tau
_{t}\}_{t\in {\mathbb{R}}}$ on $\mathcal{U}$. The second term in the r.h.s.
of (\ref{bounded pertubation}) is a bounded perturbation of $\delta $.
Hence, $\delta ^{(l,\eta )}$ generates a $C_{0}$--group $\{\tilde{\tau}%
_{t}^{(l,\eta )}\}_{t\in {\mathbb{R}}}$ on $\mathcal{U}$, see \cite[Chap.
III, Sect. 1.3]{EngelNagel}. By Lemma \ref{lemma norm closure}, the
(generally unbounded) closed operator $\delta ^{(l,\eta )}$ is a
conservative symmetric derivation and $\tilde{\tau}_{t}^{(l,\eta )}$ is a $%
\ast $--auto%
\-%
morphism of $\mathcal{U}$ for all $t\in \mathbb{R}$.

Let $\Phi $ be any interaction with energy observables
\begin{equation}
U_{\Lambda _{L}}^{\Phi }\doteq \sum\limits_{\Lambda \subseteq \Lambda
_{L}}\Phi _{\Lambda }\ ,\qquad L\in \mathbb{R}_{0}^{+}\ .
\label{energy observable}
\end{equation}%
The main aim of this subsection is to study the energy increment%
\index{Increment}
\begin{equation}
\mathbf{T}_{t,s}^{(l,\eta ,L)}\doteq
\tilde{\tau}_{t-s}^{(l,\eta )}(U_{\Lambda _{L}}^{\Phi })-\tau
_{t-s}(U_{\Lambda _{L}}^{\Phi })\ ,\qquad l,L\in \mathbb{R}_{0}^{+},\
s,t,\eta \in \mathbb{R}\ ,  \label{generic Dyson--Phillips series2}
\end{equation}%
in the limit $L\rightarrow \infty $ to obtain similar results as \cite[%
Theorem 3.4]{OhmI}. This can be done by using the (partial)
\index{Dyson--Phillips series}Dyson--Phillips series:
\begin{eqnarray}
&&\mathbf{T}_{t,s}^{(l,\eta ,L)}-\mathbf{T}_{t,s}^{(l,\eta _{0},L)}
\label{finite dyson} \\
&=&\sum\limits_{k=1}^{m}i^{k}\int_{s}^{t}\mathrm{d}s_{1}\cdots
\int_{s}^{s_{k-1}}\mathrm{d}s_{k}\left[ \mathbf{X}_{s_{k},s}^{(l,\eta
_{0},\eta )},\ldots ,\mathbf{X}_{s_{1},s}^{(l,\eta _{0},\eta )},%
\tilde{\tau}_{t-s}^{(l,\eta _{0})}(U_{\Lambda _{L}}^{\Phi })\right] ^{(k+1)}
\notag \\
&&+i^{m+1}\int_{s}^{t}\mathrm{d}s_{1}\cdots \int_{s}^{s_{m}}\mathrm{d}s_{m+1}
\notag \\
&&\qquad \tilde{\tau}_{s_{m+1}-s}^{(l,\eta )}\left( \left[ \mathbf{W}%
^{(l,\eta )}-\mathbf{W}^{(l,\eta _{0})},\mathbf{X}_{s_{m},s_{m+1}}^{(l,\eta
_{0},\eta )},\ldots ,\mathbf{X}_{s_{1},s_{m+1}}^{(l,\eta _{0},\eta )},\tilde{%
\tau}_{t-s_{m+1}}^{(l,\eta _{0})}(U_{\Lambda _{L}}^{\Phi })\right]
^{(m+2)}\right)  \notag
\end{eqnarray}%
for any $m\in \mathbb{N}$, where
\begin{equation}
\mathbf{X}_{t,s}^{(l,\eta _{0},\eta )}\doteq \tilde{\tau}_{t-s}^{(l,\eta
_{0})}(\mathbf{W}^{(l,\eta )}-\mathbf{W}^{(l,\eta _{0})})\ ,\qquad l\in
\mathbb{R}_{0}^{+},\ s,t,\eta _{0},\eta \in \mathbb{R}\ .
\label{finite dysonbis}
\end{equation}%
By (\ref{assumption boundedness00}), note that $\mathbf{T}_{t,s}^{(l,0,L)}=0$%
.

By (\ref{bound assumption}), naive bounds on the r.h.s. of (\ref{finite
dyson}) predict that
\begin{equation*}
\left[ \mathbf{X}_{s_{k},s}^{(l,\eta _{0},\eta )},\ldots ,\mathbf{X}%
_{s_{1},s}^{(l,\eta _{0},\eta )},\tilde{\tau}_{t-s}^{(l,\eta
_{0})}(U_{\Lambda _{L}}^{\Phi })\right] ^{(k+1)}=\mathcal{O}(\left\vert
\Lambda _{l}\right\vert ^{k}\left\vert \Lambda _{L}\right\vert )\ .
\end{equation*}%
To obtain more accurate estimates, we use the tree--decay bounds on
multi--commutators of Corollary \ref{theorem exp tree decay}.

To this end, for any $x\in \mathfrak{L}$ and $m\in \mathbb{N}$, we define%
\begin{equation}
\mathcal{D}\left( x,m\right) \doteq \left\{ \Lambda \in \mathcal{P}_{f}(%
\mathfrak{L}):x\in \Lambda ,\text{ }\Lambda \subseteq \Lambda _{m}+x,\
\Lambda \nsubseteq \Lambda _{m-1}+x\right\} \subset 2^{\mathfrak{L}}\ .
\label{definition D1}
\end{equation}%
All elements of $\mathcal{D}(x,m)$ are finite subsets of the lattice $%
\mathfrak{L}$ that contain at least two sites which are separated by a
distance greater or equal than $m$. Using, for any $x\in \mathfrak{L}$ and $%
m=0$, the convention%
\begin{equation}
\mathcal{D}\left( x,0\right) \doteq \left\{ \left\{ x\right\} \right\} \ ,
\end{equation}%
we obviously have that
\begin{equation}
\mathcal{P}_{f}\left( \mathfrak{L}\right) =\underset{x\in \mathfrak{L},\
m\in \mathbb{N}_{0}}{\bigcup }\mathcal{D}\left( x,m\right) \ .
\label{set eq}
\end{equation}%
We now consider the following assumption on interactions $\Phi $:%
\index{Decay function!polynomial decay}%
\index{Decay function!exponential decay}%
\begin{equation}
\underset{x\in \mathfrak{L}}{\sup }\sum\limits_{m\in \mathbb{N}_{0}}\mathbf{v%
}_{m}\sum\limits_{\Lambda \in \mathcal{D}\left( x,m\right) }\left\Vert \Phi
_{\Lambda }\right\Vert _{\mathcal{U}}<\infty  \label{assumption boundedness2}
\end{equation}%
for some (generally diverging) sequence $\{\mathbf{v}_{m}\}_{m\in \mathbb{N}%
_{0}}\subset \mathbb{R}_{0}^{+}$. For instance, if $\Phi \in \mathcal{W}$
and Condition (\ref{(3.3) NS generalized0}) holds true, then one easily
verifies (\ref{assumption boundedness2}) with $\mathbf{v}_{m}=\left(
1+m\right) ^{\varsigma }$. In the case (\ref{(3.3) NS generalized}) holds
and $\Phi \in \mathcal{W}$, then (\ref{assumption boundedness2}) is also
satisfied even with $\mathbf{v}_{m}=\mathrm{e}^{m\varsigma }$.

We are now in position to state the first main result of this section, which
is an extension of \cite[Theorem 3.4 (i)]{OhmI} to interacting fermions:

\begin{satz}[Taylor's theorem for increments]
\label{Thm Heat production as power series copy(2)}\mbox{
}\newline
Let $l,\mathrm{T}\in \mathbb{R}_{0}^{+}$, $s,t\in \left[ -\mathrm{T},\mathrm{%
T}\right] $, $\eta ,\eta _{0}\in \mathbb{R}$, $\Psi \in \mathcal{W}$, and $%
\mathbf{V}$ be any potential.%
\index{Taylor's theorem}%
\index{Increment} Assume (\ref{(3.3) NS generalized0}) with $\varsigma >d$, (%
\ref{assumption boundedness00}) and (\ref{assumption boundedness0}). Take an
interaction $\Phi $ satisfying (\ref{assumption boundedness2}) with $\mathbf{%
v}_{m}=\left( 1+m\right) ^{\varsigma }$. Then:\newline
\emph{(i)} The map $\eta \mapsto \mathbf{T}_{t,s}^{(l,\eta ,L)}$ converges
uniformly on $\mathbb{R}$, as $L\rightarrow \infty $, to a continuous
function $\mathbf{T}_{t,s}^{(l,\eta )}$ of $\eta $ and
\begin{equation*}
\mathbf{T}_{t,s}^{(l,\eta )}-\mathbf{T}_{t,s}^{(l,\eta
_{0})}=\sum\limits_{\Lambda \in \mathcal{P}_{f}(\mathfrak{L})}i\int_{s}^{t}%
\mathrm{d}s_{1}%
\tilde{\tau}_{s_{1}-s}^{(l,\eta )}\left( \left[ \mathbf{W}^{(l,\eta )}-%
\mathbf{W}^{(l,\eta _{0})},\tilde{\tau}_{t-s_{1}}^{(l,\eta _{0})}(\Phi
_{\Lambda })\right] \right) \ .
\end{equation*}%
\emph{(ii)} For any $m\in \mathbb{N}$ satisfying $d(m+1)<\varsigma $,
\begin{eqnarray}
&&\mathbf{T}_{t,s}^{(l,\eta )}-\mathbf{T}_{t,s}^{(l,\eta _{0})}=
\label{diff T-T 1} \\
&&\sum\limits_{k=1}^{m}\sum\limits_{\Lambda \in \mathcal{P}_{f}(\mathfrak{L}%
)}i^{k}\int_{s}^{t}\mathrm{d}s_{1}\cdots \int_{s}^{s_{k-1}}\mathrm{d}s_{k}%
\left[ \mathbf{X}_{s_{k},s}^{(l,\eta _{0},\eta )},\ldots ,\mathbf{X}%
_{s_{1},s}^{(l,\eta _{0},\eta )},\tilde{\tau}_{t-s}^{(l,\eta _{0})}(\Phi
_{\Lambda })\right] ^{(k+1)}  \notag \\
&&+\sum\limits_{\Lambda \in \mathcal{P}_{f}(\mathfrak{L})}i^{m+1}\int_{s}^{t}%
\mathrm{d}s_{1}\cdots \int_{s}^{s_{m}}\mathrm{d}s_{m+1}  \notag \\
&&\qquad \tilde{\tau}_{s_{m+1}-s}^{(l,\eta )}\left( \left[ \mathbf{W}%
^{(l,\eta )}-\mathbf{W}^{(l,\eta _{0})},\mathbf{X}_{s_{m},s_{m+1}}^{(l,\eta
_{0},\eta )},\ldots ,\mathbf{X}_{s_{1},s_{m+1}}^{(l,\eta _{0},\eta )},\tilde{%
\tau}_{t-s_{m+1}}^{(l,\eta _{0})}(\Phi _{\Lambda })\right] ^{(m+2)}\right) .
\notag
\end{eqnarray}%
\emph{(iii)} All the above series in $\Lambda $ absolutely converge: For any
$m\in \mathbb{N}$ satisfying $d(m+1)<\varsigma $, $k\in \{1,\ldots ,m\}$,
and $\{s_{j}\}_{j=1}^{m+1}\subset \lbrack -\mathrm{T},\mathrm{T}]$,
\begin{equation*}
\sum\limits_{\Lambda \in \mathcal{P}_{f}(\mathfrak{L})}\left\Vert \left[
\mathbf{X}_{s_{k},s}^{(l,\eta _{0},\eta )},\ldots ,\mathbf{X}%
_{s_{1},s}^{(l,\eta _{0},\eta )},\tilde{\tau}_{t-s}^{(l,\eta _{0})}(\Phi
_{\Lambda })\right] ^{(k+1)}\right\Vert _{\mathcal{U}}\leq D\left\vert
\Lambda _{l}\right\vert \left\vert \eta -\eta _{0}\right\vert ^{k}
\end{equation*}%
and%
\begin{multline*}
\sum\limits_{\Lambda \in \mathcal{P}_{f}(\mathfrak{L})}\left\Vert \tilde{\tau%
}_{s_{m+1}-s}^{(l,\eta )}\left( \left[ \mathbf{W}^{(l,\eta )}-\mathbf{W}%
^{(l,\eta _{0})},\mathbf{X}_{s_{m},s_{m+1}}^{(l,\eta _{0},\eta )},\ldots ,%
\mathbf{X}_{s_{1},s_{m+1}}^{(l,\eta _{0},\eta )},\tilde{\tau}%
_{t-s_{m+1}}^{(l,\eta _{0})}(\Phi _{\Lambda })\right] ^{(m+2)}\right)
\right\Vert _{\mathcal{U}} \\
\leq D\left\vert \Lambda _{l}\right\vert \left\vert \eta -\eta
_{0}\right\vert ^{m+1}\text{ },
\end{multline*}%
for some constant $D\in \mathbb{R}^{+}$ depending only on $m,d,\mathrm{T}%
,\Psi ,K_{1},\Phi ,\mathbf{F}$. The last assertion also holds for $m=0$.
\end{satz}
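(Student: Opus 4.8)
The plan is to derive \emph{(i)--(iii)} from the finite--volume Dyson--Phillips expansion (\ref{finite dyson}) together with the tree--decay bounds of Corollary \ref{theorem exp tree decay} (equivalently Theorems \ref{Theorem Lieb-Robinson copy(1)}--\ref{theorem exp tree decay copy(1)}, which bound multi--commutators of the operators $a_{x}^{\ast }a_{x+z}$ for any $|z|\le 1$), the only real work being a combinatorial estimate that keeps the volume dependence linear. The master bound I would establish is the following. Fix $k\in \mathbb{N}$, write $\delta \mathbf{w}_{x,y}\doteq \mathbf{w}_{x,y}(\eta )-\mathbf{w}_{x,y}(\eta _{0})$ (so $|\delta \mathbf{w}_{x,y}|\le K_{1}|\eta -\eta _{0}|$ by (\ref{assumption boundedness0})), and use (\ref{bound assumption2}) to write $\mathbf{W}^{(l,\eta )}-\mathbf{W}^{(l,\eta _{0})}=\sum_{x\in \Lambda _{l}}\sum_{|z|\le 1}\delta \mathbf{w}_{x,x+z}\,a_{x}^{\ast }a_{x+z}$. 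Expanding each $\mathbf{X}_{s_{j},s}^{(l,\eta _{0},\eta )}=\tilde{\tau}_{s_{j}-s}^{(l,\eta _{0})}(\mathbf{W}^{(l,\eta )}-\mathbf{W}^{(l,\eta _{0})})$ by multilinearity, writing $\Phi _{\Lambda }$ for $\Lambda \in \mathcal{D}(x_{0},m_{0})$ as $\chi _{x_{0}}(\chi _{-x_{0}}(\Phi _{\Lambda }))$ with $\chi _{-x_{0}}(\Phi _{\Lambda })\in \mathcal{U}_{\Lambda _{m_{0}}}$ and $\Vert \chi _{-x_{0}}(\Phi _{\Lambda })\Vert _{\mathcal{U}}=\Vert \Phi _{\Lambda }\Vert _{\mathcal{U}}$, rewriting $a_{x_{j}}^{\ast }a_{x_{j}+z_{j}}=\chi _{x_{j}}(a_{0}^{\ast }a_{z_{j}})$, and using $\mathcal{P}_{f}(\mathfrak{L})=\bigcup_{x_{0}\in \mathfrak{L},\,m_{0}\in \mathbb{N}_{0}}\mathcal{D}(x_{0},m_{0})$ (an over--count that only enlarges an upper bound), one turns $\sum_{\Lambda \in \mathcal{P}_{f}(\mathfrak{L})}\Vert \lbrack \mathbf{X}_{s_{k},s}^{(l,\eta _{0},\eta )},\ldots ,\mathbf{X}_{s_{1},s}^{(l,\eta _{0},\eta )},\tilde{\tau}_{t-s}^{(l,\eta _{0})}(\Phi _{\Lambda })]^{(k+1)}\Vert _{\mathcal{U}}$ into a sum over $x_{0}\in \mathfrak{L}$, $m_{0}\in \mathbb{N}_{0}$, $x_{1},\ldots ,x_{k}\in \Lambda _{l}$ and $z_{1},\ldots ,z_{k}$ ($|z_{j}|\le 1$) of multi--commutators of exactly the shape bounded by Corollary \ref{theorem exp tree decay}. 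This uses that the tree--decay bounds hold for the \emph{perturbed} $C_{0}$--group $\tilde{\tau}^{(l,\eta _{0})}$ with constants uniform in $l$ and $\eta _{0}$, which is legitimate: $\tilde{\tau}^{(l,\eta _{0})}$ is the group of Theorem \ref{Theorem Lieb-Robinson copy(3)} for the potential $\mathbf{V}$ and the short--range interaction $\Psi +\tilde{\mathbf{W}}^{(l,\eta _{0})}$, where $\tilde{\mathbf{W}}^{(l,\eta _{0})}$ represents $\mathbf{W}^{(l,\eta _{0})}$ through its on--site and nearest--neighbour pair terms (hence $\Vert \tilde{\mathbf{W}}^{(l,\eta _{0})}\Vert _{\mathcal{W}}\le 2K_{1}/\mathbf{F}(1)$, uniformly in $l,\eta _{0}$, by (\ref{assumption boundedness0})); cf. Remark \ref{rermar trivial1}.

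Pulling out the $k$ factors $|\delta \mathbf{w}_{x_{j},x_{j}+z_{j}}|\le K_{1}|\eta -\eta _{0}|$, the $(2d+1)$ choices of each $z_{j}$, and the weight $\sup_{x_{0}}\sum_{m_{0}}(1+m_{0})^{\varsigma }\sum_{\Lambda \in \mathcal{D}(x_{0},m_{0})}\Vert \Phi _{\Lambda }\Vert _{\mathcal{U}}<\infty $ from (\ref{assumption boundedness2}) with $\mathbf{v}_{m}=(1+m)^{\varsigma }$, Corollary \ref{theorem exp tree decay}(i) reduces the sum to
\[
D_{k}\,|\eta -\eta _{0}|^{k}\sum_{T\in \mathcal{T}_{k+1}}\ \sum_{x_{0}\in \mathfrak{L}}\ \sum_{x_{1},\ldots ,x_{k}\in \Lambda _{l}}\ \prod_{\{j,j'\}\in T}\bigl(1+|x_{j}-x_{j'}|\bigr)^{-\varsigma /\max \{\mathfrak{d}_{T}(j),\mathfrak{d}_{T}(j')\}},
\]
with $D_{k}$ depending only on $k,d,\mathrm{T},\Psi ,K_{1},\Phi ,\mathbf{F}$ and $\max \{\mathfrak{d}_{T}(j),\mathfrak{d}_{T}(j')\}\le k$. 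The $x$--sum is then bounded by $D_{k}'|\Lambda _{l}|$ by peeling $T$ one leaf at a time: pick a leaf $v$ of the current subtree and its neighbour $w$ and perform the position sum over $x_{v}$; if $v\neq 0$ this is $\le \sum_{y\in \mathbb{Z}^{d}}(1+|y|)^{-\varsigma /k}<\infty $ since $\varsigma /k>d$, and if $v=0$ the sum over $x_{0}\in \mathfrak{L}$ converges likewise. Organising the peeling so that vertex $0$ is removed as soon as it first becomes a leaf (which precedes its being the last vertex, as $k\ge 1$), the single vertex left after $k$ steps lies in $\Lambda _{l}$ and gives the bare factor $|\Lambda _{l}|$; since $|\mathcal{T}_{k+1}|=k!$, this yields the first bound of \emph{(iii)}, $D|\Lambda _{l}||\eta -\eta _{0}|^{k}$. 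The $(m+1)$--th remainder bound is the $k=m+1$ case of the same computation ($\tilde{\tau}_{s_{m+1}-s}^{(l,\eta )}$ is an isometry and drops out, leaving an order--$(m+2)$ multi--commutator with $m+1$ perturbation factors). Throughout, the requirement $\varsigma /k>d$ for all $k\le m+1$ is precisely the hypothesis $d(m+1)<\varsigma $, with $\varsigma >d$ covering the $m=0$ remainder. This collapse from a naive $|\Lambda _{l}|^{k}$ to $|\Lambda _{l}|$ is the heart of the matter.

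For \emph{(i)}, note $\mathbf{W}^{(l,0)}=0$ and $\tilde{\tau}^{(l,0)}=\tau $ by (\ref{assumption boundedness00}), so $\mathbf{T}_{t,s}^{(l,0,L)}=0$; a one--step Duhamel expansion between $\tilde{\tau}^{(l,\eta )}$ and $\tilde{\tau}^{(l,\eta _{0})}$ (whose generators differ by the bounded $i[\mathbf{W}^{(l,\eta )}-\mathbf{W}^{(l,\eta _{0})},\,\cdot \,]$) gives, for every finite $L$,
\[
\mathbf{T}_{t,s}^{(l,\eta ,L)}-\mathbf{T}_{t,s}^{(l,\eta _{0},L)}=\sum_{\Lambda \subseteq \Lambda _{L}}i\int_{s}^{t}\mathrm{d}s_{1}\ \tilde{\tau}_{s_{1}-s}^{(l,\eta )}\Bigl([\mathbf{W}^{(l,\eta )}-\mathbf{W}^{(l,\eta _{0})},\tilde{\tau}_{t-s_{1}}^{(l,\eta _{0})}(\Phi _{\Lambda })]\Bigr).
\]
Taking $\eta _{0}=0$ shows $\mathbf{T}_{t,s}^{(l,\eta ,L)}$ is a partial sum over $\Lambda $ whose tail tends to $0$ uniformly in $\eta $, because $\Vert \lbrack \mathbf{W}^{(l,\eta )},\tau _{r}(\Phi _{\Lambda })]\Vert _{\mathcal{U}}$ is bounded uniformly in $\eta $ (using $|\mathbf{w}_{x,y}(\eta )|\le K_{1}$ rather than the Lipschitz bound) and in $L$, with $\sum_{\Lambda \in \mathcal{P}_{f}(\mathfrak{L})}$ of it finite by Corollary \ref{theorem exp tree decay}(i) at $k=1$ (needing only $\varsigma >d$); hence $\mathbf{T}_{t,s}^{(l,\eta ,L)}$ converges uniformly on $\mathbb{R}$ to a limit $\mathbf{T}_{t,s}^{(l,\eta )}$, continuous in $\eta $ because each $\mathbf{T}_{t,s}^{(l,\eta ,L)}$ is (bounded--perturbation theory makes $\eta \mapsto \tilde{\tau}_{r}^{(l,\eta )}$ strongly continuous, and $\eta \mapsto \mathbf{W}^{(l,\eta )}$ Lipschitz). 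Letting $L\to \infty $ in the identity above and interchanging $\sum_{\Lambda }$ with $\int \mathrm{d}s_{1}$ (Fubini, by the $m=0$ case of \emph{(iii)}) gives the formula in \emph{(i)}. Finally \emph{(ii)} follows by letting $L\to \infty $ in (\ref{finite dyson}): each term equals $\sum_{\Lambda \subseteq \Lambda _{L}}[\mathbf{X}_{s_{k},s}^{(l,\eta _{0},\eta )},\ldots ,\tilde{\tau}_{t-s}^{(l,\eta _{0})}(\Phi _{\Lambda })]^{(k+1)}$ (and similarly for the remainder), and by \emph{(iii)} the $\Lambda $--series converges absolutely, uniformly in $L$ and over the simplex of time variables, so the limit passes through the sum and the time integrals; the left--hand side converges by \emph{(i)}, identifying the limit as $\mathbf{T}_{t,s}^{(l,\eta )}-\mathbf{T}_{t,s}^{(l,\eta _{0})}$, and the restriction $d(m+1)<\varsigma $ is inherited from \emph{(iii)}.

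The genuine obstacle is the $|\Lambda _{l}|^{k}\to |\Lambda _{l}|$ collapse of the second paragraph: turning $k-1$ of the internal sums over $x_{1},\ldots ,x_{k}\in \Lambda _{l}$, together with the external sum over $x_{0}\in \mathfrak{L}$, into bounded geometric series is exactly where the tree--degree bound $\mathfrak{d}_{T}(\cdot )\le k$ and the hypothesis $d(m+1)<\varsigma $ are essential. The remaining ingredients — the Duhamel step, the uniform--in--$\eta $ control in \emph{(i)}, Fubini, the $L\to \infty $ passage, and the transfer of the Lieb--Robinson machinery of Section \ref{Generalized Lieb--Robinson Bounds} to $\tilde{\tau}^{(l,\eta _{0})}$ — are standard once this estimate is in hand.
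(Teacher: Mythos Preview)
Your proposal is correct and follows essentially the same route as the paper: expand the finite--volume Dyson--Phillips identity (\ref{finite dyson}) multilinearly via (\ref{bound assumption2}), organise the sum over $\Lambda$ through the decomposition (\ref{set eq}) into the sets $\mathcal{D}(x_{0},m_{0})$, apply the tree--decay bounds of Corollary \ref{theorem exp tree decay} (with the perturbed group $\tilde{\tau}^{(l,\eta_{0})}$ viewed as the dynamics of a uniformly $\mathcal{W}$--bounded interaction), and use the summability $\sum_{y}(1+|y|)^{-\varsigma/k}<\infty$ for $\varsigma>dk$ to collapse the $k$ lattice sums to a single factor $|\Lambda_{l}|$. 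Your leaf--peeling paragraph makes explicit what the paper compresses into the single reference (\ref{assumption boundedness3bis}); otherwise the arguments coincide, including the use of the isometry of $\tilde{\tau}^{(l,\eta)}$ for the remainder, the passage $L\to\infty$ by dominated convergence, and the uniform--in--$\eta$ control for (i) via the first bound in (\ref{assumption boundedness0}).
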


\begin{proof}
We only prove (ii)--(iii), Assertion (i) being easier to prove by very
similar arguments. For simplicity, we assume w.l.o.g. $\eta _{0}=s=0$ and $%
m\in \mathbb{N}$. Because of Equations (\ref{bound assumption2}), (\ref%
{finite dyson}), (\ref{finite dysonbis}) and (\ref{set eq}), we first
control the multi--commu%
\-%
tator sum
\begin{eqnarray*}
\digamma _{k,L} &\doteq &\sum\limits_{x_{0}\in \mathfrak{L}\backslash
\Lambda _{L}}\ \sum\limits_{m_{0}\in \mathbb{N}_{0}}\ \sum\limits_{\Lambda
\in \mathcal{D}\left( x_{0},m_{0}\right) }\ \sum\limits_{x_{1}\in \Lambda
_{l}}\ \sum\limits_{z_{1}\in \mathfrak{L},|z_{1}|\leq 1}\ \cdots \
\sum\limits_{x_{k}\in \Lambda _{l}}\ \sum\limits_{z_{k}\in \mathfrak{L}%
,|z_{k}|\leq 1} \\
&&\quad \left\Vert \xi _{x_{1},z_{1},\ldots ,x_{k},z_{k}}\left[ \tau
_{s_{k}}(a_{x_{k}}^{\ast }a_{x_{k}+z_{k}}),\ldots ,\tau
_{s_{1}}(a_{x_{1}}^{\ast }a_{x_{1}+z_{1}}),\tau _{t}(\Phi _{\Lambda })\right]
^{(k+1)}\right\Vert _{\mathcal{U}}
\end{eqnarray*}%
for any fixed $k\in \{1,\ldots ,m\}$, $\mathrm{T}\in \mathbb{R}_{0}^{+}$, $%
\{s_{j}\}_{j=1}^{k}\subset \lbrack -\mathrm{T},\mathrm{T}]$ and $L\in
\mathbb{R}_{0}^{+}\cup \{-1\}$, where we use the convention $\Lambda
_{-1}\doteq \emptyset $ and%
\begin{equation}
\xi _{x_{1},z_{1},\ldots ,x_{k},z_{k}}\doteq \prod\limits_{j=1}^{k}\mathbf{w}%
_{x_{j},x_{j}+z_{j}}(\eta )\ .  \label{assumption boundedness30}
\end{equation}%
By (\ref{assumption boundedness00})--(\ref{assumption boundedness0}), there
is a constant $D\in \mathbb{R}^{+}$ (depending on $K_{1}$) such that%
\begin{equation}
\sup_{x_{1},z_{1},\ldots ,x_{k},z_{k}\in \mathfrak{L}}\sup_{\eta \in \mathbb{%
R}}\left\vert \xi _{x_{1},z_{1},\ldots ,x_{k},z_{k}}\right\vert \leq D\ \
\text{and}\ \ \sup_{x_{1},z_{1},\ldots ,x_{k},z_{k}\in \mathfrak{L}%
}\left\vert \xi _{x_{1},z_{1},\ldots ,x_{k},z_{k}}\right\vert \leq D|\eta
|^{k}\ .  \label{assumption boundedness3}
\end{equation}%
At fixed $k\in \{1,\ldots ,m\}$ observe further that the condition $%
\varsigma >dk$ yields%
\begin{equation}
\max_{x\in \mathfrak{L}}\sum_{y\in \mathfrak{L}}\frac{1}{\left( 1+\left\vert
y-x\right\vert \right) ^{\varsigma \left( \max \{\mathfrak{d}_{T}(j),%
\mathfrak{d}_{T}(l)\}\right) ^{-1}}}\leq \sum_{y\in \mathfrak{L}}\frac{1}{%
\left( 1+\left\vert y\right\vert \right) ^{\frac{\varsigma }{k}}}<\infty
\label{assumption boundedness3bis}
\end{equation}%
for any tree $T\in \mathcal{T}_{k+1}$ and all $j,l\in \{0,\ldots ,k\}$.
Using (\ref{assumption boundedness2}) with $\mathbf{v}_{m}=\left( 1+m\right)
^{\varsigma }$, (\ref{assumption boundedness3})--(\ref{assumption
boundedness3bis}) and the equality%
\begin{eqnarray}
&&\left\Vert \left[ \tau _{s_{k}}(a_{x_{k}}^{\ast }a_{x_{k}+z_{k}}),\ldots
,\tau _{s_{1}}(a_{x_{1}}^{\ast }a_{x_{1}+z_{1}}),\tau _{t}(\Phi _{\Lambda })%
\right] ^{(k+1)}\right\Vert _{\mathcal{U}}  \notag \\
&=&\left\Vert \left[ \tau _{s_{k}-t}(a_{x_{k}}^{\ast
}a_{x_{k}+z_{k}}),\ldots ,\tau _{s_{1}-t}(a_{x_{1}}^{\ast
}a_{x_{1}+z_{1}}),\Phi _{\Lambda }\right] ^{(k+1)}\right\Vert _{\mathcal{U}%
}\ ,  \label{ineq conne}
\end{eqnarray}%
we obtain from Corollary \ref{theorem exp tree decay} that, for any $m\in
\mathbb{N}$ and $k\in \{1,\ldots ,m\}$ with $\varsigma >dk$, $\digamma
_{k,-1}\leq D|\Lambda _{l}||\eta |^{k}$ for some constant $D\in \mathbb{R}%
^{+}$ depending only on $m,d,\mathrm{T},\Psi ,K_{1},\Phi ,\mathbf{F}$.

Hence, by Lebesgue's dominated convergence theorem, for any $k\in \mathbb{N}$
satisfying $\varsigma >dk$, there is $R\in \mathbb{R}^{+}$ such that $%
\digamma _{k,L}<\varepsilon $ for any $L\geq R$. This ensures the
convergence of the first $k$ multi--commutators of (\ref{finite dyson}) to
the first $k$ multi--commutators of (\ref{diff T-T 1}) as well as the
corresponding absolute summability. Cf. Assertions (ii)--(iii). The
convergence is even uniform for $\eta \in \mathbb{R}$ because of the first
assertion of (\ref{assumption boundedness3}).

Because $\tilde{\tau}_{t}^{(l,\eta )}$ is an isometry for any time $t\in
\mathbb{R}$, the same arguments are used to control the multi--commutator%
\begin{equation}
\tilde{\tau}_{s_{m+1}-s}^{(l,\eta )}\left( \left[ \mathbf{W}^{(l,\eta )}-%
\mathbf{W}^{(l,\eta _{0})},\mathbf{X}_{s_{m},s_{m+1}}^{(l,\eta _{0},\eta
)},\ldots ,\mathbf{X}_{s_{1},s_{m+1}}^{(l,\eta _{0},\eta )},\tilde{\tau}%
_{t-s_{m+1}}^{(l,\eta _{0})}(\Phi _{\Lambda })\right] ^{(m+2)}\right)
\label{milti esti mate plus}
\end{equation}%
in (\ref{finite dyson}). By (\ref{assumption boundedness0}), notice
additionally that there is a constant $D\in \mathbb{R}^{+}$ and a family $%
\{\Psi ^{(l,\eta )}\}_{l\in \mathbb{R}_{0}^{+},\eta \in \mathbb{R}}\subset
\mathcal{W}$ such that
\begin{equation*}
\sup_{\eta \in \mathbb{R}}\sup_{l\in \mathbb{R}_{0}^{+}}\left\Vert \Psi
^{(l,\eta )}\right\Vert _{\mathcal{W}}\leq D<\infty \text{ }
\end{equation*}%
and, for all $l\in \mathbb{R}_{0}^{+}$ and $\eta \in \mathbb{R}$, $\{\tilde{%
\tau}_{t}^{(l,\eta )}\}_{t\in {\mathbb{R}}}$ is the $C_{0}$--group of $\ast $%
--automorphisms on $\mathcal{U}$ associated with the interaction $\Psi
^{(l,\eta )}$ and the potential $\mathbf{V}$. The norm $\left\Vert \cdot
\right\Vert _{\mathcal{W}}$ in the last inequality, which defines the space $%
\mathcal{W}$ of interactions, is of course defined w.r.t.\ the same function
$\mathbf{F}$ to which the conditions of the theorem are imposed. This
property justifies the simplifying assumption $\eta _{0}=0$ at the beginning
of the proof. This concludes the proof of Assertions (ii)--(iii).

Assertion (i) is proven in the same way and we omit the details. Note only
that the convergence of $\digamma _{1,L}$ as $L\rightarrow \infty $ is
uniform for $\eta \in \mathbb{R}$ because of the first assertion of (\ref%
{assumption boundedness3}). The latter implies the continuity of the map $%
\eta \mapsto \mathbf{T}_{t,s}^{(l,\eta )}$ for $\eta \in \mathbb{R}$.
\end{proof}

\noindent A direct consequence of Theorem \ref{Thm Heat production as power
series copy(2)} is that $\mathbf{T}_{t,s}^{(l,\eta )}=\mathcal{O}(|\Lambda
_{l}|)$. Note furthermore that Theorem \ref{Thm Heat production as power
series copy(2)} also holds when the cubic box $\Lambda _{l}$ is replaced by
\emph{any} finite subset $\Lambda \in \mathcal{P}_{f}(\mathfrak{L})$. The
assumptions of this theorem are fulfilled for any interactions $\Psi ,\Phi
\in \mathcal{W}$ with the decay function (\ref{example polynomial}),
provided the parameter $\epsilon \in \mathbb{R}^{+}$ is sufficiently large.
Theorem \ref{Thm Heat production as power series copy(2)} is thus a \emph{%
significant} extension of \cite[Theorem 3.4 (i)]{OhmI} in the sense that
very general inter--particle interactions and the full range of parameters $%
\eta \in \mathbb{R}$\ are now allowed.

In the case of exponentially decaying interactions we can bound the
derivatives $|\Lambda _{l}|^{-1}\partial _{\eta }^{m}\mathbf{T}%
_{t,s}^{(l,\eta )}$\ for all $m\in \mathbb{N}$, uniformly w.r.t. $l\in
\mathbb{R}_{0}^{+}$. We thus extend \cite[Theorem 3.4 (ii)]{OhmI} for
interactions $\Phi $ satisfying\ (\ref{assumption boundedness2}).

Under these conditions, we show below that the map $\eta \mapsto |\Lambda
_{l}|^{-1}\mathbf{T}_{t,s}^{(l,\eta )}$ from $\mathbb{R}$ to $\mathcal{U}$
is bounded in the sense of Gevrey norms, uniformly w.r.t. $l\in \mathbb{R}%
_{0}^{+}$. Note that real analytic functions (cf. \cite[Theorem 3.4 (ii)]%
{OhmI}) are a special case of Gevrey functions.

\begin{satz}[Increments as Gevrey maps]
\label{Thm Heat production as power series copy(1)}\mbox{
}\newline
Let $l,\mathrm{T}\in \mathbb{R}_{0}^{+}$, $s,t\in \left[ -\mathrm{T},\mathrm{%
T}\right] $, $\Psi \in \mathcal{W}$, and $\mathbf{V}$ be any potential.%
\index{Gevrey map}%
\index{Increment} Assume (\ref{(3.3) NS generalized}) and take an
interaction $\Phi $ satisfying (\ref{assumption boundedness2}) with $\mathbf{%
v}_{m}=\mathrm{e}^{m\varsigma }$. Assume further the real analyticity of the
maps $\eta \mapsto \mathbf{w}_{x,y}(\eta )$, $x,y\in \mathfrak{L}$, from $%
\mathbb{R}$ to $\mathbb{C}$ as well as the existence of $r\in \mathbb{R}^{+}$
such that%
\begin{equation}
K_{2}\doteq \sup_{x,y\in \mathfrak{L}}\ \sup_{m\in \mathbb{N}}\ \sup_{\eta
\in \mathbb{R}}%
\frac{r^{m}\partial _{\eta }^{m}\mathbf{w}_{x,y}(\eta )}{m!}<\infty \ .
\label{assumption boundedness}
\end{equation}%
\emph{(i)} Smoothness. As a function of $\eta \in \mathbb{R}$, $\mathbf{T}%
_{t,s}^{(l,\eta )}\in C^{\infty }(\mathbb{R};\mathcal{U})$ and for any $m\in
\mathbb{N}$,
\begin{eqnarray*}
\partial _{\eta }^{m}\mathbf{T}_{t,s}^{(l,\eta )}
&=&\sum\limits_{k=1}^{m}\sum\limits_{\Lambda \in \mathcal{P}_{f}(\mathfrak{L}%
)}i^{k}\int_{s}^{t}\mathrm{d}s_{1}\cdots \int_{s}^{s_{k-1}}\mathrm{d}s_{k} \\
&&\qquad \left. \partial _{\varepsilon }^{m}\left[ \mathbf{X}%
_{s_{k},s}^{(l,\eta ,\eta +\varepsilon )},\ldots ,\mathbf{X}%
_{s_{1},s}^{(l,\eta ,\eta +\varepsilon )},\tilde{\tau}_{t,s}^{(l,\eta
)}(\Phi _{\Lambda })\right] ^{(k+1)}\right\vert _{\varepsilon =0}\ .
\end{eqnarray*}%
The above series in $\Lambda $ are absolutely convergent.\newline
\emph{(ii)} Uniform boundedness of the Gevrey norm of density of increments.
There exist $\tilde{r}\equiv \tilde{r}_{d,\mathrm{T},\Psi ,K_{2},\mathbf{F}%
}\in \mathbb{R}^{+}$ and $D\equiv D_{\mathrm{T},\Psi ,K_{2},\Phi }\in
\mathbb{R}^{+}$ such that, for all $l\in \mathbb{R}_{0}^{+}$, $\eta \in
\mathbb{R}$ and $s,t\in \left[ -\mathrm{T},\mathrm{T}\right] $,
\begin{equation*}
\sum_{m\in \mathbb{N}}\frac{\tilde{r}^{m}}{\left( m!\right) ^{d}}\sup_{l\in
\mathbb{R}_{0}^{+}}\left\Vert \left\vert \Lambda _{l}\right\vert
^{-1}\partial _{\eta }^{m}\mathbf{T}_{t,s}^{(l,\eta )}\right\Vert _{\mathcal{%
U}}\leq D\ .
\end{equation*}
\end{satz}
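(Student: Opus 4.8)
The plan is to differentiate, term by term, the Dyson--Phillips expansion of the increment $\mathbf{T}_{t,s}^{(l,\eta)}$ and to estimate the resulting multi--commutators by the tree--decay bounds of Corollary~\ref{theorem exp tree decay}, arranging the tree combinatorics so that the $d$--dimensional lattice sums are exactly compensated by the $(m!)^{d}$ in the Gevrey norm. As preliminaries: since the exponential decay condition~(\ref{(3.3) NS generalized}) implies the polynomial decay condition~(\ref{(3.3) NS generalized0}) with arbitrarily large exponent, and~(\ref{assumption boundedness2}) with $\mathbf{v}_{m}=\mathrm{e}^{m\varsigma}$ implies it with $\mathbf{v}_{m}=(1+m)^{\varsigma'}$ for any $\varsigma'$, Theorem~\ref{Thm Heat production as power series copy(2)}~(i) already gives that $\mathbf{T}_{t,s}^{(l,\eta,L)}$ converges uniformly on $\mathbb{R}$, as $L\to\infty$, to a continuous $\mathbf{T}_{t,s}^{(l,\eta)}$, represented by the $L\to\infty$, $m\to\infty$ limit of~(\ref{finite dyson})--(\ref{finite dysonbis}). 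Moreover, by the standing uniform bound~(\ref{assumption boundedness0}) on $\{\mathbf{w}_{x,y}\}$, the perturbation $\mathbf{W}^{(l,\eta)}$ has finite range with $\Vert\mathbf{W}^{(l,\eta)}\Vert_{\mathcal{U}}$ controlled uniformly in $l$ and $\eta$, so $\delta^{(l,\eta)}$ of~(\ref{bounded pertubation}) generates the $C_{0}$--group $\tilde\tau^{(l,\eta)}$ of a short--range interaction $\Psi^{(l,\eta)}\in\mathcal{W}$ with $\sup_{l,\eta}\Vert\Psi^{(l,\eta)}\Vert_{\mathcal{W}}<\infty$; hence Corollary~\ref{theorem exp tree decay} applies to $\tilde\tau^{(l,\eta)}$ with \emph{all} constants ($\mathbf{D}$, $\mathbf{K}_{1}$, $\mathbf{C}_{1}$, $\varsigma$) uniform in $l$ and $\eta$. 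It then remains to differentiate the series and bound its derivatives.

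For the derivative formula in~(i), I would take $\eta_{0}=\eta$ in~(\ref{finite dyson}), replace the other parameter by $\eta+\varepsilon$, and apply $\partial_{\varepsilon}^{m}|_{\varepsilon=0}=\partial_{\eta}^{m}$. Each multi--commutator $[\mathbf{X}_{s_{k},s}^{(l,\eta,\eta+\varepsilon)},\dots,\mathbf{X}_{s_{1},s}^{(l,\eta,\eta+\varepsilon)},\tilde\tau_{t-s}^{(l,\eta)}(U^{\Phi}_{\Lambda_{L}})]^{(k+1)}$ is multilinear in its first $k$ entries, and each entry $\mathbf{X}_{s_{j},s}^{(l,\eta,\eta+\varepsilon)}=\tilde\tau_{s_{j}-s}^{(l,\eta)}(\mathbf{W}^{(l,\eta+\varepsilon)}-\mathbf{W}^{(l,\eta)})$ is real--analytic in $\varepsilon$ (a bounded finite--range perturbation with real--analytic coefficients) and \emph{vanishes at} $\varepsilon=0$. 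By Leibniz, the $m$--th derivative of the $k$--th term at $\varepsilon=0$ equals the sum over compositions $m_{1}+\dots+m_{k}=m$ with all $m_{j}\ge 1$ of $\binom{m}{m_{1},\dots,m_{k}}$ times the same multi--commutator with $\mathbf{X}_{s_{j},s}$ replaced by $\tilde\tau_{s_{j}-s}^{(l,\eta)}\big(\sum_{x\in\Lambda_{l}}\sum_{|z|\le1}\partial_{\eta}^{m_{j}}\mathbf{w}_{x,x+z}(\eta)\,a_{x}^{*}a_{x+z}\big)$, while the terms with $k>m$ contribute zero (a product of $k$ entries each $O(\varepsilon)$ is $O(\varepsilon^{k})$, and so is the remainder of~(\ref{finite dyson}) once its order exceeds $m$). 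This truncates the $k$--sum at $m$, produces exactly the stated formula after splitting $U^{\Phi}_{\Lambda_{L}}=\sum_{\Lambda\subseteq\Lambda_{L}}\Phi_{\Lambda}$ and letting $L\to\infty$, and the legitimacy of term--by--term differentiation, the $C^{\infty}$ property, the continuity in $\eta$ and the absolute convergence of the $\Lambda$--sums all follow from the uniform bounds of the next step.

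For~(ii), I would bound each multi--commutator in the formula of~(i). Expanding the hopping sums, writing $\Phi_{\Lambda}$ for $\Lambda\in\mathcal{D}(x_{0},m_{0})$ (cf.~(\ref{definition D1})) as $\chi_{x_{0}}(B_{0})$ with $B_{0}\in\mathcal{U}_{\Lambda_{m_{0}}}$ and $\Vert B_{0}\Vert_{\mathcal{U}}=\Vert\Phi_{\Lambda}\Vert_{\mathcal{U}}$, and applying Corollary~\ref{theorem exp tree decay}~(ii) to $\tilde\tau^{(l,\eta)}$ (one--site radius $m_{j}=1$ for the hopping factors), I would collect four contributions: (a) the time--ordered integral is over a simplex of volume $|t-s|^{k}/k!\le(2\mathrm{T})^{k}/k!$, with each Corollary factor $|s_{j}|\mathrm{e}^{4\mathbf{D}|s_{j}|\alpha}\le\mathrm{T}\mathrm{e}^{4\mathbf{D}\mathrm{T}\alpha}$, so the $1/k!$ cancels $|\mathcal{T}_{k+1}|=k!$; (b) the coefficients obey $\binom{m}{m_{1},\dots,m_{k}}\prod_{j}|\partial_{\eta}^{m_{j}}\mathbf{w}_{x_{j},x_{j}+z_{j}}(\eta)|\le m!\prod_{j}K_{2}r^{-m_{j}}=m!\,K_{2}^{k}r^{-m}$ uniformly in $\eta$ by~(\ref{assumption boundedness}), and there are at most $2^{m}$ compositions; (c) the $\Lambda$--sum is absorbed by~(\ref{assumption boundedness2}) with $\mathbf{v}_{m_{0}}=\mathrm{e}^{m_{0}\varsigma}$ against the Corollary's prefactor $\mathrm{e}^{m_{0}\varsigma}$; (d) the lattice sums $\sum_{x_{0}\in\mathfrak{L}}\sum_{x_{1},\dots,x_{k}\in\Lambda_{l}}$ of the exponential tree weights $\prod_{\{j,l\}\in T}\exp(-\varsigma|x_{j}-x_{l}|/(\sqrt{d}\max\{\mathfrak{d}_{T}(j),\mathfrak{d}_{T}(l)\}))$ I would perform by contracting the tree from its leaves inward, using $\sum_{y\in\mathbb{Z}^{d}}\mathrm{e}^{-c|y|/M}\lesssim M^{d}$ at each step and keeping exactly one free position anchored in $\Lambda_{l}$ (a factor $|\Lambda_{l}|$) with $x_{0}$ anchored to $\Lambda_{l}$ through the root edge, which gives $\lesssim|\Lambda_{l}|\,C_{d}^{k}\prod_{\{j,l\}\in T}\max\{\mathfrak{d}_{T}(j),\mathfrak{d}_{T}(l)\}^{d}$; by the degree identity~(\ref{equality pivotal}), $n^{n}\le n!\,\mathrm{e}^{n}$ and~(\ref{inequality easy2}) this product is $\le(\mathfrak{d}_{T}!)^{d}\mathrm{e}^{2dk}$, and a Stirling--type estimate yields $\sum_{T\in\mathcal{T}_{k+1}}(\mathfrak{d}_{T}!)^{d}\le C^{k}(k!)^{d}$. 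Multiplying (a)--(d), the $k$--th term of $|\Lambda_{l}|^{-1}\partial_{\eta}^{m}\mathbf{T}_{t,s}^{(l,\eta)}$ is $\le 2^{m}m!\,r^{-m}(k!)^{d-1}C_{1}^{k}$; summing over $k\in\{1,\dots,m\}$ gives $\Vert|\Lambda_{l}|^{-1}\partial_{\eta}^{m}\mathbf{T}_{t,s}^{(l,\eta)}\Vert_{\mathcal{U}}\le(m!)^{d}(C_{2}/r)^{m}$ with $C_{1},C_{2}$ depending only on $d,\mathrm{T},\Psi,K_{1},K_{2},\mathbf{F}$; choosing $\tilde r<r/C_{2}$ makes $\sum_{m\in\mathbb{N}}\tilde r^{m}(m!)^{-d}\sup_{l}\Vert|\Lambda_{l}|^{-1}\partial_{\eta}^{m}\mathbf{T}_{t,s}^{(l,\eta)}\Vert_{\mathcal{U}}$ a convergent geometric series, uniformly in $l,\eta,s,t$.

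I expect step~(d) to be the main obstacle: one must carry the tree--dependent exponential rates $\varsigma/\max\{\mathfrak{d}_{T}(j),\mathfrak{d}_{T}(l)\}$ through the $d$--dimensional position sums and reassemble them, via $\prod_{\{j,l\}\in T}\max\{\mathfrak{d}_{T}(j),\mathfrak{d}_{T}(l)\}^{d}\sim(\mathfrak{d}_{T}!)^{d}$ and $\sum_{T\in\mathcal{T}_{k+1}}(\mathfrak{d}_{T}!)^{d}\le C^{k}(k!)^{d}$, precisely into the $(m!)^{d}$ that the Gevrey norm of order $d$ can absorb — this is exactly where analytic (dimension--independent) input on $\eta\mapsto\mathbf{w}_{x,y}(\eta)$ gets traded for Gevrey--order--$d$ output on $|\Lambda_{l}|^{-1}\mathbf{T}_{t,s}^{(l,\eta)}$. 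A secondary but essential point is uniformity in $\eta$, which forces the expansion to be taken around $\eta_{0}=\eta$ (so that only the derivatives $\partial_{\eta}^{m_{j}}\mathbf{w}$ with $m_{j}\ge1$, bounded uniformly in $\eta$ by~(\ref{assumption boundedness}), occur) and relies on the standing hypothesis~(\ref{assumption boundedness0}) to keep the Lieb--Robinson constants of $\tilde\tau^{(l,\eta)}$ uniform in $\eta$.
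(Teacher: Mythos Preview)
Your proposal is correct and follows essentially the same route as the paper: expand around $\eta_{0}=\eta$, differentiate the truncated Dyson--Phillips series (the $k>m$ terms vanish since each $\mathbf{X}^{(l,\eta,\eta+\varepsilon)}$ is $O(\varepsilon)$), bound the coefficients by $D^{m}m!$ via~(\ref{assumption boundedness}), apply Corollary~\ref{theorem exp tree decay}~(ii), and control the tree-weighted lattice sums by what the paper packages as Proposition~\ref{Coro estimate super importante}. Your step~(d) is exactly that proposition, and your sketch (leaf contraction giving $M^{d}$ per edge, $\prod_{\{j,l\}\in T}\max\{\mathfrak{d}_{T}(j),\mathfrak{d}_{T}(l)\}^{d}\le(\mathfrak{d}_{T}!)^{d}\mathrm{e}^{2dk}$, then $\sum_{T}(\mathfrak{d}_{T}!)^{d}\le C^{k}(k!)^{d}$) matches the paper's proof via Lemmata~\ref{Lemma fix degrees}--\ref{Lemma estimate sums}; only the aside ``$1/k!$ cancels $|\mathcal{T}_{k+1}|=k!$'' in~(a) is misleading, since the actual balance is $(1/k!)\times(k!)^{d}=(k!)^{d-1}$, which is what your final arithmetic correctly uses.
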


Before giving the proof, note first that the assumptions of Theorem \ref{Thm
Heat production as power series copy(1)} are satisfied for any interactions $%
\Psi ,\Phi \in \mathcal{W}$ with the decay function (\ref{example}).
Moreover, under conditions of Theorem \ref{Thm Heat production as power
series copy(1)}, the family $\{|\Lambda _{l}|^{-1}\mathbf{T}_{t,s}^{(l,\eta
)}\}_{l\in \mathbb{R}_{0}^{+}}$ of functions of the variable $\eta $ at
dimension $d=1$ is uniformly bounded w.r.t. analytic norms. In particular,
for $d=1$ and any state $\varrho \in \mathcal{U}^{\ast }$, the limit of the
increment density $|\Lambda _{l}|^{-1}\varrho (\mathbf{T}_{t,s}^{(l,\eta )})$%
, as $l\rightarrow \infty $ (possibly along subsequences), is either
identically vanishing for all $\eta \in \mathbb{R}$, or is different from
zero for $\eta $ outside a discrete subset of $\mathbb{R}$. Note that, by
contrast, general non--vanishing Gevrey functions can have arbitrarily small
support. We discuss this with more details at the end of Section \ref%
{Section existence dynamics}.

We now conclude this subsection by proving Theorem \ref{Thm Heat production
as power series copy(1)}. To this end, we need the following estimate:

\begin{proposition}
\label{Coro estimate super importante}\mbox{ }\newline
There is a constant $D\in \mathbb{R}^{+}$ such that, for all $k\in {\mathbb{N%
}}$,%
\begin{equation*}
\sum_{T\in \mathcal{T}_{k+1}}\max_{j\in \{0,\ldots ,k\}}\max_{x_{j}\in
\mathfrak{L}}\sum_{x_{0},\ldots ,\leavevmode\hbox{\rlap{\thinspace/}{$x$}}%
_{j},\ldots ,x_{k}\in \mathfrak{L}}\ \prod\limits_{\{p,l\}\in T}\mathrm{e}^{-%
\frac{\varsigma \left\vert x_{p}-x_{l}\right\vert }{\sqrt{d}\max \{\mathfrak{%
d}_{T}(p),\mathfrak{d}_{T}(l)\}}}\leq D^{k}(k!)^{d}\text{ }.
\end{equation*}
\end{proposition}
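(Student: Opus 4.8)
The plan is to split the estimate into an analytic part — the lattice summations, performed one tree at a time — and a combinatorial part — the sum over $\mathcal{T}_{k+1}$ — and to organise the combinatorial part by degree sequences, so that the classical Prüfer count supplies exactly the cancellation that brings the bound down to $(k!)^{d}$ rather than $(k!)^{d+1}$. First I would fix $k$, a tree $T\in\mathcal{T}_{k+1}$, an index $j\in\{0,\dots,k\}$ and a site $x_{j}\in\mathfrak{L}$, root $T$ at the vertex $j$, and carry out the sums over the remaining sites $x_{v}$, $v\neq j$, in an order in which every vertex is summed before its parent. When $x_{v}$ is summed, all children of $v$ have already been eliminated, so the only surviving exponential factor containing $x_{v}$ is the one carried by the edge $\{v,u\}$ joining $v$ to its parent $u$; hence
\[
\sum_{x_{v}\in\mathbb{Z}^{d}}\mathrm{e}^{-\frac{\varsigma|x_{v}-x_{u}|}{\sqrt{d}\,\mathfrak{m}}}\ \leq\ C_{d}\,\mathfrak{m}^{d},\qquad\mathfrak{m}\doteq\max\{\mathfrak{d}_{T}(v),\mathfrak{d}_{T}(u)\}\geq1,
\]
by the elementary bound $\sum_{y\in\mathbb{Z}^{d}}\mathrm{e}^{-a|y|}\leq C_{d}(1+a^{-d})$ applied with $a=\varsigma/(\sqrt{d}\,\mathfrak{m})$. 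Pairing each of the $k$ summed vertices with its parent edge yields, \emph{uniformly in $j$ and $x_{j}$},
\[
\sum_{\substack{x_{0},\ldots,x_{k}\in\mathfrak{L}\\x_{j}\text{ omitted}}}\ \prod_{\{p,l\}\in T}\mathrm{e}^{-\frac{\varsigma|x_{p}-x_{l}|}{\sqrt{d}\,\max\{\mathfrak{d}_{T}(p),\mathfrak{d}_{T}(l)\}}}\ \leq\ C_{d}^{\,k}\prod_{\{p,l\}\in T}\big(\max\{\mathfrak{d}_{T}(p),\mathfrak{d}_{T}(l)\}\big)^{d}\ ,
\]
so the $\max_{j}$ and $\max_{x_{j}}$ in the statement cost nothing, and the proposition reduces to showing $\sum_{T\in\mathcal{T}_{k+1}}\prod_{\{p,l\}\in T}\big(\max\{\mathfrak{d}_{T}(p),\mathfrak{d}_{T}(l)\}\big)^{d}\leq D^{k}(k!)^{d}$.

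Next, since all degrees are $\geq1$, $\max\{a,b\}\leq ab$, and since each vertex $i$ lies in exactly $\mathfrak{d}_{T}(i)$ edges, one has $\prod_{\{p,l\}\in T}\max\{\mathfrak{d}_{T}(p),\mathfrak{d}_{T}(l)\}\leq\prod_{i=0}^{k}\mathfrak{d}_{T}(i)^{\mathfrak{d}_{T}(i)}$, so it suffices to bound $\sum_{T\in\mathcal{T}_{k+1}}\prod_{i=0}^{k}\mathfrak{d}_{T}(i)^{d\,\mathfrak{d}_{T}(i)}$. I would group this sum according to the degree sequence $(d_{0},\dots,d_{k})\doteq(\mathfrak{d}_{T}(0),\dots,\mathfrak{d}_{T}(k))$, which satisfies $d_{i}\geq1$ and $\sum_{i}d_{i}=2k$ (cf. (\ref{inequality easy})). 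Because $\mathcal{T}_{k+1}$ consists of labelled trees on $\{0,\dots,k\}$, the number of its elements with a prescribed degree sequence is at most the full Prüfer count $(k-1)!/\prod_{i}(d_{i}-1)!$, and the number of admissible degree sequences is at most $\binom{2k-1}{k}\leq4^{k}$.

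Finally, the decisive elementary inequality is $d_{i}^{\,d_{i}-1}\leq\mathrm{e}^{d_{i}}(d_{i}-1)!$ — a single term in the series for $\mathrm{e}^{d_{i}}$ — which gives $\prod_{i}d_{i}^{\,d_{i}}\leq\big(\prod_{i}d_{i}\big)\mathrm{e}^{2k}\prod_{i}(d_{i}-1)!$ after using $\sum_{i}d_{i}=2k$. Combined with the Prüfer count, with $\prod_{i}(d_{i}-1)!\leq(k-1)!$ (since $\sum_{i}(d_{i}-1)=k-1$), and with $\prod_{i}d_{i}\leq(2k/(k+1))^{k+1}\leq2^{k+1}$, each degree sequence contributes
\[
\frac{(k-1)!}{\prod_{i}(d_{i}-1)!}\Big(\prod_{i}d_{i}^{\,d_{i}}\Big)^{d}\ \leq\ (k-1)!\,\Big(\prod_{i}d_{i}\Big)^{d}\mathrm{e}^{2dk}\Big(\prod_{i}(d_{i}-1)!\Big)^{d-1}\ \leq\ b^{\,dk}\big((k-1)!\big)^{d}
\]
for a constant $b=b(d)$; summing over the $\leq4^{k}$ degree sequences and reinstating the factor $C_{d}^{\,k}$ from the first step gives the claim, using $((k-1)!)^{d}\leq(k!)^{d}$.

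The step I expect to be the main obstacle — and the only genuinely nonroutine point if one proceeds naively — is the organisation in the last two paragraphs: bounding $|\mathcal{T}_{k+1}|=k!$ and the per-tree lattice sum separately produces $(k!)^{d+1}$, too large by a full factorial. It is precisely the factor $1/\prod_{i}(d_{i}-1)!$ in the Prüfer count that must be retained and cancelled against the $\big(\prod_{i}(d_{i}-1)!\big)^{d-1}$ created when $\big(\prod_{i}d_{i}^{d_{i}}\big)^{d}$ is unfolded by Stirling, leaving $((k-1)!)^{d}$. Everything else is routine bookkeeping with geometric series and factorial inequalities.
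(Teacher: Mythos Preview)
Your argument is correct and follows essentially the same route as the paper: bound the lattice sum edge by edge along the rooted tree to obtain $C_{d}^{k}\prod_{i}\mathfrak{d}_{T}(i)^{d\,\mathfrak{d}_{T}(i)}$, convert to factorials via the Stirling-type inequality $g^{g}\leq\mathrm{e}^{g}g!$, and then control $\sum_{T}(\mathfrak{d}_{T}!)^{d}$ by grouping trees according to their degree sequence, using the Pr\"ufer bound $|\mathcal{T}_{k+1}(\mathfrak{d})|\leq(k-1)!/\prod_{i}(\mathfrak{d}(i)-1)!$ together with the estimate $\#\{\mathfrak{d}:\sum_{i}\mathfrak{d}(i)=2k\}\leq 4^{k}$ and $\prod_{i}(\mathfrak{d}(i)-1)!\leq(k-1)!$. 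The only cosmetic difference is that the paper first passes to $(\mathfrak{d}_{T}!)^{d}\leq(k!)^{d-1}\mathfrak{d}_{T}!$ before invoking Pr\"ufer, whereas you retain $(\prod_{i}\mathfrak{d}(i)^{\mathfrak{d}(i)})^{d}$ and cancel $(\prod_{i}(\mathfrak{d}(i)-1)!)^{-1}$ against one of the $d$ copies of $\prod_{i}(\mathfrak{d}(i)-1)!$ produced by Stirling; the two bookkeepings are equivalent.
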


\noindent The proof of this upper bound uses the fact that trees with
vertices of large degree are \textquotedblleft \emph{rare}%
\textquotedblright\ in a way that summing up the numbers $(\mathfrak{d}%
_{T}!)^{\alpha }$ for $T\in \mathcal{T}_{k+1}$ and any $\alpha \in \mathbb{R}%
^{+}$ gives factors behaving, at worse, like $D^{k}(k!)^{\alpha }$. The
arguments are standard results of finite mathematics. We prove them below
for completeness, in two simple lemmata.

Let $k\in {\mathbb{N}}$. For any fixed sequence $\mathfrak{d}=(\mathfrak{d}%
(0),\ldots ,\mathfrak{d}(k))\in {\mathbb{N}}^{k+1}$ define the set $\mathcal{%
T}_{k+1}(\mathfrak{d})\subset \mathcal{T}_{k+1}$ by%
\begin{equation*}
\mathcal{T}_{k+1}(\mathfrak{d})\doteq \{T\in \mathcal{T}_{k+1}:\mathfrak{d}%
_{T}\equiv (\mathfrak{d}_{T}(0),\ldots ,\mathfrak{d}_{T}(k))=\mathfrak{d}\}%
\text{ }.
\end{equation*}%
In other words, $\mathcal{T}_{k+1}(\mathfrak{d})$ is the set of all trees of
$\mathcal{T}_{k+1}$ with vertices having their degree fixed by the sequence $%
\mathfrak{d}$. The cardinality of this set is bounded as follows:

\begin{lemma}[Number of trees with vertices of fixed degrees]
\label{Lemma fix degrees}\mbox{ }\newline
For all $k\in {\mathbb{N}}$ and $\mathfrak{d}\in {\mathbb{N}}^{k+1}$,%
\index{Trees}%
\begin{equation*}
|\mathcal{T}_{k+1}(\mathfrak{d})|\leq
\frac{(k-1)!}{(\mathfrak{d}(0)-1)!\cdots (\mathfrak{d}(k)-1)!}\text{ }.
\end{equation*}
\end{lemma}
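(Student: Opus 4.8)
The plan is to prove the estimate by induction on $k$, exploiting directly the recursive construction (\ref{def.tree}) of $\mathcal{T}_{k+1}$ together with the handshake identity (\ref{inequality easy}). For the base case $k=1$ one has $\mathcal{T}_{2}=\{\{\{0,1\}\}\}$, so $\mathcal{T}_{2}(\mathfrak{d})$ is empty unless $\mathfrak{d}=(1,1)$, in which case $|\mathcal{T}_{2}(\mathfrak{d})|=1=0!/(0!\,0!)$; for every other $\mathfrak{d}$ the left-hand side is $0$ while the right-hand side, being a quotient of factorials, is nonnegative.

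For the inductive step, fix $k\geq 2$ and $\mathfrak{d}\in\mathbb{N}^{k+1}$. The key structural observation, read off from (\ref{def.tree}), is that every $T\in\mathcal{T}_{k+1}$ contains the vertex $k$ with degree exactly $1$ (it lies in the single added bond $\{j,k\}$ and in no bond of the tree $T'\in\mathcal{T}_{k}$ from which $T$ is built), so that the neighbour $j\in\{0,\dots,k-1\}$ of $k$ is determined by $T$ and $T'$ is recovered by deleting $k$ and its bond. Hence $\mathcal{T}_{k+1}(\mathfrak{d})=\emptyset$ (and the bound is trivial) unless $\mathfrak{d}(k)=1$, and when $\mathfrak{d}(k)=1$ the map $T\mapsto(j,T')$ is a bijection from $\mathcal{T}_{k+1}(\mathfrak{d})$ onto the disjoint union, over those $j\in\{0,\dots,k-1\}$ with $\mathfrak{d}(j)\geq 2$, of the sets $\mathcal{T}_{k}(\mathfrak{d}^{(j)})$, where $\mathfrak{d}^{(j)}\in\mathbb{N}^{k}$ agrees with $\mathfrak{d}$ on $\{0,\dots,k-1\}$ except that its $j$-th entry is $\mathfrak{d}(j)-1$. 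Feeding the induction hypothesis $|\mathcal{T}_{k}(\mathfrak{d}^{(j)})|\leq (k-2)!/\prod_{i=0}^{k-1}(\mathfrak{d}^{(j)}(i)-1)!$ into this identity, using the elementary relation $(\mathfrak{d}(j)-2)!^{-1}=(\mathfrak{d}(j)-1)(\mathfrak{d}(j)-1)!^{-1}$, and inserting the trivial factor $(\mathfrak{d}(k)-1)!=1$ into the product over $i$, one gets
\begin{equation*}
|\mathcal{T}_{k+1}(\mathfrak{d})|\;\leq\;\sum_{j:\,\mathfrak{d}(j)\geq 2}\frac{(k-2)!\,(\mathfrak{d}(j)-1)}{\prod_{i=0}^{k}(\mathfrak{d}(i)-1)!}\;=\;\frac{(k-2)!}{\prod_{i=0}^{k}(\mathfrak{d}(i)-1)!}\sum_{j=0}^{k-1}\bigl(\mathfrak{d}(j)-1\bigr)\;=\;\frac{(k-1)!}{\prod_{i=0}^{k}(\mathfrak{d}(i)-1)!}\,,
\end{equation*}
where the middle equality extends the sum harmlessly to all $j\in\{0,\dots,k-1\}$ (the terms with $\mathfrak{d}(j)=1$ vanish) and the last equality uses that, by (\ref{inequality easy}) and $\mathfrak{d}(k)=1$, $\sum_{j=0}^{k-1}(\mathfrak{d}(j)-1)=\sum_{i=0}^{k}(\mathfrak{d}(i)-1)-(\mathfrak{d}(k)-1)=2k-(k+1)=k-1$. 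This closes the induction.

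The only genuinely delicate point is the bookkeeping in the inductive step: one must verify carefully that ``delete the last vertex'' is a well-defined, degree-tracking bijection and that the neighbour $j$ of $k$ is unambiguously recovered from $T$; after that the displayed computation is purely mechanical. As an independent sanity check, one can also argue non-inductively: since $\mathcal{T}_{k+1}$ is a subset of the set of all labelled trees on $\{0,\dots,k\}$, the quantity $|\mathcal{T}_{k+1}(\mathfrak{d})|$ is at most the number of labelled trees on $k+1$ vertices with degree sequence $\mathfrak{d}$, which equals the multinomial coefficient $\binom{k-1}{\mathfrak{d}(0)-1,\dots,\mathfrak{d}(k)-1}$ by the Prüfer correspondence (vertex $i$ occurs exactly $\mathfrak{d}(i)-1$ times in the length-$(k-1)$ Prüfer code). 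I would present the self-contained induction as the proof and mention the Prüfer argument only as a parenthetical remark.
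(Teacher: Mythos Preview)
Your induction argument is correct and complete, modulo one small bookkeeping point: in the final displayed line you invoke (\ref{inequality easy}) to conclude $\sum_{j=0}^{k-1}(\mathfrak{d}(j)-1)=k-1$, but (\ref{inequality easy}) is the handshake identity for \emph{actual} trees, not for an arbitrary $\mathfrak{d}\in\mathbb{N}^{k+1}$. The fix is immediate---if $\sum_{i=0}^{k}\mathfrak{d}(i)\neq 2k$ then $\mathcal{T}_{k+1}(\mathfrak{d})=\emptyset$ by (\ref{inequality easy}) and the bound is again trivial---so simply add this to your list of cases dispatched before the main computation.

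Your route is genuinely different from the paper's. The paper does not induct: it writes down directly the injective ``simplified Pr\"{u}fer code'' $\mathfrak{C}(T)\doteq(\mathrm{P}_{T}(2),\dots,\mathrm{P}_{T}(k))\in\{0,\dots,k-1\}^{k-1}$, observes that each vertex $j$ occurs exactly $\mathfrak{d}_{T}(j)-1$ times in this sequence (using $\mathrm{P}_{T}(1)=0$), and bounds $|\mathcal{T}_{k+1}(\mathfrak{d})|$ by the number of such sequences, which is the multinomial coefficient. This is precisely the argument you relegated to your closing ``sanity check,'' specialised to the map $\mathrm{P}_{T}$ already set up in (\ref{def tree2})--(\ref{def tree3}). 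The paper's approach is shorter and exploits the existing $\mathrm{P}_{T}$ notation; yours is more self-contained, stays entirely within the recursive definition (\ref{def.tree}), and avoids having to check the occurrence-count property of the code. Either is perfectly acceptable here.
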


\begin{proof}
The bound can be proven, for instance, by using so--called \textquotedblleft
Pr\"{u}fer codes\textquotedblright . We give here a proof based on a
simplified version of such codes, well adapted to the particular sets of
trees $\mathcal{T}_{k+1}$. At fixed $k\in {\mathbb{N}}$, define the map $%
\mathfrak{C}:\mathcal{T}_{k+1}\rightarrow \{0,\ldots ,k-1\}^{k-1}$ by%
\begin{equation*}
\mathfrak{C}(T)\doteq (\mathrm{P}_{T}(2),\ldots ,\mathrm{P}_{T}(k))\text{ }.
\end{equation*}%
See (\ref{def.tree})--(\ref{def tree3}). This map is clearly injective and
if $j\in \{0,\ldots ,k\}$ is a vertex of degree $\mathfrak{d}_{T}(j)$, then
it appears exactly $(\mathfrak{d}_{T}(j)-1)$ times in the sequence $%
\mathfrak{C}(T)$. Note that $\mathfrak{d}_{T}(k)=1$ for all $T\in \mathcal{T}%
_{k+1}$. To finish the proof, fix $\mathfrak{d}=(\mathfrak{d}(0),\ldots ,%
\mathfrak{d}(k))\in {\mathbb{N}}^{k+1}$ and observe that if $\mathfrak{d}%
(0)+\cdots +\mathfrak{d}(k)=2k$ then there are exactly%
\begin{equation*}
\frac{(k-1)!}{(\mathfrak{d}(0)-1)!\cdots (\mathfrak{d}(k)-1)!}
\end{equation*}%
sequences in $\{0,\ldots ,k-1\}^{k-1}$ with $j\in \{0,\ldots ,k\}$ appearing
exactly $(\mathfrak{d}(j)-1)$ times in such sequences. If $\mathfrak{d}%
(0)+\cdots +\mathfrak{d}(k)\neq 2k$ then such a sequence does not exist.
\end{proof}

\begin{lemma}
\label{Lemma estimate sums}\mbox{ }\newline
For all $k\in {\mathbb{N}}$,%
\begin{equation*}
\sum\limits_{\mathfrak{d}(0),\ldots ,\mathfrak{d}(k)\in {\mathbb{N}}}\mathbf{%
1}[\mathfrak{d}(0)+\cdots +\mathfrak{d}(k)=2k]\leq 4^{k}\ .
\end{equation*}
\end{lemma}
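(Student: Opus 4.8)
The plan is to recognize the left-hand side as the number of compositions of $2k$ into exactly $k+1$ strictly positive integer parts, and then to bound this count by a central binomial coefficient.

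First I would note that $\sum_{\mathfrak{d}(0),\ldots,\mathfrak{d}(k)\in\mathbb{N}}\mathbf{1}[\mathfrak{d}(0)+\cdots+\mathfrak{d}(k)=2k]$ is simply the cardinality of the set $\{(\mathfrak{d}(0),\ldots,\mathfrak{d}(k))\in\mathbb{N}^{k+1}:\mathfrak{d}(0)+\cdots+\mathfrak{d}(k)=2k\}$, i.e.\ the number of ways of writing $2k$ as an ordered sum of $k+1$ positive integers. By the standard ``stars and bars'' argument --- line up $2k$ units and choose which $k$ of the $2k-1$ interior gaps to cut --- this cardinality equals $\binom{2k-1}{k}$. (For $k=1$ one checks directly that the count is $\binom{1}{1}=1\le 4$, consistently with the formula.)

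Next I would invoke the elementary identity $\binom{2k}{k}=\binom{2k-1}{k-1}+\binom{2k-1}{k}=2\binom{2k-1}{k}$, which follows from Pascal's rule together with $\binom{2k-1}{k}=\binom{2k-1}{k-1}$. Hence $\binom{2k-1}{k}=\tfrac12\binom{2k}{k}$. Finally, since $\binom{2k}{k}$ is one of the summands in $\sum_{j=0}^{2k}\binom{2k}{j}=2^{2k}=4^{k}$, we have $\binom{2k}{k}\le 4^{k}$, and therefore $\binom{2k-1}{k}=\tfrac12\binom{2k}{k}\le\tfrac12\cdot 4^{k}\le 4^{k}$, which is the asserted bound.

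There is essentially no serious obstacle; the argument is pure finite combinatorics. The only point requiring a moment's care is the boundary behaviour of the counting formula, handled by the explicit $k=1$ check above. If one prefers to avoid the closed form altogether, an even cruder route suffices: the map sending a composition to its set of $k$ cut-positions is an injection into the power set of $\{1,\ldots,2k-1\}$, giving immediately the bound $2^{2k-1}\le 4^{k}$.
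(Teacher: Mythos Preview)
Your argument is correct, but it proceeds quite differently from the paper. The paper identifies the sum as the coefficient of $z^{2k}$ in the power series $z^{k+1}/(1-z)^{k+1}$, then extracts that coefficient via the Cauchy integral formula
\[
\frac{1}{2\pi i}\oint_{|z|=1/2}\frac{\mathrm{d}z}{z^{k}(1-z)^{k+1}}
\]
and bounds the integral crudely on the circle $|z|=1/2$ to obtain $4^{k}$. By contrast, you recognise the sum directly as the number of compositions of $2k$ into $k+1$ positive parts, evaluate it exactly as $\binom{2k-1}{k}=\tfrac12\binom{2k}{k}$ via stars and bars and Pascal's rule, and then invoke $\binom{2k}{k}\le 2^{2k}$. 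Your route is more elementary and in fact yields the sharper bound $\tfrac12\,4^{k}$; the paper's contour-integral approach avoids the combinatorial identities but is analytically heavier for what is, as you note, a purely finite-combinatorial fact. Either method is perfectly adequate for the application in Proposition~\ref{Coro estimate super importante}, where only the order $D^{k}$ matters.
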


\begin{proof}
For $k\in {\mathbb{N}}$, the coefficient $c_{2k}$ of the analytic function%
\begin{equation*}
z\mapsto \frac{z^{k+1}}{(1-z)^{k+1}}=\sum_{m=1}^{\infty }c_{m}z^{m}
\end{equation*}%
on the complex disc $\{z\in \mathbb{C}$ $:$ $|z|<1\}$ is exactly the finite
sum%
\begin{equation*}
\sum\limits_{\mathfrak{d}(0),\ldots ,\mathfrak{d}(k)\in {\mathbb{N}}}\mathbf{%
1}[\mathfrak{d}(0)+\cdots +\mathfrak{d}(k)=2k]\text{ }.
\end{equation*}%
In particular,%
\begin{equation*}
\sum\limits_{\mathfrak{d}(0),\ldots ,\mathfrak{d}(k)\in {\mathbb{N}}}\mathbf{%
1}[\mathfrak{d}(0)+\cdots +\mathfrak{d}(k)=2k]=\frac{1}{2\pi i}%
\oint\limits_{|z|=1/2}\frac{1}{z^{k}(1-z)^{k+1}}\mathrm{d}z\text{ },
\end{equation*}%
which combined with the inequality%
\begin{equation*}
\left\vert \frac{1}{2\pi i}\oint\limits_{|z|=1/2}\frac{1}{z^{k}(1-z)^{k+1}}%
\mathrm{d}z\right\vert \leq 4^{k}
\end{equation*}%
yields the assertion.
\end{proof}

\noindent By using the two above lemmata, we now prove Proposition \ref{Coro
estimate super importante}:\medskip

\begin{proof}
Fix $\alpha \in \mathbb{R}^{+}$ and note first that, for all $d\in {\mathbb{N%
}}$,
\begin{equation*}
\lim_{g\rightarrow \infty }\frac{1}{g^{d}}\sum_{x\in \mathfrak{L}}\mathrm{e}%
^{-\frac{\alpha |x|}{g\sqrt{d}}}=\int_{\mathbb{R}^{d}}\mathrm{e}^{-\frac{%
\alpha |x|}{\sqrt{d}}}\mathrm{d}^{d}x<\infty \text{ }.
\end{equation*}%
Hence, for $d\in {\mathbb{N}}$, there is a constant $S_{d}\in \mathbb{R}^{+}$
such that%
\begin{equation*}
\sum_{x\in \mathfrak{L}}\mathrm{e}^{-\frac{\alpha |x|}{g\sqrt{d}}}\leq
S_{d}g^{d}\ ,\qquad g\in \mathbb{N}\ .
\end{equation*}%
From this estimate and by using the Stirling--type bounds \cite{Robb1}%
\index{Stirling--type bounds}
\begin{equation}
g^{g}\mathrm{e}^{-g}\mathrm{e}^{%
\frac{1}{12g+1}}\sqrt{2\pi g}\leq g!\leq g^{g}\mathrm{e}^{-g}\mathrm{e}^{%
\frac{1}{12g}}\sqrt{2\pi g}\ ,\qquad g\in \mathbb{N}\ ,
\label{stirling lowerbound}
\end{equation}%
we obtain%
\begin{eqnarray}
&&\max_{j\in \{0,\ldots ,k\}}\max_{x_{j}\in \mathfrak{L}}\sum_{x_{0},\ldots ,%
\leavevmode\hbox{\rlap{\thinspace/}{$x$}}_{j},\ldots ,x_{k}\in \mathfrak{L}%
}\ \prod\limits_{\{p,l\}\in T}\exp \left( -\frac{\varsigma \left\vert
x_{p}-x_{l}\right\vert }{\sqrt{d}\max \{\mathfrak{d}_{T}(p),\mathfrak{d}%
_{T}(l)\}}\right)  \notag \\
&\leq &S_{d}^{k}\prod\limits_{j=0}^{k}\mathfrak{d}_{T}(j)^{\mathfrak{d}%
_{T}(j)d}\leq S_{d}^{k}\mathrm{e}^{\mathfrak{d}_{T}(j)d}(\mathfrak{d}%
_{T}!)^{d}  \label{ine pas si easy}
\end{eqnarray}%
for all $d,k\in {\mathbb{N}}$ and $T\in \mathcal{T}_{k+1}$. We infer from (%
\ref{inequality easy2}) that%
\begin{equation}
\sum_{T\in \mathcal{T}_{k+1}}\left( \mathfrak{d}_{T}!\right) ^{d}\leq
(k!)^{d-1}\sum_{T\in \mathcal{T}_{k+1}}\left( \mathfrak{d}_{T}!\right) \ .
\label{ine easy2}
\end{equation}%
We use now Lemma \ref{Lemma fix degrees} to get
\begin{eqnarray*}
\sum_{T\in \mathcal{T}_{k+1}}\left( \mathfrak{d}_{T}!\right) &=&\sum\limits_{%
\mathfrak{d}(0),\ldots ,\mathfrak{d}(k)\in {\mathbb{N}}}\mathbf{1}[\mathfrak{%
d}(0)+\cdots +\mathfrak{d}(k)=2k]\sum_{T\in \mathcal{T}_{k+1}((\mathfrak{d}%
(0),\ldots ,\mathfrak{d}(k)))}(\mathfrak{d}_{T}!) \\
&\leq &k!\sum\limits_{\mathfrak{d}(0),\ldots ,\mathfrak{d}(k)\in {\mathbb{N}}%
}\mathbf{1}[\mathfrak{d}(0)+\cdots +\mathfrak{d}(k)=2k]\ \mathfrak{d}%
(0)\cdots \mathfrak{d}(k) \\
&\leq &k!\sum\limits_{\mathfrak{d}(0),\ldots ,\mathfrak{d}(k)\in {\mathbb{N}}%
}\mathbf{1}[\mathfrak{d}(0)+\cdots +\mathfrak{d}(k)=2k]\ \mathrm{e}^{%
\mathfrak{d}(0)}\cdots \mathrm{e}^{\mathfrak{d}(k)}\ .
\end{eqnarray*}%
We invoke (\ref{inequality easy}) and Lemma \ref{Lemma estimate sums} to
arrive at%
\begin{equation}
\sum_{T\in \mathcal{T}_{k+1}}\left( \mathfrak{d}_{T}!\right) \leq (k!)%
\mathrm{e}^{2k}\sum\limits_{\mathfrak{d}(0),\ldots ,\mathfrak{d}(k)\in {%
\mathbb{N}}}\mathbf{1}[\mathfrak{d}(0)+\cdots +\mathfrak{d}(k)=2k]\leq (k!)(4%
\mathrm{e}^{2})^{k}\text{ }.  \label{ine easy3}
\end{equation}%
Proposition \ref{Coro estimate super importante} is then a consequence of (%
\ref{ine pas si easy}), (\ref{ine easy2}) and (\ref{ine easy3}).
\end{proof}

\noindent We are now in position to prove Theorem \ref{Thm Heat production
as power series copy(1)}:\smallskip

\begin{proof}
(i) Observe that
\begin{equation}
\partial _{\eta }^{m}\mathbf{T}_{t,s}^{(l,\eta ,L)}=\partial _{\varepsilon
}^{m}\left. (\mathbf{T}_{t,s}^{(l,\eta +\varepsilon ,L)}-\mathbf{T}%
_{t,s}^{(l,\eta ,L)})\right\vert _{\varepsilon =0}\ .  \label{ge}
\end{equation}%
The difference $\mathbf{T}_{t,s}^{(l,\eta +\varepsilon ,L)}-\mathbf{T}%
_{t,s}^{(l,\eta ,L)}$ is explicitly given by a
\index{Dyson--Phillips series}Dyson--Phillips series involving
multi--commutators (\ref{multi1-0})--(\ref{multi2-0}): Use (\ref{finite
dyson}) to produce an infinite series. As the function $\eta \mapsto \mathbf{%
W}^{(l,\eta )}$ is, by assumption, real analytic, it follows that
\begin{eqnarray}
&&\partial _{\varepsilon }^{m}\left. (\mathbf{T}_{t,s}^{(l,\eta +\varepsilon
,L)}-\mathbf{T}_{t,s}^{(l,\eta ,L)})\right\vert _{\varepsilon =0}=
\label{generic Dyson--Phillips series} \\
&&\sum\limits_{k=1}^{m}i^{k}\int_{s}^{t}\mathrm{d}s_{1}\cdots
\int_{s}^{s_{k-1}}\mathrm{d}s_{k}\partial _{\varepsilon }^{m}\left. \left[
\mathbf{X}_{s_{k},s}^{(l,\eta ,\eta +\varepsilon )},\ldots ,\mathbf{X}%
_{s_{1},s}^{(l,\eta ,\eta +\varepsilon )},%
\tilde{\tau}_{t,s}^{(l,\eta )}(U_{\Lambda _{L}}^{\Phi })\right]
^{(k+1)}\right\vert _{\varepsilon =0}  \notag
\end{eqnarray}%
for any $m\in \mathbb{N}$, $l\in \mathbb{R}_{0}^{+}$, and $s,t,\eta \in
\mathbb{R}$. Set%
\begin{equation*}
\xi _{x_{1},z_{1},\ldots ,x_{k},z_{k}}\doteq \left. \partial _{\varepsilon
}^{m}\left\{ \prod\limits_{j=1}^{k}\left( \mathbf{w}_{x_{j},x_{j}+z_{j}}(%
\eta +\varepsilon )-\mathbf{w}_{x_{j},x_{j}+z_{j}}(\eta )\right) \right\}
\right\vert _{\varepsilon =0}\ .
\end{equation*}%
By (\ref{assumption boundedness}), these coefficients are uniformly bounded
w.r.t. $x_{1},z_{1},\ldots ,x_{k},z_{k}$ and $\eta $:
\begin{equation}
\sup_{x_{1},z_{1},\ldots ,x_{k},z_{k}\in \mathfrak{L}}\sup_{\eta \in \mathbb{%
R}}|\xi _{x_{1},z_{1},\ldots ,x_{k},z_{k}}|\leq D^{m}m!  \label{totoes}
\end{equation}%
for some constant $D\in \mathbb{R}^{+}$ depending on $K_{2}$ but not on $%
m\geq k$. Bounding the above multi--commutators exactly as done for the
proof of Theorem \ref{Thm Heat production as power series copy(2)} and by
taking the limit $L\rightarrow \infty $, we deduce from (\ref{ge})--(\ref%
{generic Dyson--Phillips series}) that, for any $m\in \mathbb{N}$ and $%
s,t,\eta \in \mathbb{R}$,
\begin{eqnarray}
\underset{L\rightarrow \infty }{\lim }\partial _{\eta }^{m}\mathbf{T}%
_{t,s}^{(l,\eta ,L)} &=&\sum\limits_{k=1}^{m}\sum\limits_{\Lambda \in
\mathcal{P}_{f}(\mathfrak{L})}i^{k}\int_{s}^{t}\mathrm{d}s_{1}\cdots
\int_{s}^{s_{k-1}}\mathrm{d}s_{k}  \label{limit} \\
&&\qquad \left. \partial _{\varepsilon }^{m}\left[ \mathbf{X}%
_{s_{k},s}^{(l,\eta ,\eta +\varepsilon )},\ldots ,\mathbf{X}%
_{s_{1},s}^{(l,\eta ,\eta +\varepsilon )},\tilde{\tau}_{t,s}^{(l,\eta
)}(\Phi _{\Lambda })\right] ^{(k+1)}\right\vert _{\varepsilon =0}\ .  \notag
\end{eqnarray}%
This limit is uniform for $\eta \in \mathbb{R}$ because of (\ref{totoes}).
As in Theorem \ref{Thm Heat production as power series copy(2)} (ii), the
above series in $\Lambda $ are absolutely convergent. Moreover, the uniform
convergence of $\partial _{\eta }^{m}\mathbf{T}_{t,s}^{(l,\eta ,L)}$, $m\in
\mathbb{N}$, together with Theorem \ref{Thm Heat production as power series
copy(2)} (i) implies that the energy increment limit $\mathbf{T}%
_{t,s}^{(l,\eta )}$ is a smooth function of $\eta $ with $m$--derivatives%
\begin{equation*}
\partial _{\eta }^{m}\mathbf{T}_{t,s}^{(l,\eta )}=\underset{L\rightarrow
\infty }{\lim }\partial _{\eta }^{m}\mathbf{T}_{t,s}^{(l,\eta ,L)}
\end{equation*}%
for all $m\in \mathbb{N}$ and $s,t,\eta \in \mathbb{R}$. Because of (\ref%
{limit}), Assertion (i) thus follows.

\noindent (ii) is a direct consequence of (i), Corollary \ref{theorem exp
tree decay}, and Proposition \ref{Coro estimate super importante} together
with (\ref{totoes}) and
\begin{equation*}
\int_{s}^{t}\mathrm{d}s_{1}\cdots \int_{s}^{s_{k-1}}\mathrm{d}s_{k}\leq
\frac{(2\mathrm{T})^{k}}{k!}\ .
\end{equation*}
\end{proof}

\section{Lieb--Robinson Bounds for Non--Autonomous Dynamics \label{section
LR non-auto}}

Like in Section \ref{Generalized Lieb--Robinson Bounds}, we only consider
fermion systems, but all results can easily be extended to quantum spin
systems (Section \ref{Quantum spin systems}). For quantum spin systems, note
that Lieb--Robinson bounds for non--autonomous dynamics have already been
considered in \cite{BMNS}. However, \cite{BMNS} only proves Lieb--Robinson
bounds for commutators, while the multi--commutator case was not considered,
in contrast with results of this section. Observe also that some aspects of
the non--autonomous case can be treated in a similar way to the autonomous
case. However, several important arguments cannot be directly extended to
the non--autonomous situation. Here, we only address in detail the technical
issues which are specific to the non--autonomous problem. See for instance
Corollary \ref{Theorem Lieb-Robinson copy(4)} (iii), Lemma \ref{Lemma Series
representation of the dynamics copy(1)}, Theorem \ref{thm non auto}, and
Theorem \ref{pertubed dynam thm}.

\subsection{Existence of Non--Autonomous Dynamics}

We now consider time--dependent models. So, let $\Psi \doteq \{\Psi
^{(t)}\}_{t\in \mathbb{R}}$ be a map from $\mathbb{R}$ to $\mathcal{W}$ such
that%
\index{Interaction!time--dependent}
\begin{equation*}
\left\Vert \Psi \right\Vert _{\infty }\doteq \sup_{t\in \mathbb{R}%
}\left\Vert \Psi ^{(t)}\right\Vert _{\mathcal{W}}<\infty \ .
\end{equation*}%
I.e., $\{\Psi ^{(t)}\}_{t\in \mathbb{R}}$ is a \emph{bounded} family in $%
\mathcal{W}$. We could easily extend the study of this section to families $%
\{\Psi ^{(t)}\}_{t\in \mathbb{R}}$ which are only bounded for $t$ on
compacta. We refrain from considering this more general case, for
simplicity. Take, furthermore, any collection $\{\mathbf{V}^{(t)}\}_{t\in
\mathbb{R}}$ of potentials. Note that (\ref{condition divergence}) is
allowed for any $t\in \mathbb{R}$.

For all $x\in \mathfrak{L}$ and $\Lambda \in \mathcal{P}_{f}(\mathfrak{L})$,
assume the continuity of the two maps $t\mapsto \Psi _{\Lambda }^{(t)}$, $%
t\mapsto \mathbf{V}_{\left\{ x\right\} }^{(t)}$ from $\mathbb{R}$ to $%
\mathcal{U}$, i.e., $\Psi _{\Lambda },\mathbf{V}_{\left\{ x\right\} }\in
C\left( \mathbb{R};\mathcal{U}\right) $. For any $L\in \mathbb{R}_{0}^{+}$,
this yields the existence, uniqueness and an explicit expression, as a
Dyson--Phillips series (cf. (\ref{dyson series})), of the solution $\{\tau
_{t,s}^{(L)}\}_{_{s,t\in \mathbb{R}}}$ of the (finite--volume) non--auto%
\-%
nomous evolutions equations%
\index{Evolution equation!non--autonomous}%
\index{Interaction!finite--volume dynamics}%
\index{Potential!finite--volume dynamics}%
\begin{equation}
\forall s,t\in {\mathbb{R}}:\qquad \partial _{s}\tau _{t,s}^{(L)}=-\delta
_{s}^{(L)}\circ \tau _{t,s}^{(L)}\ ,\qquad \tau _{t,t}^{(L)}=\mathbf{1}_{%
\mathcal{U}}\ ,  \label{cauchy1}
\end{equation}%
and%
\index{Evolution equation!non--autonomous (H1)}%
\begin{equation}
\forall s,t\in {\mathbb{R}}:\qquad \partial _{t}\tau _{t,s}^{(L)}=\tau
_{t,s}^{(L)}\circ \delta _{t}^{(L)}\ ,\qquad \tau _{s,s}^{(L)}=\mathbf{1}_{%
\mathcal{U}}\ .  \label{cauchy2}
\end{equation}%
Here, for any $t\in {\mathbb{R}}$ and $L\in \mathbb{R}_{0}^{+}$, the bounded
linear operator $\delta _{t}^{(L)}$ is defined on $\mathcal{U}$ by
\begin{equation*}
\delta _{t}^{(L)}(B)\doteq i\sum\limits_{\Lambda \subseteq \Lambda _{L}}
\left[ \Psi _{\Lambda }^{(t)},B\right] +i\sum\limits_{x\in \Lambda _{L}}%
\left[ \mathbf{V}_{\left\{ x\right\} }^{(t)},B\right] \ ,\qquad B\in
\mathcal{U}\ .
\end{equation*}%
Compare this definition with (\ref{dynamic series}). As explained in\
Section \ref{sect non auto} (see in particular Equations (\ref{non auto
explaination3})--(\ref{non auto explaination3bis})), recall that the natural
non--auto%
\-%
nomous evolution equation in Quantum Mechanics is (\ref{cauchy2}), but, by
boundedness of $\delta _{t}^{(L)}$ for all times, (\ref{cauchy1}) and (\ref%
{cauchy2}) are both satisfied.

Similar to the autonomous case, for any $L\in \mathbb{R}_{0}^{+}$, $\{\tau
_{t,s}^{(L)}\}_{_{s,t\in \mathbb{R}}}$ is a continuous two--para%
\-%
meter family of bounded operators that satisfies the (reverse) cocycle
property%
\index{Cocycle property!reverse}
\begin{equation}
\forall s,r,t\in \mathbb{R}:\qquad \tau _{t,s}^{(L)}=\tau _{r,s}^{(L)}\tau
_{t,r}^{(L)}\ .  \label{cocycle}
\end{equation}%
Its time--dependent generator $\delta _{t}^{(L)}$ is clearly a symmetric
derivation and $\tau _{t,s}^{(L)}$ is thus a $\ast $--auto%
\-%
morphism on $\mathcal{U}$ for all $L\in \mathbb{R}_{0}^{+}$ and $s,t\in
\mathbb{R}$. Moreover, similar to the autonomous case (cf. Theorem \ref%
{Theorem Lieb-Robinson} and Lemma \ref{Theorem Lieb-Robinson copy(2)}), for
all $L\in \mathbb{R}_{0}^{+}$ and $s,t\in \mathbb{R}$, $\tau _{t,s}^{(L)}$
satisfies Lieb--Robinson bounds and thus converges in the strong sense on $%
\mathcal{U}_{0}$, as $L\rightarrow \infty $:

\begin{satz}[Properties of non--autonomous finite--volume dynamics]
\label{Theorem Lieb-Robinsonnew}\mbox{
}\newline
Let $\Psi \doteq \{\Psi ^{(t)}\}_{t\in \mathbb{R}}$ be a bounded family on $%
\mathcal{W}$ (i.e., $\left\Vert \Psi \right\Vert _{\infty }<\infty $ ) and $%
\{\mathbf{V}^{(t)}\}_{t\in \mathbb{R}}$ a collection of potentials. For any $%
x\in \mathfrak{L}$ and $\Lambda \in \mathcal{P}_{f}(\mathfrak{L})$, assume $%
\Psi _{\Lambda },\mathbf{V}_{\left\{ x\right\} }\in C\left( \mathbb{R};%
\mathcal{U}\right) $. Fix $s,t\in \mathbb{R}$. \newline
\emph{(i)}
\index{Lieb--Robinson bounds}Lieb--Robinson bounds. For any $L\in \mathbb{R}%
_{0}^{+}$, $B_{1}\in \mathcal{U}^{+}\cap \mathcal{U}_{\Lambda ^{(1)}}$, and $%
B_{2}\in \mathcal{U}_{\Lambda ^{(2)}}$ with $\Lambda ^{(1)},\Lambda
^{(2)}\subsetneq \Lambda _{L}$ and $\Lambda ^{(1)}\cap \Lambda
^{(2)}=\emptyset $,
\begin{eqnarray*}
&&\left\Vert [\tau _{t,s}^{(L)}\left( B_{1}\right) ,B_{2}]\right\Vert _{%
\mathcal{U}} \\
&\leq &2\mathbf{D}^{-1}\left\Vert B_{1}\right\Vert _{\mathcal{U}}\left\Vert
B_{2}\right\Vert _{\mathcal{U}}\left( \mathrm{e}^{2\mathbf{D}\left\vert
t-s\right\vert \left\Vert \Psi \right\Vert _{\infty }}-1\right) \sum_{x\in
\partial _{\Psi }\Lambda ^{(1)}}\sum_{y\in \Lambda ^{(2)}}\mathbf{F}\left(
\left\vert x-y\right\vert \right) \ .
\end{eqnarray*}%
\emph{(ii)} Convergence of the finite--volume dynamics. For any $\Lambda \in
\mathcal{P}_{f}(\mathfrak{L})$, $B\in \mathcal{U}_{\Lambda }$, and $%
L_{1},L_{2}\in \mathbb{R}_{0}^{+}$ with $\Lambda \subset \Lambda
_{L_{1}}\varsubsetneq \Lambda _{L_{2}}$,
\begin{eqnarray*}
&&\left\Vert \tau _{t,s}^{(L_{2})}\left( B\right) -\tau
_{t,s}^{(L_{1})}\left( B\right) \right\Vert _{\mathcal{U}} \\
&\leq &2\left\Vert B\right\Vert _{\mathcal{U}}\left\Vert \Psi \right\Vert
_{\infty }\left\vert t-s\right\vert \mathrm{e}^{4\mathbf{D}\left\vert
t-s\right\vert \left\Vert \Psi \right\Vert _{\infty }}\sum\limits_{y\in
\Lambda _{L_{2}}\backslash \Lambda _{L_{1}}}\sum_{x\in \Lambda }\mathbf{F}%
\left( \left\vert x-y\right\vert \right) \ .
\end{eqnarray*}
\end{satz}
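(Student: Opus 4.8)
The plan is to transcribe the proofs of Theorem~\ref{Theorem Lieb-Robinson} (for part~(i)) and of Lemma~\ref{Theorem Lieb-Robinson copy(2)} (for part~(ii)), performing the replacements $\Vert\Psi\Vert_{\mathcal W}\to\Vert\Psi\Vert_\infty$ and $|t|\to|t-s|$, and checking the few steps where autonomy was genuinely used. The basic observation is that, at each fixed $L\in\mathbb R_0^+$, every $\delta_t^{(L)}$ is a \emph{bounded} symmetric derivation — a finite sum of inner derivations by bounded, even, self-adjoint local elements, even when (\ref{condition divergence}) holds for $\mathbf V^{(t)}$ — so that the propagators $\{\tau_{t,s}^{(L)}\}_{s,t\in\mathbb R}$ solving (\ref{cauchy1})--(\ref{cauchy2}) exist, are isometric $\ast$--automorphisms of $\mathcal U$, and obey the reverse cocycle identity (\ref{cocycle}); the same holds, with $\Lambda_L$ replaced by any $\Lambda\in\mathcal P_f(\mathfrak L)$, for the finite--volume dynamics $\{\tau_{t,s}^{(\Lambda)}\}$ and for the ``potential--only'' dynamics $\{\gamma_{t,s}^{(\Lambda)}\}$ generated by $i[\mathbf V_\Lambda^{(\cdot)},\,\cdot\,]$.

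\emph{Part (i).} Fix $s$ and replace the objects $C_{B_2}(\Lambda;t)$ and $f(t)$ of the proof of Theorem~\ref{Theorem Lieb-Robinson} by $C_{B_2}(\Lambda;t,s)\doteq\sup\{\Vert[\tau_{t,s}^{(L)}(B),B_2]\Vert_{\mathcal U}/\Vert B\Vert_{\mathcal U}:B\in\mathcal U^+\cap\mathcal U_\Lambda,\ B\neq0\}$ and $g(t)\doteq[\tau_{t,s}^{(L)}\circ\tau_{s,t}^{(\Lambda^{(1)})}(B_1),B_2]$, where $\tau_{s,t}^{(\Lambda^{(1)})}=(\tau_{t,s}^{(\Lambda^{(1)})})^{-1}$ is again a norm--preserving bijection of $\mathcal U^+\cap\mathcal U_{\Lambda^{(1)}}$. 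Differentiating $g$ in $t$ via (\ref{cauchy1})--(\ref{cauchy2}) and using that $A(t)\doteq\tau_{s,t}^{(\Lambda^{(1)})}(B_1)\in\mathcal U^+\cap\mathcal U_{\Lambda^{(1)}}$ — so that $(\delta_t^{(L)}-\delta_t^{(\Lambda^{(1)})})(A(t))=i\sum_{\mathcal Z\in\mathcal S_{\Lambda_L}(\Lambda^{(1)})}[\Psi_{\mathcal Z}^{(t)},A(t)]$, because every $\Psi_{\mathcal Z}^{(t)}$ with $\mathcal Z\subseteq\Lambda^{(1)}$ and every $\mathbf V_{\{x\}}^{(t)}$ with $x\in\Lambda^{(1)}$ is already part of $\delta_t^{(\Lambda^{(1)})}$, while every $\Psi_{\mathcal Z}^{(t)}$ with $\mathcal Z\cap\Lambda^{(1)}=\emptyset$ and every $\mathbf V_{\{x\}}^{(t)}$ with $x\notin\Lambda^{(1)}$ commutes with the even element $A(t)$ — one recovers the analogue of (\ref{derivativenew}) with time--dependent $\Psi^{(t)}$. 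Variation of constants (the auxiliary equation is norm--preserving since $\delta_t^{(L)}$ is a symmetric derivation) gives the analogues of (\ref{iteration2new0})--(\ref{iteration2new}), and the iteration producing (\ref{iteration3bis})--(\ref{final1}) goes through verbatim, with $\sum\Vert\Psi_{\mathcal Z}^{(u)}\Vert_{\mathcal U}\le\Vert\Psi\Vert_\infty$--type bounds in place of (\ref{iteration0}). Since $\Lambda^{(1)},\Lambda^{(2)}\subsetneq\Lambda_L$ is assumed here, the $\tilde\Psi^{(L)}$--truncation of Theorem~\ref{Theorem Lieb-Robinson} is not needed; summing the series yields (i).

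\emph{Part (ii).} Fix $\Lambda\in\mathcal P_f(\mathfrak L)$ and $B\in\mathcal U_\Lambda$ and pass, as in Lemma~\ref{Theorem Lieb-Robinson copy(2)}, to the interaction picture relative to the time--dependent potential: let $\tilde\tau_{t,s}^{(L)}$ be the cocycle obtained by peeling $\gamma_{t,s}^{(L)}$ off $\tau_{t,s}^{(L)}$, generated by the interaction $\{\Psi_{\mathcal Z}^{(u)}\}_{\mathcal Z}$ transported to the interaction picture by $\gamma^{(L)}$. Here $\gamma_{t,s}^{(L)}$ plays the role of $\mathrm e^{it\mathbf V_{\Lambda_L}}$; because at every fixed time the on--site terms $\{\mathbf V_{\{x\}}^{(t)}\}_{x\in\Lambda_L}$ mutually commute (even elements on disjoint singletons), $\gamma_{t,s}^{(L)}$ factorizes as $\prod_{x\in\Lambda_L}\gamma_{t,s}^{(\{x\})}$ with each factor acting trivially outside $\mathcal U_{\{x\}}$, so $\gamma_{t,s}^{(L)}(\mathcal U_{\mathcal Z})\subseteq\mathcal U_{\mathcal Z}$ for $\mathcal Z\subseteq\Lambda_L$ and the relation between $\tau_{t,s}^{(L)}(B)$ and $\tilde\tau_{t,s}^{(L)}$ of a local element is exactly the autonomous one. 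Writing $\tilde\tau_{t,s}^{(L_2)}(B)-\tilde\tau_{t,s}^{(L_1)}(B)=\int_s^t\partial_u\{\cdots\}\,\mathrm du$ and computing $\partial_u$ as in (\ref{eqaulity dyna1})--(\ref{eqaulity dyna3}) gives an integrand equal, up to conjugation by isometries, to $[\,B_u,\ \tau_{t,u}^{(L_1)}(\tilde B_t)\,]$ with $\tilde B_t\in\mathcal U_\Lambda$ and $B_u\in\mathcal U^+\cap\mathcal U_{\Lambda_{L_2}}$ a sum over $\mathcal Z\subseteq\Lambda_{L_2}$ with $\mathcal Z\cap(\Lambda_{L_2}\setminus\Lambda_{L_1})\neq\emptyset$ of elements of $\mathcal U_{\mathcal Z}$ of norm $\le\Vert\Psi_{\mathcal Z}^{(u)}\Vert_{\mathcal U}$. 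Applying part~(i) term by term (and bounding trivially the finitely many $\mathcal Z$ meeting $\Lambda$), followed by the combinatorial estimates (\ref{estimate utile})--(\ref{estimate utile2}), which involve only $\mathbf F$ and $\Vert\Psi\Vert_\infty$, yields the stated bound.

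The only genuinely new ingredient is this interaction--picture step in part~(ii): since $\mathrm e^{it\mathbf V_{\Lambda_L}}$ is unavailable, one must verify that the \emph{non--autonomous} potential--only propagator still factorizes over lattice sites — hence preserves the locality of each $\Psi_{\mathcal Z}^{(u)}$ — and one must keep the forward/reverse cocycle conventions consistent throughout. Everything else is a routine transcription of Theorem~\ref{Theorem Lieb-Robinson} and Lemma~\ref{Theorem Lieb-Robinson copy(2)}.
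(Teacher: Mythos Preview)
Your proposal is correct and follows essentially the same route as the paper: for (i) you reproduce the paper's function $f(s,t)=[\tau_{t,s}^{(L)}\circ\tau_{s,t}^{(\Lambda^{(1)})}(B_1),B_2]$ and its derivative (\ref{derivativenewderivativenew}), and for (ii) you set up the same interaction picture relative to the potential (the paper writes it via the unitary propagators $\mathcal V_{s,t}(\mathbf V_{\Lambda_L})$ of (\ref{non-auto evol eq})--(\ref{interactino picture0}), you via the automorphisms $\gamma_{t,s}^{(L)}$, but these are equivalent), with your factorization remark being exactly what underlies the paper's unproven locality claim (\ref{inequality idiote utile}). One small point: when you ``apply part~(i) term by term'' in (ii), the relevant $\mathcal Z$'s lie in $\Lambda_{L_2}$, not necessarily $\subsetneq\Lambda_{L_1}$, so you need the truncation trick (replace $\Psi^{(t)}$ by its restriction to $\Lambda_{L_1}$ and enlarge $L$) that you dismissed in part~(i)---the paper handles this implicitly by deferring to Lemma~\ref{Theorem Lieb-Robinson copy(2)}, where Theorem~\ref{Theorem Lieb-Robinson} has already had that restriction removed.
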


\begin{proof}
(i) The arguments are a straightforward extension of those proving Theorem %
\ref{Theorem Lieb-Robinson} to non--auto%
\-%
nomous dynamics: Fix $L\in \mathbb{R}_{0}^{+}$, $B_{1}\in \mathcal{U}%
^{+}\cap \mathcal{U}_{\Lambda ^{(1)}}$ and $B_{2}\in \mathcal{U}_{\Lambda
^{(2)}}$ with disjoint sets $\Lambda ^{(1)},\Lambda ^{(2)}\subsetneq \Lambda
_{L}$. Similar to (\ref{iteration1new})--(\ref{derivativenew}), we infer
from (\ref{cauchy1})--(\ref{cauchy2}) that the derivative w.r.t. to $t$ of
the function
\begin{equation*}
f\left( s,t\right) \doteq \left[ \tau _{t,s}^{(L)}\circ \tau
_{s,t}^{(\Lambda ^{(1)})}\left( B_{1}\right) ,B_{2}\right] \ ,\qquad s,t\in {%
\mathbb{R}}\ ,
\end{equation*}%
equals
\begin{eqnarray}
\partial _{t}f\left( s,t\right) &=&i\sum\limits_{\mathcal{Z}\in \mathcal{S}%
_{\Lambda _{L}}(\Lambda ^{(1)})}\left[ \tau _{t,s}^{(L)}(\Psi _{\mathcal{Z}%
}^{(t)}),f\left( s,t\right) \right]  \label{derivativenewderivativenew} \\
&&-i\sum\limits_{\mathcal{Z}\in \mathcal{S}_{\Lambda _{L}}(\Lambda ^{(1)})}%
\left[ \tau _{t,s}^{(L)}\circ \tau _{s,t}^{(\Lambda ^{(1)})}\left(
B_{1}\right) ,\left[ \tau _{t,s}^{(L)}(\Psi _{\mathcal{Z}}^{(t)}),B_{2}%
\right] \right] \ .  \notag
\end{eqnarray}%
Exactly like (\ref{iteration2new0}), it follows that%
\begin{equation*}
\left\Vert f\left( s,t\right) \right\Vert _{\mathcal{U}}\leq \left\Vert
f\left( s,s\right) \right\Vert _{\mathcal{U}}+2\left\Vert B_{1}\right\Vert _{%
\mathcal{U}}\sum\limits_{\mathcal{Z}\in \mathcal{S}_{\Lambda _{L}}(\Lambda
^{(1)})}\int_{\min \{s,t\}}^{\max \{s,t\}}\left\Vert \left[ \tau _{\alpha
,s}^{(L)}(\Psi _{\mathcal{Z}}^{(\alpha )}),B_{2}\right] \right\Vert _{%
\mathcal{U}}\mathrm{d}\alpha
\end{equation*}%
for any $s,t\in \mathbb{R}$. Therefore, by using estimates that are similar
to (\ref{iteration2new})--(\ref{final1}), we deduce Assertion (i).

\noindent (ii) The arguments are extensions to the non--auto%
\-%
nomous case of those proving Lemma \ref{Theorem Lieb-Robinson copy(2)}:
Since $\Psi _{\Lambda },\mathbf{V}_{\left\{ x\right\} }\in C\left( \mathbb{R}%
;\mathcal{U}\right) $ for any $x\in \mathfrak{L}$ and $\Lambda \in \mathcal{P%
}_{f}(\mathfrak{L})$, the time--dependent energy observables%
\begin{equation*}
H_{L}^{(t)}\doteq \sum\limits_{\Lambda \subseteq \Lambda _{L}}\Psi _{\Lambda
}^{(t)}+\sum\limits_{x\in \Lambda _{L}}\mathbf{V}_{\left\{ x\right\}
}^{(t)}\ ,\qquad L\in \mathbb{R}_{0}^{+}\ ,\ \ t\in \mathbb{R}\ ,
\end{equation*}%
and potentials
\begin{equation*}
\mathbf{V}_{\mathcal{Z}}^{\left( t\right) }\doteq \sum\limits_{x\in \mathcal{%
Z}}\mathbf{V}_{\left\{ x\right\} }^{\left( t\right) }\in \mathcal{U}^{+}\cap
\mathcal{U}_{\mathcal{Z}}\ ,\qquad \mathcal{Z}\in \mathcal{P}_{f}(\mathfrak{L%
})\ ,\ \ t\in \mathbb{R}\ ,
\end{equation*}%
generate two solutions $\{\mathcal{V}_{s,t}(H_{L})\}_{s,t\in \mathbb{R}}$
and $\{\mathcal{V}_{s,t}(\mathbf{V}_{\mathcal{Z}})\}_{s,t\in \mathbb{R}}$,
respectively, of the non--autonomous evolution equations%
\begin{equation}
\partial _{t}\left( \mathcal{V}_{s,t}(X)\right) =i\mathcal{V}%
_{s,t}(X)X^{(t)}\qquad
\text{and}\qquad \partial _{s}\left( \mathcal{V}_{s,t}(X)\right)
=-iX^{\left( s\right) }\mathcal{V}_{s,t}(X)  \label{non-auto evol eq}
\end{equation}%
with $X^{(t)}=H_{L}^{(t)}$ or $\mathbf{V}_{\mathcal{Z}}^{\left( t\right) }$.
These evolution families satisfy $\mathcal{V}_{t,t}(X)=\mathbf{1}_{\mathcal{U%
}}$ for $t\in \mathbb{R}$ as well as the (usual) cocycle
(Chapman--Kolmogorov) property%
\index{Cocycle property}
\begin{equation}
\forall t,r,s\in \mathbb{R}:\qquad \mathcal{V}_{s,t}(X)=\mathcal{V}_{s,r}(X)%
\mathcal{V}_{r,t}(X)\ .  \label{cocylce}
\end{equation}%
For any $L\in \mathbb{R}_{0}^{+}$ and $s,t,\alpha \in \mathbb{R}$, we then
replace (\ref{unitary propagator}) in the proof of Lemma \ref{Theorem
Lieb-Robinson copy(2)} with
\begin{equation}
\mathbf{U}_{L}\left( t,\alpha \right) \doteq \mathcal{V}_{s,t}(\mathbf{V}%
_{\Lambda _{L}})\mathcal{V}_{t,\alpha }(H_{L})\mathcal{V}_{\alpha ,s}(%
\mathbf{V}_{\Lambda _{L}})%
\text{ }.  \label{interactino picture0}
\end{equation}%
By (\ref{cocylce}), $\mathbf{U}_{L}\left( t,t\right) =\mathbf{1}_{\mathcal{U}%
}$ for all $t\in \mathbb{R}$ while%
\begin{equation}
\partial _{t}\mathbf{U}_{L}\left( t,\alpha \right) =-iG_{L}\left( t\right)
\mathbf{U}_{L}\left( t,\alpha \right) \text{\quad and\quad }\partial
_{\alpha }\mathbf{U}_{L}\left( t,\alpha \right) =i\mathbf{U}_{L}\left(
t,\alpha \right) G_{L}\left( \alpha \right)  \label{interactino picture1}
\end{equation}%
with%
\begin{equation}
G_{L}\left( t\right) \doteq \sum\limits_{\mathcal{Z}\subseteq \Lambda _{L}}%
\mathcal{V}_{s,t}(\mathbf{V}_{\Lambda _{L}})\ \Psi _{\mathcal{Z}}\ \mathcal{V%
}_{t,s}(\mathbf{V}_{\Lambda _{L}})\ .  \label{interactino picture1bis}
\end{equation}%
Using the notation%
\begin{equation}
\tilde{\tau}_{t,s}^{(L)}\left( B\right) \doteq \mathbf{U}_{L}\left(
s,t\right) B\mathbf{U}_{L}\left( t,s\right) \ ,\text{\qquad }B\in \mathcal{U}%
_{\Lambda }\ ,  \label{interactino picture}
\end{equation}%
for any $s,t\in \mathbb{R}$ and $L\in \mathbb{R}_{0}^{+}$ such that $\Lambda
\subset \Lambda _{L}$, observe that%
\begin{equation}
\tau _{t,s}^{(L)}\left( B\right) =\mathcal{V}_{s,t}(H_{L})B\mathcal{V}%
_{t,s}(H_{L})=\tilde{\tau}_{t,s}^{(L)}\left( \mathcal{V}_{s,t}(\mathbf{V}%
_{\Lambda })B\mathcal{V}_{t,s}(\mathbf{V}_{\Lambda })\right) \text{ }.
\label{interactino picture2}
\end{equation}%
Note that, for any $s,t\in \mathbb{R}$, $\Lambda ,\mathcal{Z}\in \mathcal{P}%
_{f}(\mathfrak{L})$ and $B\in \mathcal{U}_{\Lambda }$,
\begin{equation}
\mathcal{V}_{s,t}(\mathbf{V}_{\mathcal{Z}})B\mathcal{V}_{t,s}(\mathbf{V}_{%
\mathcal{Z}})\in \mathcal{U}_{\Lambda }\quad \text{and}\quad \left\Vert
\mathcal{V}_{s,t}(\mathbf{V}_{\mathcal{Z}})B\mathcal{V}_{t,s}(\mathbf{V}_{%
\mathcal{Z}})\right\Vert _{\mathcal{U}}=\left\Vert B\right\Vert _{\mathcal{U}%
}\ .  \label{inequality idiote utile}
\end{equation}%
Hence, it suffices to study the net $\{\tilde{\tau}_{t,s}^{(L)}\left(
B\right) \}_{L\in \mathbb{R}_{0}^{+}}$ with $B\in \mathcal{U}_{\Lambda }$.
Up to straightforward modifications taking into account the initial time $%
s\in \mathbb{R}$, the remaining part of the proof is now identical to the
arguments starting from Equation (\ref{eqaulity dyna1}) in the proof of
Lemma \ref{Theorem Lieb-Robinson copy(2)}.
\end{proof}

\begin{koro}[Infinite--volume dynamics]
\label{Theorem Lieb-Robinson copy(4)}\mbox{
}\newline
Under the conditions of Theorem \ref{Theorem Lieb-Robinsonnew},
finite--volume families $\{\tau _{t,s}^{(L)}\}_{s,t\in {\mathbb{R}}}$, $L\in
\mathbb{R}_{0}^{+}$, converge strongly and uniformly for $s,t$ on compact
sets to a strongly continuous two--para%
\-%
meter family%
\index{Interaction!infinite--volume dynamics}%
\index{Potential!infinite--volume dynamics} $\{\tau _{t,s}\}_{s,t\in {%
\mathbb{R}}}$ of $\ast $--auto%
\-%
morphisms on $\mathcal{U}$ satisfying the following properties:\newline
\emph{(i)}
\index{Cocycle property!reverse}Reverse cocycle property.
\begin{equation*}
\forall s,r,t\in \mathbb{R}:\qquad \tau _{t,s}=\tau _{r,s}\tau _{t,r}\ .
\end{equation*}%
\emph{(ii)}
\index{Lieb--Robinson bounds}Lieb--Robinson bounds. For any $s,t\in \mathbb{R%
}$, $B_{1}\in \mathcal{U}^{+}\cap \mathcal{U}_{\Lambda ^{(1)}}$, and $%
B_{2}\in \mathcal{U}_{\Lambda ^{(2)}}$ with disjoint sets $\Lambda
^{(1)},\Lambda ^{(2)}\in \mathcal{P}_{f}(\mathfrak{L})$,
\begin{eqnarray*}
&&\left\Vert [\tau _{t,s}\left( B_{1}\right) ,B_{2}]\right\Vert _{\mathcal{U}%
} \\
&\leq &2\mathbf{D}^{-1}\left\Vert B_{1}\right\Vert _{\mathcal{U}}\left\Vert
B_{2}\right\Vert _{\mathcal{U}}\left( \mathrm{e}^{2\mathbf{D}\left\vert
t-s\right\vert \left\Vert \Psi \right\Vert _{\infty }}-1\right) \sum_{x\in
\partial _{\Psi }\Lambda ^{(1)}}\sum_{y\in \Lambda ^{(2)}}\mathbf{F}\left(
\left\vert x-y\right\vert \right) \ .
\end{eqnarray*}%
\emph{(iii)}
\index{Evolution equation!non--autonomous (H1)}Non--autonomous evolution
equation. If $\Psi \in C(\mathbb{R};\mathcal{W})$ then $\{\tau
_{t,s}\}_{s,t\in {\mathbb{R}}}$ is the unique family of bounded operators on
$\mathcal{U}$ satisfying, in the strong sense on the dense domain $\mathcal{U%
}_{0}\subset \mathcal{U}$,%
\begin{equation}
\forall s,t\in {\mathbb{R}}:\qquad \partial _{t}\tau _{t,s}=\tau _{t,s}\circ
\delta _{t}\ ,\qquad \tau _{s,s}=\mathbf{1}_{\mathcal{U}}\ .
\label{cauchy trivial1}
\end{equation}%
Here, $\delta _{t}$, $t\in {\mathbb{R}}$, are the conservative closed
symmetric derivations, with common core $\mathcal{U}_{0}$, associated with
the interactions $\Psi ^{(t)}\in \mathcal{W}$ and the potentials $\mathbf{V}%
^{(t)}$. See Theorem \ref{Theorem Lieb-Robinson copy(3)}.
\end{koro}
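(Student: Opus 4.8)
The plan is to follow the same architecture as in the autonomous case (Lemma \ref{Theorem Lieb-Robinson copy(2)} plus Theorem \ref{Theorem Lieb-Robinson copy(3)}), replacing the single-parameter group machinery by its two-parameter analogue, and to reduce the non-autonomous generation step to the autonomous one by freezing the generator. First I would extract strong convergence of $\{\tau_{t,s}^{(L)}\}$ on $\mathcal{U}_0$ directly from Theorem \ref{Theorem Lieb-Robinsonnew} (ii): for fixed $B\in\mathcal{U}_\Lambda$ the estimate there shows $\{\tau_{t,s}^{(L)}(B)\}_{L}$ is Cauchy in $\mathcal{U}$, with the bound uniform for $s,t$ in compacta because of the factor $|t-s|\mathrm{e}^{4\mathbf{D}|t-s|\|\Psi\|_\infty}$ and the tail $\sum_{y\notin\Lambda_{L_1}}\sum_{x\in\Lambda}\mathbf{F}(|x-y|)\to 0$ (Assumption (\ref{(3.1) NS})). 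Since each $\tau_{t,s}^{(L)}$ is an isometric $\ast$--automorphism and $\mathcal{U}_0$ is dense, the limit extends to an isometry $\tau_{t,s}$ on all of $\mathcal{U}$, which is multiplicative and $\ast$--preserving as a pointwise limit of $\ast$--homomorphisms, hence a $\ast$--automorphism; surjectivity follows from strong convergence of the ``reverse'' family $\tau_{s,t}^{(L)}$, which plays the role of the inverse. Uniform convergence on compacta for $B\in\mathcal{U}_0$ then gives joint strong continuity of $(s,t)\mapsto\tau_{t,s}$ on the dense set and, by the uniform bound $\|\tau_{t,s}^{(L)}\|_{\mathcal{B}(\mathcal{U})}=1$, on all of $\mathcal{U}$.

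Assertion (i), the reverse cocycle property $\tau_{t,s}=\tau_{r,s}\tau_{t,r}$, passes to the limit from the finite-volume identity (\ref{cocycle}): fix $B\in\mathcal{U}_0$, write $\tau_{t,s}^{(L)}(B)=\tau_{r,s}^{(L)}(\tau_{t,r}^{(L)}(B))$, and control the difference $\tau_{r,s}^{(L)}(\tau_{t,r}^{(L)}(B))-\tau_{r,s}(\tau_{t,r}(B))$ by splitting it as $\tau_{r,s}^{(L)}(\tau_{t,r}^{(L)}(B)-\tau_{t,r}(B))$ plus $(\tau_{r,s}^{(L)}-\tau_{r,s})(\tau_{t,r}(B))$; the first term is $\le\|\tau_{t,r}^{(L)}(B)-\tau_{t,r}(B)\|_{\mathcal{U}}\to 0$ by isometry, and the second requires approximating $\tau_{t,r}(B)$ by local elements, using density of $\mathcal{U}_0$ and the uniform bound again. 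Assertion (ii), the infinite-volume Lieb--Robinson bound, is immediate: apply Theorem \ref{Theorem Lieb-Robinsonnew} (i) with $L$ large enough that $\Lambda^{(1)},\Lambda^{(2)}\subsetneq\Lambda_L$ (using the same truncation trick $\tilde\Psi^{(L)}$ as in the proof of Theorem \ref{Theorem Lieb-Robinson}, noting $\tilde\tau_{t,s}^{(\tilde L)}=\tau_{t,s}^{(L)}$ for $\tilde L>L$), take $B_1,B_2$ of the stated type, and pass to the limit $L\to\infty$ using strong convergence on $\mathcal{U}_0$ (first for local $B_1,B_2$, then by density).

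Assertion (iii) is where the real work lies, and I expect it to be the main obstacle. Under the extra hypothesis $\Psi\in C(\mathbb{R};\mathcal{W})$ one must show $\{\tau_{t,s}\}$ solves (\ref{cauchy trivial1}) in the strong sense on $\mathcal{U}_0$, with the $\delta_t$ the \emph{closed} conservative derivations of Theorem \ref{Theorem Lieb-Robinson copy(3)}. The plan is to freeze time: for fixed $r\in\mathbb{R}$ let $\{\tau_t^{r}\}_{t\in\mathbb{R}}$ be the autonomous $C_0$--group generated by $\delta_r$ (which exists by Theorem \ref{Theorem Lieb-Robinson copy(3)} applied to the frozen interaction $\Psi^{(r)}$ and potential $\mathbf{V}^{(r)}$), and compare $\tau_{t,s}$ with $\tau_{t-s}^{s}$ near $t=s$. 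One shows, via a Duhamel-type formula at finite volume plus the quantitative rate in Theorem \ref{Theorem Lieb-Robinson copy(3)} (iii) and the continuity $t\mapsto\Psi^{(t)}\in\mathcal{W}$, that for $B\in\mathcal{U}_0$,
\begin{equation*}
\Big\|\frac{\tau_{t,s}(B)-B}{t-s}-\delta_s(B)\Big\|_{\mathcal{U}}\xrightarrow[t\to s]{}0 ,
\end{equation*}
and more generally that $\partial_t\tau_{t,s}(B)=\tau_{t,s}(\delta_t(B))$ by writing $\tau_{t,s}=\tau_{t,r}\tau_{r,s}$ (reverse cocycle), differentiating the first factor at $t=r$, and using that $\tau_{r,s}(B)$ is again local so $\delta_r$ acts on it. The delicate points are: (a) handling the time-dependent, possibly unbounded potentials $\mathbf{V}^{(t)}$ — these are dealt with exactly as in Lemma \ref{Theorem Lieb-Robinson copy(2)}, by passing to the interaction picture (\ref{interactino picture0})--(\ref{interactino picture2}) so that only the bounded pieces $\Psi_{\mathcal{Z}}$ enter the commutator estimates, and the conjugations $\mathcal{V}_{s,t}(\mathbf{V}_\Lambda)\cdot\mathcal{V}_{t,s}(\mathbf{V}_\Lambda)$ preserve $\mathcal{U}_\Lambda$ and norms by (\ref{inequality idiote utile}); and (b) uniqueness, which follows by a standard argument: if $\{\sigma_{t,s}\}$ is another such family, then for $B\in\mathcal{U}_0$ the map $t\mapsto\sigma_{t,r}\tau_{r,t}(B)$ has vanishing derivative (both families satisfy the same equation and $\tau_{r,t}(B)\in\mathcal{U}_0$ stays in the common core), hence is constant, giving $\sigma_{t,s}=\tau_{t,s}$ on $\mathcal{U}_0$ and then everywhere by density and boundedness. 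The continuity $t\mapsto\delta_t(B)$ for $B\in\mathcal{U}_0$, needed to make sense of the right-hand side of (\ref{cauchy trivial1}) and to run the freezing argument, comes directly from $\Psi\in C(\mathbb{R};\mathcal{W})$, $\mathbf{V}_{\{x\}}\in C(\mathbb{R};\mathcal{U})$ and the absolutely convergent series (\ref{dynamic seriesbis}) with the bound (\ref{inequlity utile}).
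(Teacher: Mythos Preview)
Your treatment of convergence, the cocycle property (i), and the Lieb--Robinson bound (ii) is essentially the paper's argument. The gap is in (iii), and it is a real one: you repeatedly use that $\tau_{r,s}(B)$ (respectively $\tau_{r,t}(B)$) \emph{stays in} $\mathcal{U}_0$ for $B\in\mathcal{U}_0$. This is false in general---the infinite--volume dynamics does not preserve locality, which is precisely why the non--autonomous problem is delicate (the paper stresses this point explicitly in the discussion following Theorem~\ref{thm non auto}). Your cocycle is also written in the wrong order: the paper's reverse cocycle is $\tau_{t,s}=\tau_{r,s}\tau_{t,r}$, not $\tau_{t,r}\tau_{r,s}$. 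With the correct order the existence step can in fact be repaired without the locality claim (differentiate the inner factor $\tau_{t,r}(B)$ at $t=r$, then apply the bounded outer map $\tau_{r,s}$), but your uniqueness argument cannot: differentiating $t\mapsto\sigma_{t,r}(\tau_{r,t}(B))$ requires $\tau_{r,t}(B)\in\mathrm{Dom}(\delta_t)$, and you have no way to know this for an arbitrary solution $\sigma$ of (\ref{cauchy trivial1}).

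The paper avoids this obstruction entirely by never needing the infinite--volume evolution to preserve the core. For existence it integrates the finite--volume equation (\ref{cauchy2}) and passes to the limit in
\[
\tau_{t,s}(B)-B-\int_s^t\tau_{\alpha,s}\circ\delta_\alpha(B)\,\mathrm{d}\alpha,
\]
controlling each piece via Theorem~\ref{Theorem Lieb-Robinsonnew} and (\ref{inequlity utile}); here only $B\in\mathcal{U}_0$ is needed. For uniqueness it compares an arbitrary solution $\hat\tau_{t,s}$ not with $\tau_{t,s}$ but with the \emph{finite--volume} family $\tau_{t,s}^{(L)}$, which \emph{does} preserve $\mathcal{U}_{\Lambda_L}\subset\mathcal{U}_0$, via the Duhamel identity
\[
\hat\tau_{t,s}(B)-\tau_{t,s}^{(L)}(B)=\int_s^t\hat\tau_{\alpha,s}\circ(\delta_\alpha-\delta_\alpha^{(L)})\circ\tau_{t,\alpha}^{(L)}(B)\,\mathrm{d}\alpha,
\]
and then uses the Lieb--Robinson bound of Theorem~\ref{Theorem Lieb-Robinsonnew} (i) (exactly as in (\ref{assertion bisbisbis})) to show $\|(\delta_\alpha-\delta_\alpha^{(L)})\circ\tau_{t,\alpha}^{(L)}(B)\|_{\mathcal{U}}\to 0$ uniformly for $\alpha$ on compacta. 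This is the missing idea: route everything through the finite--volume dynamics, where locality is preserved, and use Lieb--Robinson to close the gap.
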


\begin{proof}
The existence of a strongly continuous two--parameter family $\{\tau
_{t,s}\}_{s,t\in {\mathbb{R}}}$ of $\ast $--automorphisms satisfying
Lieb--Robinson bounds (ii) is a direct consequence of Theorem \ref{Theorem
Lieb-Robinsonnew} together with the density of $\mathcal{U}_{0}\subset
\mathcal{U}$ and completeness of $\mathcal{U}$. This limiting family also
satisfies the reverse cocycle property (i) because of (\ref{cocycle}).

\noindent (iii) For any $B\in \mathcal{U}_{0}\subset \mathrm{Dom}(\delta
_{t})$, the map $t\mapsto \tau _{t,s}\circ \delta _{t}(B)$ from $\mathbb{R}$
to $\mathcal{U}$ is continuous. Indeed, for any $B\in \mathcal{U}_{0}$ and $%
\alpha ,t\in {\mathbb{R}}$,
\begin{equation*}
\left\Vert \tau _{\alpha ,s}\circ \delta _{\alpha }\left( B\right) -\tau
_{t,s}\circ \delta _{t}\left( B\right) \right\Vert _{\mathcal{U}}\leq
\left\Vert \left( \tau _{\alpha ,s}-\tau _{t,s}\right) \circ \delta
_{t}\left( B\right) \right\Vert _{\mathcal{U}}+\left\Vert \delta _{\alpha
}\left( B\right) -\delta _{t}\left( B\right) \right\Vert _{\mathcal{U}}.
\end{equation*}%
By applying (\ref{inequlity utile}) to the interaction $\Psi ^{(t)}-\Psi
^{(\alpha )}$ and the potential $\mathbf{V}^{(t)}-\mathbf{V}^{(\alpha )}$
together with the strong continuity of $\{\tau _{t,s}\}_{s,t\in {\mathbb{R}}%
} $, one sees that, in the limit $\alpha \rightarrow t$, the r.h.s of the
above inequality vanishes when $B\in \mathcal{U}_{0}$ and $\Psi \in C(%
\mathbb{R};\mathcal{W})$. Now, because of (\ref{cauchy2}), for any $L\in
\mathbb{R}_{0}^{+}$, $B\in \mathcal{U}_{0}$, and $s,t\in \mathbb{R}$,
\begin{eqnarray}
\left\Vert \tau _{t,s}\left( B\right) -B-\int_{s}^{t}\tau _{\alpha ,s}\circ
\delta _{\alpha }\left( B\right) \mathrm{d}\alpha \right\Vert _{\mathcal{U}}
&\leq &\left\Vert \tau _{t,s}\left( B\right) -\tau _{t,s}^{(L)}\left(
B\right) \right\Vert _{\mathcal{U}}  \label{eq stupide1} \\
&&+\int_{s}^{t}\left\Vert \left( \tau _{\alpha ,s}^{(L)}-\tau _{\alpha
,s}\right) \circ \delta _{\alpha }\left( B\right) \right\Vert _{\mathcal{U}}%
\mathrm{d}\alpha  \notag \\
&&+\int_{s}^{t}\left\Vert \delta _{\alpha }^{(L)}\left( B\right) -\delta
_{\alpha }\left( B\right) \right\Vert _{\mathcal{U}}\mathrm{d}\alpha \ .
\notag
\end{eqnarray}%
By using the strong convergence of $\tau _{t,s}^{(L)}$ towards $\tau _{t,s}$
as well as (\ref{inequlity utile}) and (\ref{core limit}) together with
Lebesgue's dominated convergence theorem, one checks that the r.h.s. of (\ref%
{eq stupide1}) vanishes when $B\in \mathcal{U}_{0}$ and $L\rightarrow \infty
$. Because of the continuity of the map $t\mapsto \tau _{t,s}\circ \delta
_{t}(B)$, (\ref{cauchy trivial1}) is verified on the dense set $\mathcal{U}%
_{0}\subset \mathrm{Dom}(\delta _{t})$.

To prove uniqueness, assume that $\{%
\hat{\tau}_{t,s}\}_{s,t\in {\mathbb{R}}}$ is any family of bounded operators
on $\mathcal{U}$ satisfying (\ref{cauchy trivial1}) on $\mathcal{U}_{0}$. By
(\ref{cauchy1}) and because $\tau _{t,s}^{(L)}(B)\in \mathcal{U}_{0}$ for
any $B\in \mathcal{U}_{0}$,%
\begin{equation}
\hat{\tau}_{t,s}\left( B\right) -\tau _{t,s}^{(L)}\left( B\right)
=\int_{s}^{t}\hat{\tau}_{\alpha ,s}\circ \left( \delta _{\alpha }-\delta
_{\alpha }^{(L)}\right) \circ \tau _{t,\alpha }^{(L)}\left( B\right) \mathrm{%
d}\alpha  \label{eq similar}
\end{equation}%
for any $B\in \mathcal{U}_{0}$, $L\in \mathbb{R}_{0}^{+}$ and $s,t\in
\mathbb{R}$. Similar to (\ref{inequ toto inter1})--(\ref{assertion bisbisbis}%
), we infer from Theorem \ref{Theorem Lieb-Robinsonnew} (i) that, for any $%
\Lambda \in \mathcal{P}_{f}(\mathfrak{L})$, $B\in \mathcal{U}_{\Lambda }$, $%
\alpha ,t\in \mathbb{R}$ and sufficiently large $L\in \mathbb{R}_{0}^{+}$,%
\begin{equation*}
\left\Vert \left( \delta _{\alpha }-\delta _{\alpha }^{(L)}\right) \circ
\tau _{t,\alpha }^{(L)}\left( B\right) \right\Vert _{\mathcal{U}}\leq
\left\Vert \Psi \right\Vert _{\infty }\mathrm{e}^{2\mathbf{D}\left\vert
t-\alpha \right\vert \left\Vert \Psi \right\Vert _{\infty
}}\sum\limits_{y\in \Lambda _{L}^{c}}\sum_{x\in \Lambda }\mathbf{F}\left(
\left\vert x-y\right\vert \right) \ .
\end{equation*}%
In particular, by (\ref{assertion bisbisbisbis}), for any $B\in \mathcal{U}%
_{0}$ and $\alpha ,t\in \mathbb{R}$,%
\begin{equation}
\lim_{L\rightarrow \infty }\left\Vert \left( \delta _{\alpha }-\delta
_{\alpha }^{(L)}\right) \circ \tau _{t,\alpha }^{(L)}\left( B\right)
\right\Vert _{\mathcal{U}}=0  \label{limit unique}
\end{equation}%
uniformly for $\alpha $ on compacta. Because of (\ref{eq similar}) and $\{%
\hat{\tau}_{t,s}\}_{s,t\in {\mathbb{R}}}\subset \mathcal{B}(\mathcal{U})$,
we then conclude from (\ref{limit unique}) that, for every $s,t\in {\mathbb{R%
}}$, $\hat{\tau}_{t,s}$ coincides on the dense set $\mathcal{U}_{0}$ with
the limit $\tau _{t,s}$ of $\tau _{t,s}^{(L)}$, as $L\rightarrow \infty $.
By continuity, $\tau _{t,s}=\hat{\tau}_{t,s}$ on $\mathcal{U}$ for any $%
s,t\in {\mathbb{R}}$.
\end{proof}

\noindent The solution of (\ref{cauchy trivial1}) exists under very weak
conditions on interactions and potentials, i.e., their continuity, like in
the finite--volume case. It yields a fundamental solution for the states of
the interacting lattice fermions driven by the time--dependent interaction $%
\{\Psi ^{(t)}\}_{t\in \mathbb{R}}$. More precisely, for any fixed $\rho
_{s}\in \mathcal{U}^{\ast }$ at time $s\in {\mathbb{R}}$, the family $\{\rho
_{s}\circ \tau _{t,s}\}_{t\in {\mathbb{R}}}$ solves the following ordinary
differential equations, for each $B\in \mathcal{U}_{0}$:%
\index{Evolution equation!non--autonomous (H1)}%
\begin{equation}
\forall t\in {\mathbb{R}}:\qquad \partial _{t}\rho _{t}(B)=\rho _{t}\circ
\delta _{t}(B)\ .  \label{state non auto}
\end{equation}%
By Corollary \ref{Theorem Lieb-Robinson copy(4)}, the initial value problem
on $\mathcal{U}^{\ast }$ associated with the above infinite system of
ordinary differential equations is \emph{well--posed}. Indeed, the solution
of (\ref{state non auto}) is unique: Take any solution $\{\rho _{t}\}_{t\in {%
\mathbb{R}}}$ of (\ref{state non auto}) and, similar to (\ref{eq similar}),
use the equality
\begin{equation*}
\rho _{t}\left( B\right) -\rho _{s}\circ \tau _{t,s}^{(L)}\left( B\right)
=\int_{s}^{t}\rho _{\alpha }\left( \left( \delta _{\alpha }-\delta _{\alpha
}^{(L)}\right) \circ \tau _{t,\alpha }^{(L)}\left( B\right) \right) \mathrm{d%
}\alpha
\end{equation*}%
for any $\rho _{s}\in \mathcal{U}^{\ast }$, $B\in \mathcal{U}_{0}$, $L\in
\mathbb{R}_{0}^{+}$ and $s,t\in \mathbb{R}$ together with (\ref{limit unique}%
) and the weak$^{\ast }$--convergence of $\rho _{s}\circ \tau _{t,s}^{(L)}$
to $\rho _{s}\circ \tau _{t,s}$, as $L\rightarrow \infty $, by Corollary \ref%
{Theorem Lieb-Robinson copy(4)}.

Note again that (\ref{cauchy trivial1}) is the non--autonomous evolution
equation one formally obtains from the Schr\"{o}dinger equation for
automorphisms of the algebra of observables. See Section \ref{sect non auto}%
, in particular Equations (\ref{non auto explaination3})--(\ref{non auto
explaination3bis}). A similar remark can be done for the infinite system (%
\ref{state non auto}) of ordinary differential equations.

It is a priori unclear whether $\{\tau _{t,s}\}_{s,t\in {\mathbb{R}}}$
solves the non--auto%
\-%
nomous Cauchy initial value problem%
\index{Evolution equation!non--autonomous}
\begin{equation}
\forall s,t\in {\mathbb{R}}:\qquad \partial _{s}\tau _{t,s}=-\delta
_{s}\circ \tau _{t,s}\ ,\qquad \tau _{t,t}=\mathbf{1}_{\mathcal{U}}\ ,
\label{cauchy trivial2}
\end{equation}%
on some dense domain. The generators $\{\delta _{t}\}_{t\in \mathbb{R}}$ are
generally unbounded operators acting on $\mathcal{U}$ and their domains can
additionally depend on time. As explained in Section \ref{sect non auto}, no
unified theory of such linear evolution equations, similar to the
Hille--Yosida generation theorems in the autonomous case, is available. See,
e.g., \cite{Katobis,Caps,Schnaubelt1,Pazy,Bru-Bach} and the corresponding
references therein.

By using Lieb--Robinson bounds for multi--commutators, we show below in
Theorem \ref{thm non auto} that the evolution equation (\ref{cauchy trivial2}%
) \emph{also holds} on the dense set $\mathcal{U}_{0}$, under conditions
like polynomial decays of interactions and boundedness of the external
potential. Another example -- more restrictive in which concerns the
time--dependency of the generator of dynamics, but less restrictive w.r.t.
the behavior at large distances of the potential ${\mathbf{V}}$ -- for which
(\ref{cauchy trivial2}) holds is given by Theorem \ref{pertubed dynam thm}
(i) in Section \ref{Section existence dynamics}.

\subsection{Lieb--Robinson Bounds for Multi--Commutators}

As explained in Remark \ref{rermar trivial1}, all results of Section \ref%
{section LR multi} depend on Theorem \ref{Theorem Lieb-Robinson copy(3)}
(iii). It is the crucial ingredient we need in order to prove Lemma \ref%
{Lemma Series representation of the dynamics}, from which we derive
Lieb--Robinson bounds for multi--commutators. Theorem \ref{Theorem
Lieb-Robinsonnew} (ii) together with Corollary \ref{Theorem Lieb-Robinson
copy(4)} extend Theorem \ref{Theorem Lieb-Robinson copy(3)} (iii) to
time--dependent interactions and potentials. This allows us to prove Lemma %
\ref{Lemma Series representation of the dynamics} in the non--auto%
\-%
nomous case as well. It is then straightforward to extend Lieb--Robinson
bounds for multi--commutators to time--dependent interactions and potentials.

Recall that the proof of Lemma \ref{Lemma Series representation of the
dynamics} uses that the space translated finite--volume groups $\{\tau
_{t}^{(n,x)}\}_{t\in {\mathbb{R}}}$, $x\in \mathfrak{L}$, have all the same
limit $\{\tau _{t}\}_{t\in {\mathbb{R}}}$, as $n\rightarrow \infty $. This
also holds in the non--autonomous case. Indeed, for any $n\in \mathbb{N}_{0}$%
, $x\in \mathfrak{L}$, every bounded family $\Psi \doteq \{\Psi
^{(t)}\}_{t\in \mathbb{R}}$ on $\mathcal{W}$ (i.e., $\left\Vert \Psi
\right\Vert _{\infty }<\infty $ ), and each collection $\{\mathbf{V}%
^{(t)}\}_{t\in \mathbb{R}}$ of potentials, consider the (space) translated
family $\{\tau _{t,s}^{(n,x)}\}_{s,t\in {\mathbb{R}}}$ of finite--volume $%
\ast $--auto%
\-%
morphisms generated (cf. (\ref{cauchy1}) and (\ref{cauchy2})) by the
symmetric bounded derivation
\begin{equation*}
\delta _{t}^{(n,x)}(B)\doteq i\sum\limits_{\Lambda \subseteq \Lambda _{n}+x}
\left[ \Psi _{\Lambda }^{(t)},B\right] +i\sum\limits_{y\in \Lambda _{n}+x}%
\left[ \mathbf{V}_{\left\{ y\right\} }^{(t)},B\right] \ ,\qquad B\in
\mathcal{U}\ .
\end{equation*}%
In the autonomous case the strong convergence of these evolution families
towards $\{\tau _{t,s}\}_{s,t\in {\mathbb{R}}}$ easily follows from the
second Trotter--Kato approximation theorem \cite[Chap. III, Sect. 4.9]%
{EngelNagel}. We use the Lieb--Robinson bound of Theorem \ref{Theorem
Lieb-Robinsonnew} (i) to prove it in the non--autonomous case:

\begin{lemma}[Limit of translated dynamics]
\label{Lemma Series representation of the dynamics copy(1)}\mbox{ }\newline
Let $\Psi \doteq \{\Psi ^{(t)}\}_{t\in \mathbb{R}}$ be a bounded family on $%
\mathcal{W}$ (i.e., $\left\Vert \Psi \right\Vert _{\infty }<\infty $ ) and $%
\{\mathbf{V}^{(t)}\}_{t\in \mathbb{R}}$ a collection of potentials. For any $%
y\in \mathfrak{L}$ and $\Lambda \in \mathcal{P}_{f}(\mathfrak{L})$, assume $%
\Psi _{\Lambda },\mathbf{V}_{\left\{ y\right\} }\in C\left( \mathbb{R};%
\mathcal{U}\right) $. Then%
\begin{equation*}
\underset{n\rightarrow \infty }{\lim }\tau _{t,s}^{(n,x)}\left( B\right)
=\tau _{t,s}\left( B\right) \ ,\qquad B\in \mathcal{U}\ ,\ \ x\in \mathfrak{L%
},\ s,t\in \mathbb{R}\ .
\end{equation*}
\end{lemma}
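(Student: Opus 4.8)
The plan is to reduce the statement to a convergence result that we have already proven, namely Theorem \ref{Theorem Lieb-Robinsonnew} (ii) and Corollary \ref{Theorem Lieb-Robinson copy(4)}, by comparing the translated dynamics $\{\tau_{t,s}^{(n,x)}\}$ with the non-translated finite-volume dynamics $\{\tau_{t,s}^{(L)}\}$ and exploiting the fact that the latter already converge strongly to $\{\tau_{t,s}\}$. First I would observe that it suffices to prove the convergence for $B$ in the dense set $\mathcal{U}_0$, since all the maps $\tau_{t,s}^{(n,x)}$ and $\tau_{t,s}$ are contractions (indeed $\ast$-automorphisms, hence isometries) on $\mathcal{U}$, so a uniform bound plus convergence on a dense set gives convergence everywhere. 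Fix $x\in\mathfrak{L}$, $s,t\in\mathbb{R}$ and $B\in\mathcal{U}_\Lambda$ for some $\Lambda\in\mathcal{P}_f(\mathfrak{L})$.

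The key observation is that the space-translated finite-volume derivation $\delta_t^{(n,x)}$ is \emph{exactly} of the same form as the non-translated one $\delta_t^{(L)}$, but with the cubic box $\Lambda_L$ replaced by the translated box $\Lambda_n+x$. Since Theorem \ref{Theorem Lieb-Robinsonnew} (ii) was proven for the family of cubic boxes $\Lambda_L$, I would note that its proof goes through verbatim for \emph{any} increasing, absorbing net of finite subsets of $\mathfrak{L}$ — in particular for the net $\{\Lambda_n+x\}_{n\in\mathbb{N}_0}$ at fixed $x$ — because the only properties of $\Lambda_L$ used in the proof are that these sets increase to $\mathfrak{L}$ and the summability/convolution estimates (\ref{(3.1) NS})--(\ref{(3.2) NS}) for $\mathbf{F}$, which are translation-uniform. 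Hence, for $\Lambda\subset\Lambda_{n_1}+x\subsetneq\Lambda_{n_2}+x$,
\begin{equation*}
\left\Vert \tau_{t,s}^{(n_2,x)}(B)-\tau_{t,s}^{(n_1,x)}(B)\right\Vert_{\mathcal{U}}\leq 2\left\Vert B\right\Vert_{\mathcal{U}}\left\Vert\Psi\right\Vert_\infty|t-s|\,\mathrm{e}^{4\mathbf{D}|t-s|\left\Vert\Psi\right\Vert_\infty}\sum_{y\in(\Lambda_{n_2}+x)\backslash(\Lambda_{n_1}+x)}\sum_{z\in\Lambda}\mathbf{F}(|z-y|)\,.
\end{equation*}
By (\ref{(3.1) NS}) the right-hand side tends to $0$ as $n_1,n_2\to\infty$, so $\{\tau_{t,s}^{(n,x)}(B)\}_{n}$ is a Cauchy sequence in the complete space $\mathcal{U}$ and therefore converges to some limit, which I will call $\hat{\tau}_{t,s}^{(x)}(B)$; by the usual density/isometry argument this defines a bounded operator $\hat{\tau}_{t,s}^{(x)}$ on $\mathcal{U}$.

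It then remains to identify $\hat{\tau}_{t,s}^{(x)}$ with $\tau_{t,s}$. For this I would use that the derivations $\delta_t^{(n,x)}$ converge strongly on $\mathcal{U}_0$ to $\delta_t$ as $n\to\infty$: indeed, for $B\in\mathcal{U}_\Lambda$, the difference $\delta_t(B)-\delta_t^{(n,x)}(B)$ only involves commutators with $\Psi_{\mathcal{Z}}^{(t)}$ or $\mathbf{V}_{\{y\}}^{(t)}$ for sets $\mathcal{Z}$ or sites $y$ not contained in $\Lambda_n+x$, and the absolute convergence estimate (\ref{inequlity utile}) shows this tail vanishes as $n\to\infty$ (this is the analogue of (\ref{core limit})). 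Exactly as in the comparison argument used in the proof of Corollary \ref{Theorem Lieb-Robinson copy(4)} (iii) — see Equation (\ref{eq similar}) and the bound (\ref{limit unique}) there — I would write, for $B\in\mathcal{U}_0$ and large $L$,
\begin{equation*}
\tau_{t,s}^{(n,x)}(B)-\tau_{t,s}^{(L)}(B)=\int_s^t\tau_{\alpha,s}^{(n,x)}\circ\bigl(\delta_\alpha^{(n,x)}-\delta_\alpha^{(L)}\bigr)\circ\tau_{t,\alpha}^{(L)}(B)\,\mathrm{d}\alpha\,,
\end{equation*}
and estimate the integrand using Theorem \ref{Theorem Lieb-Robinsonnew} (i): for $n$ large enough that $\Lambda_L\subseteq\Lambda_n+x$, the operator $\delta_\alpha^{(n,x)}-\delta_\alpha^{(L)}$ only produces commutators with interaction terms supported outside $\Lambda_L$, so by a Lieb--Robinson estimate analogous to (\ref{assertion bisbisbis}) its action on $\tau_{t,\alpha}^{(L)}(B)$ is controlled by $\sum_{y\in\Lambda_L^c}\sum_{z\in\Lambda}\mathbf{F}(|z-y|)$ times a constant uniform in $\alpha$ on compacta. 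The main subtlety — and the step I expect to require the most care — is ordering the two limits $n\to\infty$ and $L\to\infty$ correctly: one fixes $L$, lets $n\to\infty$ to get $\hat{\tau}_{t,s}^{(x)}(B)-\tau_{t,s}^{(L)}(B)$ bounded by (a constant times) the $\mathbf{F}$-tail over $\Lambda_L^c$, and then lets $L\to\infty$ so that this tail vanishes by (\ref{assertion bisbisbisbis}) while simultaneously $\tau_{t,s}^{(L)}(B)\to\tau_{t,s}(B)$ by Corollary \ref{Theorem Lieb-Robinson copy(4)}. One must check the relevant bounds are uniform in $\alpha$ on the compact interval between $s$ and $t$ to pass limits under the integral (Lebesgue's dominated convergence). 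This yields $\hat{\tau}_{t,s}^{(x)}(B)=\tau_{t,s}(B)$ on $\mathcal{U}_0$, and by density and the isometry property, on all of $\mathcal{U}$, proving the lemma.
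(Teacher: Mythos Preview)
Your proposal is correct, and uses the same core tools as the paper (the Duhamel-type integral identity together with the Lieb--Robinson bound of Theorem \ref{Theorem Lieb-Robinsonnew} (i)), but the paper organizes the comparison more economically. Instead of first proving that $\{\tau_{t,s}^{(n,x)}(B)\}_n$ is Cauchy and then separately identifying the limit via a two-parameter comparison with $\tau_{t,s}^{(L)}$, the paper compares $\tau_{t,s}^{(n,x)}$ directly with $\tau_{t,s}^{(n,0)}$ for the \emph{same} $n$: one writes
\[
\tau_{t,s}^{(n,x)}(B)-\tau_{t,s}^{(n,0)}(B)=\int_s^t \tau_{\alpha,s}^{(n,x)}\circ\bigl(\delta_\alpha^{(n,x)}-\delta_\alpha^{(n,0)}\bigr)\circ\tau_{t,\alpha}^{(n,0)}(B)\,\mathrm{d}\alpha,
\]
and the Lieb--Robinson estimate controls the integrand by a tail sum over $(\Lambda_n+x)^c\cup\Lambda_n^c$, which vanishes as $n\to\infty$ by (\ref{(3.1) NS}). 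Since $\tau_{t,s}^{(n,0)}=\tau_{t,s}^{(L)}$ with $L=n$ already converges to $\tau_{t,s}$ by Corollary \ref{Theorem Lieb-Robinson copy(4)}, the claim follows in one stroke. This avoids your separate Cauchy step and the subsequent two-limit identification; your route works but is longer, and the double limit $n\to\infty$ then $L\to\infty$ is precisely the place where errors can creep in (you handle it correctly, but the paper's single-index comparison sidesteps the issue entirely).
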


\begin{proof}
For any $n\in \mathbb{N}_{0}$ and $x\in \mathfrak{L}$, the translated
finite--volume family $\{\tau _{s,t}^{(n,x)}\}_{s,t\in {\mathbb{R}}}$ solves
non--auto%
\-%
nomous evolution equations like (\ref{cauchy1})--(\ref{cauchy2}). Therefore,
similar to (\ref{eq similar}), for any $n\in \mathbb{N}_{0}$, $x\in
\mathfrak{L}$, $\Lambda \in \mathcal{P}_{f}(\mathfrak{L})$, $B\in \mathcal{U}%
_{\Lambda }$ and $s,t\in \mathbb{R}$,
\begin{equation}
\tau _{t,s}^{(n,x)}\left( B\right) -\tau _{t,s}^{(n,0)}\left( B\right)
=\int_{s}^{t}\tau _{\alpha ,s}^{(n,x)}\circ \left( \delta _{\alpha
}^{(n,x)}-\delta _{\alpha }^{(n,0)}\right) \circ \tau _{t,\alpha
}^{(n,0)}\left( B\right) \mathrm{d}\alpha \ .  \label{eq cool 0}
\end{equation}%
For sufficiently large $n\in \mathbb{N}_{0}$ such that $\Lambda \subset
\left( \Lambda _{n}+x\right) \cap \Lambda _{n}$, note that%
\begin{equation*}
\left\Vert \left( \delta _{\alpha }^{(n,x)}-\delta _{\alpha }^{(n,0)}\right)
\circ \tau _{t,\alpha }^{(n,0)}\left( B\right) \right\Vert _{\mathcal{U}%
}\leq \sum\limits_{\mathcal{Z}\in \mathcal{P}_{f}(\mathfrak{L}),\ \mathcal{Z}%
\cap \left( \left( \Lambda _{n}+x\right) ^{c}\cup \Lambda _{n}^{c}\right)
\neq \emptyset }\left\Vert \left[ \Psi _{\Lambda }^{(t)},\tau _{t,\alpha
}^{(n,0)}\left( B\right) \right] \right\Vert _{\mathcal{U}}
\end{equation*}%
with $\mathcal{Z}^{c}\doteq \mathfrak{L}\backslash \mathcal{Z}$ being the
complement of any set $\mathcal{Z}\in \mathcal{P}_{f}(\mathfrak{L})$. Then,
similar to Inequality (\ref{assertion bisbisbis}), by using Theorem \ref%
{Theorem Lieb-Robinsonnew} (i), one verifies that, for any $x\in \mathfrak{L}
$, $\Lambda \in \mathcal{P}_{f}(\mathfrak{L})$, $B\in \mathcal{U}_{\Lambda }$%
, $\alpha ,t\in \mathbb{R}$, and sufficiently large $n\in \mathbb{N}_{0}$,%
\begin{eqnarray}
&&\left\Vert \left( \delta _{\alpha }^{(n,x)}-\delta _{\alpha
}^{(n,0)}\right) \circ \tau _{t,\alpha }^{(n,0)}\left( B\right) \right\Vert
_{\mathcal{U}}  \label{eq cool1} \\
&\leq &2\left\Vert B\right\Vert _{\mathcal{U}}\left\Vert \Psi \right\Vert
_{\infty }\ \mathrm{e}^{2\mathbf{D}\left\vert t-\alpha \right\vert
\left\Vert \Psi \right\Vert _{\infty }}\sum\limits_{y\in \left( \Lambda
_{n}+x\right) ^{c}\cup \Lambda _{n}^{c}}\sum_{z\in \Lambda }\mathbf{F}\left(
\left\vert z-y\right\vert \right) \ ,  \notag
\end{eqnarray}%
while
\begin{equation}
\underset{n\rightarrow \infty }{\lim }\sum\limits_{y\in \left( \Lambda
_{n}+x\right) ^{c}\cup \Lambda _{n}^{c}}\sum_{z\in \Lambda }\mathbf{F}\left(
\left\vert z-y\right\vert \right) =0\ ,  \label{eq cool2}
\end{equation}%
because of (\ref{(3.1) NS}). We thus deduce from (\ref{eq cool1})--(\ref{eq
cool2}) that%
\begin{equation*}
\underset{n\rightarrow \infty }{\lim }\left\Vert \left( \delta _{\alpha
}^{(n,x)}-\delta _{\alpha }^{(n,0)}\right) \circ \tau _{t,\alpha
}^{(n,0)}\left( B\right) \right\Vert _{\mathcal{U}}=0
\end{equation*}%
uniformly for $\alpha $ on compacta. Combined with (\ref{eq cool 0}) and
Corollary \ref{Theorem Lieb-Robinson copy(4)}, this uniform limit implies
the assertion.
\end{proof}

With the above result and the introducing remarks of this subsection, it is
now straightforward to extend Theorem \ref{Theorem Lieb-Robinson copy(1)} to
the non--auto%
\-%
nomous case:

\begin{satz}[Lieb--Robinson bounds for multi--commutators -- Part I]
\label{Theorem Lieb-Robinson copy(1)-bis}\mbox{
}\newline
Let $\Psi \doteq \{\Psi ^{(t)}\}_{t\in \mathbb{R}}$ be a bounded family on $%
\mathcal{W}$ (i.e., $\left\Vert \Psi \right\Vert _{\infty }<\infty $ ), $\{%
\mathbf{V}^{(t)}\}_{t\in \mathbb{R}}$ a collection of potentials, and $s\in
\mathbb{R}$. For any $y\in \mathfrak{L}$ and $\Lambda \in \mathcal{P}_{f}(%
\mathfrak{L})$, assume $\Psi _{\Lambda },\mathbf{V}_{\left\{ y\right\} }\in
C\left( \mathbb{R};\mathcal{U}\right) $. Then, for any integer $k\in \mathbb{%
N}$, $\{m_{j}\}_{j=0}^{k}\subset \mathbb{N}_{0}$, times $\{s_{j}\}_{j=1}^{k}%
\subset \mathbb{R}$, lattice sites $\{x_{j}\}_{j=0}^{k}\subset \mathfrak{L}$%
, and elements $B_{0}\in \mathcal{U}_{0}$, $\{B_{j}\}_{j=1}^{k}\subset
\mathcal{U}_{0}\cap \mathcal{U}^{+}$ such that $B_{j}\in \mathcal{U}%
_{\Lambda _{m_{j}}}$ for $j\in \{0,\ldots ,k\}$,%
\index{Lieb--Robinson bounds!multi--commutators}%
\begin{eqnarray*}
&&\left\Vert \left[ \tau _{s_{k},s}\circ \chi _{x_{k}}(B_{k}),\ldots ,\tau
_{s_{1},s}\circ \chi _{x_{1}}(B_{1}),\chi _{x_{0}}(B_{0})\right]
^{(k+1)}\right\Vert _{\mathcal{U}} \\
&\leq &2^{k}\prod\limits_{j=0}^{k}\left\Vert B_{j}\right\Vert _{\mathcal{U}%
}\sum_{T\in \mathcal{T}_{k+1}}\left( \varkappa _{T}\left( \left\{
(m_{j},x_{j})\right\} _{j=0}^{k}\right) +\Re _{T,\left\Vert \Psi \right\Vert
_{\infty }}\right) \ ,
\end{eqnarray*}%
where $\varkappa _{T}$ and $\Re _{T,\alpha }$ are respectively defined by (%
\ref{definition chqrqcteristic}) and (\ref{def AT}) for $T\in \mathcal{T}%
_{k+1}$ and $\alpha \in \mathbb{R}_{0}^{+}$, the times $\{s_{j}\}_{j=1}^{k}$
in (\ref{def AT}) being replaced with $\{(s_{j}-s)\}_{j=1}^{k}$.
\end{satz}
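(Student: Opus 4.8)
The strategy is to mimic the proof of Theorem \ref{Theorem Lieb-Robinson copy(1)} verbatim, replacing the autonomous tools by their non--autonomous analogues established earlier in this section. The single structural ingredient used in the autonomous proof is Lemma \ref{Lemma Series representation of the dynamics}, which expresses $\tau_{t}\circ\chi_{x}(B)$ as an absolutely convergent telescoping series $\sum_{n\geq m}\mathfrak{B}_{B,t,x}(n)$ with the norm bound (\ref{bound B fract}). Everything else (the cluster--to--tree estimate (\ref{cluster-tree}), the identities (\ref{def B frac1})--(\ref{def B frac2}), and the combinatorial bookkeeping leading to $\Re_{T,\alpha}$) is purely algebraic in the group elements and does not care whether the dynamics is autonomous. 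Hence the first and main task is to establish the non--autonomous version of Lemma \ref{Lemma Series representation of the dynamics}.

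First I would set up, for each $n\in\mathbb{N}_{0}$ and $x\in\mathfrak{L}$, the space--translated non--autonomous family $\{\tau^{(n,x)}_{t,s}\}_{s,t\in\mathbb{R}}$ generated by $\delta^{(n,x)}_{t}$ as in the introductory remarks to this subsection, and the local elements
\begin{equation*}
\mathfrak{B}_{B,s_{j},x}(n)\doteq
\begin{cases}
\tau^{(n,x)}_{s_{j},s}\circ\chi_{x}(B)\ ,& n=m_{j}\ ,\\
(\tau^{(n,x)}_{s_{j},s}-\tau^{(n-1,x)}_{s_{j},s})\circ\chi_{x}(B)\ ,& n>m_{j}\ ,
\end{cases}
\end{equation*}
so that $\sum_{n=m_{j}}^{N}\mathfrak{B}_{B,s_{j},x}(n)=\tau^{(N,x)}_{s_{j},s}\circ\chi_{x}(B)$. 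By Lemma \ref{Lemma Series representation of the dynamics copy(1)}, the partial sums converge strongly to $\tau_{s_{j},s}\circ\chi_{x}(B)$ as $N\rightarrow\infty$, which gives the telescoping representation. The norm bound $\|\mathfrak{B}_{B,s_{j},x}(n)\|_{\mathcal{U}}$ for $n>m_{j}$ follows from Theorem \ref{Theorem Lieb-Robinsonnew} (ii) applied to the translated dynamics, exactly as (\ref{bound B fract}) followed from Lemma \ref{Theorem Lieb-Robinson copy(2)}; here one uses $\|\Psi\|_{\infty}$ in place of $\|\Psi\|_{\mathcal{W}}$ and the time difference $|s_{j}-s|$ in place of $|t|$. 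Note that Theorem \ref{Theorem Lieb-Robinsonnew} and Lemma \ref{Lemma Series representation of the dynamics copy(1)} require only the continuity hypothesis $\Psi_{\Lambda},\mathbf{V}_{\{y\}}\in C(\mathbb{R};\mathcal{U})$, which is precisely what is assumed in the statement.

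With this non--autonomous telescoping lemma in hand, I would repeat the proof of Theorem \ref{Theorem Lieb-Robinson copy(1)} line by line: expand the multi--commutator $[\tau_{s_{k},s}\circ\chi_{x_{k}}(B_{k}),\ldots,\chi_{x_{0}}(B_{0})]^{(k+1)}$ via the telescoping series, use that each $\mathfrak{B}_{B_{j},s_{j},x_{j}}(n_{j})$ lies in $\mathcal{U}_{\Lambda_{n_{j}}+x_{j}}$ and that $\{B_{j}\}_{j=1}^{k}\subset\mathcal{U}^{+}$ to insert the characteristic functions enforcing that $\{\Lambda_{n_{j}}+x_{j}\}_{j=0}^{k}$ clusters, apply (\ref{cluster-tree}) to dominate by a sum over trees, and finally insert the norm bounds. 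The only bookkeeping difference is that each occurrence of $s_{j}$ inside $\Re_{T,\alpha}$ is replaced by $s_{j}-s$, since the translated propagator $\tau^{(n,x)}_{s_{j},s}$ runs over the time interval of length $|s_{j}-s|$; this matches the statement. I expect no real obstacle: the entire content is repackaging already--proven non--autonomous ingredients into the combinatorial machine of Section \ref{section LR multi}. The only point requiring mild care is verifying that Theorem \ref{Theorem Lieb-Robinsonnew} (ii) indeed holds for the translated boxes $\Lambda_{n}+x$ rather than $\Lambda_{L}$, but this is immediate since the proof of that theorem nowhere uses that the box is centered at the origin, exactly as remarked for the autonomous case in the proof of Lemma \ref{Lemma Series representation of the dynamics}.
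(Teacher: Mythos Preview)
Your proposal is correct and follows essentially the same approach as the paper. The paper's proof is equally brief: it observes that Theorem \ref{Theorem Lieb-Robinsonnew} (ii) holds for the translated families $\{\tau_{t,s}^{(n,x)}\}_{s,t\in\mathbb{R}}$, invokes Lemma \ref{Lemma Series representation of the dynamics copy(1)} to conclude that Lemma \ref{Lemma Series representation of the dynamics} carries over to the non--autonomous setting, and then states that the assertion follows from (\ref{inequality sup LR multi}) with $\tau_{s_j}$ replaced by $\tau_{s_j,s}$ --- exactly the route you outline.
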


\begin{proof}
One easily checks that Theorem \ref{Theorem Lieb-Robinsonnew} (ii) holds for
$\{\tau _{t,s}^{(n,x)}\}_{s,t\in {\mathbb{R}}}$ at any fixed $x\in \mathfrak{%
L}$ and $n\in \mathbb{N}_{0}$. By Lemma \ref{Lemma Series representation of
the dynamics copy(1)}, Lemma \ref{Lemma Series representation of the
dynamics} also holds in the non--auto%
\-%
nomous case and the assertion follows from (\ref{inequality sup LR multi})
with the $\ast $--auto%
\-%
morphism $\tau _{s_{j}}$ being replaced by $\tau _{s_{j},s}$ for every $j\in
\left\{ 1,\ldots ,k\right\} $.
\end{proof}

By Theorems \ref{theorem exp tree decay copy(1)} and \ref{Theorem
Lieb-Robinson copy(1)-bis}, we obtain
\index{Lieb--Robinson bounds!multi--commutators}Lieb--Robinson bounds for
multi--com%
\-%
mutators as well as a version of Corollary \ref{theorem exp tree decay} in
the \emph{non--auto%
\-%
nomous} case. I.e., interacting and non--auto%
\-%
nomous systems also satisfy the so--called
\index{Tree--decay bounds}tree--decay bounds.

Another application of Theorems \ref{theorem exp tree decay copy(1)} and \ref%
{Theorem Lieb-Robinson copy(1)-bis} is a proof of existence of a fundamental
solution for the non--auto%
\-%
nomous abstract Cauchy initial value problem for observables%
\index{Evolution equation!non--autonomous}
\begin{equation}
\forall s\in {\mathbb{R}}:\qquad \partial _{s}B_{s}=-\delta _{s}(B_{s})\
,\qquad B_{t}=B\in \mathcal{U}_{0}\ ,  \label{non auto hyperbolic}
\end{equation}%
in the Banach space $\mathcal{U}$, i.e., a proof of existence of a solution
of the evolution equation (\ref{cauchy trivial2}). The latter is a
non--trivial statement, as previously discussed, among other things because
the domain of $\delta _{s}$ depends, in general, on the time $s\in \mathbb{R}
$. [Here, $t\in {\mathbb{R}}$ is the \textquotedblleft
initial\textquotedblright\ time.]

To this end, like in (\ref{definition D1})--(\ref{assumption boundedness2}),
we add the following condition on interactions $\Phi $:%
\index{Decay function!polynomial decay}

\begin{itemize}
\item \emph{Polynomial decay.} Assume (\ref{(3.3) NS generalized0}) and the
existence of constants $\upsilon ,D\in \mathbb{R}^{+}$ such that%
\begin{equation}
\underset{x\in \mathfrak{L}}{\sup }\sum\limits_{\Lambda \in \mathcal{D}%
\left( x,m\right) }\left\Vert \Phi _{\Lambda }\right\Vert _{\mathcal{U}}\leq
D\left( m+1\right) ^{-\upsilon }\ ,\qquad m\in \mathbb{N}_{0}\ ,
\label{assumption boundedness2bis}
\end{equation}%
while the sequence $\{\mathbf{u}_{n,m}\}_{n\in \mathbb{N}}\in \ell ^{1}(%
\mathbb{N})$ of (\ref{(3.3) NS generalized0}) satisfies
\begin{equation}
\sum_{m,n\in \mathbb{N}}m^{-\upsilon }\left\vert \mathbf{u}_{n,m}\right\vert
<\infty \ .  \label{assumption boundedness4}
\end{equation}
\end{itemize}

\noindent As $\mathbf{F}(|x|)>0$ for all $x\in \mathfrak{L}$, note that (\ref%
{(3.3) NS generalized0}) implies
\begin{equation*}
\sum_{n\in \mathbb{N}}\left\vert \mathbf{u}_{n,m}\right\vert \geq
Dm^{\varsigma }
\end{equation*}%
for some $D\in $ $\mathbb{R}^{+}$ and all $m\in \mathbb{N}_{0}$. Hence, the
inequality (\ref{assumption boundedness4}) imposes $\upsilon >\varsigma +1$.

Then, one gets the following assertion:

\begin{satz}[Dynamics and non--autonomous evolution equations]
\label{thm non auto}\mbox{
}\newline
Let $\Psi \doteq \{\Psi ^{(t)}\}_{t\in \mathbb{R}}\in C(\mathbb{R};\mathcal{W%
})$ be a bounded family on $\mathcal{W}$ (i.e., $\left\Vert \Psi \right\Vert
_{\infty }<\infty $) and $\{\mathbf{V}_{\left\{ x\right\} }^{(t)}\}_{x\in
\mathfrak{L},t\in \mathbb{R}}$ a bounded family on $\mathcal{U}$ of
potentials with $\mathbf{V}_{\left\{ x\right\} }\in C\left( \mathbb{R};%
\mathcal{U}\right) $ for any $x\in \mathfrak{L}$. Assume$\ $(\ref{(3.3) NS
generalized0}) with $\varsigma >2d$ and that (\ref{assumption
boundedness2bis})--(\ref{assumption boundedness4}) with $\Phi =\Psi ^{(t)}$
and $\nu >\varsigma +1$ hold uniformly for $t\in \mathbb{R}$. Then, for any $%
s,t\in \mathbb{R}$, $\tau _{t,s}\left( \mathcal{U}_{0}\right) \subset
\mathrm{Dom}(\delta _{s})$ and $\{\tau _{t,s}\}_{s,t\in {\mathbb{R}}}$
solves the non--autonomous evolution equation%
\index{Evolution equation!non--autonomous}%
\index{Interaction!infinite--volume dynamics}%
\index{Potential!infinite--volume dynamics}%
\begin{equation}
\forall s,t\in {\mathbb{R}}:\qquad \partial _{s}\tau _{t,s}=-\delta
_{s}\circ \tau _{t,s}\ ,\qquad \tau _{t,t}=\mathbf{1}_{\mathcal{U}}\ ,
\label{cauchy trivial1bis}
\end{equation}%
in the strong sense on the dense set $\mathcal{U}_{0}$.
\end{satz}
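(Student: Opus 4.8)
The plan is to fix $s,t\in\mathbb{R}$ and $B\in\mathcal{U}_{0}$, say $B\in\mathcal{U}_{\Lambda_{m_0}}$, and to show directly that
\begin{equation*}
\lim_{h\to 0}\frac{1}{h}\left(\tau_{t,s+h}(B)-\tau_{t,s}(B)\right)=-\delta_{s}\circ\tau_{t,s}(B)\ ,
\end{equation*}
the crucial point being that $\tau_{t,s}(B)$ lies in $\mathrm{Dom}(\delta_s)$, which is the closure of $\mathcal{U}_0$ under the graph norm of $\delta_s$. By Corollary \ref{Theorem Lieb-Robinson copy(4)} (iii) we already know $\{\tau_{t,s}\}$ solves the \emph{forward} equation \eqref{cauchy trivial1} on $\mathcal{U}_0$; combining this with the reverse cocycle property $\tau_{t,s}=\tau_{r,s}\tau_{t,r}$ and a standard manipulation, the content of \eqref{cauchy trivial1bis} is equivalent to the statement that $s\mapsto\tau_{t,s}(B)$ is differentiable with the asserted derivative. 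The natural route is to approximate: for $L\in\mathbb{R}_0^+$ with $\Lambda_{m_0}\subset\Lambda_L$, the finite--volume family $\tau_{t,s}^{(L)}$ solves \eqref{cauchy1}, i.e. $\partial_s\tau_{t,s}^{(L)}=-\delta_s^{(L)}\circ\tau_{t,s}^{(L)}$, so
\begin{equation*}
\tau_{t,s}^{(L)}(B)-B=\int_{t}^{s}\left(-\delta_\alpha^{(L)}\right)\circ\tau_{t,\alpha}^{(L)}(B)\,\mathrm{d}\alpha\ ,
\end{equation*}
and one wants to pass to the limit $L\to\infty$ inside the integral. The left side converges to $\tau_{t,s}(B)-B$ by Corollary \ref{Theorem Lieb-Robinson copy(4)}; the real work is to show $\delta_\alpha^{(L)}\circ\tau_{t,\alpha}^{(L)}(B)\to\delta_\alpha\circ\tau_{t,\alpha}(B)$ in $\mathcal{U}$, uniformly for $\alpha$ on compacta, and to identify the limit as an element lying in $\mathrm{Dom}(\delta_\alpha)$.

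The key step is therefore a \emph{uniform-in-$L$ norm bound} on $\delta_\alpha^{(L)}\circ\tau_{t,\alpha}^{(L)}(B)$ together with a Cauchy estimate. Write $\delta_\alpha^{(L)}\circ\tau_{t,\alpha}^{(L)}(B)=i\sum_{\Lambda\subseteq\Lambda_L}[\Psi_\Lambda^{(\alpha)},\tau_{t,\alpha}^{(L)}(B)]+i\sum_{x\in\Lambda_L}[\mathbf{V}_{\{x\}}^{(\alpha)},\tau_{t,\alpha}^{(L)}(B)]$. The potential sum is controlled by boundedness of $\{\mathbf{V}_{\{x\}}^{(t)}\}$ and the ordinary Lieb--Robinson bound of Theorem \ref{Theorem Lieb-Robinsonnew} (i) (exactly as in \eqref{assertion bisbisbis}), giving a geometrically small tail. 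The interaction sum is the delicate one: here I would use the \emph{telescoping series} representation (Lemma \ref{Lemma Series representation of the dynamics} extended to the non--autonomous case via Lemma \ref{Lemma Series representation of the dynamics copy(1)}) to write $\tau_{t,\alpha}^{(L)}(B)$, and hence each commutator $[\Psi_\Lambda^{(\alpha)},\tau_{t,\alpha}^{(L)}(B)]$, as an absolutely convergent sum of multi--commutators of order two, estimated by the polynomial tree--decay bounds of Corollary \ref{theorem exp tree decay} (i) and Theorem \ref{Theorem Lieb-Robinson copy(1)-bis}. Summing $\|[\Psi_\Lambda^{(\alpha)},\cdot]\|_{\mathcal{U}}$ over $\Lambda\in\mathcal{P}_f(\mathfrak{L})$ then reduces — after splitting $\mathcal{P}_f(\mathfrak{L})=\bigcup_{x,m}\mathcal{D}(x,m)$ via \eqref{set eq} — to convergence of a series of the type $\sum_{m,n}\bigl(\sup_x\sum_{\Lambda\in\mathcal{D}(x,m)}\|\Psi_\Lambda^{(\alpha)}\|_{\mathcal{U}}\bigr)\cdot(\text{tree-decay factor in }n)$, which is precisely where hypotheses \eqref{assumption boundedness2bis}--\eqref{assumption boundedness4} with $\upsilon>\varsigma+1$ and $\varsigma>2d$ are consumed: the polynomial gain $(m+1)^{-\upsilon}$ from the interaction has to beat both the $(1+m)^{\varsigma}$ weight coming from the decay-function bound \eqref{(3.3) NS generalized0} and the degeneracy factor $|\Lambda_m|\sim m^d$, and the $\{\mathbf{u}_{n,m}\}$ summability \eqref{assumption boundedness4} ensures the $n$-sum is uniformly bounded in $m$. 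The condition $\varsigma>2d$ (rather than the $\varsigma>d$ used earlier) is needed because the relevant trees here have a vertex of degree two attached to $B$, so the effective per-bond exponent is $\varsigma/2$, which must still exceed $d$.

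Granting these estimates, the argument concludes as follows. Uniform $L$-boundedness plus the Cauchy property (differences $\delta_\alpha^{(L_2)}\circ\tau_{t,\alpha}^{(L_2)}(B)-\delta_\alpha^{(L_1)}\circ\tau_{t,\alpha}^{(L_1)}(B)$ estimated by the same series restricted to $\Lambda$'s meeting $\Lambda_{L_1}^c$, which tends to $0$) give a limit $G_\alpha(B)\in\mathcal{U}$ with $\delta_\alpha^{(L)}\circ\tau_{t,\alpha}^{(L)}(B)\to G_\alpha(B)$ uniformly for $\alpha$ on compacta. Since $\tau_{t,\alpha}^{(L)}(B)\to\tau_{t,\alpha}(B)$ and $\delta_\alpha$ is closed (Theorem \ref{Theorem Lieb-Robinson copy(3)} (ii)), with $\delta_\alpha^{(L)}$ agreeing with $\delta_\alpha$ on the local elements $\tau_{t,\alpha}^{(L)}(B)\in\mathcal{U}_0$ up to the tail $(\delta_\alpha-\delta_\alpha^{(L)})$ which is itself controlled by \eqref{limit unique}-type estimates, closedness forces $\tau_{t,\alpha}(B)\in\mathrm{Dom}(\delta_\alpha)$ and $G_\alpha(B)=\delta_\alpha\circ\tau_{t,\alpha}(B)$. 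Passing to the limit in the integrated identity (using continuity of $\alpha\mapsto\Psi^{(\alpha)}\in\mathcal{W}$ from $\Psi\in C(\mathbb{R};\mathcal{W})$, together with $\{\mathbf{V}_{\{x\}}\in C(\mathbb{R};\mathcal{U})\}$, to get continuity of $\alpha\mapsto\delta_\alpha\circ\tau_{t,\alpha}(B)$ as in the proof of Corollary \ref{Theorem Lieb-Robinson copy(4)} (iii)) yields $\tau_{t,s}(B)-B=-\int_t^s\delta_\alpha\circ\tau_{t,\alpha}(B)\,\mathrm{d}\alpha$, and differentiating in $s$ gives \eqref{cauchy trivial1bis} on $\mathcal{U}_0$. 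The main obstacle is unquestionably the uniform summability of the multi--commutator series described in the second paragraph — balancing the interaction decay exponent $\upsilon$ against the geometric/degree factors $\varsigma,d$ — since everything else is a reprise of arguments already carried out in the autonomous case and in Corollary \ref{Theorem Lieb-Robinson copy(4)}.
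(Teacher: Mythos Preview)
Your overall strategy coincides with the paper's: approximate by finite volume, show that $\{\delta_s\circ\tau_{t,s}^{(L)}(B)\}_L$ is Cauchy in $\mathcal{U}$, invoke closedness of $\delta_s$ to conclude $\tau_{t,s}(B)\in\mathrm{Dom}(\delta_s)$, and pass to the limit in the integrated identity. The genuine gap is in your Cauchy argument. The difference
\[
\delta_\alpha^{(L_2)}\circ\tau_{t,\alpha}^{(L_2)}(B)-\delta_\alpha^{(L_1)}\circ\tau_{t,\alpha}^{(L_1)}(B)
\]
is \emph{not} ``the same series restricted to $\Lambda$'s meeting $\Lambda_{L_1}^c$'': that accounts only for the piece $(\delta_\alpha^{(L_2)}-\delta_\alpha^{(L_1)})\circ\tau_{t,\alpha}^{(L_2)}(B)$ and misses $\delta_\alpha^{(L_1)}\circ(\tau_{t,\alpha}^{(L_2)}-\tau_{t,\alpha}^{(L_1)})(B)$, in which an unbounded derivation acts on a genuinely non-local element. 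Neither a uniform bound on the individual terms nor a tail estimate on the $\Lambda$-sum gives you control of this second piece.

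The paper handles this differently. It passes to the interaction picture $\tilde\tau^{(L)}$ (stripping off the potential, cf.\ \eqref{interactino picture0}--\eqref{interactino picture2}) and reuses the Duhamel formula from the proof of Theorem~\ref{Theorem Lieb-Robinsonnew}~(ii) for $\tilde\tau^{(L_2)}-\tilde\tau^{(L_1)}$, so that $\delta_s\circ(\tilde\tau_{t,s}^{(L_2)}-\tilde\tau_{t,s}^{(L_1)})(B)$ becomes an integral of \emph{three}-fold multi-commutators
\[
\bigl[\hat\tau_{s,\alpha}^{(L_1,L_2)}(\Psi_{\mathcal{Z}}^{(s)}),\,B_\alpha^{(L_1,L_2)},\,\tau_{t,\alpha}^{(L_1)}(\tilde B_t)\bigr]^{(3)}
\]
(this is \eqref{uper bound3}). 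These are then bounded via Theorems~\ref{theorem exp tree decay copy(1)}~(i) and~\ref{Theorem Lieb-Robinson copy(1)-bis} at $k=2$, with the telescoping expansion applied to the innermost entry $\tau_{t,\alpha}^{(L_1)}(\tilde B_t)$. This is precisely why $\varsigma>2d$ is needed (trees in $\mathcal{T}_3$ have a vertex of degree two, so the effective per-bond exponent is $\varsigma/2$), and it is also why your phrase ``multi-commutators of order two'' is inconsistent with your own correct degree-counting two sentences later. The hypotheses \eqref{assumption boundedness2bis}--\eqref{assumption boundedness4} with $\upsilon>\varsigma+1$ are consumed in summing over the \emph{two} interaction labels $\mathcal{Z}_1,\mathcal{Z}_2$ via the decomposition \eqref{set eq}, yielding the double series \eqref{estimate poly}; there is no separate degeneracy factor $|\Lambda_m|\sim m^d$ in that estimate. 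Once the Cauchy property is established, the paper concludes via the four-term splitting \eqref{inportant upper boundbis} rather than the integrated identity you propose, but that final step is routine either way.
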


\begin{proof}
\noindent \textbf{1.} Let $s,t\in \mathbb{R}$, $\Lambda \in \mathcal{P}_{f}(%
\mathfrak{L})$ and take any element $B\in \mathcal{U}_{\Lambda }$. As a
preliminary step, we prove that $\{\delta _{s}\circ \tau _{t,s}^{(L)}\left(
B\right) \}_{L\in \mathbb{R}_{0}^{+}}$ converges to $\delta _{s}\circ \tau
_{t,s}\left( B\right) $, as $L\rightarrow \infty $. In particular, $\tau
_{t,s}\left( \mathcal{U}_{0}\right) \subset \mathrm{Dom}(\delta _{s})$. By
using similar arguments as in the proof of Theorem \ref{Theorem
Lieb-Robinsonnew} (ii), it suffices to study the limit of $\{\delta
_{s}\circ
\tilde{\tau}_{t,s}^{(L)}\left( B\right) \}_{L\in \mathbb{R}_{0}^{+}}$, see (%
\ref{interactino picture}).

Similar to (\ref{inequality dyn5}), from (\ref{cocylce})--(\ref{interactino
picture2}) and straightforward computations, for any $L_{1},L_{2}\in \mathbb{%
R}_{0}^{+}$ with $\Lambda \subset \Lambda _{L_{1}}\varsubsetneq \Lambda
_{L_{2}}$,
\begin{eqnarray}
&&\left\Vert \delta _{s}\circ \left( \tilde{\tau}_{t,s}^{(L_{2})}\left(
B\right) -\tilde{\tau}_{t,s}^{(L_{1})}\left( B\right) \right) \right\Vert _{%
\mathcal{U}}  \label{uper bound3} \\
&\leq &\int_{\min \{s,t\}}^{\max \{s,t\}}\sum\limits_{\mathcal{Z}\in
\mathcal{P}_{f}(\mathfrak{L})}\left\Vert \left[ \hat{\tau}_{s,\alpha
}^{(L_{1},L_{2})}(\Psi _{\mathcal{Z}}^{(s)}),B_{\alpha
}^{(L_{1},L_{2})},\tau _{t,\alpha }^{(L_{1})}(\tilde{B}_{t})\right]
^{(3)}\right\Vert _{\mathcal{U}}\mathrm{d}\alpha \ ,  \notag
\end{eqnarray}%
where $\tilde{B}_{t}\doteq \mathcal{V}_{t,s}(\mathbf{V}_{\Lambda })B\mathcal{%
V}_{s,t}(\mathbf{V}_{\Lambda })$,
\begin{equation}
\hat{\tau}_{s,\alpha }^{(L_{1},L_{2})}\left( B\right) \doteq \mathcal{V}%
_{s,\alpha }(\mathbf{V}_{\Lambda _{L_{2}}\backslash \Lambda _{L_{1}}})\tau
_{s,\alpha }^{(L_{2})}(B)\mathcal{V}_{\alpha ,s}(\mathbf{V}_{\Lambda
_{L_{2}}\backslash \Lambda _{L_{1}}})\ ,\quad B\in \mathcal{U},\ s,\alpha
\in \mathbb{R}\ ,  \label{def new}
\end{equation}%
and%
\begin{equation*}
B_{\alpha }^{(L_{1},L_{2})}\doteq \sum\limits_{\mathcal{Z}\subseteq \Lambda
_{L_{2}},\ \mathcal{Z}\cap (\Lambda _{L_{2}}\backslash \Lambda _{L_{1}})\neq
\emptyset }\mathcal{V}_{\alpha ,s}(\mathbf{V}_{\Lambda _{L_{2}}\backslash
\Lambda _{L_{1}}})\Psi _{\mathcal{Z}}\mathcal{V}_{s,\alpha }(\mathbf{V}%
_{\Lambda _{L_{2}}\backslash \Lambda _{L_{1}}})\in \mathcal{U}^{+}\cap
\mathcal{U}_{\Lambda _{L_{2}}}.
\end{equation*}%
Using (\ref{inequality idiote utile}), observe that, for all $\mathcal{Z}%
\subseteq \Lambda _{L_{2}}$ and $\alpha ,s\in \mathbb{R}$,
\begin{equation}
\mathcal{V}_{\alpha ,s}(\mathbf{V}_{\Lambda _{L_{2}}\backslash \Lambda
_{L_{1}}})\Psi _{\mathcal{Z}}\mathcal{V}_{s,\alpha }(\mathbf{V}_{\Lambda
_{L_{2}}\backslash \Lambda _{L_{1}}})\in \mathcal{U}^{+}\cap \mathcal{U}_{%
\mathcal{Z}}  \label{obs B 1}
\end{equation}%
with
\begin{equation}
\left\Vert \mathcal{V}_{\alpha ,s}(\mathbf{V}_{\Lambda _{L_{2}}\backslash
\Lambda _{L_{1}}})\Psi _{\mathcal{Z}}\mathcal{V}_{s,\alpha }(\mathbf{V}%
_{\Lambda _{L_{2}}\backslash \Lambda _{L_{1}}})\right\Vert _{\mathcal{U}%
}=\left\Vert \Psi _{\mathcal{Z}}\right\Vert _{\mathcal{U}}\ .
\end{equation}%
Similarly, for all $t\in \mathbb{R}$,
\begin{equation}
\tilde{B}_{t}\in \mathcal{U}_{\Lambda }\text{\qquad and\qquad }\Vert \tilde{B%
}_{t}\Vert _{\mathcal{U}}=\Vert B\Vert _{\mathcal{U}}\ .  \label{obs B 2}
\end{equation}

In order to bound the sum%
\begin{equation}
\sum\limits_{\mathcal{Z}\in \mathcal{P}_{f}(\mathfrak{L})}\left[ \hat{\tau}%
_{s,\alpha }^{(L_{1},L_{2})}(\Psi _{\mathcal{Z}}^{(s)}),B_{\alpha
}^{(L_{1},L_{2})},\tau _{t,\alpha }^{(L_{1})}(\tilde{B}_{t})\right] ^{(3)}
\label{sum bis}
\end{equation}%
of multi--commutators of order three we represent it as a convenient series,
whose summability is uniform w.r.t. $L_{1},L_{2}\in \mathbb{R}_{0}^{+}$ ($%
\Lambda \subset \Lambda _{L_{1}}\varsubsetneq \Lambda _{L_{2}}$). To this
end, first develop $\tau _{t,\alpha }^{(L_{1})}(\tilde{B}_{t})$ as a
telescoping series: Let $m_{0}\in \mathbb{N}_{0}$ be the smallest integer
such that $\Lambda \subset \Lambda _{m_{0}}$. Then, similar to Lemma \ref%
{Lemma Series representation of the dynamics} \ (autonomous case) and as
explained in the proof of Theorem \ref{Theorem Lieb-Robinson copy(1)-bis},
for any $\alpha ,t\in {\mathbb{R}}$ and $L_{1}\in \mathbb{R}_{0}^{+}$,%
\begin{equation*}
\tau _{t,\alpha }^{(L_{1})}(\tilde{B}_{t})=\sum_{n=m_{0}}^{\infty }\mathfrak{%
\tilde{B}}_{t,\alpha }(n)\text{ }.
\end{equation*}%
Here, for all integers $n\geq m_{0}$, $\mathfrak{\tilde{B}}_{t,\alpha
}(n)\in \mathcal{U}_{\Lambda _{n}}$ where $\Vert \mathfrak{\tilde{B}}%
_{t,\alpha }(m_{0})\Vert _{\mathcal{U}}=\Vert B\Vert _{\mathcal{U}}$ (see (%
\ref{obs B 2})) and, for all $n\in \mathbb{N}$ with $n>m_{0}$,%
\begin{equation}
\Vert \mathfrak{\tilde{B}}_{t,\alpha }(n)\Vert _{\mathcal{U}}\leq 2\Vert
B\Vert _{\mathcal{U}}\left\Vert \Psi \right\Vert _{\infty }\left\vert
t-\alpha \right\vert \mathrm{e}^{4\mathbf{D}\left\vert t-\alpha \right\vert
\left\Vert \Psi \right\Vert _{\infty }}\frac{\mathbf{u}_{n,m_{0}}}{\left(
1+n\right) ^{\varsigma }}\ ,  \label{born telescoping}
\end{equation}%
by Theorem \ref{Theorem Lieb-Robinsonnew} (ii) and Assumption (\ref{(3.3) NS
generalized0}). Of course, $\mathfrak{\tilde{B}}_{t,\alpha }(n)=0$ for any
integer $n>L_{1}$ and $\alpha ,t\in {\mathbb{R}}$ because $\{\tau
_{t,s}^{(L_{1})}\}_{s,t\in {\mathbb{R}}}$ is a finite--volume dynamics.
Meanwhile, because of (\ref{inequality idiote utile}), Theorem \ref{Theorem
Lieb-Robinson copy(1)-bis} holds by replacing $\{\tau _{t,s}\}_{s,t\in
\mathbb{R}}$ with $\{\hat{\tau}_{t,s}^{(L_{1},L_{2})}\}_{s,t\in \mathbb{R}}$
at sufficiently large $L_{1},L_{2}\in \mathbb{R}_{0}^{+}$ ($\Lambda
_{L_{1}}\subsetneq \Lambda _{L_{2}}$). Using this together with (\ref%
{assumption boundedness2bis})--(\ref{assumption boundedness4}) for $\Phi
=\Psi ^{(t)}$, Equations (\ref{obs B 1})--(\ref{obs B 2}), Theorem \ref%
{theorem exp tree decay copy(1)}, as well as the assumptions $\nu >\varsigma
+1$ and $\varsigma >2d$,%
\begin{eqnarray}
&&\sum_{n_{0}=m_{0}}^{\infty }\ \ \sum_{x_{2}\in \mathfrak{L}}\sum_{m_{2}\in
\mathbb{N}_{0}}\sum\limits_{\mathcal{Z}_{2}\in \mathcal{D}(x_{2},m)}\ \
\sum_{x_{1}\in \mathfrak{L}}\sum_{m_{1}\in \mathbb{N}_{0}}\sum\limits_{%
\mathcal{Z}_{1}\in \mathcal{D}(x_{1},m)}  \label{estimate poly} \\
&&\left\Vert \left[ \hat{\tau}_{s,\alpha }^{(L_{1},L_{2})}(\Psi _{\mathcal{Z}%
_{2}}^{(s)}),\mathcal{V}_{\alpha ,s}(\mathbf{V}_{\Lambda _{L_{2}}\backslash
\Lambda _{L_{1}}})\Psi _{\mathcal{Z}_{1}}\mathcal{V}_{s,\alpha }(\mathbf{V}%
_{\Lambda _{L_{2}}\backslash \Lambda _{L_{1}}}),\mathfrak{\tilde{B}}%
_{t,\alpha }(n)\right] ^{(3)}\right\Vert _{\mathcal{U}}  \notag \\
&\leq &D\left\Vert B\right\Vert _{\mathcal{U}}\left\Vert \mathbf{u}_{\cdot
,m_{0}}\right\Vert _{\ell ^{1}(\mathbb{N})}\left( \sum_{m_{1}\in \mathbb{N}%
_{0}}\left( m_{1}+1\right) ^{\varsigma -\upsilon }\right)  \notag \\
&&\times \sum_{m_{2}\in \mathbb{N}_{0}}\left( m_{2}+1\right) ^{-\upsilon
}\left( \sum_{n_{2}\in \mathbb{N}}\mathbf{u}_{n_{2},m_{2}}+\left(
m_{2}+1\right) ^{\varsigma }\right) <\infty \ .  \notag
\end{eqnarray}%
Similar to (\ref{born telescoping}) and because (\ref{assumption
boundedness2bis})--(\ref{assumption boundedness4}) with $\Phi =\Psi ^{(t)}$
hold uniformly for $t\in \mathbb{R}$, the strictly positive constant $D\in
\mathbb{R}^{+}$ is uniformly bounded for $s,t,\alpha $ on compacta and $%
L_{1},L_{2}\in \mathbb{R}_{0}^{+}$ ($\Lambda \subset \Lambda
_{L_{1}}\varsubsetneq \Lambda _{L_{2}}$). The last sum is an \emph{upper
bound} of the integrand of the r.h.s. of (\ref{uper bound3}). Indeed, we
deduce from (\ref{set eq}) that%
\begin{eqnarray*}
B_{\alpha }^{(L_{1},L_{2})} &=&\sum_{x\in \Lambda _{L_{2}}\backslash \Lambda
_{L_{1}}}\sum_{m\in \mathbb{N}_{0}}\sum\limits_{\mathcal{Z}\subseteq \Lambda
_{L_{2}},\ \mathcal{Z}\in \mathcal{D}(x,m)} \\
&&\frac{1}{|\mathcal{Z}\cap \Lambda _{L_{2}}\backslash \Lambda _{L_{1}}|}%
\mathcal{V}_{\alpha ,s}(\mathbf{V}_{\Lambda _{L_{2}}\backslash \Lambda
_{L_{1}}})\Psi _{\mathcal{Z}}\mathcal{V}_{s,\alpha }(\mathbf{V}_{\Lambda
_{L_{2}}\backslash \Lambda _{L_{1}}})\text{ }
\end{eqnarray*}%
and%
\begin{equation*}
\sum\limits_{\mathcal{Z}\in \mathcal{P}_{f}(\mathfrak{L}),\ \mathcal{Z}\cap
\Lambda _{L_{2}}\neq \emptyset }\hat{\tau}_{s,\alpha }^{(L_{1},L_{2})}(\Psi
_{\mathcal{Z}}^{(s)})=\sum_{x\in \Lambda _{L_{2}}}\sum_{m\in \mathbb{N}%
_{0}}\sum\limits_{\mathcal{Z}\in \mathcal{D}(x,m)}\frac{1}{|\mathcal{Z}\cap
\Lambda _{L_{2}}|}\hat{\tau}_{s,\alpha }^{(L_{1},L_{2})}(\Psi _{\mathcal{Z}%
}^{(s)})\ .
\end{equation*}%
[Compare this last sum with (\ref{sum bis}) by using (\ref{obs B 1}) and (%
\ref{obs B 2}) to restrict the whole sum over $\mathcal{Z}\in \mathcal{P}%
_{f}(\mathfrak{L})$ to finite sets $\mathcal{Z}$ so that $\mathcal{Z}\cap
\Lambda _{L_{2}}\neq \emptyset $.]

As a consequence, for any $s,t\in \mathbb{R}$ and $B\in \mathcal{U}_{0}$, we
infer from (\ref{uper bound3}), (\ref{estimate poly}), and Lebesgue's
dominated convergence theorem that $\{\delta _{s}\circ \tilde{\tau}%
_{t,s}^{(L)}\left( B\right) \}_{L\in \mathbb{R}_{0}^{+}}$, and hence $%
\{\delta _{s}\circ \tau _{t,s}^{(L)}\left( B\right) \}_{L\in \mathbb{R}%
_{0}^{+}}$, are Cauchy nets within the complete space $\mathcal{U}$. By
Corollary \ref{Theorem Lieb-Robinson copy(4)}, $\{\tau _{t,s}^{(L)}\}_{L\in
\mathbb{R}_{0}^{+}}$ converges strongly to $\tau _{t,s}$ for every $s,t\in
\mathbb{R}$. Recall meanwhile that the operator $\delta _{s}$ is the \emph{%
closed} operator described in Theorem \ref{Theorem Lieb-Robinson copy(3)}
for the interaction $\Psi ^{(s)}\in \mathcal{W}$ and the potential $\mathbf{V%
}^{(s)}$ at fixed $s\in {\mathbb{R}}$. Therefore, $\tau _{t,s}\left(
B\right) \in \mathrm{Dom}(\delta _{s})$ and the family $\{\delta _{s}\circ
\tau _{t,s}^{(L)}\left( B\right) \}_{L\in \mathbb{R}_{0}^{+}}$ converges to $%
\delta _{s}\circ \tau _{t,s}\left( B\right) $, i.e.,%
\begin{equation}
\underset{L\rightarrow \infty }{\lim }\left\Vert \delta _{s}\circ \left(
\tau _{t,s}\left( B\right) -\tau _{t,s}^{(L)}\left( B\right) \right)
\right\Vert _{\mathcal{U}}=0\ .  \label{inportant upper boundbis+3}
\end{equation}%
In particular, $\tau _{t,s}\left( \mathcal{U}_{0}\right) \subset \mathrm{Dom}%
(\delta _{s})$.

Now, by using (\ref{cauchy1}) one gets that, for $L\in \mathbb{R}_{0}^{+}$, $%
s,t,h\in \mathbb{R}$, $h\neq 0$, and $B\in \mathcal{U}_{0}$,
\begin{eqnarray}
&&\left\Vert \left\vert h\right\vert ^{-1}\left( \tau _{t,s+h}\left(
B\right) -\tau _{t,s}\left( B\right) \right) +\delta _{s}\circ \tau
_{t,s}\left( B\right) \right\Vert _{\mathcal{U}}  \notag \\
&\leq &\left\Vert \delta _{s}\circ \left( \tau _{t,s}\left( B\right) -\tau
_{t,s}^{(L)}\left( B\right) \right) \right\Vert _{\mathcal{U}}
\label{inportant upper boundbis} \\
&&+\underset{\alpha \in \left[ s-\left\vert h\right\vert ,s+\left\vert
h\right\vert \right] }{\sup }\left\Vert \delta _{s}^{(L)}\circ \tau
_{t,s}^{(L)}\left( B\right) -\delta _{\alpha }^{(L)}\circ \tau _{t,\alpha
}^{(L)}\left( B\right) \right\Vert _{\mathcal{U}}  \notag \\
&&+\left\Vert \left( \delta _{s}^{(L)}-\delta _{s}\right) \circ \tau
_{t,s}^{(L)}\left( B\right) \right\Vert _{\mathcal{U}}  \notag \\
&&+2\left\vert h\right\vert ^{-1}\underset{\alpha \in \left[ s-\left\vert
h\right\vert ,s+\left\vert h\right\vert \right] }{\sup }\left\Vert \tau
_{t,\alpha }\left( B\right) -\tau _{t,\alpha }^{(L)}\left( B\right)
\right\Vert _{\mathcal{U}}\ .  \notag
\end{eqnarray}%
We proceed by estimating the four terms in the upper bound of (\ref%
{inportant upper boundbis}). The first one is already analyzed, see (\ref%
{inportant upper boundbis+3}). So, we start with the second. If nothing is
explicitly mentioned, the parameters $L\in \mathbb{R}_{0}^{+}$, $s,t,h\in
\mathbb{R}$, $\Lambda \in \mathcal{P}_{f}(\mathfrak{L})$ and $B\in \mathcal{U%
}_{\Lambda }$ are fixed.\smallskip

\noindent \textbf{2.} For any $\alpha \in \mathbb{R}$, observe that
\begin{eqnarray}
\left\Vert \delta _{s}^{(L)}\circ \tau _{t,s}^{(L)}\left( B\right) -\delta
_{\alpha }^{(L)}\circ \tau _{t,\alpha }^{(L)}\left( B\right) \right\Vert _{%
\mathcal{U}} &\leq &\left\Vert \left( \delta _{s}^{(L)}-\delta _{\alpha
}^{(L)}\right) \circ \tau _{t,\alpha }^{(L)}\left( B\right) \right\Vert _{%
\mathcal{U}}  \label{eq upper bound1-1} \\
&&+\left\Vert \delta _{s}^{(L)}\circ \left( \tau _{t,s}^{(L)}-\tau
_{t,\alpha }^{(L)}\right) \left( B\right) \right\Vert _{\mathcal{U}}\ .
\notag
\end{eqnarray}%
By using first (\ref{dynamic seriesbis}) for the interaction $\Psi ^{(s)}$
and potential $\mathbf{V}^{(s)}$ and then Lieb--Robinson bounds (Theorem \ref%
{Theorem Lieb-Robinsonnew} (i)) in the same way as (\ref{assertion bisbisbis}%
), one verifies that, for any $\alpha \in \mathbb{R}$ and $B\neq 0$,%
\begin{eqnarray}
&&\frac{\left\Vert \left( \delta _{s}^{(L)}-\delta _{\alpha }^{(L)}\right)
\circ \tau _{t,\alpha }^{(L)}\left( B\right) \right\Vert _{\mathcal{U}}}{%
2\left\Vert B\right\Vert _{\mathcal{U}}}  \label{ineq autre} \\
&\leq &\left\Vert \Psi ^{(s)}-\Psi ^{(\alpha )}\right\Vert _{\mathcal{W}}\
\mathrm{e}^{2\mathbf{D}\left\vert t-\alpha \right\vert \left\Vert \Psi
\right\Vert _{\infty }}|\Lambda |\left\Vert \mathbf{F}\right\Vert _{1,%
\mathfrak{L}}+\sum\limits_{x\in \Lambda }\Vert \mathbf{V}_{\left\{ x\right\}
}^{(\alpha )}-\mathbf{V}_{\left\{ x\right\} }^{(s)}\Vert _{\mathcal{U}}
\notag \\
&&+\mathbf{D}^{-1}\left( \mathrm{e}^{2\mathbf{D}\left\vert t-\alpha
\right\vert \left\Vert \Psi \right\Vert _{\infty }}-1\right)
\sum\limits_{x\in \mathfrak{L}\backslash \Lambda }\Vert \mathbf{V}_{\left\{
x\right\} }^{(\alpha )}-\mathbf{V}_{\left\{ x\right\} }^{(s)}\Vert _{%
\mathcal{U}}\sum_{y\in \Lambda }\mathbf{F}\left( \left\vert x-y\right\vert
\right) \ .  \notag
\end{eqnarray}%
By assumption, $\Psi \in C(\mathbb{R};\mathcal{W})$, $\{\mathbf{V}_{\left\{
x\right\} }^{(t)}\}_{x\in \mathfrak{L},t\in \mathbb{R}}$ is a bounded family
in $\mathcal{U}$, and $\mathbf{V}_{\left\{ x\right\} }\in C\left( \mathbb{R};%
\mathcal{U}\right) $ for any $x\in \mathfrak{L}$. So, by Lebesgue's
dominated convergence theorem, it follows from (\ref{ineq autre}) that
\begin{equation}
\lim_{h\rightarrow 0}\underset{\alpha \in \left[ s-\left\vert h\right\vert
,s+\left\vert h\right\vert \right] }{\sup }\left\Vert \left( \delta
_{s}^{(L)}-\delta _{\alpha }^{(L)}\right) \circ \tau _{t,\alpha
}^{(L)}\left( B\right) \right\Vert _{\mathcal{U}}=0\ .
\label{eq upper bound1-2}
\end{equation}%
On the other hand, by (\ref{cauchy1}),
\begin{equation}
\underset{\alpha \in \left[ s-\left\vert h\right\vert ,s+\left\vert
h\right\vert \right] }{\sup }\left\Vert \delta _{s}^{(L)}\circ \left( \tau
_{t,s}^{(L)}-\tau _{t,\alpha }^{(L)}\right) \left( B\right) \right\Vert _{%
\mathcal{U}}\leq \int_{s-\left\vert h\right\vert }^{s+\left\vert
h\right\vert }\left\Vert \delta _{s}^{(L)}\circ \delta _{\alpha }^{(L)}\circ
\tau _{t,\alpha }^{(L)}\left( B\right) \right\Vert _{\mathcal{U}}\mathrm{d}%
\alpha \ ,  \label{upper bound0}
\end{equation}%
where
\begin{eqnarray}
\left\Vert \delta _{s}^{(L)}\circ \delta _{\alpha }^{(L)}\circ \tau
_{t,\alpha }^{(L)}\left( B\right) \right\Vert _{\mathcal{U}} &\leq
&\sum\limits_{\mathcal{Z}_{1},\mathcal{Z}_{2}\in \mathcal{P}_{f}(\mathfrak{L}%
)}\left\Vert \left[ \Psi _{\mathcal{Z}_{1}}^{(s)},\Psi _{\mathcal{Z}%
_{2}}^{(\alpha )},\tau _{t,\alpha }^{(L)}\left( B\right) \right]
^{(3)}\right\Vert _{\mathcal{U}}  \label{upper bound} \\
&&+\sum\limits_{\mathcal{Z}\in \mathcal{P}_{f}(\mathfrak{L}%
)}\sum\limits_{x\in \mathfrak{L}}\left\Vert \left[ \Psi _{\mathcal{Z}}^{(s)},%
\mathbf{V}_{\left\{ x\right\} }^{(\alpha )},\tau _{t,\alpha }^{(L)}\left(
B\right) \right] ^{(3)}\right\Vert _{\mathcal{U}}  \notag \\
&&+\sum\limits_{\mathcal{Z}\in \mathcal{P}_{f}(\mathfrak{L}%
)}\sum\limits_{x\in \mathfrak{L}}\left\Vert \left[ \mathbf{V}_{\left\{
x\right\} }^{(s)},\Psi _{\mathcal{Z}}^{(\alpha )},\tau _{t,\alpha
}^{(L)}\left( B\right) \right] ^{(3)}\right\Vert _{\mathcal{U}}  \notag \\
&&+\sum\limits_{x,y\in \mathfrak{L}}\left\Vert \left[ \mathbf{V}_{\left\{
x\right\} }^{(s)},\mathbf{V}_{\left\{ y\right\} }^{(\alpha )},\tau
_{t,\alpha }^{(L)}\left( B\right) \right] ^{(3)}\right\Vert _{\mathcal{U}}\ .
\notag
\end{eqnarray}%
Similar to (\ref{estimate poly}), we use Theorems \ref{theorem exp tree
decay copy(1)}\ (i)\ and \ref{Theorem Lieb-Robinson copy(1)-bis} for $k=2$
to derive an upper bound for the r.h.s. of (\ref{upper bound}), uniformly
w.r.t. large $L\in \mathbb{R}_{0}^{+}$ and $\alpha \in \left[ s-1,s+1\right]
$. By (\ref{upper bound0}), it follows that
\begin{equation*}
\lim_{h\rightarrow 0}\underset{\alpha \in \left[ s-\left\vert h\right\vert
,s+\left\vert h\right\vert \right] }{\sup }\left\Vert \delta _{s}^{(L)}\circ
\left( \tau _{t,s}^{(L)}-\tau _{t,\alpha }^{(L)}\right) \left( B\right)
\right\Vert _{\mathcal{U}}=0\ .
\end{equation*}%
Combined with (\ref{eq upper bound1-1}) and (\ref{eq upper bound1-2}) this
yields
\begin{equation}
\lim_{h\rightarrow 0}\underset{\alpha \in \left[ s-\left\vert h\right\vert
,s+\left\vert h\right\vert \right] }{\sup }\left\Vert \delta _{s}^{(L)}\circ
\tau _{t,s}^{(L)}\left( B\right) -\delta _{\alpha }^{(L)}\circ \tau
_{t,\alpha }^{(L)}\left( B\right) \right\Vert _{\mathcal{U}}=0\ .
\label{inportant upper boundbis+1}
\end{equation}%
\smallskip

\noindent \textbf{3.} Similar to (\ref{assertion bisbisbis}), one gets from
Lieb--Robinson bounds (Theorem \ref{Theorem Lieb-Robinsonnew} (i)) that%
\begin{equation*}
\left\Vert \left( \delta _{s}^{(L)}-\delta _{s}\right) \circ \tau
_{t,s}^{(L)}\left( B\right) \right\Vert _{\mathcal{U}}\leq \left\Vert \Psi
\right\Vert _{\infty }\ \mathrm{e}^{2\mathbf{D}\left\vert t-s\right\vert
\left\Vert \Psi \right\Vert _{\infty }}\sum\limits_{y\in \Lambda
_{L}^{c}}\sum_{x\in \Lambda }\mathbf{F}\left( \left\vert x-y\right\vert
\right) \ ,
\end{equation*}%
which combined with (\ref{assertion bisbisbisbis}) gives%
\begin{equation}
\lim_{L\rightarrow \infty }\left\Vert \left( \delta _{s}^{(L)}-\delta
_{s}\right) \circ \tau _{t,s}^{(L)}\left( B\right) \right\Vert _{\mathcal{U}%
}=0\ .  \label{inportant upper boundbis+2}
\end{equation}%
\smallskip \noindent \textbf{4.} In the limit $h\rightarrow 0$, we take $%
L_{h}\rightarrow \infty $ such that
\begin{equation}
\underset{h\rightarrow 0}{\lim }\left\vert h\right\vert ^{-1}\underset{%
\alpha \in \left[ s-\left\vert h\right\vert ,s+\left\vert h\right\vert %
\right] }{\sup }\left\Vert \tau _{t,\alpha }\left( B\right) -\tau _{t,\alpha
}^{(L_{h})}\left( B\right) \right\Vert _{\mathcal{U}}=0\ .
\label{inportant upper boundbis+4}
\end{equation}%
This is possible because $\tau _{t,s}^{(L)}\left( B\right) $ converges to $%
\tau _{t,s}\left( B\right) $, uniformly for $t,s$ on compacta, by Corollary %
\ref{Theorem Lieb-Robinson copy(4)}. We eventually combine (\ref{inportant
upper boundbis+3}), (\ref{inportant upper boundbis+1}), (\ref{inportant
upper boundbis+2}), and (\ref{inportant upper boundbis+4}) with Inequality (%
\ref{inportant upper boundbis}) to arrive at the assertion.
\end{proof}

\noindent Note that uniqueness of the solution of the non--auto%
\-%
nomous evolution equation (\ref{cauchy trivial1bis}) cannot be proven as
done for the proof of uniqueness in Corollary \ref{Theorem Lieb-Robinson
copy(4)} (iii). Indeed, take any family $\{\hat{\tau}_{t,s}\}_{s,t\in {%
\mathbb{R}}}$ of bounded operators on $\mathcal{U}$ satisfying (\ref{cauchy
trivial1bis}) on $\mathcal{U}_{0}$. Then, as before in the proof of
Corollary \ref{Theorem Lieb-Robinson copy(4)} (iii), for any $B\in \mathcal{U%
}_{0}$, $L\in \mathbb{R}_{0}^{+}$ and $s,t\in \mathbb{R}$,%
\begin{equation}
\tau _{t,s}^{(L)}\left( B\right) -\hat{\tau}_{t,s}\left( B\right)
=\int_{s}^{t}\tau _{\alpha ,s}^{(L)}\circ \left( \delta _{\alpha
}^{(L)}-\delta _{\alpha }\right) \circ \hat{\tau}_{t,\alpha }\left( B\right)
\mathrm{d}\alpha \ ,  \label{inequality marche pas}
\end{equation}%
by using (\ref{cauchy2}). However, it is not clear this time whether the norm%
\begin{equation*}
\left\Vert \tau _{\alpha ,s}^{(L)}\circ \left( \delta _{\alpha
}^{(L)}-\delta _{\alpha }\right) \circ \hat{\tau}_{t,\alpha }\left( B\right)
\right\Vert _{\mathcal{U}}=\left\Vert \left( \delta _{\alpha }-\delta
_{\alpha }^{(L)}\right) \circ \hat{\tau}_{t,\alpha }\left( B\right)
\right\Vert _{\mathcal{U}}
\end{equation*}%
vanishes, as $L\rightarrow \infty $, even if (\ref{core limit}) for $\delta
_{\alpha }$ and $\delta _{\alpha }^{(L)}$ holds true, because $\hat{\tau}%
_{t,\alpha }\left( B\right) \in \mathrm{Dom}(\delta _{\alpha })$ can be
\emph{outside} $\mathcal{U}_{0}$. The strong convergence of $\delta _{\alpha
}^{(L)}$ to $\delta _{\alpha }$ on some core of $\delta _{\alpha }$ does not
imply, in general, the strong convergence on any core of $\delta _{\alpha }$%
. The equality (\ref{inequality marche pas}) with $\tau _{t,s},\delta _{s}$
replacing $\tau _{t,s}^{(L)},\delta _{s}^{(L)}$ is also not clear because (%
\ref{cauchy trivial1}) is only known to hold true on $\mathcal{U}_{0}$ and
\emph{a priori not} on the whole domain $\mathrm{Dom}(\delta _{\alpha })$ of
$\delta _{\alpha }$.

The non--autonomous evolution equation (\ref{non auto hyperbolic}) of
Theorem \ref{thm non auto} is not parabolic because the symmetric derivation
$\delta _{t}$, $t\in \mathbb{R}$, is generally \emph{not} the generator of
an analytic semigroup. Note also that no H\"{o}lder continuity condition is
imposed on $\{\delta _{t}\}_{t\in \mathbb{R}}$, like in the class of
parabolic evolution equations introduced in \cite[Hypotheses I--II]%
{parabolic def}. See also \cite{Schnaubelt1} or \cite[Sect. 5.6.]{Pazy} for
more simplified studies.

In fact, (\ref{non auto hyperbolic}) is rather related to Kato's \emph{%
hyperbolic} evolution equations \cite{Kato,Kato1973,Katobis}. The so--called
\emph{Kato quasi--stability}%
\index{Kato quasi--stability} is satisfied by the family of generators $%
\{\delta _{t}\}_{t\in \mathbb{R}}$ because they are always dissipative
operators, by Lemma \ref{lemma dissipative}. $\{\delta _{t}\}_{t\in \mathbb{R%
}}$ is also strongly continuous on the dense set $\mathcal{U}_{0}$, which is
a common core of all $\delta _{t}$, $t\in \mathbb{R}$. However, in general,
even for \emph{finite} range interactions $\Psi \in \mathcal{W}$, the
strongly continuous two--para%
\-%
meter family $\{\tau _{t,s}\}_{s,t\in {\mathbb{R}}}$ does \emph{not}
conserve the dense set $\mathcal{U}_{0}$, i.e., $\tau _{t,s}\left( \mathcal{U%
}_{0}\right) \nsubseteq \mathcal{U}_{0}$\ for any $s\neq t$. In some
specific situations one can directly show that the completion of the core $%
\mathcal{U}_{0}$ w.r.t. a conveniently chosen norm defines a so--called
admissible Banach space $\mathcal{Y}\supset \mathcal{U}_{0}$ of the
generator at any time, which satisfies further technical conditions leading
to Kato's hyperbolic conditions \cite{Kato,Kato1973,Katobis}. See also \cite[%
Sect. 5.3.]{Pazy} and \cite[Sect. VII.1]{Bru-Bach}, which is used in the
proof of Theorem \ref{pertubed dynam thm} (i). Nevertheless, the existence
of such a Banach space $\mathcal{Y}$ is a priori unclear in the general case
treated in Theorem \ref{thm non auto}. See for instance the uniqueness
problem explained just above.

Note that we only assume in Theorem \ref{thm non auto} some polynomial decay
for the interaction with (\ref{(3.3) NS generalized0}) and (\ref{assumption
boundedness2bis})--(\ref{assumption boundedness4}) (uniformly in time).
Recall that these assumptions are fulfilled for any interaction $\Psi \in
\mathcal{W}$ with (\ref{example polynomial}), provided the parameter $%
\epsilon \in \mathbb{R}^{+}$ is sufficiently large. In the case of
exponential decays, stronger results can be deduced from Lieb--Robinson
bounds for multi--commutators. For the interested reader, we give below one
example, which is based on interactions $\Phi $ satisfying the following
condition:%
\index{Decay function!exponential decay}

\begin{itemize}
\item \emph{Exponential decay.} Assume (\ref{(3.3) NS generalized}) and the
existence of constants $\upsilon >\varsigma $ and $D\in \mathbb{R}^{+}$ such
that%
\begin{equation}
\underset{x\in \mathfrak{L}}{\sup }\sum\limits_{\Lambda \in \mathcal{D}%
\left( x,m\right) }\left\Vert \Phi _{\Lambda }\right\Vert _{\mathcal{U}}\leq
D\mathrm{e}^{-\upsilon m}\ ,\qquad m\in \mathbb{N}_{0}\ ,
\label{assumption boundedness 5}
\end{equation}%
while%
\begin{equation}
\sum_{m\in \mathbb{N}}\mathbf{C}_{m}\mathrm{e}^{-\left( \varsigma +\upsilon
\right) m}<\infty \ .  \label{assumption boundedness 5bis}
\end{equation}
\end{itemize}

\begin{satz}[Graph norm convergence and Gevrey vectors]
\label{thm non auto copy(1)}\mbox{
}\newline
Let $\Psi \doteq \{\Psi ^{(t)}\}_{t\in \mathbb{R}}$ be a bounded family on $%
\mathcal{W}$ (i.e., $\left\Vert \Psi \right\Vert _{\infty }<\infty $ ), $\{%
\mathbf{V}^{(t)}\}_{t\in \mathbb{R}}$ a collection of potentials, and $B\in
\mathcal{U}_{0}$.
\index{Gevrey vector}For any $x\in \mathfrak{L}$ and $\Lambda \in \mathcal{P}%
_{f}(\mathfrak{L})$, $\Psi _{\Lambda },\mathbf{V}_{\left\{ x\right\} }\in
C\left( \mathbb{R};\mathcal{U}\right) $. Assume$\ $that (\ref{(3.3) NS
generalized}) and (\ref{assumption boundedness 5})--(\ref{assumption
boundedness 5bis}) hold for $\Phi =\Psi ^{(t)}$, uniformly in time.\newline
\emph{(i)} Graph norm convergence. As $L\rightarrow \infty $, $\tau
_{t,s}^{(L)}(B)$ converges, uniformly for $s,t$ on compacta, to $\tau
_{t,s}(B)$ within the normed space $(\mathrm{Dom}(\delta
_{s}^{m}),\left\Vert \cdot \right\Vert _{\delta _{s}^{m}})$, where, for all $%
m\in \mathbb{N}_{0}$, $\left\Vert \cdot \right\Vert _{\delta _{s}^{m}}$
stands for the graph norm of the densely defined operator $\delta _{s}^{m}$%
.\smallskip \newline
\emph{(ii)} Gevrey vectors. If $\{\mathbf{V}_{\left\{ x\right\}
}^{(t)}\}_{x\in \mathfrak{L},t\in \mathbb{R}}$ is a bounded family on $%
\mathcal{U}$ then, for any $\mathrm{T}\in {\mathbb{R}}_{0}^{+}$, there exist
$r\equiv r_{d,\mathrm{T},\Psi ,\mathbf{V},\mathbf{F}}\in \mathbb{R}^{+}$ and
$D\equiv D_{\mathrm{T},\Psi ,\mathbf{V}}\in \mathbb{R}^{+}$ such that, for
all $s,t\in \left[ -\mathrm{T},\mathrm{T}\right] $, $m_{0}\in \mathbb{N}_{0}$
and $B\in \mathcal{U}_{\Lambda _{m_{0}}}$,%
\begin{equation*}
\sum_{m\in \mathbb{N}}%
\frac{r^{m}}{\left( m!\right) ^{d}}\left\Vert \delta _{s}^{m}\circ \tau
_{t,s}(B)\right\Vert _{\mathcal{U}}\leq D\mathrm{e}^{m_{0}\varsigma
}\left\Vert B\right\Vert _{\mathcal{U}}\ .
\end{equation*}
\end{satz}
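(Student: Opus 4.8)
The plan is to express $\delta_s^m\circ\tau_{t,s}(B)$ as an absolutely convergent sum of multi--commutators of order $m+1$, to estimate it by the Lieb--Robinson bounds for multi--commutators of Theorems \ref{Theorem Lieb-Robinson copy(1)-bis} and \ref{theorem exp tree decay copy(1)} (ii), and to perform the resulting sums over trees and lattice positions by means of Proposition \ref{Coro estimate super importante}, which produces precisely the factor $(m!)^d$ that is cancelled by the Gevrey weight. First I would set up the representation: expanding the $m$--th power of the conservative closed derivation $\delta_s$ of Theorem \ref{Theorem Lieb-Robinson copy(3)} on local elements, and using that $\tau_{t,s}$ is a $\ast$--automorphism with inverse $\tau_{s,t}$ (reverse cocycle property, Corollary \ref{Theorem Lieb-Robinson copy(4)} (i)), one gets, for $B\in\mathcal{U}_{\Lambda_{m_0}}$,
\begin{equation*}
\delta_s^m\circ\tau_{t,s}(B)=i^m\,\tau_{t,s}\!\left(\sum_{C_1,\dots,C_m}\left[\tau_{s,t}(C_m),\dots,\tau_{s,t}(C_1),B\right]^{(m+1)}\right),
\end{equation*}
where each $C_j$ ranges over $\{\Psi^{(s)}_{\mathcal{Z}}:\mathcal{Z}\in\mathcal{P}_f(\mathfrak{L})\}\cup\{\mathbf{V}^{(s)}_{\{x\}}:x\in\mathfrak{L}\}$. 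This identity I would first establish at finite volume (pure algebra, everything bounded) and then pass to the limit $L\to\infty$ using closedness of $\delta_s^m$ (legitimate since $\delta_s$ generates a $C_0$--group) together with the absolute convergence, uniform in $L$, of the sums below. Keeping $B$ as the innermost, \emph{un--evolved} entry is the crucial structural choice: its radius $m_0$ then enters the tree--decay estimate only through an innocuous factor $\mathrm{e}^{\varsigma m_0}$, while the evolved entries $\tau_{s,t}(C_j)$ feed the Lieb--Robinson propagation, where the decay hypotheses can absorb them.

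Next I would estimate each multi--commutator. Writing, via (\ref{set eq}), $\Psi^{(s)}_{\mathcal{Z}_j}=\chi_{x_j}(\widehat{C}_j)$ with $\widehat{C}_j\in\mathcal{U}^+\cap\mathcal{U}_{\Lambda_{\mu_j}}$ for $\mathcal{Z}_j\in\mathcal{D}(x_j,\mu_j)$ (and $\mathbf{V}^{(s)}_{\{x_j\}}=\chi_{x_j}(\widehat{C}_j)$ with $\widehat{C}_j\in\mathcal{U}^+\cap\mathcal{U}_{\{0\}}$), one applies Theorem \ref{Theorem Lieb-Robinson copy(1)-bis} --- with the role of its ``initial time'' played by $t$ and all its evolution times $s_j$ equal to $s$ --- together with the exponential--decay bound of Theorem \ref{theorem exp tree decay copy(1)} (ii). This bounds the multi--commutator by $2^m\|B\|_{\mathcal{U}}\prod_j\|\widehat{C}_j\|_{\mathcal{U}}$ times $\sum_{T\in\mathcal{T}_{m+1}}(\varkappa_T+\Re_{T,\|\Psi\|_\infty})$, where both $\varkappa_T$ and $\Re_{T,\alpha}$ have the shape $\mathrm{e}^{\varsigma m_0}\times(\text{a product over the slots }j\ge1\text{ of radius--weights})\times\prod_{\{j,l\}\in T}\exp(-\varsigma|x_j-x_l|/(\sqrt{d}\max\{\mathfrak{d}_T(j),\mathfrak{d}_T(l)\}))$; here the radius--weight of a ``static'' slot is $\mathrm{e}^{\varsigma\mu_j}$ and that of a ``propagated'' slot (one appearing in the $\pi$--range of (\ref{def AT})) is $\mathbf{C}_{\mu_j}|s-t|\,\mathrm{e}^{4\mathbf{D}|s-t|\|\Psi\|_\infty-\varsigma\mu_j}$, up to a factor depending only on $|s-t|$ and $\varsigma$.

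Then I would carry out the summations. For each slot $j\ge1$ the sum of its radius--weight times $\sup_{x_j}\sum_{\mathcal{Z}_j\in\mathcal{D}(x_j,\mu_j)}\|\Psi^{(s)}_{\mathcal{Z}_j}\|_{\mathcal{U}}$ over $\mu_j$ (resp. times $\|\mathbf{V}^{(s)}_{\{x_j\}}\|_{\mathcal{U}}$ for an on--site slot) is bounded by a constant $D_1=D_1(\mathrm{T},\Psi,\mathbf{V},\mathbf{F})$, uniformly in $x_j$ and in $s,t\in[-\mathrm{T},\mathrm{T}]$: static contributions are summable by (\ref{assumption boundedness 5}) since $\upsilon>\varsigma$, propagated contributions by (\ref{assumption boundedness 5bis}) since then $\sum_\mu\mathbf{C}_\mu\mathrm{e}^{-(\varsigma+\upsilon)\mu}<\infty$, and on--site slots by the boundedness of $\{\mathbf{V}^{(t)}_{\{x\}}\}$. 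What remains is $\sum_{T\in\mathcal{T}_{m+1}}\sum_{x_1,\dots,x_m\in\mathfrak{L}}\prod_{\{j,l\}\in T}\exp(-\varsigma|x_j-x_l|/(\sqrt{d}\max\{\mathfrak{d}_T(j),\mathfrak{d}_T(l)\}))$ with $x_0:=0$, which is $\le D^m(m!)^d$ by Proposition \ref{Coro estimate super importante}. Collecting the geometric prefactors ($2^m$ from the multi--commutator bound, $2^m$ from the static/on--site alternative, $(1+2\|\Psi\|_\infty/(\mathrm{e}^\varsigma-1))^m$ from the $\ell,\pi$ sums in (\ref{def AT}), $D_1^m$, and $D^m(m!)^d$) gives $\|\delta_s^m\circ\tau_{t,s}(B)\|_{\mathcal{U}}\le D_2^{\,m}(m!)^d\mathrm{e}^{\varsigma m_0}\|B\|_{\mathcal{U}}$ for $s,t\in[-\mathrm{T},\mathrm{T}]$ and some $D_2=D_2(d,\mathrm{T},\Psi,\mathbf{V},\mathbf{F})$; summing the Gevrey series with $r:=1/(2D_2)$ yields (ii). Part (i) follows from the same multi--commutator estimates applied to the telescoping decomposition $\tau_{t,s}^{(L)}(B)=\sum_{n=m_0}^{L}\mathfrak{B}_n$, $\mathfrak{B}_n\in\mathcal{U}_{\Lambda_n}$ (the non--autonomous analogue of Lemma \ref{Lemma Series representation of the dynamics}, available via Lemma \ref{Lemma Series representation of the dynamics copy(1)} and Theorem \ref{Theorem Lieb-Robinsonnew} (ii)), the $\mathfrak{B}_n$ being independent of $L$ for $L\ge n$ and satisfying $\|\mathfrak{B}_n\|_{\mathcal{U}}\le D_0\mathbf{C}_{m_0}\mathrm{e}^{-2\varsigma n}\|B\|_{\mathcal{U}}$ for $n>m_0$ by (\ref{bound B fract}) and (\ref{(3.3) NS generalized}): since $\tau_{t,s}(B)-\tau_{t,s}^{(L)}(B)=\sum_{n>L}\mathfrak{B}_n$ and $\|\delta_s^m(\mathfrak{B}_n)\|_{\mathcal{U}}\le C_m\mathrm{e}^{\varsigma n}\|\mathfrak{B}_n\|_{\mathcal{U}}$ (with $\mathfrak{B}_n$ innermost and un--evolved only the $\varkappa_T$--part of the bound is needed, and on--site commutators do not enlarge supports, which accommodates possibly unbounded $\mathbf{V}$), the tail $\sum_{n>L}\|\delta_s^m(\mathfrak{B}_n)\|_{\mathcal{U}}$ tends to $0$ as $L\to\infty$, uniformly for $s,t$ on compacta; closedness of $\delta_s^m$ then yields $\tau_{t,s}(B)\in\mathrm{Dom}(\delta_s^m)$ and the graph--norm convergence.

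The hard part will be obtaining exactly the growth $D_2^m(m!)^d$ --- no worse --- on $\|\delta_s^m\circ\tau_{t,s}(B)\|$. This hinges on two points. First, routing the argument through the reverse cocycle so that the (possibly large) constant $\mathbf{C}_{m_0}$ never appears in part (ii): estimating $\delta_s^m\circ\tau_{t,s}(B)$ directly on the telescoping pieces $\mathfrak{B}_n$ would produce a factor $\mathbf{C}_{m_0}\mathrm{e}^{-\varsigma m_0}$, hence only the weaker bound $\mathrm{e}^{\upsilon m_0}\|B\|$ instead of the claimed $\mathrm{e}^{\varsigma m_0}\|B\|$; the two exponential--decay hypotheses (\ref{assumption boundedness 5}) and (\ref{assumption boundedness 5bis}) are precisely what absorbs, respectively, the static and the propagated slots. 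Second, the combinatorial estimate of Proposition \ref{Coro estimate super importante} (trees with high--degree vertices are rare) is the only device keeping the tree--and--position sum at $D^m(m!)^d$, and hence making the Gevrey--$d$ series converge; everything else is a careful but routine bookkeeping of the Dyson--type expansion of $\delta_s^m$.
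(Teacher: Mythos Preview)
Your approach for (ii) is correct and, once unwound, coincides with the paper's: the paper also bounds $\|\delta_s^m\circ\tau_{t,s}^{(L)}(B)\|$ by a sum of multi--commutators with $\Psi^{(s)}_{\mathcal{Z}}$'s and $\mathbf{V}^{(s)}_{\{x\}}$'s outside and $\tau_{t,s}^{(L)}(B)$ innermost, and then invokes the same Lieb--Robinson multi--commutator machinery (Theorems \ref{Theorem Lieb-Robinson copy(1)-bis}, \ref{theorem exp tree decay copy(1)} (ii), Proposition \ref{Coro estimate super importante}). Your automorphism trick --- pulling $\tau_{t,s}$ out so that $B$ becomes the un--evolved innermost entry --- is exactly how one fits the expression into the format of Theorem \ref{Theorem Lieb-Robinson copy(1)-bis}; the paper leaves this step implicit (``Similar to (\ref{estimate polybis})'', ``We omit the details''). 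One structural point: your passage to the infinite--volume identity relies on knowing that $(\delta_s^{(L)})^m\circ\tau_{t,s}^{(L)}(B)$ converges to $\delta_s^m\circ\tau_{t,s}(B)$, which is essentially (i); the paper proves (i) first and only then passes to the limit in (ii). You should do the same.

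There is, however, a genuine gap in your argument for (i). Part (i) carries \emph{no} boundedness assumption on the potential $\mathbf{V}$, yet your estimate $\|\delta_s^m(\mathfrak{B}_n)\|\le C_m\mathrm{e}^{\varsigma n}\|\mathfrak{B}_n\|$ fails for unbounded $\mathbf{V}$. The observation that on--site commutators do not enlarge supports only restricts the summation over $x$ to the current support; it does not control the norms $\|\mathbf{V}^{(s)}_{\{x\}}\|$ themselves. Already for $m=1$, Inequality (\ref{inequlity utile}) gives $\|\delta_s(\mathfrak{B}_n)\|\le 2\|\mathfrak{B}_n\|\bigl(|\Lambda_n|\,\mathbf{F}(0)\|\Psi\|_\infty+\sum_{x\in\Lambda_n}\|\mathbf{V}^{(s)}_{\{x\}}\|\bigr)$, and if $\|\mathbf{V}^{(s)}_{\{x\}}\|$ grows, say, like $\mathrm{e}^{3\varsigma|x|}$, the right--hand side is not $\mathcal{O}(\mathrm{e}^{\varsigma n})$ and the tail $\sum_{n>L}\|\delta_s^m(\mathfrak{B}_n)\|$ need not vanish. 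The paper's remedy is the interaction picture of (\ref{interactino picture0})--(\ref{interactino picture2}): one studies $\tilde{\tau}^{(L)}_{t,s}$ rather than $\tau^{(L)}_{t,s}$, and the Duhamel identity leading to (\ref{estimate polybis}) produces multi--commutators of order $m+2$ whose outer entries are only the evolved $\hat{\tau}_{s,\alpha}^{(L_1,L_2)}(\Psi^{(s)}_{\mathcal{Z}})$ --- the potential has been absorbed into the local unitaries $\mathcal{V}_{\cdot,\cdot}(\mathbf{V}_{\cdot})$, which preserve locality and norms (cf.\ (\ref{inequality idiote utile})) and therefore do not enter the Lieb--Robinson estimate at all. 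To repair your argument for (i), pass to the interaction picture before telescoping, or mimic the Cauchy estimate (\ref{estimate polybis}) directly.
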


\begin{proof}
(i) The case $m=0$ follows from Corollary \ref{Theorem Lieb-Robinson copy(4)}%
. Let $m\in \mathbb{N}$ and $B\in \mathcal{U}_{0}$. Similar to (\ref{uper
bound3}), for any sufficiently large $L_{1},L_{2}\in \mathbb{R}_{0}^{+}$, $%
\Lambda _{L_{1}}\subsetneq \Lambda _{L_{2}}$,
\begin{align}
& \left\Vert \delta _{s}^{m}\circ \left( \tilde{\tau}_{t,s}^{(L_{2})}\left(
B\right) -\tilde{\tau}_{t,s}^{(L_{1})}\left( B\right) \right) \right\Vert _{%
\mathcal{U}}  \notag \\
& \leq \int_{\min \{s,t\}}^{\max \{s,t\}}\sum\limits_{\mathcal{Z}_{1},\ldots
,\mathcal{Z}_{m}\in \mathcal{P}_{f}(\mathfrak{L})}\left\Vert \left[ \hat{\tau%
}_{s,\alpha }^{(L_{1},L_{2})}(\Psi _{\mathcal{Z}_{m}}^{(s)}),\ldots ,\hat{%
\tau}_{s,\alpha }^{(L_{1},L_{2})}(\Psi _{\mathcal{Z}_{1}}^{(s)}),\right.
\right.  \notag \\
& \qquad \qquad \qquad \qquad \qquad \qquad \left. \left. ,B_{\alpha
}^{(L_{1},L_{2})},\tau _{t,\alpha }^{(L_{1})}(\tilde{B}_{t})\right]
^{(m+2)}\right\Vert _{\mathcal{U}}\mathrm{d}\alpha \ ,
\label{estimate polybis}
\end{align}%
see (\ref{def new}). From a straightforward generalization of (\ref{estimate
poly}) for multi--commutators of degree $m+2$ and the same kind of arguments
used in point \textbf{1.} of the proof of Theorem \ref{thm non auto}, the
r.h.s. of the above inequality tends to zero in the limit of large $%
L_{1},L_{2}\in \mathbb{R}_{0}^{+}$ ($\Lambda _{L_{1}}\subsetneq \Lambda
_{L_{2}}$). This holds for every $m\in \mathbb{N}$ because the interaction
has, by assumption, exponential decay, see (\ref{(3.3) NS generalized}) and (%
\ref{assumption boundedness 5})--(\ref{assumption boundedness 5bis}).

Consequently, $\{\delta _{s}^{m}\circ \tilde{\tau}_{t,s}^{(L)}\left(
B\right) \}_{L\in \mathbb{R}_{0}^{+}}$, and hence $\{\delta _{s}^{m}\circ
\tau _{t,s}^{(L)}\left( B\right) \}_{L\in \mathbb{R}_{0}^{+}}$, are Cauchy
nets in $\mathcal{U}$ for any fixed $s,t\in \mathbb{R}$ and $m\in \mathbb{N}$%
. At $m=0$, the limit is $\tau _{t,s}(B)$. As the operator $\delta _{s}$ is
closed, by induction, for any $m\in \mathbb{N}$ and $s,t\in \mathbb{R}$, $%
\tau _{t,s}(B)\in \mathrm{Dom}(\delta _{s}^{m})$ and $\delta _{s}^{m}\circ
\tau _{t,s}^{(L)}\left( B\right) $ converges to $\delta _{s}^{m}\circ \tau
_{t,s}\left( B\right) $, as $L\rightarrow \infty $.

\noindent (ii) For any $m\in \mathbb{N}$, $B\in \mathcal{U}_{0}$, and
sufficiently large $L\in \mathbb{R}_{0}^{+}$,%
\begin{eqnarray*}
&&\left\Vert \delta _{s}^{m}\circ \tau _{t,s}^{(L)}(B)\right\Vert _{\mathcal{%
U}} \\
&\leq &\sum_{\ell =0}^{m}\sum_{\pi \in \mathcal{S}_{\ell ,m}}\sum_{x_{\pi
\left( \ell \right) }\in \mathfrak{L}}\cdots \sum_{x_{\pi \left( m\right)
}\in \mathfrak{L}}\sum_{\mathcal{Z}_{1}\in \mathcal{P}_{f}(\mathfrak{L}%
)}\cdots \sum_{\mathcal{Z}_{\pi \left( \ell \right) -1}\in \mathcal{P}_{f}(%
\mathfrak{L})}\sum_{\mathcal{Z}_{\pi \left( \ell \right) +1}\in \mathcal{P}%
_{f}(\mathfrak{L})}\cdots \\
&&\cdots \sum_{\mathcal{Z}_{\pi \left( m\right) -1}\in \mathcal{P}_{f}(%
\mathfrak{L})}\sum_{\mathcal{Z}_{\pi \left( m\right) +1}\in \mathcal{P}_{f}(%
\mathfrak{L})}\cdots \sum_{\mathcal{Z}_{m}\in \mathcal{P}_{f}(\mathfrak{L})}
\\
&&\qquad \left\Vert \left[ \Psi _{\mathcal{Z}_{1}}^{(s)},\ldots ,\Psi _{%
\mathcal{Z}_{\pi \left( \ell \right) -1}}^{(s)},\mathbf{V}_{\{x_{\pi \left(
\ell \right) }\}}^{(s)},\Psi _{\mathcal{Z}_{\pi \left( \ell \right)
+1}}^{(s)},\right. \right. \\
&&\qquad \qquad \left. \left. \ldots ,\Psi _{\mathcal{Z}_{\pi \left(
m\right) -1}}^{(s)},\mathbf{V}_{\{x_{\pi \left( m\right) }\}}^{(s)},\Psi _{%
\mathcal{Z}_{\pi \left( m\right) +1}}^{(s)},\ldots ,\Psi _{\mathcal{Z}%
_{m}}^{(s)},\tau _{t,s}^{(L)}(B)\right] ^{(m+1)}\right\Vert _{\mathcal{U}}\ ,
\end{eqnarray*}%
with $\mathcal{S}_{\ell ,m}$ being defined by (\ref{SLK}) for $\ell \in
\{1,\ldots ,m\}$. For $\ell =0$, we use here the convention $\mathcal{S}%
_{0,m}\doteq \emptyset $ and all sums involving the maps $\pi $ in the
r.h.s. of the above inequality disappear in this case. Similar to (\ref%
{estimate polybis}), Lieb--Robinson bounds for multi--commutators imply
that, if $B\in \mathcal{U}_{\Lambda _{m_{0}}}$, $m_{0}\in \mathbb{N}_{0}$,
then the r.h.s. of the above inequality is bounded by $D(m!)^{d}r^{m}\mathrm{%
e}^{m_{0}\varsigma }\Vert B\Vert _{\mathcal{U}}$, uniformly for $s,t$ on
compacta, where $r\equiv r_{d,\mathrm{T},\Psi ,\mathbf{V},\mathbf{F}}\in
\mathbb{R}^{+}$ and $D\equiv D_{\mathrm{T},\Psi ,\mathbf{V}}\in \mathbb{R}%
^{+}$. We omit the details. By Assertion (i), the same bound thus holds for
the norm $\Vert \delta _{s}^{m}\circ \tau _{t,s}(B)\Vert _{\mathcal{U}}$ of
the limiting vector.
\end{proof}

The assumptions of Theorem \ref{thm non auto copy(1)} are satisfied for
interactions $\Psi ^{(t)}\in \mathcal{W}$ with (\ref{example}). Note
additionally that Theorem \ref{thm non auto copy(1)} for $s=t$ shows that
\begin{equation*}
\mathcal{U}_{0}\subseteq \underset{s\in \mathbb{R},m\in \mathbb{N}}{\bigcap }%
\mathrm{Dom}\left( \delta _{s}^{m}\right) \subset \mathcal{U}\ .
\end{equation*}%
In fact, $\mathcal{U}_{0}$ is a common core for $\{\delta _{s}\}_{s\in
\mathbb{R}}$ and thus the intersection of domains
\begin{equation*}
\underset{s\in \mathbb{R},m\in \mathbb{N}}{\bigcap }\mathrm{Dom}\left(
\delta _{s}^{m}\right) \subset \mathcal{U}
\end{equation*}%
is also a common core of $\{\delta _{s}\}_{s\in \mathbb{R}}$. Observe that,
at fixed $s\in \mathbb{R}$, the dense space
\begin{equation*}
\mathrm{Dom}\left( \delta _{s}^{\infty }\right) \doteq \underset{m\in
\mathbb{N}}{\bigcap }\mathrm{Dom}\left( \delta _{s}^{m}\right) \subset
\mathcal{U}
\end{equation*}%
is always a core of $\delta _{s}$. See, e.g., \cite[Chap. II, 1.8 Proposition%
]{EngelNagel}.

\subsection{Application to Response Theory\label{Section existence dynamics}}

In the present subsection we extend to the time--dependent case the
assertions of Section \ref{section Energy Increments as Power Series}. As
previously discussed, these results can be proven, also in the
non--autonomous case, for more general (time--dependent) perturbations of
the form (\ref{w more general}). See also proofs of Inequality (\ref%
{estimate poly}) and Theorem \ref{thm non auto copy(1)}. Similar to Section %
\ref{section Energy Increments as Power Series}, the case of perturbations
considered below is the relevant one to study linear and non--linear
responses of interacting fermions to time--dependent external
electromagnetic fields.

Let $\Psi \in \mathcal{W}$ and $\mathbf{V}$ be a potential. [So, these
objects do\emph{\ not} depend on time.] For any $l\in \mathbb{R}_{0}^{+}$,
we consider a map $(\eta ,t)\mapsto \mathbf{W}_{t}^{(l,\eta )}$ from $%
\mathbb{R}^{2}$ to the subspace of self--adjoint elements of $\mathcal{U}%
_{\Lambda _{l}}$. Like (\ref{bound assumption2}), we consider elements of
the form
\begin{equation}
\mathbf{W}_{t}^{(l,\eta )}\doteq \sum\limits_{x\in \Lambda
_{l}}\sum\limits_{z\in \mathfrak{L},|z|\leq 1}\mathbf{w}_{x,x+z}(\eta
,t)a_{x}^{\ast }a_{x+z}\ ,\qquad (\eta ,t)\in \mathbb{R}^{2},\ l\in \mathbb{R%
}_{0}^{+}\ ,  \label{bounded pertubationbis0}
\end{equation}%
where $\{\mathbf{w}_{x,y}\}_{x,y\in \mathfrak{L}}$ are complex--valued
functions of $(\eta ,t)\in \mathbb{R}^{2}$ with
\begin{equation}
\overline{\mathbf{w}_{x,y}\left( \eta ,t\right) }=\mathbf{w}_{y,x}\left(
\eta ,t\right) \qquad \text{and}\qquad \mathbf{w}_{x,y}(0,t)=0
\label{condition1}
\end{equation}%
for all $x,y\in \mathfrak{L}$ and $(\eta ,t)\in \mathbb{R}^{2}$. We assume
that $\{\mathbf{w}_{x,y}(\eta ,\cdot )\}_{x,y\in \mathfrak{L,}\eta \in
\mathbb{R}}$ is a family of continuous and uniformly bounded functions (of
time): There is $K_{1}\in \mathbb{R}^{+}$ such that%
\begin{equation}
\sup_{x,y\in \mathfrak{L}}\sup_{\eta ,t\in \mathbb{R}}\left\vert \mathbf{w}%
_{x,y}(\eta ,t)\right\vert \leq K_{1}\ .  \label{condition2}
\end{equation}%
The self--adjoint elements $\mathbf{W}_{t}^{(l,\eta )}$ of $\mathcal{U}$ are
related to perturbations of dynamics caused by time--dependent external
electromagnetic fields that vanish outside the box $\Lambda _{l}$. By the
above conditions on $\mathbf{w}_{x,y}$, for all $l,\eta \in \mathbb{R}$, $%
t\mapsto \mathbf{W}_{t}^{(l,\eta )}$ is a continuous map from $\mathbb{R}$
to $\mathcal{B}(\mathcal{U})$.

We now denote the perturbed dynamics by the family $\{\tilde{\tau}%
_{t,s}^{(l,\eta )}\}_{s,t\in {\mathbb{R}}}$ of $\ast $--automor%
\-%
phisms generated by the symmetric derivation
\begin{equation}
\delta _{t}^{(l,\eta )}\doteq \delta +i\left[ \mathbf{W}_{t}^{(l,\eta )},\
\cdot \ \right] \ ,\qquad l\in \mathbb{R}_{0}^{+},\ \eta \in \mathbb{R}\ ,
\label{bounded pertubationbis}
\end{equation}%
in the sense of Corollary \ref{Theorem Lieb-Robinson copy(4)}. [This family
of $\ast $--automor%
\-%
phisms has \emph{nothing} to do with (\ref{interactino picture}).] Recall
that $\delta $ is the symmetric derivation of Theorem \ref{Theorem
Lieb-Robinson copy(3)}. The last term in the r.h.s. of (\ref{bounded
pertubationbis}) is clearly a perturbation of $\delta $ which depends
continuously on time, in the sense of the operator norm on $\mathcal{B}(%
\mathcal{U})$. It is easy to prove in this case that $\{\tilde{\tau}%
_{t,s}^{(l,\eta )}\}_{s,t\in {\mathbb{R}}}$ is the unique
\index{Evolution equation!fundamental solution}\emph{fundamental solution}
of (\ref{cauchy trivial2}). It means that $\{%
\tilde{\tau}_{t,s}^{(l,\eta )}\}_{s,t\in \mathbb{R}}$ is strongly
continuous, conserves the domain
\begin{equation*}
\mathrm{Dom}(\delta _{t}^{(l,\eta )})=\mathrm{Dom}(\delta )\ ,
\end{equation*}%
satisfies%
\begin{equation*}
\tilde{\tau}_{t,\cdot }^{(l,\eta )}(B)\in C^{1}(\mathbb{R};(\mathrm{Dom}%
(\delta ),\left\Vert \cdot \right\Vert _{\mathcal{U}}))\ ,\quad \tilde{\tau}%
_{\cdot ,s}^{(l,\eta )}(B)\in C^{1}(\mathbb{R};(\mathrm{Dom}(\delta
),\left\Vert \cdot \right\Vert _{\mathcal{U}}))
\end{equation*}%
for all $B\in \mathrm{Dom}(\delta )$, and solves the abstract Cauchy initial
value\ problem (\ref{cauchy trivial2}) on $\mathrm{Dom}(\delta )$.

To explicitly verify this, define the family $\{\mathfrak{V}_{t,s}\}_{s,t\in
\mathbb{R}}\subset \mathcal{U}$ of unitary elements by the absolutely
summable series
\begin{equation}
\mathfrak{V}_{t,s}\doteq \mathbf{1}_{\mathcal{U}}\mathbf{+}\sum\limits_{k\in
{\mathbb{N}}}i^{k}\int_{s}^{t}\mathrm{d}s_{1}\cdots \int_{s}^{s_{k-1}}%
\mathrm{d}s_{k}\mathbf{W}_{s_{k},s_{k}}^{(l,\eta )}\cdots \mathbf{W}%
_{s_{1},s_{1}}^{(l,\eta )}\ ,  \label{Dyson interact}
\end{equation}%
where
\begin{equation*}
\mathbf{W}_{t,s}^{(l,\eta )}\doteq \tau _{t}(\mathbf{W}_{s}^{(l,\eta )})\in
\mathrm{Dom}(\delta )\ ,\quad \ l\in \mathbb{R}_{0}^{+},\ \eta ,s,t\in
\mathbb{R}\ .
\end{equation*}%
By using this unitary family, we obtain the following additional properties
of the perturbed dynamics:

\begin{satz}[Properties of the perturbed dynamics]
\label{pertubed dynam thm}\mbox{
}\newline
Let $\Psi \in \mathcal{W}$, $l\in \mathbb{R}_{0}^{+}$, $\eta ,\eta _{0}\in
\mathbb{R}$, and $\mathbf{V}$ be a potential. Assume Conditions (\ref%
{condition1})--(\ref{condition2}) with $\{\mathbf{w}_{x,y}(\eta ,\cdot
)\}_{x,y\in \mathfrak{L,}\eta \in \mathbb{R}}$ being a family of continuous
functions (of time). Then, the family $\{\tilde{\tau}_{t,s}^{(l,\eta
)}\}_{s,t\in {\mathbb{R}}}$ of $\ast $--automor%
\-%
phisms has the following properties: \newline
\emph{(i)}
\index{Evolution equation!non--autonomous}Non--autonomous evolution
equation. It is the unique fundamental solution of%
\begin{equation*}
\forall s,t\in {\mathbb{R}}:\qquad \partial _{s}%
\tilde{\tau}_{t,s}^{(l,\eta )}=-\delta _{s}^{(l,\eta )}\circ \tilde{\tau}%
_{t,s}^{(l,\eta )}\ ,\qquad \tilde{\tau}_{t,t}^{(l,\eta )}=\mathbf{1}_{%
\mathcal{U}}\ .
\end{equation*}%
\emph{(ii)}
\index{Interaction picture}Interaction picture. For any $s,t\in {\mathbb{R}}$%
,%
\begin{equation*}
\tilde{\tau}_{t,s}^{(l,\eta )}(B)=\tau _{-s}\left( \mathfrak{V}_{t,s}\tau
_{t}(B)\mathfrak{V}_{t,s}^{\ast }\right) \ ,\qquad B\in \mathcal{U}\ .
\end{equation*}%
\emph{(iii)}
\index{Dyson--Phillips series}Dyson--Phillips series. For any $s,t\in
\mathbb{R}$ and $B\in \mathcal{U}$,
\begin{eqnarray}
\tilde{\tau}_{t,s}^{(l,\eta )}\left( B\right) &=&\tilde{\tau}_{t,s}^{(l,\eta
_{0})}\left( B\right) +\sum\limits_{k\in {\mathbb{N}}}i^{k}\int_{s}^{t}%
\mathrm{d}s_{1}\cdots \int_{s}^{s_{k-1}}\mathrm{d}s_{k}  \label{Dyson tau 1}
\\
&&\qquad \qquad \left[ \mathbf{X}_{s_{k},s,s_{k}}^{(l,\eta _{0},\eta
)},\ldots ,\mathbf{X}_{s_{1},s,s_{1}}^{(l,\eta _{0},\eta )},\tilde{\tau}%
_{t,s}^{(l,\eta _{0})}(B)\right] ^{(k+1)}\ .  \notag
\end{eqnarray}%
Here, the series absolutely converges and%
\begin{equation}
\mathbf{X}_{t,s,\alpha }^{(l,\eta _{0},\eta )}\doteq \tilde{\tau}%
_{t,s}^{(l,\eta _{0})}\left( \mathbf{W}_{\alpha }^{(l,\eta )}-\mathbf{W}%
_{\alpha }^{(l,\eta _{0})}\right) \ ,\qquad l\in \mathbb{R}_{0}^{+},\ \alpha
,s,t,\eta _{0},\eta \in \mathbb{R}\ .
\end{equation}
\end{satz}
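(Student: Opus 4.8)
The plan is to treat the perturbation $i[\mathbf{W}_t^{(l,\eta)},\,\cdot\,]$ as a \emph{bounded}, norm-continuous (in $t$) perturbation of the symmetric derivation $\delta$ of Theorem \ref{Theorem Lieb-Robinson copy(3)}, and then to invoke the classical theory of linear evolution equations for bounded perturbations of a $C_0$-group generator. Concretely, since $\mathbf{W}_t^{(l,\eta)}\in\mathcal{U}_{\Lambda_l}$ is self-adjoint and $t\mapsto\mathbf{W}_t^{(l,\eta)}$ is continuous from $\mathbb{R}$ to $\mathcal{B}(\mathcal{U})$ by (\ref{condition2}) and the assumed continuity of $\mathbf{w}_{x,y}(\eta,\cdot)$, the family $\delta_t^{(l,\eta)}=\delta+i[\mathbf{W}_t^{(l,\eta)},\,\cdot\,]$ is a time-continuous, bounded perturbation of the fixed generator $\delta$. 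For such families \cite[Chap.\ III, Sect.\ 1.3 and the associated results on non-autonomous perturbations]{EngelNagel} (see also \cite[Sect.\ VII.1]{Bru-Bach}) give a unique fundamental solution: a strongly continuous two-parameter family that conserves $\mathrm{Dom}(\delta_t^{(l,\eta)})=\mathrm{Dom}(\delta)$ (the domain being time-independent precisely because the perturbation is bounded), is $C^1$ in each variable on $\mathrm{Dom}(\delta)$, and solves the backward Cauchy problem in (i). Since each $\pm\delta_t^{(l,\eta)}$ is again a conservative symmetric derivation (by Lemma \ref{lemma norm closure} applied to the interaction representing $\mathbf{V}$ together with the bounded on-site-type term $\mathbf{W}_t^{(l,\eta)}$), the solution consists of $\ast$-automorphisms. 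This establishes (i).

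For (ii), the interaction-picture formula, I would verify directly that the right-hand side $\tau_{-s}(\mathfrak{V}_{t,s}\,\tau_t(B)\,\mathfrak{V}_{t,s}^{\ast})$ solves the same Cauchy problem as $\tilde{\tau}_{t,s}^{(l,\eta)}(B)$ and invoke the uniqueness from (i). The series (\ref{Dyson interact}) defining $\mathfrak{V}_{t,s}$ is norm-absolutely convergent because $\sup_{t}\|\mathbf{W}_t^{(l,\eta)}\|_{\mathcal{U}}\le C(l,\eta)<\infty$ by (\ref{condition2}), so $\|\mathfrak{V}_{t,s}\|_{\mathcal{U}}\le e^{C(l,\eta)|t-s|}$; term-by-term differentiation (justified by uniform convergence on compacta) gives $\partial_s\mathfrak{V}_{t,s}=-i\,\mathbf{W}_{s,s}^{(l,\eta)}\mathfrak{V}_{t,s}$ with $\mathbf{W}_{s,s}^{(l,\eta)}=\tau_s(\mathbf{W}_s^{(l,\eta)})$, and one checks $\mathfrak{V}_{t,s}\mathfrak{V}_{t,s}^{\ast}=\mathbf{1}_{\mathcal{U}}$ from this ODE. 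Then a computation using $\partial_s\tau_{-s}=-\tau_{-s}\circ\delta$ (from Theorem \ref{Theorem Lieb-Robinson copy(3)}) and $\partial_s\tau_t(B)=0$ yields
\begin{equation*}
\partial_s\big(\tau_{-s}(\mathfrak{V}_{t,s}\tau_t(B)\mathfrak{V}_{t,s}^{\ast})\big)=-\delta\big(\tau_{-s}(\cdots)\big)-i\big[\mathbf{W}_s^{(l,\eta)},\tau_{-s}(\cdots)\big],
\end{equation*}
which is exactly $-\delta_s^{(l,\eta)}$ applied to the right-hand side; at $s=t$ the formula reduces to $\tau_{-t}(\tau_t(B))=B$. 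Uniqueness from (i) then closes (ii). Note this requires checking the validity of $\partial_s\tau_{-s}=-\tau_{-s}\circ\delta$ only on $\mathrm{Dom}(\delta)$, and $\tau_t(B)\in\mathrm{Dom}(\delta)$ for $B\in\mathrm{Dom}(\delta)$, plus the product rule for multiplication by the smooth unitary family $\mathfrak{V}_{t,s}$, all of which are standard.

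For (iii), the Dyson--Phillips expansion around the reference parameter $\eta_0$, I would apply the standard Duhamel/iteration argument comparing the two fundamental solutions $\tilde{\tau}_{t,s}^{(l,\eta)}$ and $\tilde{\tau}_{t,s}^{(l,\eta_0)}$, whose generators differ by the bounded operator $i[\mathbf{W}_t^{(l,\eta)}-\mathbf{W}_t^{(l,\eta_0)},\,\cdot\,]$. Writing the variation-of-parameters identity and iterating produces exactly the series in (\ref{Dyson tau 1}) with $\mathbf{X}_{t,s,\alpha}^{(l,\eta_0,\eta)}=\tilde{\tau}_{t,s}^{(l,\eta_0)}(\mathbf{W}_\alpha^{(l,\eta)}-\mathbf{W}_\alpha^{(l,\eta_0)})$; absolute convergence in norm follows from $\sup_{\alpha}\|\mathbf{W}_\alpha^{(l,\eta)}-\mathbf{W}_\alpha^{(l,\eta_0)}\|_{\mathcal{U}}\le 2C(l,\eta,\eta_0)$ (again by (\ref{condition2})) and the factor $(t-s)^k/k!$ from the iterated time integrals, together with $\|\tilde{\tau}_{t,s}^{(l,\eta_0)}\|_{\mathcal{B}(\mathcal{U})}=1$. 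The only mild subtlety — and the place I expect to spend the most care — is justifying term-by-term differentiation and the rearrangement of multi-commutators in the iteration so that the closed form $[\mathbf{X}_{s_k,s,s_k}^{(l,\eta_0,\eta)},\ldots,\mathbf{X}_{s_1,s,s_1}^{(l,\eta_0,\eta)},\tilde{\tau}_{t,s}^{(l,\eta_0)}(B)]^{(k+1)}$ emerges cleanly; this is routine given the uniform norm bounds but requires bookkeeping with the nested integrals $\int_s^t\mathrm{d}s_1\cdots\int_s^{s_{k-1}}\mathrm{d}s_k$. None of the three parts needs Lieb--Robinson bounds: in contrast to Theorem \ref{thm non auto}, the perturbation here is bounded, so everything is genuinely a bounded-perturbation argument, and the real content is simply matching the various Cauchy problems and invoking uniqueness.
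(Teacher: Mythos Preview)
Your approach is essentially the same as the paper's: bounded-perturbation theory for (i), verification that the interaction-picture formula solves the same Cauchy problem for (ii), and Duhamel/Dyson iteration for (iii). Two points of care the paper makes explicit that you gloss over. First, in the paper $\tilde{\tau}_{t,s}^{(l,\eta)}$ is \emph{defined} via Corollary~\ref{Theorem Lieb-Robinson copy(4)} as the strong limit of finite-volume dynamics, and that corollary only gives the \emph{forward} equation $\partial_t\tilde{\tau}_{t,s}=\tilde{\tau}_{t,s}\circ\delta_t$ on $\mathcal{U}_0$; so after building the abstract fundamental solution $\mathfrak{W}_{s,t}$ of the backward equation (via the B1--B3 conditions of \cite[Sect.~VII.1]{Bru-Bach}), the paper shows $\mathfrak{W}_{s,t}$ also satisfies the forward equation and then invokes the uniqueness of Corollary~\ref{Theorem Lieb-Robinson copy(4)}~(iii) to conclude $\mathfrak{W}_{s,t}=\tilde{\tau}_{t,s}^{(l,\eta)}$. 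The same identification step is used for (iii): the series is shown to solve the forward equation and matched to $\tilde{\tau}$ via Corollary~\ref{Theorem Lieb-Robinson copy(4)}~(iii), not via (i). Second, for (ii) the paper explicitly notes that the Dyson series for $\mathfrak{V}_{t,s}$ converges in the graph-norm space $\mathcal{Y}=(\mathrm{Dom}(\delta),\|\cdot\|_\delta)$, which is what guarantees $\mathfrak{V}_{t,s}\in\mathrm{Dom}(\delta)$ and hence that the product rule and $\partial_s\tau_{-s}=-\tau_{-s}\circ\delta$ apply to the argument $\mathfrak{V}_{t,s}\tau_t(B)\mathfrak{V}_{t,s}^{\ast}$; you call this ``standard'' but it is the one place where naive norm convergence in $\mathcal{U}$ is not enough.
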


\begin{proof}
Before starting, note that Assertion (i) cannot be deduced from Theorem \ref%
{thm non auto} because the cases for which (\ref{condition divergence})
holds for some time $t\in \mathbb{R}$ is excluded by assumptions of that
theorem.\smallskip

\noindent \textbf{1.} Assertion (i) could be deduced from \cite[Theorem 6.1]%
{Kato}. Here, we use \cite[Theorem 88]{Bru-Bach} because it is proven from
three conditions (B1--B3) that are elementary to verify:

\begin{itemize}
\item[B1] \textit{(Kato quasi--stability).}%
\index{Kato quasi--stability} For any $t\in {\mathbb{R}}$, the generator $%
\delta _{t}^{(l,\eta )}$ is conservative, by Lemma \ref{lemma dissipative},
and Condition B1 of \cite[Section VII.1]{Bru-Bach} is clearly satisfied for $%
\lambda _{1},\ldots ,\lambda _{n}\in {\mathbb{R}}^{+}$, even with \emph{%
non--ordered} and all real times $t_{1},\ldots ,t_{n}\in {\mathbb{R}}$.
Indeed, $\{\delta _{t}^{(l,\eta )}\}_{t\in {\mathbb{R}}}$, $l\in {\mathbb{R}}%
_{0}^{+}$, generate strongly continuous groups, and not only $C_{0}$%
--semigroups.

\item[B2] \textit{(Domains and continuity).} $\{\mathbf{w}_{x,y}(\eta ,\cdot
)\}_{x,y\in \mathfrak{L,}\eta \in \mathbb{R}}$ is by assumption a family of
continuous functions (of time) and thus, the map $t\mapsto \lbrack \mathbf{W}%
_{t}^{(l,\eta )},\cdot ]$ from $\mathbb{R}$ to $\mathcal{B}(\mathcal{U})$ is
continuous in operator norm. It follows that Condition B2 of \cite[Section
VII.1]{Bru-Bach} holds with the Banach space
\begin{equation}
\mathcal{Y}\doteq (\mathrm{Dom}(\delta ),\Vert \cdot \Vert _{\delta })\ ,
\label{space Y}
\end{equation}%
$\Vert \cdot \Vert _{\delta }$ being the graph norm of the closed operator $%
\delta $.

\item[B3] \textit{(Intertwining condition).} Since $\delta $ is a symmetric
derivation with core $\mathcal{U}_{0}$ (Theorem \ref{Theorem Lieb-Robinson
copy(3)} (ii)) and $\mathbf{W}_{t}^{(l,\eta )}\in \mathcal{U}_{\Lambda _{l}}$%
, for any $l\in \mathbb{R}_{0}^{+}$, $\eta \in \mathbb{R}$, $t\in {\mathbb{R}%
}$ and $B\in \mathrm{Dom}(\delta )$,
\begin{equation*}
\delta \left( \left[ \mathbf{W}_{t}^{(l,\eta )},B\right] \right) -\left[
\mathbf{W}_{t}^{(l,\eta )},\delta \left( B\right) \right] =\left[ \delta
\left( \mathbf{W}_{t}^{(l,\eta )}\right) ,B\right] \in \mathcal{U}
\end{equation*}%
while, by using (\ref{inequlity utile}), one verifies that
\begin{eqnarray*}
\left\Vert \left[ \delta \left( \mathbf{W}_{t}^{(l,\eta )}\right) ,B\right]
\right\Vert _{\mathcal{U}} &\leq &4\Vert B\Vert _{\mathcal{U}}\Vert \mathbf{W%
}_{t}^{(l,\eta )}\Vert _{\mathcal{U}} \\
&&\times \left( \left\vert \Lambda _{l}\right\vert \mathbf{F}\left( 0\right)
\left\Vert \Psi \right\Vert _{\mathcal{W}}+\sum\limits_{x\in \Lambda
_{l}}\left\Vert \mathbf{V}_{\left\{ x\right\} }\right\Vert _{\mathcal{U}%
}\right) \ .
\end{eqnarray*}%
In particular, Condition B3 of \cite[Section VII.1]{Bru-Bach} holds true
with $\Theta =\delta $.
\end{itemize}

\noindent Therefore, similar to \cite[Theorem 70 (v)]{Bru-Bach}, we infer
from an extension of \cite[Theorem 88]{Bru-Bach}, which takes into account
the fact that B1 holds with non--ordered real times (see, e.g., the proof of
\cite[Lemma 89]{Bru-Bach}), the existence of a unique solution $\{\mathfrak{W%
}_{s,t}\}_{_{s,t\in \mathbb{R}}}$ of the non--auto%
\-%
nomous evolution equation%
\index{Evolution equation!non--autonomous}
\begin{equation}
\forall s,t\in {\mathbb{R}}:\qquad \partial _{s}\mathfrak{W}_{s,t}=-\delta
_{s}^{(l,\eta )}\circ \mathfrak{W}_{s,t}\ ,\qquad \mathfrak{W}_{t,t}=\mathbf{%
1}_{\mathcal{U}}\ ,  \label{non-autonomous evolution equation}
\end{equation}%
in the strong sense on $\mathrm{Dom}(\delta )\subset \mathcal{U}$. Here, $\{%
\mathfrak{W}_{s,t}\}_{_{s,t\in \mathbb{R}}}$ is an evolution family of $%
\mathcal{B}\left( \mathcal{U}\right) $, that is, a strongly continuous
two--parameter family of bounded operators acting on $\mathcal{U}$ that
satisfies the cocycle (Chapman--Kolmogorov) property%
\index{Cocycle property}
\begin{equation*}
\forall t,r,s\in \mathbb{R}:\qquad \mathfrak{W}_{s,t}=\mathfrak{W}_{s,r}%
\mathfrak{W}_{r,t}\ .
\end{equation*}%
\smallskip

\noindent \textbf{2.} Note now that the family $\{\mathfrak{V}%
_{t,s}\}_{s,t\in \mathbb{R}}$ was already studied in the proof of \cite[%
Theorem 5.3]{OhmI} for general closed symmetric derivations $\delta $ on $%
\mathcal{U}$: The series (\ref{Dyson interact}) absolutely converges in the
Banach space $\mathcal{Y}$ (\ref{space Y}). Additionally, for any $s,t\in
\mathbb{R}$,%
\begin{equation*}
\partial _{t}\mathfrak{V}_{t,s}=i\mathfrak{V}_{t,s}\mathbf{W}_{t,t}^{(l,\eta
)}\qquad
\text{and}\qquad \partial _{s}\mathfrak{V}_{t,s}=-i\mathbf{W}_{s,s}^{(l,\eta
)}\mathfrak{V}_{t,s}
\end{equation*}%
hold in the sense of the Banach space $\mathcal{Y}$, and thus also in the
sense of $\mathcal{U}$. Therefore, for any $s,t\in \mathbb{R}$,
\begin{equation}
\mathfrak{W}_{s,t}\left( B\right) =\tau _{-s}\left( \mathfrak{V}_{t,s}\tau
_{t}(B)\mathfrak{V}_{t,s}^{\ast }\right) \ ,\qquad B\in \mathcal{U}\ .
\label{definition tho chap inv}
\end{equation}%
To show this equality, use the fact that the r.h.s. of this equation defines
an evolution family that is a fundamental solution of (\ref{non-autonomous
evolution equation}), see \cite[Eqs. (5.24)--(5.26)]{OhmI}. \smallskip

\noindent \textbf{3.} Since $\{\tau _{t}\}_{t\in {\mathbb{R}}}$ is a group
of $\ast $--auto%
\-%
morphisms and $\{\mathfrak{V}_{t,s}\}_{s,t\in \mathbb{R}}$ is a family of
unitary elements of $\mathcal{U}$, we deduce from (\ref{definition tho chap
inv}) that $\{\mathfrak{W}_{s,t}\}_{s,t\in \mathbb{R}}$ is a collection of $%
\ast $--auto%
\-%
morphisms of the $C^{\ast }$--algebra $\mathcal{U}$. We also infer from (\ref%
{definition tho chap inv}) that the two--parameter evolution family $\{%
\mathfrak{W}_{s,t}\}_{s,t\in \mathbb{R}}$ solves on $\mathrm{Dom}(\delta )$
the abstract Cauchy initial value\ problem%
\index{Evolution equation!non--autonomous}
\begin{equation}
\forall s,t\in {\mathbb{R}}:\qquad \partial _{t}\mathfrak{W}_{s,t}=\mathfrak{%
W}_{s,t}\circ \delta _{t}^{(l,\eta )}\ ,\qquad \mathfrak{W}_{s,s}=\mathbf{1}%
_{\mathcal{U}}\ .  \label{non-autonomous evolution equation2}
\end{equation}%
The solution of (\ref{non-autonomous evolution equation2}) is unique in $%
\mathcal{B}(\mathcal{U})$, by Corollary \ref{Theorem Lieb-Robinson copy(4)}
(iii). We thus arrive at Assertions (i)--(ii) with the equality%
\begin{equation}
\tilde{\tau}_{t,s}^{(l,\eta )}=\mathfrak{W}_{s,t}\ ,\qquad l\in \mathbb{R}%
_{0}^{+},\ \eta ,s,t\in \mathbb{R}\ .  \label{definition dynamics}
\end{equation}%
\smallskip

\noindent \textbf{4.} For any $l\in \mathbb{R}_{0}^{+}$, $s,t\in \mathbb{R}$%
, $\eta ,\eta _{0}\in \mathbb{R}$, and $B\in \mathcal{U}$, define
\begin{align}
\hat{\tau}_{t,s}^{(l,\eta ,\eta _{0})}\left( B\right) & \doteq \tilde{\tau}%
_{t,s}^{(l,\eta _{0})}\left( B\right) +\sum\limits_{k\in {\mathbb{N}}%
}i^{k}\int_{s}^{t}\mathrm{d}s_{1}\cdots \int_{s}^{s_{k-1}}\mathrm{d}s_{k}
\label{dyson1} \\
& \qquad \qquad \left[ \mathbf{X}_{s_{k},s,s_{k}}^{(l,\eta _{0},\eta
)},\ldots ,\mathbf{X}_{s_{1},s,s_{1}}^{(l,\eta _{0},\eta )},\tilde{\tau}%
_{t,s}^{(l,\eta _{0})}(B)\right] ^{(k+1)}\ .  \notag
\end{align}%
This series is well--defined and absolutely convergent. Indeed, because of (%
\ref{condition2}), there is a constant $D\in \mathbb{R}^{+}$ such that, for
all $l\in \mathbb{R}_{0}^{+}$ and $\eta ,\eta _{0}\in \mathbb{R}$,%
\begin{equation*}
\underset{t\in \mathbb{R}}{\sup }\
\big \|%
\delta _{t}^{(l,\eta )}-\delta _{t}^{(l,\eta _{0})}%
\big \|%
_{\mathcal{B}\left( \mathcal{U}\right) }<D\ .
\end{equation*}%
It follows that
\begin{equation}
\big \|%
\hat{\tau}_{t,s}^{(l,\eta ,\eta _{0})}%
\big \|%
_{\mathcal{B}\left( \mathcal{U}\right) }\leq \mathrm{e}^{D\left( t-s\right)
}\ ,\qquad l\in \mathbb{R}_{0}^{+},\ s,t\in \mathbb{R},\ \eta ,\eta _{0}\in
\mathbb{R}\ .  \label{borne utile}
\end{equation}%
See, e.g., \cite[Chap. 5, Theorems 2.3 and 3.1]{Pazy}. Now, for any $l\in
\mathbb{R}_{0}^{+}$, $s,t\in \mathbb{R}$, $\eta ,\eta _{0}\in \mathbb{R}$,
and $B\in \mathcal{U}$, note that (\ref{dyson1}) yields
\begin{equation*}
\hat{\tau}_{t,s}^{(l,\eta ,\eta _{0})}\left( B\right) =\tilde{\tau}%
_{t,s}^{(l,\eta _{0})}\left( B\right) +i\int_{s}^{t}\mathrm{d}s_{1}\hat{\tau}%
_{s_{1},s}^{(l,\eta ,\eta _{0})}\left( \left[ \mathbf{W}_{s_{1}}^{(l,\eta )}-%
\mathbf{W}_{s_{1}}^{(l,\eta _{0})},\tilde{\tau}_{t,s_{1}}^{(l,\eta _{0})}(B)%
\right] \right)
\end{equation*}%
from which we deduce that $\{\hat{\tau}_{t,s}^{(l,\eta )}\}_{s,t\in {\mathbb{%
R}}}$ solves (\ref{cauchy trivial1}), by (\ref{non-autonomous evolution
equation2})--(\ref{definition dynamics}), (\ref{borne utile}) and continuity
of the maps $t\mapsto \mathbf{W}_{t}^{(l,\eta )}$ and $t\mapsto \tilde{\tau}%
_{t,s}^{(l,\eta _{0})}(B)$ from $\mathbb{R}$ to $\mathcal{U}$. Hence, by
Corollary \ref{Theorem Lieb-Robinson copy(4)} (iii), $\hat{\tau}%
_{t,s}^{(l,\eta ,\eta _{0})}=\tilde{\tau}_{t,s}^{(l,\eta )}$ for any $l\in
\mathbb{R}_{0}^{+}$, $s,t\in \mathbb{R}$ and $\eta ,\eta _{0}\in \mathbb{R}$.
\end{proof}

Now, by assuming the uniform Lipschitz continuity of the family
\begin{equation*}
\{\mathbf{w}_{x,y}(\cdot ,t)\}_{x,y\in \mathfrak{L,}t\in \mathbb{R}}
\end{equation*}
of functions (of $\eta $), i.e., for all parameters $\eta ,\eta _{0}\in
\mathbb{R}$,
\begin{equation}
\sup_{x,y\in \mathfrak{L}}\sup_{t\in \mathbb{R}}\left\vert \mathbf{w}%
_{x,y}(\eta ,t)-\mathbf{w}_{x,y}(\eta _{0},t)\right\vert \leq
K_{1}\left\vert \eta -\eta _{0}\right\vert \ ,  \label{condition3}
\end{equation}%
we can extend Theorem \ref{Thm Heat production as power series copy(2)} to
the non--autonomous case.

To this end, for some interaction $\Phi $ with energy observables $%
U_{\Lambda _{L}}^{\Phi }$ defined by (\ref{energy observable}) we study the
increment (\ref{generic Dyson--Phillips series2}), which now equals%
\index{Increment}
\begin{equation}
\mathbf{T}_{t,s}^{(l,\eta ,L)}\doteq
\tilde{\tau}_{t,s}^{(l,\eta )}(U_{\Lambda _{L}}^{\Phi })-\tau
_{t,s}(U_{\Lambda _{L}}^{\Phi })\ ,\qquad l,L\in \mathbb{R}_{0}^{+},\
s,t,\eta \in \mathbb{R}\ .  \label{increment}
\end{equation}%
By (\ref{condition1}), note\ again that $\mathbf{T}_{t,s}^{(l,0,L)}=0$.
Exactly like in the proof of Theorem \ref{Thm Heat production as power
series copy(2)}, we prove a version of Taylor's theorem for increments in
the non--auto%
\-%
nomous case:

\begin{satz}[Taylor's theorem for increments]
\label{Thm Heat production as power series copy(3)}\mbox{
}\newline
Let $l,\mathrm{T}\in \mathbb{R}_{0}^{+}$, $s,t\in \left[ -\mathrm{T},\mathrm{%
T}\right] $, $\eta ,\eta _{0}\in \mathbb{R}$, $\Psi \in \mathcal{W}$, and $%
\mathbf{V}$ be any potential.%
\index{Taylor's theorem}%
\index{Increment} Assume (\ref{(3.3) NS generalized0}) with $\varsigma >d$, (%
\ref{condition1})--(\ref{condition2}) and (\ref{condition3}), with $\{%
\mathbf{w}_{x,y}(\eta ,\cdot )\}_{x,y\in \mathfrak{L,}\eta \in \mathbb{R}}$
being a family of continuous functions (of time). Take an interaction $\Phi $
satisfying (\ref{assumption boundedness2}) with $\mathbf{v}_{m}=\left(
1+m\right) ^{\varsigma }$. Then:\newline
\emph{(i)} The map $\eta \mapsto \mathbf{T}_{t,s}^{(l,\eta ,L)}$ converges
uniformly on $\mathbb{R}$, as $L\rightarrow \infty $, to a continuous
function $\mathbf{T}_{t,s}^{(l,\eta )}$ of $\eta $ and
\begin{equation*}
\mathbf{T}_{t,s}^{(l,\eta )}-\mathbf{T}_{t,s}^{(l,\eta
_{0})}=\sum\limits_{\Lambda \in \mathcal{P}_{f}(\mathfrak{L})}i\int_{s}^{t}%
\mathrm{d}s_{1}%
\tilde{\tau}_{s_{1},s}^{(l,\eta )}\left( \left[ \mathbf{W}_{s_{1}}^{(l,\eta
)}-\mathbf{W}_{s_{1}}^{(l,\eta _{0})},\tilde{\tau}_{t,s_{1}}^{(l,\eta
_{0})}(\Phi _{\Lambda })\right] \right) \ .
\end{equation*}%
\emph{(ii)} For any $m\in \mathbb{N}$ satisfying $d(m+1)<\varsigma $,%
\begin{eqnarray*}
&&\mathbf{T}_{t,s}^{(l,\eta )}-\mathbf{T}_{t,s}^{(l,\eta _{0})} \\
&=&\sum\limits_{k=1}^{m}\sum\limits_{\Lambda \in \mathcal{P}_{f}(\mathfrak{L}%
)}i^{k}\int_{s}^{t}\mathrm{d}s_{1}\cdots \int_{s}^{s_{k-1}}\mathrm{d}s_{k}%
\left[ \mathbf{X}_{s_{k},s,s_{k}}^{(l,\eta _{0},\eta )},\ldots ,\mathbf{X}%
_{s_{1},s,s_{1}}^{(l,\eta _{0},\eta )},\tilde{\tau}_{t,s}^{(l,\eta
_{0})}(\Phi _{\Lambda })\right] ^{(k+1)} \\
&&+\sum\limits_{\Lambda \in \mathcal{P}_{f}(\mathfrak{L})}i^{m+1}\int_{s}^{t}%
\mathrm{d}s_{1}\cdots \int_{s}^{s_{m}}\mathrm{d}s_{m+1} \\
&&\tilde{\tau}_{s_{m+1},s}^{(l,\eta )}\left( \left[ \mathbf{W}%
_{s_{m+1}}^{(l,\eta )}-\mathbf{W}_{s_{m+1}}^{(l,\eta _{0})},\mathbf{X}%
_{s_{m},s_{m+1},s_{m}}^{(l,\eta _{0},\eta )},\ldots ,\mathbf{X}%
_{s_{1},s_{m+1},s_{1}}^{(l,\eta _{0},\eta )},\tilde{\tau}_{t,s_{m+1}}^{(l,%
\eta _{0})}(\Phi _{\Lambda })\right] ^{(m+2)}\right) \ .
\end{eqnarray*}%
\emph{(iii)} All the above series in $\Lambda $ absolutely converge: For any
$m\in \mathbb{N}$ satisfying $d(m+1)<\varsigma $, $k\in \{1,\ldots ,m\}$,
and $\{s_{j}\}_{j=1}^{m+1}\subset \lbrack -\mathrm{T},\mathrm{T}]$,
\begin{equation*}
\sum\limits_{\Lambda \in \mathcal{P}_{f}(\mathfrak{L})}\left\Vert \left[
\mathbf{X}_{s_{k},s,s_{k}}^{(l,\eta _{0},\eta )},\ldots ,\mathbf{X}%
_{s_{1},s,s_{1}}^{(l,\eta _{0},\eta )},\tilde{\tau}_{t,s}^{(l,\eta
_{0})}(\Phi _{\Lambda })\right] ^{(k+1)}\right\Vert _{\mathcal{U}}\leq
D\left\vert \Lambda _{l}\right\vert \left\vert \eta -\eta _{0}\right\vert
^{k}
\end{equation*}%
and
\begin{multline*}
\sum\limits_{\Lambda \in \mathcal{P}_{f}(\mathfrak{L})}\left\Vert \tilde{\tau%
}_{s_{m+1},s}^{(l,\eta )}\left( \left[ \mathbf{W}_{s_{m+1}}^{(l,\eta )}-%
\mathbf{W}_{s_{m+1}}^{(l,\eta _{0})},\mathbf{X}_{s_{m},s_{m+1},s_{m}}^{(l,%
\eta _{0},\eta )},\ldots ,\mathbf{X}_{s_{1},s_{m+1},s_{1}}^{(l,\eta
_{0},\eta )},\tilde{\tau}_{t,s_{m+1}}^{(l,\eta _{0})}(\Phi _{\Lambda })%
\right] ^{(m+2)}\right) \right\Vert _{\mathcal{U}} \\
\leq D\left\vert \Lambda _{l}\right\vert \left\vert \eta -\eta
_{0}\right\vert ^{m+1}
\end{multline*}%
for some constant $D\in \mathbb{R}^{+}$ depending only on $m,d,\mathrm{T}%
,\Psi ,K_{1},\Phi ,\mathbf{F}$. The last assertion also holds for $m=0$.
\end{satz}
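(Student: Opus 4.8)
The plan is to follow the proof of Theorem~\ref{Thm Heat production as power series copy(2)} almost line by line, making two substitutions: the autonomous Dyson--Phillips expansion (\ref{finite dyson}) is replaced by the non--autonomous one obtained from Theorem~\ref{pertubed dynam thm}~(iii) by stopping the iteration of the associated integral equation at order $m$, and the tree--decay bounds are invoked in the form valid for the dynamics that actually occurs in that expansion. As in the autonomous case it suffices to prove (ii)--(iii), since (i) is the $m=0$ instance of (ii) (needing only $\varsigma>d$) and is obtained by the same argument; I would also reduce to $\eta_{0}=s=0$ without loss of generality, using (\ref{condition1}) to note $\mathbf{T}_{t,0}^{(l,0,L)}=0$, and the fact that for general $\eta_{0}$ the dynamics $\tilde{\tau}^{(l,\eta_{0})}$ is itself the non--autonomous dynamics of a bounded, time--continuous family $\{\Psi^{(l,\eta_{0},t)}\}$ in $\mathcal{W}$ (plus the potential $\mathbf{V}$), so that nothing is lost by this normalization; at $\eta_{0}=0$ one has $\mathbf{W}^{(l,0)}_{\cdot}\equiv0$, hence $\tilde{\tau}^{(l,0)}_{t,s}=\tau_{t,s}$, the autonomous group of Theorem~\ref{Theorem Lieb-Robinson copy(3)}.

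The first substantive step is the observation that $\mathbf{W}_{t}^{(l,\eta)}$ of (\ref{bounded pertubationbis0}) is a nearest--neighbour interaction whose $\Vert\cdot\Vert_{\mathcal{W}}$--norm is bounded by $DK_{1}$ uniformly in $l,\eta,t$ (by (\ref{condition2}) and $\mathbf{F}(0),\mathbf{F}(1)>0$); consequently the multi--commutators occurring in the expansion, whose entries are all of the form $\tilde{\tau}^{(l,\eta_{0})}_{\cdot,s}(\,\cdot\,)$ applied to local elements --- i.e.\ $\tau(\,\cdot\,)$ after the reduction --- are controlled by Corollary~\ref{theorem exp tree decay}, respectively its non--autonomous version coming from Theorems~\ref{theorem exp tree decay copy(1)} and~\ref{Theorem Lieb-Robinson copy(1)-bis}, with Lieb--Robinson constants uniform in $l$ and $\eta$; this uniformity is what ultimately produces a final constant $D$ independent of $l,\eta,\eta_{0}$. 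Expanding each $\mathbf{W}_{\alpha}^{(l,\eta)}$ via (\ref{bounded pertubationbis0}), each $U_{\Lambda_{L}}^{\Phi}$ as $\sum_{\Lambda\subseteq\Lambda_{L}}\Phi_{\Lambda}$, and each $\Phi_{\Lambda}$ through the cluster decomposition (\ref{set eq}), the task reduces to bounding, for each $k\le m$, the multi--commutator sum $\digamma_{k,L}$ already appearing in the proof of Theorem~\ref{Thm Heat production as power series copy(2)}, now with scalar coefficients built from $\mathbf{w}_{x,y}(\eta,\cdot)-\mathbf{w}_{x,y}(\eta_{0},\cdot)$, which by (\ref{condition1})--(\ref{condition2}) and (\ref{condition3}) satisfy the two estimates $\sup|\xi|\le D$ and $\sup|\xi|\le D|\eta-\eta_{0}|^{k}$ of (\ref{assumption boundedness3}).

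Applying the tree--decay bound to the multi--commutator and then performing the lattice summations is routine: the hypothesis $d(m+1)<\varsigma$ guarantees $\varsigma/\max\{\mathfrak{d}_{T}(i),\mathfrak{d}_{T}(j)\}\ge\varsigma/k>d$ for every edge $\{i,j\}\in T$ and every relevant $k$, so each single--site lattice sum against a tree factor converges; summing the sites $x_{1},\dots,x_{k}\in\Lambda_{l}$ together with $x_{0}\in\mathfrak{L}\setminus\Lambda_{L}$, $m_{0}\in\mathbb{N}_{0}$ and $\Lambda\in\mathcal{D}(x_{0},m_{0})$ in an order that peels the tree one leaf at a time and never sums $x_{0}$ last extracts exactly one factor $|\Lambda_{l}|$, while the sums over $m_{0}$ and $\Lambda\in\mathcal{D}(x_{0},m_{0})$ are absorbed by (\ref{assumption boundedness2}) with $\mathbf{v}_{m}=(1+m)^{\varsigma}$. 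This yields $\digamma_{k,-1}\le D|\Lambda_{l}||\eta-\eta_{0}|^{k}$, hence the absolute convergence of the $\Lambda$--series of (iii); and since the same bound with $\mathfrak{L}\setminus\Lambda_{L}$ in place of $\mathfrak{L}$ tends to $0$ as $L\to\infty$ by Lebesgue's dominated convergence theorem, the finite partial sums converge to the infinite--volume series of (ii), uniformly in $\eta$ because the $\eta$--independent estimate $\sup|\xi|\le D$ dominates the summands. The order $m+1$ remainder is handled identically: $\tilde{\tau}^{(l,\eta)}_{t,s}$ is a $\ast$--automorphism, hence an isometry, so it may be discarded, and the inner multi--commutator of order $m+2$ --- all of whose entries are again of the form $\tau(\,\cdot\,)$ --- is bounded as above, giving the claimed $\mathcal{O}(|\Lambda_{l}||\eta-\eta_{0}|^{m+1})$; the uniform convergence of $\digamma_{1,L}$ finally yields the continuity of $\eta\mapsto\mathbf{T}_{t,s}^{(l,\eta)}$ and Assertion~(i).

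The main obstacle is not a single hard estimate but the coordination of the pieces. One must (a) check that the perturbed, genuinely time--dependent dynamics $\tilde{\tau}^{(l,\eta)}$ still lies within the Lieb--Robinson framework of Section~\ref{section LR non-auto} with constants uniform in the external parameters $l,\eta$ --- here it is relevant that, although the base potential $\mathbf{V}$ may be unbounded in the sense of (\ref{condition divergence}), the perturbation $\mathbf{W}^{(l,\eta)}_{t}$ is itself a bounded, finite--range element, so no new subtlety arises at the level of generators; and (b) run the elementary tree--peeling bookkeeping carefully, since the perturbation sites $x_{1},\dots,x_{k}$ are each confined to $\Lambda_{l}$ whereas $x_{0}$ ranges over a cofinite subset of $\mathfrak{L}$, and it is exactly the interplay of these two constraints that produces one --- and only one --- power of $|\Lambda_{l}|$ in the bound.
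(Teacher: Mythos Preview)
Your proposal is correct and follows essentially the same route as the paper: reduce to $\eta_{0}=s=0$ via the uniform $\Vert\cdot\Vert_{\mathcal{W}}$--bound on the perturbed interaction, expand with the non--autonomous Dyson--Phillips series of Theorem~\ref{pertubed dynam thm}~(iii) truncated at order $m$, and then control the resulting multi--commutator sums $\digamma_{k,L}$ exactly as in the proof of Theorem~\ref{Thm Heat production as power series copy(2)}, invoking the tree--decay bounds (which, after the reduction, are applied to the autonomous group $\tau$ just as there). The paper's own proof simply records that Corollary~\ref{theorem exp tree decay} and Lemma~\ref{Lemma Series representation of the dynamics} carry over to the non--autonomous setting via Theorems~\ref{theorem exp tree decay copy(1)}, \ref{Theorem Lieb-Robinson copy(1)-bis} and Lemma~\ref{Lemma Series representation of the dynamics copy(1)}, and then refers back to that earlier proof; your write--up spells out the same mechanism in more detail.
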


\begin{proof}
By Theorems \ref{theorem exp tree decay copy(1)} and \ref{Theorem
Lieb-Robinson copy(1)-bis}, Corollary \ref{theorem exp tree decay} holds in
the non--auto%
\-%
nomous case. Moreover, by Lemma \ref{Lemma Series representation of the
dynamics copy(1)}, Lemma \ref{Lemma Series representation of the dynamics}
is also satisfied in the non--auto%
\-%
nomous case. Therefore, the proof is an easy extension of the proof of
Theorem \ref{Thm Heat production as power series copy(2)}.
\end{proof}

If the interaction has exponential decay, we show that the map $\eta \mapsto
|\Lambda _{l}|^{-1}\mathbf{T}_{t,s}^{(l,\eta )}$ from $\mathbb{R}$ to $%
\mathcal{U}$ is bounded in the sense of Gevrey classes, uniformly w.r.t. $%
l\in \mathbb{R}_{0}^{+}$. This corresponds to Theorem \ref{Thm Heat
production as power series copy(1)} in the non--auto%
\-%
nomous case:

\begin{satz}[Increments as Gevrey maps]
\label{Thm Heat production as power series copy(4)}\mbox{
}\newline
Let $l,\mathrm{T}\in \mathbb{R}_{0}^{+}$, $s,t\in \left[ -\mathrm{T},\mathrm{%
T}\right] $, $\Psi \in \mathcal{W}$, and $\mathbf{V}$ be any potential.
\index{Gevrey map}%
\index{Increment}Assume (\ref{(3.3) NS generalized}) and take an interaction
$\Phi $ satisfying (\ref{assumption boundedness2}) with $\mathbf{v}_{m}=%
\mathrm{e}^{m\varsigma }$. For all $x,y\in \mathfrak{L}$, assume further the
real analyticity of the map $\eta \mapsto \mathbf{w}_{x,y}(\eta ,\cdot )$
from $\mathbb{R}$ to the Banach space $C(\mathbb{R};\mathbb{C)}$, which is
equipped with the supremum norm, as well as the existence of $r\in \mathbb{R}%
^{+}$ such that%
\begin{equation*}
K_{2}\doteq \sup_{x,y\in \mathfrak{L}}\ \sup_{m\in \mathbb{N}}\ \sup_{\eta
,t\in \mathbb{R}}%
\frac{r^{m}\partial _{\eta }^{m}\mathbf{w}_{x,y}(\eta ,t)}{m!}<\infty \ .
\end{equation*}%
\emph{(i)} Smoothness. As a function of $\eta \in \mathbb{R}$, $\mathbf{T}%
_{t,s}^{(l,\eta )}\in C^{\infty }(\mathbb{R};\mathcal{U})$ and for any $m\in
\mathbb{N}$,
\begin{eqnarray*}
\partial _{\eta }^{m}\mathbf{T}_{t,s}^{(l,\eta )}
&=&\sum\limits_{k=1}^{m}\sum\limits_{\Lambda \in \mathcal{P}_{f}(\mathfrak{L}%
)}i^{k}\int_{s}^{t}\mathrm{d}s_{1}\cdots \int_{s}^{s_{k-1}}\mathrm{d}s_{k} \\
&&\qquad \left. \partial _{\varepsilon }^{m}\left[ \mathbf{X}%
_{s_{k},s,s_{k}}^{(l,\eta ,\eta +\varepsilon )},\ldots ,\mathbf{X}%
_{s_{1},s,s_{1}}^{(l,\eta ,\eta +\varepsilon )},\tilde{\tau}_{t,s}^{(l,\eta
)}(\Phi _{\Lambda })\right] ^{(k+1)}\right\vert _{\varepsilon =0}\ .
\end{eqnarray*}%
The above series in $\Lambda $ are absolutely convergent.\newline
\emph{(ii)} Uniform boundedness of the Gevrey norm of density of increments.
There exist $\tilde{r}\equiv \tilde{r}_{d,\mathrm{T},\Psi ,K_{2},\mathbf{F}%
}\in \mathbb{R}^{+}$ and $D\equiv D_{\mathrm{T},\Psi ,K_{2},\Phi }\in
\mathbb{R}^{+}$ such that, for all $l\in \mathbb{R}_{0}^{+}$, $\eta \in
\mathbb{R}$ and $s,t\in \left[ -\mathrm{T},\mathrm{T}\right] $,
\begin{equation*}
\sum_{m\in \mathbb{N}}\frac{\tilde{r}^{m}}{\left( m!\right) ^{d}}\sup_{l\in
\mathbb{R}_{0}^{+}}\left\Vert \left\vert \Lambda _{l}\right\vert
^{-1}\partial _{\eta }^{m}\mathbf{T}_{t,s}^{(l,\eta )}\right\Vert _{\mathcal{%
U}}\leq D\ .
\end{equation*}
\end{satz}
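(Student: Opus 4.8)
The plan is to follow the proof of Theorem~\ref{Thm Heat production as power series copy(1)} line by line, substituting the non-autonomous objects for the autonomous ones. The three ingredients the autonomous argument used are all available here: (a) the tree-decay bounds of Corollary~\ref{theorem exp tree decay} (exponential case) hold verbatim for the non-autonomous dynamics $\{\tau_{t,s}\}_{s,t\in\mathbb{R}}$, by Theorems~\ref{theorem exp tree decay copy(1)} and~\ref{Theorem Lieb-Robinson copy(1)-bis} together with Lemma~\ref{Lemma Series representation of the dynamics copy(1)}; (b) the explicit Dyson--Phillips expansion of $\tilde\tau_{t,s}^{(l,\eta)}$ in powers of $\mathbf{W}^{(l,\eta)}-\mathbf{W}^{(l,\eta_0)}$ is provided by Theorem~\ref{pertubed dynam thm}(iii); and (c) the combinatorial bound on trees, Proposition~\ref{Coro estimate super importante}, is setting-independent. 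So the proof will essentially consist of checking that the analytic/differentiability part goes through with the hypothesis on $\eta\mapsto\mathbf{w}_{x,y}(\eta,\cdot)$ stated in the theorem.

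For assertion~(i), I would start from the partial Dyson--Phillips series of Theorem~\ref{pertubed dynam thm}(iii), applied at finite volume to $\mathbf{T}_{t,s}^{(l,\eta+\varepsilon,L)}-\mathbf{T}_{t,s}^{(l,\eta,L)}$ with the energy observable $U_{\Lambda_L}^{\Phi}$ in the last slot, and differentiate $m$ times in $\varepsilon$ at $\varepsilon=0$ term by term. Each summand carries a scalar coefficient of the form $\partial_\varepsilon^m\bigl\{\prod_{j=1}^{k}\bigl(\mathbf{w}_{x_j,x_j+z_j}(\eta+\varepsilon,s_j)-\mathbf{w}_{x_j,x_j+z_j}(\eta,s_j)\bigr)\bigr\}\big|_{\varepsilon=0}$; since each factor vanishes at $\varepsilon=0$, this coefficient is zero when $m<k$ (which is why the resulting sum runs over $k\in\{1,\dots,m\}$), and for $m\ge k$ a Leibniz expansion combined with the Cauchy-type estimate $|\partial_\eta^{\,p}\mathbf{w}_{x,y}(\eta,t)|\le K_2\,p!\,r^{-p}$ — uniform in $x,y,\eta,t$ by hypothesis — bounds it by $D^m m!$ uniformly in $x_j,z_j,\eta$ and in the fixed times $s_j$. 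This is the exact analogue of the estimate~(\ref{totoes}) that legitimized termwise differentiation and the interchange with $L\to\infty$ in the autonomous case; the Banach-space analyticity of $\eta\mapsto\mathbf{w}_{x,y}(\eta,\cdot)$ in the sup-norm of $C(\mathbb{R};\mathbb{C})$ additionally guarantees joint continuity in $t$ of the differentiated coefficients, so the time integrals and the smoothness of $\tilde\tau_{t,s}^{(l,\eta)}(\Phi_\Lambda)$ are unaffected. The multi-commutators are then controlled exactly as in the proof of Theorem~\ref{Thm Heat production as power series copy(3)}: write $\mathbf{W}_t^{(l,\eta)}=\sum_{x\in\Lambda_l}\sum_{|z|\le1}\mathbf{w}_{x,x+z}(\eta,t)a_x^{\ast}a_{x+z}$, decompose $U_{\Lambda_L}^{\Phi}$ over the sets $\mathcal{D}(x_0,m_0)$ via~(\ref{set eq}), apply Corollary~\ref{theorem exp tree decay} (non-autonomous version) to each $[\tau_{s_k,s}(a^{\ast}a),\dots,\Phi_\Lambda]^{(k+1)}$, use~(\ref{assumption boundedness2}) with $\mathbf{v}_m=\mathrm{e}^{m\varsigma}$ to sum over $\Lambda$, and finish with Lebesgue's dominated convergence theorem. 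Combined with Theorem~\ref{Thm Heat production as power series copy(3)}(i), this yields $\mathbf{T}_{t,s}^{(l,\eta)}\in C^{\infty}(\mathbb{R};\mathcal{U})$, absolute convergence of the $\Lambda$-sums, and the stated formula for $\partial_\eta^m\mathbf{T}_{t,s}^{(l,\eta)}$.

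Assertion~(ii) is then pure bookkeeping on top of~(i). From the derivative formula I would bound $\|\partial_\eta^m\mathbf{T}_{t,s}^{(l,\eta)}\|_{\mathcal{U}}$ by a sum over $k\in\{1,\dots,m\}$ of four factors: the time-ordered integral $\int_s^t\mathrm{d}s_1\cdots\int_s^{s_{k-1}}\mathrm{d}s_k\le(2\mathrm{T})^k/k!$; the uniform coefficient bound $D^m m!$; the per-tree factor $\mathbf{K}_1^k$ coming from Corollary~\ref{theorem exp tree decay} (exponential case); and the double sum over $T\in\mathcal{T}_{k+1}$ and over the perturbation sites $x_1,\dots,x_k\in\Lambda_l$ — which, after pinning one site (cost $|\Lambda_l|$) and summing the remaining ones against the tree-decay weights, is controlled by $D^k(k!)^d$ via Proposition~\ref{Coro estimate super importante} — while the $\Lambda$-sum is again absorbed by~(\ref{assumption boundedness2}). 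Collecting powers, $\tfrac{(2\mathrm{T})^k}{k!}(k!)^d=(2\mathrm{T})^k(k!)^{d-1}\le(2\mathrm{T})^k(m!)^{d-1}$ and $m!\cdot(m!)^{d-1}=(m!)^d$, so $\|\partial_\eta^m\mathbf{T}_{t,s}^{(l,\eta)}\|_{\mathcal{U}}\le D\,|\Lambda_l|\,(m!)^d\,C^m$ with $C,D$ depending only on the $m$-independent data $d,\mathrm{T},\Psi,K_2,\Phi,\mathbf{F}$; choosing $\tilde r<1/C$ makes $\sum_m\tfrac{\tilde r^m}{(m!)^d}\sup_l\||\Lambda_l|^{-1}\partial_\eta^m\mathbf{T}_{t,s}^{(l,\eta)}\|_{\mathcal{U}}\le D\sum_m(\tilde r C)^m<\infty$, which is the claim.

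The main obstacle I anticipate is the uniform (in $l$ and $\eta$) justification of differentiating the \emph{infinite} Dyson--Phillips series term by term and of commuting $\partial_\eta^m$ with the thermodynamic limit $L\to\infty$. Everything there rests on the uniform $D^m m!$ bound for the $\eta$-differentiated scalar coefficients, so the genuine content of the proof is verifying that the theorem's hypothesis on $\mathbf{w}_{x,y}$ — analyticity of $\eta\mapsto\mathbf{w}_{x,y}(\eta,\cdot)$ in $C(\mathbb{R};\mathbb{C})$ together with the uniform Cauchy bound encoded in $K_2$ — really does deliver that estimate, after which the multi-commutator estimates and the tree combinatorics are exactly those already established for the autonomous case and in Theorem~\ref{Thm Heat production as power series copy(3)}.
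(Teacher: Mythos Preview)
Your proposal is correct and takes exactly the same approach as the paper, which simply states that the assertions are proven by extending the proof of Theorem~\ref{Thm Heat production as power series copy(1)} to the non--autonomous case, just as for Theorem~\ref{Thm Heat production as power series copy(3)}. In fact you supply considerably more detail than the paper does, and your identification of the three transferable ingredients --- the non-autonomous tree-decay bounds via Theorems~\ref{theorem exp tree decay copy(1)} and~\ref{Theorem Lieb-Robinson copy(1)-bis}, the Dyson--Phillips expansion of Theorem~\ref{pertubed dynam thm}(iii), and the tree combinatorics of Proposition~\ref{Coro estimate super importante} --- is exactly right.
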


\begin{proof}
Like for Theorem \ref{Thm Heat production as power series copy(3)}, the
assertions are easily proven by extending the proof of Theorem \ref{Thm Heat
production as power series copy(1)} to the non--auto%
\-%
nomous case.
\end{proof}

This theorem has important consequences in terms of increment density limit
\begin{equation*}
\underset{l\rightarrow \infty }{\lim }|\Lambda _{l}|^{-1}\rho (\mathbf{T}%
_{t,s}^{(l,\eta )})
\end{equation*}%
at any fixed $s,t\in \mathbb{R}$ and state $\rho \in \mathcal{U}^{\ast }$.
This limit is to be understood as an accumulation point of the bounded net $%
\{|\Lambda _{l}|^{-1}\rho (\mathbf{T}_{t,s}^{(l,\eta )})\}_{l>0}$:

\begin{koro}[Increment density limit]
\label{Theorem Lieb-Robinson copy(5)}\mbox{
}\newline
Let $\rho \in \mathcal{U}^{\ast }$.%
\index{Increment density} Under the conditions of Theorem \ref{Thm Heat
production as power series copy(4)}, there is a subsequence $\{l_{n}\}_{n\in
\mathbb{N}}\subset \mathbb{R}_{0}^{+}$ such that, for all $s,t\in \left[ -%
\mathrm{T},\mathrm{T}\right] $, the following limit exists%
\begin{equation*}
\eta \mapsto \mathbf{g}_{t,s}\left( \eta \right) \doteq \underset{%
n\rightarrow \infty }{\lim }|\Lambda _{l_{n}}|^{-1}\rho (\mathbf{T}%
_{t,s}^{(l_{n},\eta )})
\end{equation*}%
and defines a smooth function $\mathbf{g}_{t,s}\in C^{\infty }(\mathbb{R})$.
Furthermore, there exist $%
\tilde{r}\equiv \tilde{r}_{d,\mathrm{T},\Psi ,K_{2},\mathbf{F}}\in \mathbb{R}%
^{+}$ and $D\equiv D_{\mathrm{T},\Psi ,K_{2},\Phi }\in \mathbb{R}^{+}$ such
that, for all $\eta \in \mathbb{R}$ and $s,t\in \left[ -\mathrm{T},\mathrm{T}%
\right] $,
\begin{equation*}
\sum_{m\in \mathbb{N}}\frac{\tilde{r}^{m}}{\left( m!\right) ^{d}}\left\vert
\partial _{\eta }^{m}\mathbf{g}_{t,s}\left( \eta \right) \right\vert \leq D\
.
\end{equation*}
\end{koro}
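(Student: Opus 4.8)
The plan is to transfer the uniform Gevrey bound of Theorem~\ref{Thm Heat production as power series copy(4)}~(ii) from the $\mathcal{U}$--valued increments to the scalar maps $f^{(l)}_{t,s}(\eta)\doteq|\Lambda_l|^{-1}\rho(\mathbf{T}_{t,s}^{(l,\eta)})$ by composing with the bounded functional $\rho$, and then to extract one convergent subsequence by a compactness argument in the Fr\'echet space $C^\infty(\mathbb{R})$ (uniform convergence of all derivatives on compacta), combining a diagonal procedure over a countable dense set of pairs $(s,t)$ with an equicontinuity argument in $(s,t)$.

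First I would use $|\rho(B)|\le\|\rho\|_{\mathcal{U}^\ast}\|B\|_{\mathcal{U}}$ together with Theorem~\ref{Thm Heat production as power series copy(4)}~(i) (which exhibits $\partial_\eta^m\mathbf{T}_{t,s}^{(l,\eta)}$ as an absolutely convergent series, so $\partial_\eta^m$ commutes with $\rho$ and with the limit $L\to\infty$) and Theorem~\ref{Thm Heat production as power series copy(4)}~(ii) to obtain a constant $C\doteq\|\rho\|_{\mathcal{U}^\ast}D$ with
\begin{equation*}
\sum_{m\in\mathbb{N}}\frac{\tilde{r}^m}{(m!)^d}\ \sup_{\eta\in\mathbb{R}}\left|\partial_\eta^m f^{(l)}_{t,s}(\eta)\right|\le C \qquad\text{for all }l\in\mathbb{R}_0^+,\ s,t\in[-\mathrm{T},\mathrm{T}].
\end{equation*}
In particular $\sup_{\eta}|\partial_\eta^m f^{(l)}_{t,s}(\eta)|\le C(m!)^d\tilde{r}^{-m}$ for every $m$, so at fixed $(s,t)$ the family $\{f^{(l)}_{t,s}\}_{l}$ has all $\eta$--derivatives bounded uniformly on $\mathbb{R}$; applying the Arzel\`a--Ascoli theorem to each derivative in turn and diagonalising over the order of the derivative shows it is relatively compact in $C^\infty(\mathbb{R})$. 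Fixing a countable dense $Q\subset[-\mathrm{T},\mathrm{T}]^2$ and diagonalising once more over $Q$, I would produce a subsequence $\{l_n\}_{n\in\mathbb{N}}$ such that $f^{(l_n)}_{t,s}\to\mathbf{g}_{t,s}$ in $C^\infty(\mathbb{R})$ for every $(s,t)\in Q$; then $\mathbf{g}_{t,s}\in C^\infty(\mathbb{R})$ and, since the above bound controls the finite partial sums through a supremum over $\eta$, a Fatou--type passage to the limit gives $\sum_m\tilde{r}^m(m!)^{-d}\sup_\eta|\partial_\eta^m\mathbf{g}_{t,s}(\eta)|\le C$.

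It remains to promote convergence to all $(s,t)\in[-\mathrm{T},\mathrm{T}]^2$. For this I would establish equicontinuity of $(s,t)\mapsto f^{(l)}_{t,s}$ in $C^\infty(\mathbb{R})$, uniformly in $l$. Using the integral representation of Theorem~\ref{Thm Heat production as power series copy(3)} with $\eta_0=0$ (so $\mathbf{T}_{t,s}^{(l,0)}=0$), $\partial_\eta^m\mathbf{T}_{t,s}^{(l,\eta)}$ is an absolutely convergent sum over $\Lambda\in\mathcal{P}_f(\mathfrak{L})$ of iterated time integrals of multi--commutators, the $\Lambda$--series being dominated by $D'|\Lambda_l|$ uniformly in $(s,t)$ and $l$. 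The strong continuity of $\{\tau_{t,s}\}$ and $\{\tilde{\tau}^{(l,\eta)}_{t,s}\}$ (Corollary~\ref{Theorem Lieb-Robinson copy(4)} and Theorem~\ref{pertubed dynam thm}) together with Lebesgue's dominated convergence theorem then gives a modulus of continuity of $(s,t)\mapsto|\Lambda_l|^{-1}\partial_\eta^m\mathbf{T}_{t,s}^{(l,\eta)}$, uniform on $\eta$--compacta and independent of $l$. An $\varepsilon/3$ argument routed through the nearest $(s',t')\in Q$ then shows $\{f^{(l_n)}_{t,s}\}_n$ is Cauchy in $C^\infty(\mathbb{R})$ for every $(s,t)$, with limit the desired $\mathbf{g}_{t,s}$, and the Gevrey bound passes to the limit as before.

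The hard part will be exactly this uniform--in--$l$ equicontinuity in $(s,t)$: since $\mathbf{T}_{t,s}^{(l,\eta)}$ itself is $O(|\Lambda_l|)$, naive operator--norm bounds on the evolution families produce a modulus of continuity that also grows like $|\Lambda_l|$, hence is useless after dividing by the volume. One must instead use the tree--decay structure of the multi--commutators (Theorems~\ref{theorem exp tree decay copy(1)} and \ref{Theorem Lieb-Robinson copy(1)-bis}, as already packaged into Theorems~\ref{Thm Heat production as power series copy(3)}--\ref{Thm Heat production as power series copy(4)}), which is what forces the $\Lambda$--sum to converge with an $O(|\Lambda_l|)$ bound that is \emph{uniform} in $(s,t)$; only this uniformity allows the continuity of each finite term to survive the infinite summation and the division by $|\Lambda_l|$. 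When $\rho$ is a state, $\|\rho\|_{\mathcal{U}^\ast}=1$ and $\tilde{r},D$ may be taken to be those of Theorem~\ref{Thm Heat production as power series copy(4)}; in general they acquire the harmless factor $\|\rho\|_{\mathcal{U}^\ast}$.
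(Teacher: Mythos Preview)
Your approach is correct and follows the same core strategy as the paper: transfer the uniform Gevrey bound of Theorem~\ref{Thm Heat production as power series copy(4)} to the scalar functions $f^{(l)}_{t,s}$ via the bounded functional $\rho$, then invoke Arzel\`a--Ascoli and a diagonal argument. The paper's proof is considerably terser: it records the uniform bounds $\sup_{l,\eta,s,t}|\Lambda_l|^{-1}|\rho(\partial_\eta^m\mathbf{T}_{t,s}^{(l,\eta)})|<\infty$ for each $m\in\mathbb{N}_0$ (from Theorem~\ref{Thm Heat production as power series copy(3)}~(i) with $\eta_0=0$ and Theorem~\ref{Thm Heat production as power series copy(4)}), and then appeals to the mean value theorem and Arzel\`a--Ascoli with a diagonal sequence $l_n=l_n^{(n)}$ taken only over the derivative order $m$.

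The paper does \emph{not} explicitly address how a single subsequence $\{l_n\}$ can serve for all $(s,t)\in[-\mathrm{T},\mathrm{T}]^2$ simultaneously; your further diagonal over a countable dense $Q\subset[-\mathrm{T},\mathrm{T}]^2$ together with the equicontinuity-in-$(s,t)$ step is exactly what is needed to fill this in, and your sketch of that equicontinuity (via the integral representations of Theorem~\ref{Thm Heat production as power series copy(3)}, the uniform-in-$l$ bound $D|\Lambda_l|$ on the $\Lambda$--series, strong continuity of the evolution families, and dominated convergence) is on target. So your treatment is more complete than the paper's on this point, while the underlying mechanism is the same.
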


\begin{proof}
Let $\mathrm{T}\in \mathbb{R}_{0}^{+}$. By Theorem \ref{Thm Heat production
as power series copy(3)} (i) for $\eta _{0}=0$ together with (\ref%
{condition1}) and Corollary \ref{Theorem Lieb-Robinson copy(4)} (ii),
\begin{equation}
\sup_{l\in \mathbb{R}_{0}^{+}}\sup_{\eta \in \mathbb{R}}\sup_{s,t\in \left[ -%
\mathrm{T},\mathrm{T}\right] }\left\{ |\Lambda _{l}|^{-1}\rho (\mathbf{T}%
_{t,s}^{(l,\eta )})\right\} <\infty \ .  \label{borne sup1}
\end{equation}%
Furthermore, we infer from Theorem \ref{Thm Heat production as power series
copy(4)} that, for any $m\in \mathbb{N}$,
\begin{equation}
\sup_{l\in \mathbb{R}_{0}^{+}}\sup_{\eta \in \mathbb{R}}\sup_{s,t\in \left[ -%
\mathrm{T},\mathrm{T}\right] }\left\{ |\Lambda _{l}|^{-1}\rho (\partial
_{\eta }^{m}\mathbf{T}_{t,s}^{(l,\eta )})\right\} <\infty \ .
\label{borne sup2}
\end{equation}%
By (\ref{borne sup1}) and (\ref{borne sup2}), the assertions are
consequences of Theorem \ref{Thm Heat production as power series copy(4)}
combined with the mean value theorem and the (Arzel\`{a}--) Ascoli theorem
\cite[Theorem A5]{Rudin}. Indeed, $\{l_{n}\}_{n\in \mathbb{N}}\subset
\mathbb{R}_{0}^{+}$ is taken as a so--called diagonal sequence $%
l_{n}=l_{n}^{(n)}$ of a family $\{l_{n}^{(m)}\}_{n\in \mathbb{N}}$, $m\in
\mathbb{N}_{0}$, of sequences in $\mathbb{R}_{0}^{+}$ such that, for all $%
m\in \mathbb{N}_{0}$, the $m$--th derivative $|\Lambda
_{l_{n}}|^{-1}\partial _{\eta }^{m}\mathbf{T}_{t,s}^{(l_{n}^{(m)},\eta )}$
uniformly converges as $n\rightarrow \infty $. With this choice,%
\begin{equation*}
\partial _{\eta }^{m}\mathbf{g}_{t,s}\left( \eta \right) =\underset{%
n\rightarrow \infty }{\lim }|\Lambda _{l_{n}}|^{-1}\rho (\partial _{\eta
}^{m}\mathbf{T}_{t,s}^{(l_{n},\eta )})\ .
\end{equation*}
\end{proof}

From the above corollary, at dimension $d=1$ and for $s,t$ on compacta, the
increment density limit $\mathbf{g}_{t,s}\in C^{\infty }(\mathbb{R})$
defines a real analytic function. As a consequence, the increment density
limit is never zero for $\eta $ outside a discrete subset of $\mathbb{R}$,
unless $\mathbf{g}_{t,s}$ is \emph{identically} vanishing for \emph{all} $%
\eta \in \mathbb{R}$.

This mathematical property refers to a physical one. It reflects a generic
alternative between either \emph{strictly} positive or \emph{identically}
vanishing heat production density, at macroscopic scale, in presence of
non--vanishing external electric fields. Indeed, by taking $\Phi =\Psi $ in
Theorem \ref{Thm Heat production as power series copy(4)}, $\mathbf{T}%
_{t,s}^{(l,\eta )}$ is related to the heat produced by the presence of an
electromagnetic field, encoded in $\mathbf{W}_{t}^{(l,\eta )}$. If we use
cyclic processes, which means here that $\mathbf{W}_{t}^{(l,\eta )}=0$
outside some compact set $[t_{0},t_{1}]\subset \mathbb{R}$, then the KMS
state $\varrho \in \mathcal{U}^{\ast }$ applied on the energy increment $%
\mathbf{T}_{t_{1},t_{0}}^{(l,\eta )}$ is the total heat production (1st law
of Thermodynamics) with increment density limit equal to $\mathbf{g}%
_{t_{1},t_{0}}(\eta )$. It is non--negative, by the 2nd law of
Thermodynamics. See \cite{OhmV}\ for more details on the 1st and 2nd laws
for the quantum systems considered here. Now, if $\mathbf{g}%
_{t_{1},t_{0}}(\eta )$ is \emph{identically} vanishing for \emph{all} $\eta
\in \mathbb{R}$ then it means that the external perturbation never produces
heat in the system, which is a very strong property. The latter is expected
to be the case, for instance, for superconductors driven by electric
perturbations. This kind of behavior should highlight major features of the
system (like possibly broken symmetry). Hence, if the heat production
density is not \emph{identically} vanishing, generically, it is \emph{%
strictly} positive, at least at dimension $d=1$, because of properties of
real analytic functions mentioned above.

For higher dimensions $d>1$ and $s,t$ on compacta, Corollary \ref{Theorem
Lieb-Robinson copy(5)} implies that the increment density limit $\mathbf{g}%
_{t,s}\in C^{\infty }(\mathbb{R})$ belongs to the Gevrey class%
\begin{equation*}
C_{d}^{\omega }(\mathbb{R})\doteq \left\{ f\in C^{\infty }(\mathbb{R})\ :\
\sup_{\eta \in \mathbb{R}}\left\vert \partial _{\eta }^{m}f\left( \eta
\right) \right\vert \leq D^{m}\left( m!\right) ^{d}\text{ for any }m\in
\mathbb{N}\right\} \ .
\end{equation*}%
If $d>1$, the elements of $C_{d}^{\omega }(\mathbb{R})$ are usually neither
analytic nor quasi--analytic. In particular, functions of $C_{d}^{\omega }(%
\mathbb{R})$ can have arbitrarily small support, while $C_{d}^{\omega }(%
\mathbb{R})\varsubsetneq C_{d^{\prime }}^{\omega }(\mathbb{R})$ whenever $%
d<d^{\prime }$. Thus, the alternative above, which is related to the heat
production density in presence of external electric fields, does not follow
from Corollary \ref{Theorem Lieb-Robinson copy(5)} for higher dimensions $%
d\geq 2$. However, note that, at least for the quasi--free dynamics (also in
the presence of a random potential), the heat production density is a real
analytic function of $\eta $ at any dimension $d\in \mathbb{N}$, at least
for $\eta $ near zero. This follows from \cite[Theorem 3.4]{OhmI}.
Therefore, the above alternative for the heat production density may be true
at any dimension, provided the interaction decays fast enough in space (or
is finite--range, in the extreme case).

Observe finally that if a Gevrey function $f:\mathbb{R\rightarrow R}$ is
invertible on some open interval $I\subset \mathbb{R}$ then the inverse $%
f^{-1}:f(I)\mathbb{\rightarrow R}$ is again a Gevrey function. So, the above
theorem implies that, if the relation between applied field strength $\eta $
and the density of increment at $l\rightarrow \infty $ is injective for some
range of field strengths $\eta $, then the applied field strength in that
range is a Gevrey function of the density of increment. For more details on
Gevrey classes, see, e.g., \cite{HL}.

\section{Applications to Conductivity Measures\label{Section Ohm law}}

\subsection{Charged Transport Properties in Mathematics}

Altogether, the classical theory of linear conductivity (including the
theory of (Landau) Fermi liquids, see, e.g., \cite{brupedrahistoire} for a
historical perspective) is more like a makeshift theoretical construction
than a smooth and complete theory. It is unsatisfactory to use the Drude (or
the Drude--Lorentz) model -- which does not take into account quantum
mechanics -- together with certain ad hoc hypotheses as a proper microscopic
explanation of conductivity. For instance, in \cite{NS1,NS2,SE,Y}, the
(normally fixed) relaxation time of the Drude model has to be taken as an
effective frequency--dependent parameter to fit with experimental data \cite%
{T} on usual metals like gold. In fact, as claimed in the famous paper \cite[%
p. 505]{meanfreepath}, \textquotedblleft \textit{it must be admitted that
there is no entirely rigorous quantum theory of conductivity}%
.\textquotedblright

Concerning AC--conductivity, however, in the last years significant
mathematical progress has been made. See, e.g., \cite%
{Annale,JMP-autre,JMP-autre2,Cornean,OhmI,OhmII,OhmIII,OhmIV,OhmV,OhmVI,W,germinet}
for examples of mathematically rigorous derivations of linear conductivity
from first principles of quantum mechanics in the AC--regime. In particular,
the notion of conductivity measure has been introduced for the first time in
\cite{Annale}, albeit only for non--interacting systems. These results
indicate a physical picture of the microscopic origin of Ohm and Joule's
laws which differs from usual explanations coming from the Drude
(Lorentz--Sommerfeld) model.

As electrical resistance of conductors may result from the presence of
interactions between charge carriers, an important issue is to tackle the
interacting case. This is first\footnote{%
With regard to interacting systems, explicit constructions of KMS states are
obtained in the Ph.D. thesis \cite{W} for a one--dimensional model of
interacting fermions with a finite range pair interaction. But, the author
studies in \cite[Chap. 9]{W} the linear response theory only for
non--interacting fermions, keeping in mind possible generalizations to
interacting systems.} done in \cite{OhmV,OhmVI} for very general systems of
interacting quantum particles on lattices, including many important models
of condensed matter physics like the celebrated Hubbard model. This was out
of scope of \cite%
{Annale,JMP-autre,JMP-autre2,germinet,OhmI,OhmII,OhmIII,OhmIV,W} which
strongly rely on properties of quasi--free dynamics and states.

The central issue in \cite{OhmV,OhmVI} is to get estimates on transport
coefficients related to electric conduction, which are \emph{uniform} w.r.t.
the random parameters and the volume $|\Lambda _{l}|$ of the box $\Lambda
_{l}$ where the electromagnetic field lives. This is crucial to get valuable
information on conductivity in the macroscopic limit $l\rightarrow \infty $
and otherwise the results presented in \cite{OhmV,OhmVI} would loose almost
all their interest. To get such estimates in the non--interacting case \cite%
{OhmI,OhmII,OhmIII,OhmIV}, we applied tree--decay bounds on
multi--commutators in the sense of \cite[Section 4]{OhmI}. The latter are
based on combinatorial results \cite[Theorem 4.1]{OhmI} already used before,
for instance in \cite{FroehlichMerkliUeltschi}, and require the dynamics to
be implemented by Bogoliubov automorphisms. A solution to the issue for the
\emph{interacting} case is made possible by the results of Sections \ref%
{section Energy Increments as Power Series} and \ref{Section existence
dynamics}, which are direct consequences of the Lieb--Robinson bounds for
multi--com%
\-%
mutators. Detailed discussions on the estimates for the interacting case are
found in \cite{OhmV}. See also Corollary \ref{theorem exp tree decay}, which
is an extension of the tree--decay bounds\emph{\ }\cite[Section 4]{OhmI} to
the interacting case.

In \cite{OhmVI} the existence of macroscopic AC--con%
\-%
ductivity measures for interacting systems is derived from the 2nd law of
thermodynamics, explained in Section \ref{sect 2 loi}. The Lieb--Robinson
bound for multi--commutators of order 3 implies that it is always a L\'{e}vy
measure, see \cite[Theorems 7.1 and 5.2]{OhmVI}. We also derive below other
properties of the AC--con%
\-%
ductivity measures from Lieb--Robinson bounds for multi--commutators of\
higher orders. See Sections \ref{Sect Para}--\ref{Sect AC}. In particular,
we study their behavior at high frequencies (Theorems \ref{Thm para
regularite} and \ref{thm moment mu}): in contrast to the prediction of the
Drude (Lorentz--Sommerfeld) model, widely used in physics \cite%
{meanfreepath,Anderson-physics} to describe the phenomenon of electrical
conductivity, the conductivity measure stemming from short--range
interparticle interactions has to decay rapidly at high frequencies.

The proposed mathematical approach to the problem of deriving macroscopic
conductivity properties from the microscopic quantum dynamics of an infinite
system of particles also yield new physical insight, beyond classical
theories of conduction: a notion of current viscosity related to the
interplay of paramagnetic and diamagnetic currents, heat/entropy production
via different types of energy and current increments, existence of (AC--)
conductivity measures from the 2nd law and (possibly) as a spectral
(excitation) measure from current fluctuations are all examples of new
physical concepts derived in the course of the studies performed in \cite%
{OhmII,OhmIV,OhmV,OhmVI} and previously not discussed in the literature.

Note, however, that, by now, our results do not give explicit information on
the conductivity measure for concrete models (like the Hubbard model, for
instance). The latter belongs to \textquotedblleft hard
analysis\textquotedblright , by contrast with our results which are rather
on the side of the \textquotedblleft soft analysis\textquotedblright\
(similar to the difference between knowing the spectrum of a concrete
self--adjoint operator and knowing the spectral theorem). Moreover, our
approach does not directly provide a mathematical understanding from first
principles of Ohm's laws as a bulk property in the DC--regime, which is one
of the most important and difficult problems in mathematical physics for
more than one century. We believe, however, that our results can support
further rigorous developments towards a solution of such a difficult
problem: one could, for instance, try to show, for some class of models,
that the conductivity measure is absolutely continuous w.r.t. to the
Lebesgue measure and that its Radon--Nikodym derivative is continuous at low
frequencies, having a well-defined zero--frequency limit.

We thus present in the following some central results of \cite{OhmV,OhmVI},
with a few complementary studies, as an example of an important application
in mathematical physics of Lieb--Robinson bounds for multi--com%
\-%
mutators.

\subsection{Interacting Fermions in Disordered Media\label{Section Inter dis
media}}

\noindent \underline{(i) Kinetic part:} Let $\Delta _{\mathrm{d}}\in
\mathcal{B}(\ell ^{2}(\mathfrak{L}))$ be (up to a minus sign) the usual $d$%
--dimensional discrete Laplacian defined by%
\index{Kinetic term}%
\index{Discrete Laplacian}%
\begin{equation*}
\lbrack \Delta _{\mathrm{d}}(\psi )](x)\doteq 2d\psi (x)-\sum\limits_{z\in
\mathfrak{L},%
\text{ }|z|=1}\psi (x+z)\ ,\text{\qquad }x\in \mathfrak{L},\ \psi \in \ell
^{2}(\mathfrak{L})\ .
\end{equation*}%
To understand how such terms come about by starting from the usual Laplacian
in the continuum, see for instance \cite[Section II B]{Ne} which derives
effective models on lattices by using so--called Wannier functions in a band
subspace. This defines a short--range interaction%
\index{Interaction!example} $\Psi ^{\left( \mathrm{d}\right) }\in \mathcal{W}
$ by
\begin{equation*}
\Psi _{\Lambda }^{\left( \mathrm{d}\right) }\doteq \langle \mathfrak{e}%
_{x},\Delta _{\mathrm{d}}\mathfrak{e}_{y}\rangle _{\ell ^{2}\left( \mathfrak{%
L}\right) }a_{x}^{\ast }a_{y}+\left( 1-\delta _{x,y}\right) \langle
\mathfrak{e}_{y},\Delta _{\mathrm{d}}\mathfrak{e}_{x}\rangle _{\ell
^{2}\left( \mathfrak{L}\right) }a_{y}^{\ast }a_{x}\in \mathcal{U}^{+}\cap
\mathcal{U}_{\Lambda }
\end{equation*}%
whenever $\Lambda =\left\{ x,y\right\} $ for $x,y\in \mathfrak{L}$, and $%
\Psi _{\Lambda }^{\left( \mathrm{d}\right) }\doteq 0$ otherwise. Recall that
$\left\{ \mathfrak{e}_{x}\right\} _{x\in \Lambda }$ is the (canonical)
orthonormal basis of $\ell ^{2}(\Lambda )$ defined by (\ref{ONB}). \medskip

\noindent
\underline{(ii) Disordered media:}%
\index{Disordered media} Disorder in the crystal is modeled by a random
potential associated with a probability space $(\Omega ,\mathfrak{A}_{\Omega
},\mathfrak{a}_{\Omega })$ defined as follows: Let $\Omega \doteq \lbrack
-1,1]^{\mathfrak{L}}$. I.e., any element of $\Omega $ is a function on
lattice sites with values in $[-1,1]$.\ For $x\in \mathfrak{L}$, let $\Omega
_{x}$ be an arbitrary element of the Borel $\sigma $--algebra of the
interval $[-1,1]$ w.r.t. the usual metric topology. $\mathfrak{A}_{\Omega }$
is the $\sigma $--algebra generated by the cylinder sets $%
\prod\nolimits_{x\in \mathfrak{L}}\Omega _{x}$, where $\Omega _{x}=[-1,1]$
for all but finitely many $x\in \mathfrak{L}$. Then, $\mathfrak{a}_{\Omega }$
is an arbitrary \emph{ergodic}%
\index{Ergodicity} probability measure on the measurable space $(\Omega ,%
\mathfrak{A}_{\Omega })$. This means that the probability measure $\mathfrak{%
a}_{\Omega }$ is invariant under the action%
\begin{equation}
\omega \left( y\right) \mapsto \chi _{x}^{(\Omega )}\left( \omega \right)
\left( y\right) \doteq \omega \left( y+x\right) \ ,\qquad x,y\in \mathbb{Z}%
^{d}\ ,  \label{translation omega}
\end{equation}%
of the group $(\mathbb{Z}^{d},+)$ of lattice translations on $\Omega $ and,
for any $\mathcal{X}\in \mathfrak{A}_{\Omega }$ such that $\chi
_{x}^{(\Omega )}\left( \mathcal{X}\right) =\mathcal{X}$ for all $x\in
\mathbb{Z}^{d}$, one has $\mathfrak{a}_{\Omega }(\mathcal{X})\in \{0,1\}$.
We denote by $\mathbb{E}[\ \cdot \ ]$ the expectation value associated with $%
\mathfrak{a}_{\Omega }$.

Then, any realization $\omega \in \Omega $ and strength $\lambda \in \mathbb{%
R}_{0}^{+}$ of disorder is implemented by the potential%
\index{Potential!example} $\mathbf{V}^{(\omega )}$ defined by
\begin{equation}
\mathbf{V}_{\left\{ x\right\} }^{(\omega )}\doteq \lambda \omega \left(
x\right) a_{x}^{\ast }a_{x}\ ,\qquad x\in \mathfrak{L}\ .
\label{static potential random}
\end{equation}%
\medskip

\noindent
\underline{(iii) Interparticle interactions:}%
\index{Interparticle interactions} They are taken into account by choosing
some short--range interaction $\Psi ^{\mathrm{IP}}\in \mathcal{W}$ such that
$\Psi _{\Lambda }^{\mathrm{IP}}=0$ whenever $\Lambda =\left\{ x,y\right\} $
for $x,y\in \mathfrak{L}$, and
\begin{equation}
\sum\limits_{\Lambda \in \mathcal{P}_{f}(\mathfrak{L})}\left[ \Psi _{\Lambda
}^{\mathrm{IP}},a_{x}^{\ast }a_{x}\right] =0\ ,\qquad \Psi _{\Lambda +x}^{%
\mathrm{IP}}=\chi _{x}\left( \Psi _{\Lambda }^{\mathrm{IP}}\right) \ ,\qquad
\Lambda \in \mathcal{P}_{f}(\mathfrak{L}),\ x\in \mathfrak{L}\ .
\label{static potential0}
\end{equation}%
Here, the family $\{\chi _{x}\}_{x\in \mathfrak{L}}$ of $\ast $--automor%
\-%
phisms of $\mathcal{U}$ implements the action of the group $(\mathbb{Z}%
^{d},+)$ of lattice translations on the CAR $C^{\ast }$--algebra $\mathcal{U}
$, see (\ref{translation}). Observe that this class of interparticle
interactions includes all density--density interactions resulting from the
second quantization of two--body interactions defined via a real--valued and
summable function $v:[0,\infty )\rightarrow \mathbb{R}$ satisfying (\ref%
{sdsdsdsds}).\medskip

\noindent Then, by (i)--(iii), the full interaction
\begin{equation}
\Psi =\Psi ^{\left( \mathrm{d}\right) }+\Psi ^{\mathrm{IP}}\in \mathcal{W}
\label{full interaction}
\end{equation}%
and the potential $\mathbf{V}^{(\omega )}$ uniquely define an
infinite--volume dynamics corresponding to the $C_{0}$--group $\tau
^{(\omega )}\doteq \{\tau _{t}^{(\omega )}\}_{t\in {\mathbb{R}}}$ of $\ast $%
--auto%
\-%
morphisms with generator $\delta ^{(\omega )}$. See Theorem \ref{Theorem
Lieb-Robinson copy(3)}.\medskip

\noindent
\underline{(iv) Space--homogeneous electromagnetic fields:}
\index{Electromagnetic perturbation}Let $l\in \mathbb{R}^{+}$, $\eta \in
\mathbb{R}$, and the compactly supported function $\mathcal{A}\in
C_{0}^{\infty }(\mathbb{R};\mathbb{R}^{d})$ with $\mathcal{A}(t)\doteq 0$
for all $t\leq 0$. Set $E\left( t\right) \doteq -\partial _{t}\mathcal{A}(t)$
for all $t\in \mathbb{R}$. Then, the electric field at time $t\in \mathbb{R}$
equals $\eta E\left( t\right) $ inside the cubic box $\Lambda _{l}$ and $%
(0,0,\ldots ,0)$ outside. Up to negligible terms of order $\mathcal{O}%
(l^{d-1})$, this leads to a perturbation (of the generator of dynamics) of
the form (\ref{bounded pertubationbis0}), (\ref{bounded pertubationbis})
with complex--valued $\{\mathbf{w}_{x,y}\}_{x,y\in \mathfrak{L}}$ functions
of $(\eta ,t)\in \mathbb{R}^{2}$ defined by $\mathbf{w}_{x,x+z}(\eta ,t)=0$
for any $x,z\in \mathfrak{L}$ with $|z|>1$ while%
\begin{equation*}
\mathbf{w}_{x,x\pm e_{q}}(\eta ,t)\doteq \left( \exp \left( \mp i\eta
\int_{0}^{t}E_{q}\left( s\right) \ \mathrm{d}s\right) -1\right) \langle
\mathfrak{e}_{x},\Delta _{\mathrm{d}}\mathfrak{e}_{x\pm e_{q}}\rangle _{\ell
^{2}\left( \mathfrak{L}\right) }=%
\overline{\mathbf{w}_{x\pm e_{q},x}(\eta ,t)}
\end{equation*}%
for any $q\in \{1,\ldots ,d\}$. Here, $E(t)=(E_{1}(t),\ldots ,E_{d}(t))$ and
$\{e_{q}\}_{q=1}^{d}$ is the canonical orthonormal basis of the Euclidian
space $\mathbb{R}^{d}$. These functions clearly satisfy Conditions (\ref%
{condition1})--(\ref{condition2}) and (\ref{condition3}). Note that such
terms can be derived from the usual magnetic Laplacian (minimal coupling) in
the continuum, as explained in \cite[Section III, in particular Corollary 3.1%
]{Ne}. \medskip

\noindent Thus, the system of fermions in disordered medium, the interaction
of which is encoded by (\ref{full interaction}), is perturbed from $t=0$\
onwards by space--homogeneous electromagnetic fields, leading to a
well--defined family $\{\tilde{\tau}_{t,s}^{(\omega ,l,\eta )}\}_{s,t\in {%
\mathbb{R}}}$ of $\ast $--automor%
\-%
phisms, as explained in Theorem \ref{pertubed dynam thm}.

\subsection{Paramagnetic Conductivity}

\noindent \underline{(i) Paramagnetic currents:}%
\index{Paramagnetic current} For any pair $(x,y)\in \mathfrak{L}^{2}$, we
define the current observable by%
\begin{equation}
I_{(x,y)}\doteq i(a_{y}^{\ast }a_{x}-a_{x}^{\ast }a_{y})=I_{(x,y)}^{\ast
}\in \mathcal{U}_{0}\ .  \label{current observable}
\end{equation}%
It is seen as a current because it satisfies a discrete continuity equation.
See, e.g., \cite[Section 3.2]{OhmV}. For any $\mathcal{A}\in C_{0}^{\infty }(%
\mathbb{R};\mathbb{R}^{d})$, $l\in \mathbb{R}^{+}$, $\omega \in \Omega $, $%
\eta \in \mathbb{R}$ and $t\in \mathbb{R}_{0}^{+}$, these observables are
used to define a paramagnetic current increment density observable $\mathbb{J%
}_{\mathrm{p},l}^{(\omega )}\left( t,\eta \right) \in \mathcal{U}^{d}$:%
\begin{equation*}
\left\{ \mathbb{J}_{\mathrm{p},l}^{(\omega )}\left( t,\eta \right) \right\}
_{k}\doteq \left\vert \Lambda _{l}\right\vert ^{-1}\underset{x\in \Lambda
_{l}}{\sum }\left\{
\tilde{\tau}_{t,0}^{(\omega ,l,\eta )}\left( I_{\left( x+e_{k},x\right)
}\right) -\tau _{t}^{(\omega )}\left( I_{\left( x+e_{k},x\right) }\right)
\right\} \ .
\end{equation*}%
Compare with Equation (\ref{increment}).

Note that electric fields accelerate charged particles and induce so--called
diamagnetic currents, which correspond to the ballistic movement of
particles. In fact, as explained in \cite[Sections III and IV]{OhmII}, this
component of the total current creates a kind of \textquotedblleft wave
front\textquotedblright\ that destabilizes the whole system by changing its
state. The presence of diamagnetic currents leads then to the progressive
appearance of paramagnetic currents which are responsible for heat
production and the in--phase AC--conductivity of the system. Diamagnetic
currents are not relevant for the present purpose and are thus not defined
here. For more details, see \cite{OhmII,OhmV,OhmVI}. \medskip

\noindent \underline{(ii) Paramagnetic conductivity:}%
\index{Paramagnetic conductivity} We define the space--averaged paramagnetic
transport coefficient observable $\mathcal{C}_{\mathrm{p},l}^{(\omega )}\in
C^{1}(\mathbb{R};\mathcal{B}(\mathbb{R}^{d};\mathcal{U}^{d}))$, w.r.t. the
canonical orthonormal basis $\{e_{q}\}_{q=1}^{d}$ of the Euclidian space $%
\mathbb{R}^{d}$, by the corresponding matrix entries
\begin{equation}
\left\{ \mathcal{C}_{\mathrm{p},l}^{(\omega )}\left( t\right) \right\}
_{k,q}\doteq
\frac{1}{\left\vert \Lambda _{l}\right\vert }\underset{x,y\in \Lambda _{l}}{%
\sum }\int\nolimits_{0}^{t}i[\tau _{-s}^{(\omega )}(I_{\left(
y+e_{q},y\right) }),I_{\left( x+e_{k},x\right) }]\mathrm{d}s
\label{defininion para coeff observable}
\end{equation}%
for any $l\in \mathbb{R}^{+}$, $\omega \in \Omega $, $t\in \mathbb{R}$ and $%
k,q\in \{1,\ldots ,d\}$. \medskip

\noindent By (i)--(ii), if $\Psi ^{\mathrm{IP}}$ satisfies (\ref{(3.3) NS
generalized0}) with $\varsigma >2d$ (polynomial decay) then we infer from
Theorem \ref{Thm Heat production as power series copy(3)} that, for any $%
\mathcal{A}\in C_{0}^{\infty }(\mathbb{R};\mathbb{R}^{d})$,
\begin{equation}
\mathbb{J}_{\mathrm{p},l}^{(\omega )}\left( t,\eta \right) =\eta \mathbf{J}_{%
\mathrm{p},l}^{(\omega )}(t)+\mathcal{O}\left( \eta ^{2}\right) \ .
\label{linear response current1}
\end{equation}%
The correction terms of order $\mathcal{O}(\eta ^{2})$ are uniformly bounded
in $l\in \mathbb{R}^{+}$, $\omega \in \Omega $ and $\lambda ,t\in \mathbb{R}%
_{0}^{+}$. By explicit computations, one checks that
\begin{equation}
\mathbf{J}_{\mathrm{p},l}^{(\omega )}(t)=\int_{0}^{t}\tau _{t}^{(\omega
)}\left( \mathcal{C}_{\mathrm{p},l}^{(\omega )}\left( t-s\right) \right)
E\left( s\right) \mathrm{d}s  \label{linear response current2}
\end{equation}%
for any $\mathcal{A}\in C_{0}^{\infty }(\mathbb{R};\mathbb{R}^{d})$, $l\in
\mathbb{R}^{+}$, $\omega \in \Omega $ and $t\in \mathbb{R}_{0}^{+}$. The
latter is the paramagnetic \emph{linear response current}.%
\index{Paramagnetic current!linear response} For more details, see also \cite%
[Theorem 3.7]{OhmV}. Here, for any $\mathbf{D}\in \mathcal{B}(\mathbb{R}^{d};%
\mathcal{U}^{d})$, $\tau _{t}^{(\omega )}\left( \mathbf{D}\right) \in
\mathcal{B}(\mathbb{R}^{d};\mathcal{U}^{d})$ is, by definition, the linear
operator on $\mathbb{R}^{d}$ defined, w.r.t. the canonical orthonormal basis
$\{e_{q}\}_{q=1}^{d}$ of the Euclidian space $\mathbb{R}^{d}$, by the matrix
entries%
\begin{equation*}
\left\{ \tau _{t}^{(\omega )}\left( \mathbf{D}\right) \right\} _{k,q}\doteq
\tau _{t}^{(\omega )}\left( \left\{ \mathbf{D}\right\} _{k,q}\right) \
,\qquad k,q\in \{1,\ldots ,d\}\ .
\end{equation*}

\subsection{2nd law of Thermodynamics and Equilibrium States\label{sect 2
loi}}

\noindent
\underline{(i) States:}%
\index{States} $\rho \in \mathcal{U}^{\ast }$ is a state if $\rho \geq 0$,
that is, $\rho (B^{\ast }B)\geq 0$ for all $B\in \mathcal{U}$, and $\rho (%
\mathbf{1})=1$. States encode the statistical distribution of all physical
quantities associated with observables $B=B^{\ast }\in \mathcal{U}$. See
Section \ref{sect Algebraic Formulation of Quantum Mechanics}.

For any $\mathbf{D}\in \mathcal{B}(\mathbb{R}^{d};\mathcal{U}^{d})$, $\rho
\left( \mathbf{D}\right) \in \mathcal{B}(\mathbb{R}^{d})$ is, by definition,
the linear operator defined, w.r.t. the canonical orthonormal basis $%
\{e_{q}\}_{q=1}^{d}$ of $\mathbb{R}^{d}$, by
\begin{equation*}
\left\{ \rho \left( \mathbf{D}\right) \right\} _{k,q}\doteq \rho \left(
\left\{ \mathbf{D}\right\} _{k,q}\right) \ ,\qquad k,q\in \{1,\ldots ,d\}\ .
\end{equation*}%
\medskip \noindent
\underline{(ii) 2nd law of thermodynamics:}%
\index{2nd law} As explained in \cite{lieb-yngvasonPhysReport,lieb-yngvason}%
, different formulations of the same principle have been stated by Clausius,
Kelvin (and Planck), and Cara%
\-%
th\'{e}odory. Our study is based on the Kelvin--Planck statement while
avoiding the concept of \textquotedblleft cooling\textquotedblright\ \cite[%
p. 49]{lieb-yngvasonPhysReport}. It can be expressed as follows \cite[p. 276]%
{PW}: \medskip

\noindent \textit{Systems in the equilibrium are unable to perform
mechanical work in cyclic processes.}\medskip

\noindent
\underline{(iii) Passive states:}%
\index{Passivity}%
\index{States!passive} To define equilibrium states, the 2nd law, as
expressed in \cite{PW}, is pivotal because it leads to a clear mathematical
formulation of the Kelvin--Planck notion of equilibrium: For any strongly
continuous one--parameter group $\tau \equiv \{\tau _{t}\}_{t\in {\mathbb{R}}%
}$ of $\ast $--automorphisms of $\mathcal{U}$, one obtains a well--defined
strongly continuous two--parameter family $\{\tau _{t,t_{0}}^{(\mathbf{W}%
)}\}_{t\geq t_{0}}$ of $\ast $--automorphisms of $\mathcal{U}$\ by
perturbing the generator of dynamics with bounded time--dependent symmetric
derivations%
\begin{equation*}
B\mapsto i\left[ \mathbf{W}_{t},B\right] \ ,\qquad B\in \mathcal{U}\ ,\ t\in
\mathbb{R}\ ,
\end{equation*}%
for any arbitrary
\index{Cyclic process}\emph{cyclic process} $\{\mathbf{W}_{t}\}_{t\geq
t_{0}} $ of time length $T\geq 0$, that is, a differentiable family $\{%
\mathbf{W}_{t}\}_{t\geq t_{0}}\subset \mathcal{U}$ of self--adjoint elements
of $\mathcal{X}$ such that $\mathbf{W}_{t}=0$ for all real times $t\notin %
\left[ t_{0},T+t_{0}\right] $. Then, a state $\varrho \in \mathcal{U}^{\ast
} $ is \emph{passive} (cf. \cite[Definition 1.1]{PW}) iff the work%
\begin{equation*}
\int_{t_{0}}^{t}\varrho \circ \tau _{t,t_{0}}^{(\mathbf{W})}\left( \partial
_{t}\mathbf{W}_{t}\right) \mathrm{d}t
\end{equation*}%
performed on the system is non--negative for all cyclic processes $\{\mathbf{%
W}_{t}\}_{t\geq t_{0}}$ of any time length $T\geq 0$. By \cite[Theorem 1.1]%
{PW}, such states are invariant w.r.t. the unperturbed dynamics: $\varrho
=\varrho \circ \tau _{t}$\ for any $t\in {\mathbb{R}}$.

If $\tau =\tau ^{(\omega )}$ with $\omega \in \Omega $ then, as explained in
\cite[Section 2.6]{OhmV}, at least one passive state $\varrho ^{(\omega )}$
exists. It represents an equilibrium state of the system (in a broad sense),
the mathematical definition of which encodes the 2nd law.\medskip

\noindent
\underline{(iv) Random invariant passive states:}%
\index{States!random invariant passive} We impose two natural conditions on
the map $\omega \mapsto \varrho ^{(\omega )}$ from the set $\Omega $ to the
dual space $\mathcal{U}^{\ast }$:

\begin{itemize}
\item Translation invariance. Using definitions (\ref{translation}) and (\ref%
{translation omega}), we assume that%
\begin{equation}
\varrho ^{(\chi _{x}^{(\Omega )}(\omega ))}=\varrho ^{(\omega )}\circ \chi
_{x}\ ,\qquad x\in \mathfrak{L}=\mathbb{Z}^{d}\ .
\label{translation invariance random state}
\end{equation}

\item Measurability. The map $\omega \mapsto \varrho ^{(\omega )}$ is
measurable w.r.t. to the $\sigma $--algebra $\mathfrak{A}_{\Omega }$ on $%
\Omega $ and the Borel $\sigma $--algebra $\mathfrak{A}_{\mathcal{U}^{\ast
}} $ of $\mathcal{U}^{\ast }$ generated by the weak$^{\ast }$--topology.
Note that a similar assumption is also used to define equilibrium for
classical systems in disordered media, see, e.g., \cite{bovier}.
\end{itemize}

\noindent A map satisfying such properties is named here \emph{a random
invariant state} \cite[Definition 3.1]{OhmVI}. Such maps always exist in the
one--dimension case if the norm $\left\Vert \Psi ^{\mathrm{IP}}\right\Vert _{%
\mathcal{W}}$ of the interparticle interaction is finite. The same is true
in any dimension if the inverse temperature $\beta \in \mathbb{R}^{+}$ is
small enough. This is a consequece of the uniqueness of KMS, which is
implied by the mentioned conditions. By using methods of constructive
quantum field theory, one can also verify the existence of such random
invariant passive states $\varrho ^{(\omega )}$, $\omega \in \Omega $, at
arbitrary dimension and any fixed $\beta \in \mathbb{R}^{+}$, if the
interparticle interaction $\left\Vert \Psi ^{\mathrm{IP}}\right\Vert _{%
\mathcal{W}}$ is small enough and (\ref{static potential0}) holds. See, for
instance, \cite[Theorem 2.1]{FU} (together with \cite[Theorem 1.4]{PW}) for
the small $\beta $ case in quantum spin systems. See also \cite[Section 3.3]%
{OhmVI} for further discussions on this topic.

\subsection{Macroscopic Paramagnetic Conductivity\label{Sect Para}}

For any short--range interaction $\Psi ^{\mathrm{IP}}\in \mathcal{W}$, the
limit%
\index{Paramagnetic conductivity!macroscopic}
\begin{equation}
\mathbf{\Xi }_{\mathrm{p}}\left( t\right) \doteq \underset{l\rightarrow
\infty }{\lim }\mathbb{E}\left[ \varrho ^{(\omega )}(\mathcal{C}_{\mathrm{p}%
,l}^{(\omega )}\left( t\right) )\right] \in \mathcal{B}(\mathbb{R}^{d})
\label{paramagnetic conductivity}
\end{equation}%
exists and is uniform for $t$ on compacta. To see this, use the usual
Lieb--Robinson bounds (Theorem \ref{Theorem Lieb-Robinson copy(3)} (iv)) to
estimate (\ref{defininion para coeff observable}) in the limit $l\rightarrow
\infty $. Here, for any measurable $\mathbf{D}^{(\omega )}\in \mathcal{B}(%
\mathbb{R}^{d})$, the expectation value $\mathbb{E}[\mathbf{D}^{(\omega
)}]\in \mathcal{B}(\mathbb{R}^{d})$ (associated with $\mathfrak{a}_{\Omega }$%
) is defined, w.r.t. the canonical orthonormal basis $\{e_{q}\}_{q=1}^{d}$
of $\mathbb{R}^{d}$, by the matrix entries
\begin{equation*}
\left\{ \mathbb{E}%
\big[%
\mathbf{D}^{(\omega )}%
\big]%
\right\} _{k,q}\doteq \mathbb{E}%
\big[%
\left\{ \mathbf{D}\right\} _{k,q}%
\big]%
\ ,\qquad k,q\in \{1,\ldots ,d\}\ .
\end{equation*}%
The function $\mathbf{\Xi }_{\mathrm{p}}\in C^{1}(\mathbb{R};\mathcal{B}(%
\mathbb{R}^{d}))$ can be directly related to a linear response current, as
suggested by (\ref{linear response current1})--(\ref{linear response
current2}). See \cite[Theorem 4.2 (p)]{OhmVI} for more details. [If one does
not take expectation values of currents, one can also show that the limit $%
l\rightarrow \infty $ of $\varrho ^{(\omega )}(\mathbf{J}_{\mathrm{p}%
,l}^{(\omega )})$ almost everywhere exists and equals the expectation value,
in the same limit, by using the Akcoglu--Krengel ergodic theorem, see \cite%
{OhmIII,OhmVI}.]

\cite[Theorem 7.1]{OhmVI} asserts that%
\begin{equation*}
\mathbf{\Xi }_{\mathrm{p}}\in C^{2}(\mathbb{R};\mathcal{B}(\mathbb{R}^{d}))%
\text{ }
\end{equation*}%
if $\Psi ^{\mathrm{IP}}\in \mathcal{W}$ and (\ref{(3.3) NS generalized0})
holds with $\varsigma >2d$. Now, we give a stronger version of this result
which is an application of Lieb--Robinson bounds for multi--com%
\-%
mutators (Theorems \ref{Theorem Lieb-Robinson copy(1)}--\ref{theorem exp
tree decay copy(1)}) of high orders. This new result on the regularity of
the function $\mathbf{\Xi }_{\mathrm{p}}$ of time has important consequences
on the asymptotics of AC--Conductivity measures at high frequencies, see
Theorem \ref{thm moment mu}.

\begin{satz}[Regularity of the paramagnetic conductivity]
\label{Thm para regularite}\mbox{
}\newline
Let $\lambda \in \mathbb{R}_{0}^{+}$ and assume that the map $\omega \mapsto
\varrho ^{(\omega )}$ is a random invariant passive state and $\Psi ^{%
\mathrm{IP}}\in \mathcal{W}$ satisfies (\ref{static potential0}).%
\index{Paramagnetic conductivity!macroscopic} \newline
\emph{(i)} Polynomial decay: Assume $\Psi ^{\mathrm{IP}}$ satisfies (\ref%
{(3.3) NS generalized0}). Then, for any $m\in \mathbb{N}$ satisfying $%
d(m+1)<\varsigma $, $\mathbf{\Xi }_{\mathrm{p}}\in C^{m+1}(\mathbb{R};%
\mathcal{B}(\mathbb{R}^{d}))$ and, uniformly for $t$ on compacta,
\begin{equation}
\partial _{t}^{m+1}\mathbf{\Xi }_{\mathrm{p}}\left( t\right) =\underset{%
l\rightarrow \infty }{\lim }\partial _{t}^{m+1}\mathbb{E}\left[ \varrho
^{(\omega )}(\mathcal{C}_{\mathrm{p},l}^{(\omega )}\left( t\right) )\right]
\ .  \label{regularity para1}
\end{equation}%
\emph{(ii)} Exponential decay: Assume $\Psi ^{\mathrm{IP}}$ satisfies (\ref%
{(3.3) NS generalized}). Then, for all $m\in \mathbb{N}$, $\mathbf{\Xi }_{%
\mathrm{p}}\in C^{\infty }(\mathbb{R};\mathcal{B}(\mathbb{R}^{d}))$ and (\ref%
{regularity para1}) holds true with the limit being uniform for $t$ on
compacta.
\end{satz}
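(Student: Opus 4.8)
The plan is to reduce the whole statement to the Lieb--Robinson bounds for multi--commutators of Theorems \ref{Theorem Lieb-Robinson copy(1)}--\ref{theorem exp tree decay copy(1)}, applied to an explicit formula for the time--derivatives of the finite--volume observable $\mathcal{C}_{\mathrm{p},l}^{(\omega)}$. First I would differentiate (\ref{defininion para coeff observable}) directly: the first derivative cancels the time integral, and since $I_{(y+e_{q},y)}\in\mathcal{U}_{0}$ lies in the domain of every power of $\delta^{(\omega)}$ (a consequence of the bounded convolution constant (\ref{(3.2) NS}), as in the multi--commutator estimates, cf.\ the closing remark of Theorem \ref{thm non auto copy(1)}) and $\partial_{t}\tau_{-t}^{(\omega)}=-\tau_{-t}^{(\omega)}\circ\delta^{(\omega)}$, one gets, for the finite box $\Lambda_{l}$, $\{\partial_{t}^{m+1}\mathcal{C}_{\mathrm{p},l}^{(\omega)}(t)\}_{k,q}=(-1)^{m}i\,|\Lambda_{l}|^{-1}\sum_{x,y\in\Lambda_{l}}[\tau_{-t}^{(\omega)}((\delta^{(\omega)})^{m}(I_{(y+e_{q},y)})),I_{(x+e_{k},x)}]$. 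Next, since $\omega\mapsto\varrho^{(\omega)}$ is a random passive, hence $\tau^{(\omega)}$--invariant, state, I would use $\varrho^{(\omega)}\circ\tau_{t}^{(\omega)}=\varrho^{(\omega)}$ to move the time weight onto a single slot and then expand $(\delta^{(\omega)})^{m}$ through the absolutely convergent series $\delta^{(\omega)}(\,\cdot\,)=i\sum_{\mathcal{Z}}[\Psi_{\mathcal{Z}},\,\cdot\,]+i\sum_{z}[\mathbf{V}^{(\omega)}_{\{z\}},\,\cdot\,]$, the full interaction being $\Psi=\Psi^{(\mathrm{d})}+\Psi^{\mathrm{IP}}\in\mathcal{W}$ (cf.\ (\ref{full interaction})). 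Using the recursion (\ref{multi2-0}) and $[A,B]=-[B,A]$, each resulting term becomes a canonical multi--commutator $[\tau_{t}^{(\omega)}\circ\chi_{x}(I_{(e_{k},0)}),H_{\mathcal{Z}_{m}},\ldots,H_{\mathcal{Z}_{1}},\chi_{y}(I_{(e_{q},0)})]^{(m+2)}$, with $H_{\mathcal{Z}_{j}}$ either a translate of a local piece $\Psi_{\mathcal{Z}_{j}}$ or a potential term $\mathbf{V}^{(\omega)}_{\{z_{j}\}}$ --- exactly the structure to which Theorem \ref{Theorem Lieb-Robinson copy(1)} applies (the innermost slot being time--free, all intermediate slots being even local elements, and $\Vert I_{(\cdot,\cdot)}\Vert_{\mathcal{U}}$, $\Vert\mathbf{V}^{(\omega)}_{\{z\}}\Vert_{\mathcal{U}}\le\lambda$ uniformly bounded). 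This algebraic reorganization is the analogue of the one carried out in the proof of Theorem \ref{thm non auto} (see (\ref{estimate poly})) and of Theorem \ref{thm non auto copy(1)}.

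The second step is the uniform bound. Inserting Theorems \ref{Theorem Lieb-Robinson copy(1)}--\ref{theorem exp tree decay copy(1)} and summing the cluster labels $\mathcal{Z}_{1},\ldots,\mathcal{Z}_{m},z_{j}$ against the short--range norms $\Vert\Psi^{(\mathrm{d})}\Vert_{\mathcal{W}}$, $\Vert\Psi^{\mathrm{IP}}\Vert_{\mathcal{W}}$ and $\sup_{z}\Vert\mathbf{V}^{(\omega)}_{\{z\}}\Vert_{\mathcal{U}}\le\lambda$, one is left with a sum over $y\in\mathfrak{L}$ (at fixed $x$) of a kernel decaying in $|x-y|$. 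In the polynomial case the tree--decay estimate of Theorem \ref{theorem exp tree decay copy(1)}(i) makes this kernel summable precisely as long as the number of lattice sites involved in the clusters is controlled by $\varsigma$, which is the hypothesis $d(m+1)<\varsigma$; in the exponential case Theorem \ref{theorem exp tree decay copy(1)}(ii) together with (\ref{(3.3) NS generalized}) gives summability for every $m\in\mathbb{N}$. Hence $|\Lambda_{l}|^{-1}\sum_{x,y\in\Lambda_{l}}(\cdots)=O(|\Lambda_{l}|)/|\Lambda_{l}|=O(1)$, uniformly in $l\in\mathbb{R}^{+}$, in $\omega\in\Omega$, in $\lambda\in\mathbb{R}_{0}^{+}$, and for $t$ on compacta, because the Lieb--Robinson constants depend only on $\Vert\Psi\Vert_{\mathcal{W}}$, $\mathbf{D}$ and $\mathbf{F}$ and not on the disorder.

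The third step upgrades this uniform boundedness to convergence and identifies the limit. Translation invariance of the random state, $\varrho^{(\chi_{x}^{(\Omega)}(\omega))}=\varrho^{(\omega)}\circ\chi_{x}$ (\ref{translation invariance random state}), invariance of $\mathfrak{a}_{\Omega}$ under (\ref{translation omega}), and translation invariance of $\Psi^{\mathrm{IP}}$ via (\ref{static potential0}), together show that $\mathbb{E}[\varrho^{(\omega)}([\tau_{-t}^{(\omega)}((\delta^{(\omega)})^{m}(I_{(y+e_{q},y)})),I_{(x+e_{k},x)}])]$ depends on $x,y$ only through $y-x$; combined with the summability of the second step and Lebesgue's dominated convergence theorem (the bounds being $\omega$--uniform), the elementary fact that $|\Lambda_{l}|^{-1}\sum_{x,y\in\Lambda_{l}}f(y-x)\to\sum_{z\in\mathfrak{L}}f(z)$ for summable $f$ yields that $\partial_{t}^{m+1}\mathbb{E}[\varrho^{(\omega)}(\mathcal{C}_{\mathrm{p},l}^{(\omega)}(t))]$ converges, uniformly for $t$ on compacta, as $l\to\infty$. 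I would then invoke iteratively the standard calculus lemma --- if $f_{l}\to f$ pointwise and $f_{l}'\to g$ uniformly on compacta then $f\in C^{1}$ with $f'=g$ --- starting from the already--established convergence $\mathbb{E}[\varrho^{(\omega)}(\mathcal{C}_{\mathrm{p},l}^{(\omega)}(t))]\to\mathbf{\Xi}_{\mathrm{p}}(t)$ uniformly on compacta (cf.\ (\ref{paramagnetic conductivity})). This gives $\mathbf{\Xi}_{\mathrm{p}}\in C^{m+1}(\mathbb{R};\mathcal{B}(\mathbb{R}^{d}))$ and (\ref{regularity para1}) in case (i), and $\mathbf{\Xi}_{\mathrm{p}}\in C^{\infty}(\mathbb{R};\mathcal{B}(\mathbb{R}^{d}))$ with the same identity for all $m$ in case (ii).

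The main obstacle I anticipate is the first two steps taken together: one must verify carefully that $\mathcal{U}_{0}$ lies in the domain of every power of $\delta^{(\omega)}$ and that the reorganization into canonical multi--commutators $[\,\cdot\,,\ldots,\cdot\,]^{(m+2)}$ is carried out so that the hypotheses of Theorems \ref{Theorem Lieb-Robinson copy(1)}--\ref{theorem exp tree decay copy(1)} are literally met (the innermost slot time--free, all intermediate slots even local elements), and then check that the ensuing site--sums over the cluster positions converge under exactly the stated threshold $d(m+1)<\varsigma$ (polynomial) or unconditionally in $m$ (exponential). All the rest --- the $\omega$-- and $\lambda$--uniformity, the extraction of the limit through translation invariance, and the smoothness bootstrap --- is bookkeeping once this core estimate is in place; in particular, the case $m=1$ with $\varsigma>2d$ reproduces \cite[Theorem 7.1]{OhmVI}.
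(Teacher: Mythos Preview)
Your proposal is correct and follows essentially the same route as the paper: differentiate (\ref{defininion para coeff observable}), expand $(\delta^{(\omega)})^{m}$ as iterated commutators with local pieces of $\Psi$ and $\mathbf{V}^{(\omega)}$, apply Theorems \ref{Theorem Lieb-Robinson copy(1)}--\ref{theorem exp tree decay copy(1)} to the resulting multi--commutators of order $m+2$, use passivity and translation invariance to reduce the double sum to a single sum over $y-x$ with the weight $\xi_{l}(y)=|\Lambda_{l}|^{-1}\sum_{x\in\Lambda_{l}}\mathbf{1}_{\{y\in\Lambda_{l}-x\}}\to 1$, and conclude by dominated convergence. The paper isolates your ``main obstacle'' --- namely that $\mathcal{U}_{0}\subset\mathrm{Dom}(\delta^{m})$ and that $\sum_{y\in\mathfrak{L}}\sup_{t,x}\Vert[\tau_{t}\circ\chi_{x}(B_{1}),\delta^{m}\circ\chi_{y}(B_{0})]\Vert_{\mathcal{U}}<\infty$ under the threshold $d(m+1)<\varsigma$ (resp.\ for all $m$ in the exponential case) --- as a separate Lemma \ref{Lemma utile para}, but its proof is exactly the reorganization into canonical multi--commutators and the tree--decay estimate you sketch.
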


\begin{bemerkung}[Fermion systems with random Laplacians]
\label{Fermion systems with random Laplacians}\mbox{
}\newline
\index{Discrete Laplacian!random}The same assertion holds for the random
models treated in \cite{OhmVI}, i.e., for fermions on the lattice with
short--range and translation invariant (cf. (\ref{static potential0}))
interaction $\Psi ^{\mathrm{IP}}\in \mathcal{W}$, random potentials (cf. (%
\ref{static potential random})) and, additionally, random next neighbor
hopping amplitudes. [So, $\Delta _{\mathrm{d}}$ is replaced in \cite{OhmVI}
with a random Laplacian $\Delta _{\omega ,\vartheta }$.] Similar to what is
done here, disorder is defined in \cite{OhmVI} via ergodic distributions of
random potentials and hopping amplitudes.
\end{bemerkung}

\noindent The proof of this statement is a consequence of the following
general lemma:

\begin{lemma}
\label{Lemma utile para}\mbox{ }\newline
Let $\Psi \in \mathcal{W}$ and $\mathbf{V}$ be any potential such that
\begin{equation}
\sup_{x\in \mathfrak{L}}\left\Vert \mathbf{V}_{\left\{ x\right\}
}\right\Vert _{\mathcal{U}}<\infty \ .  \label{condition bounded}
\end{equation}%
Take $T\in \mathbb{R}_{0}^{+}$ and $B_{0},B_{1}\in \mathcal{U}_{0}$. \newline
\emph{(i)} Polynomial decay: Assume (\ref{(3.3) NS generalized0}). Then, for
any $m\in \mathbb{N}$ satisfying $dm<\varsigma $, $\mathcal{U}_{0}\subseteq
\mathrm{Dom}(\delta ^{m})$. Moreover, if $d(m+1)<\varsigma $,%
\begin{equation}
\sum_{y\in \mathfrak{L}}\sup_{t\in \left[ -T,T\right] }\sup_{x\in \mathfrak{L%
}}\left\Vert \left[ \tau _{t}\circ \chi _{x}(B_{1}),\delta ^{m}\circ \chi
_{y}\left( B_{0}\right) )\right] \right\Vert _{\mathcal{U}}<\infty \ .
\label{ineq sup macro cond0}
\end{equation}%
\emph{(ii)} Exponential decay: Assume (\ref{(3.3) NS generalized}). Then,
\begin{equation*}
\mathcal{U}_{0}\subseteq \underset{m\in \mathbb{N}}{\bigcap }\mathrm{Dom}%
\left( \delta ^{m}\right) \subset \mathcal{U}
\end{equation*}%
and (\ref{ineq sup macro cond0}) holds true for all $m\in \mathbb{N}$.
\end{lemma}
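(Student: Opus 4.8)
The plan is to write $\delta^{m}(B)$, for any local $B\in\mathcal{U}_0$, as an absolutely convergent series of iterated commutators and to read off both claims from the summability of this series together with the Lieb--Robinson bounds of Theorem \ref{Theorem Lieb-Robinson copy(3)}~(iv). It is convenient to treat the potential $\mathbf{V}$ as a one--site interaction, so that for $B\in\mathcal{U}_{\Lambda_{m_{0}}}$ one has, formally,
\begin{equation*}
\delta^{m}(B)=i^{m}\sum_{\mathcal{Z}_{1},\ldots,\mathcal{Z}_{m}}\left[\Psi^{\sharp}_{\mathcal{Z}_{m}},\ldots,\left[\Psi^{\sharp}_{\mathcal{Z}_{1}},B\right]\ldots\right]^{(m+1)},
\end{equation*}
where each $\Psi^{\sharp}_{\mathcal{Z}_{j}}$ is a term $\Psi_{\mathcal{Z}_{j}}$ of the interaction or an on--site term $\mathbf{V}_{\{x_{j}\}}$, the latter being handled uniformly thanks to the hypothesis (\ref{condition bounded}). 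Decomposing $B_{0}$ and $B_{1}$ into even and odd parts under $\sigma_{\pi}$ (both still in $\mathcal{U}_{0}$) and using that $\delta$, $\chi_{x}$ and $\tau_{t}$ all preserve this $\mathbb{Z}_{2}$--grading, it suffices to prove the bounds for $B_{0}\in\mathcal{U}^{+}$; then $\delta^{m}\circ\chi_{y}(B_{0})$ is even, and in $[\tau_{t}\circ\chi_{x}(B_{1}),\delta^{m}\circ\chi_{y}(B_{0})]$ one applies the isometric automorphism $\tau_{-t}$ to move the even factor into the evolving slot before invoking Theorem \ref{Theorem Lieb-Robinson copy(3)}~(iv).

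For $\mathcal{U}_{0}\subseteq\mathrm{Dom}(\delta^{m})$ I would induct on $m$, using that $\delta$ is closed (Lemma \ref{lemma norm closure}). Replacing $\delta$ by the bounded finite--volume derivations $\delta^{(L)}$ of (\ref{dynamic series}), the element $\delta^{(L),m}(B)$ is a finite sum of iterated commutators; one shows that $\{\delta^{(L),m}(B)\}_{L}$ is Cauchy in $\mathcal{U}$ by the combinatorial scheme already used for Theorem \ref{Theorem Lieb-Robinson copy(1)} and Proposition \ref{Coro estimate super importante}. Namely, a nonzero iterated commutator forces $\{\mathrm{supp}(B),\mathcal{Z}_{1},\ldots,\mathcal{Z}_{m}\}$ to be a connected cluster, so its indicator is dominated by $\sum_{T\in\mathcal{T}_{m+1}}\varkappa_{T}$; for fixed $T$ the configuration sum factorises over the edges of $T$ via the identity (\ref{equality pivotal}), each edge carrying an $\mathbf{F}$--factor raised to the reciprocal of the larger endpoint degree; these sums are finite by the short--range norm (\ref{iteration0}) together with (\ref{(3.3) NS generalized0}) or (\ref{(3.3) NS generalized}); and the remaining sum over $T\in\mathcal{T}_{m+1}$ converges because high--degree trees are rare, quantitatively Lemmata \ref{Lemma fix degrees}--\ref{Lemma estimate sums} and $\mathfrak{d}_{T}!\le m!$. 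Tracking the decay budget, the $m$ edges of $T$ are spread over $m+1$ vertices, which in the polynomial case forces $dm<\varsigma$ while in the exponential case no restriction appears; the tails of these sums over sets touching $\Lambda_{L_{2}}\setminus\Lambda_{L_{1}}$ vanish as $L_{1}\to\infty$ by summability of $\mathbf{F}$. Closedness of $\delta$, applied inductively to the partial sums (which lie in $\mathcal{U}_{0}$), then gives $B\in\mathrm{Dom}(\delta^{m})$ with $\delta^{m}(B)$ the norm limit of the above series, and in particular $\|\delta^{m}\circ\chi_{y}(B_{0})\|_{\mathcal{U}}$ is bounded uniformly in $y\in\mathfrak{L}$.

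The bound (\ref{ineq sup macro cond0}) then follows by feeding this expansion into Lieb--Robinson bounds. Writing $\delta^{m}\circ\chi_{y}(B_{0})$ as above into a series of even local elements $A$, each supported in some $\Lambda_{n}+y$ and weighted by the decaying factors just described, Theorem \ref{Theorem Lieb-Robinson copy(3)}~(iv) (after the $\tau_{-t}$ step, and for $y$ far from $x$ so the supports are disjoint) yields
\begin{equation*}
\left\Vert[\tau_{t}\circ\chi_{x}(B_{1}),A]\right\Vert_{\mathcal{U}}\le D\,\left\Vert A\right\Vert_{\mathcal{U}}\sum_{x'\in\partial_{\Psi}(\Lambda_{m_{1}}+x)}\sum_{y'\in\mathrm{supp}(A)}\mathbf{F}\left(\left\vert x'-y'\right\vert\right),
\end{equation*}
with $D$ depending only on $\|\Psi\|_{\mathcal{W}}$ and $T$, uniformly for $t\in[-T,T]$ and $x\in\mathfrak{L}$; the finitely many $y$ for which the supports overlap are controlled by the crude bound $2\|B_{1}\|_{\mathcal{U}}\|\delta^{m}\circ\chi_{y}(B_{0})\|_{\mathcal{U}}$, uniform in $y$ by the previous paragraph. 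Summing over $y\in\mathfrak{L}$ and over the terms $A$, the extra $\mathbf{F}$--factor produced by the commutator with $B_{1}$ supplies one additional power of spatial decay, so absolute convergence holds provided $d(m+1)<\varsigma$ in the polynomial case and unconditionally in the exponential case. This is (\ref{ineq sup macro cond0}).

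The main obstacle is the combinatorial control of the $m$--fold iterated commutator: one must verify that, once the fractional $\mathbf{F}$--powers are distributed over the edges of an arbitrary tree on $m+1$ vertices, the resulting configuration sums remain summable and their total over $T\in\mathcal{T}_{m+1}$ still beats the growth of $|\mathcal{T}_{m+1}|$ — precisely the ``rare high--degree vertex'' mechanism of Proposition \ref{Coro estimate super importante}. A secondary point needing care is the fermionic $\mathbb{Z}_{2}$--grading, since the quoted Lieb--Robinson bounds apply to ordinary commutators with an even element, which is why one reduces to $B_{0}\in\mathcal{U}^{+}$ and exploits parity preservation by $\delta$, $\chi_{x}$ and $\tau_{t}$.
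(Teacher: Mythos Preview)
Your proposal is correct and rests on the same tree--combinatorial mechanism, but it is organized differently from the paper. The paper does not expand $\delta^{m}\circ\chi_{y}(B_{0})$ into local pieces and then apply the \emph{ordinary} Lieb--Robinson bound of Theorem~\ref{Theorem Lieb-Robinson copy(3)}~(iv) to each piece. Instead, it writes
\[
\bigl[\tau_{t}\circ\chi_{x}(B_{1}),\,\delta^{m}\circ\chi_{y}(B_{0})\bigr]
=\sum_{\mathcal{Z}_{1},\ldots,\mathcal{Z}_{m}}\bigl[\tau_{t}\circ\chi_{x}(B_{1}),\Psi_{\mathcal{Z}_{m}},\ldots,\Psi_{\mathcal{Z}_{1}},\chi_{y}(B_{0})\bigr]^{(m+2)}
\]
(using the partition $\mathcal{P}_{f}(\mathfrak{L})=\bigcup_{x,m}\mathcal{D}(x,m)$ of~(\ref{set eq}) to organise the $\mathcal{Z}_{j}$'s) and then invokes the \emph{multi}--commutator Lieb--Robinson bounds of Theorems~\ref{Theorem Lieb-Robinson copy(1)}--\ref{theorem exp tree decay copy(1)} directly on this $(m+2)$--fold object, together with~(\ref{ki trivial estimate1}) and~(\ref{assumption boundedness2}). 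The tree sum, the fractional powers $\varsigma/\max\{\mathfrak{d}_{T}(j),\mathfrak{d}_{T}(l)\}$, and the threshold $d(m+1)<\varsigma$ then come pre--packaged from those theorems, so the paper's argument is only a few lines. The domain claim $\mathcal{U}_{0}\subseteq\mathrm{Dom}(\delta^{m})$ is treated by the same device with one fewer commutator (order $m+1$), hence the weaker threshold $dm<\varsigma$; the paper does not set up an explicit induction via closedness and Cauchy nets $\{\delta^{(L),m}(B)\}_{L}$ as you do.

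What your route buys is conceptual separation (first control $\delta^{m}$, then a single LR estimate) and an explicit handling of the fermionic parity, which the paper glosses over; what it costs is that you essentially rederive the content of Theorems~\ref{Theorem Lieb-Robinson copy(1)}--\ref{theorem exp tree decay copy(1)} by hand, and your second step needs some care because the ordinary LR bound on $[\tau_{-t}(A),\chi_{x}(B_{1})]$ produces a sum over $\mathrm{supp}(A)$, which must be re--absorbed into the tree sum (this works, and recovers exactly the tree on $m+2$ vertices, but is bookkeeping the paper avoids). One small caveat on your parity reduction: saying ``it suffices to prove the bounds for $B_{0}\in\mathcal{U}^{+}$'' does not by itself dispose of the case where both $B_{0}$ and $B_{1}$ are odd; you also need the symmetric reduction to $B_{1}\in\mathcal{U}^{+}$ (then Theorem~\ref{Theorem Lieb-Robinson copy(3)}~(iv) applies without moving $\tau_{-t}$). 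The paper's direct application of Theorem~\ref{Theorem Lieb-Robinson copy(1)} likewise tacitly uses that the outermost entry $B_{1}$ is even, which is the case in the intended application to current observables.
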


\begin{proof}
(i) Because of (\ref{condition bounded}), assume w.l.o.g. that $\mathbf{V}=0$%
. Take $t\in \mathbb{R}$, $n_{0},n_{1}\in \mathbb{N}$ and local elements $%
B_{0}\in \mathcal{U}_{\Lambda _{n_{0}}}$ and $B_{1}\in \mathcal{U}_{\Lambda
_{n_{1}}}$. Then, we infer from Theorem \ref{Theorem Lieb-Robinson copy(3)}
(ii) and (\ref{definition D1})--(\ref{set eq}) that, for any $x,y\in
\mathfrak{L}$ and $n\in \mathbb{N}$,
\begin{eqnarray}
&&\left\Vert \left[ \tau _{t}\circ \chi _{x}(B_{1}),\delta ^{n}\circ \chi
_{y}\left( B_{0}\right) )\right] \right\Vert _{\mathcal{U}}  \notag \\
&\leq &\sum_{x_{n}\in \mathfrak{L}}\sum_{m_{n}\in \mathbb{N}%
_{0}}\sum\limits_{\mathcal{Z}_{n}\in \mathcal{D}(x_{n},m_{n})}\cdots
\sum_{x_{1}\in \mathfrak{L}}\sum_{m_{1}\in \mathbb{N}_{0}}\sum\limits_{%
\mathcal{Z}_{1}\in \mathcal{D}(x_{1},m_{1})}  \label{ineq sup macro cond1} \\
&&\qquad \qquad \left\Vert \left[ \tau _{t}\circ \chi _{x}(B_{1}),\Psi _{%
\mathcal{Z}_{n}},\ldots ,\Psi _{\mathcal{Z}_{1}},\chi _{y}\left(
B_{0}\right) \right] ^{(n+2)}\right\Vert _{\mathcal{U}}\ .  \notag
\end{eqnarray}%
Therefore, we can directly use Lieb--Robinson bounds for multi--com%
\-%
mutators of order $n+2$ to bound (\ref{ineq sup macro cond1}): We combine
Theorems \ref{Theorem Lieb-Robinson copy(1)} and \ref{theorem exp tree decay
copy(1)} (i) with Equation (\ref{ki trivial estimate1}) to deduce from (\ref%
{ineq sup macro cond1}) that, for any $x,y\in \mathfrak{L}$ and $n\in
\mathbb{N}$,%
\begin{eqnarray}
&&\left\Vert \left[ \tau _{t}\circ \chi _{x}(B_{1}),\delta ^{n}\circ \chi
_{y}\left( B_{0}\right) )\right] \right\Vert _{\mathcal{U}}  \notag \\
&\leq &2^{n+1}d^{%
\frac{\varsigma \left( n+1\right) }{2}}(1+n_{0})^{\varsigma }\left\Vert
B_{1}\right\Vert _{\mathcal{U}}\left\Vert B_{0}\right\Vert _{\mathcal{U}}
\label{ineq sup macro cond2} \\
&&\times \left( 2\Vert \Psi \Vert _{\mathcal{W}}\left\vert t\right\vert
\mathrm{e}^{4\mathbf{D}\left\vert t\right\vert \Vert \Psi \Vert _{\mathcal{W}%
}}\left\Vert \mathbf{u}_{\cdot ,n_{1}}\right\Vert _{\ell ^{1}(\mathbb{N}%
)}+(1+n_{1})^{\varsigma }\right)  \notag \\
&&\times \left( \underset{x\in \mathfrak{L}}{\sup }\left( \sum_{m\in \mathbb{%
N}_{0}}\left( 1+m\right) ^{\varsigma }\sum\limits_{\mathcal{Z}\in \mathcal{D}%
(x,m)}\left\Vert \Psi _{\mathcal{Z}}\right\Vert _{\mathcal{U}}\right)
\right) ^{n}  \notag \\
&&\times \sum_{x_{n}\in \mathfrak{L}}\cdots \sum_{x_{1}\in \mathfrak{L}%
}\left( \sum_{T\in \mathcal{T}_{n+2}}\prod\limits_{\{j,l\}\in T}\frac{1}{%
(1+\left\vert x_{j}-x_{l}\right\vert )^{\varsigma \left( \max \{\mathfrak{d}%
_{T}(j),\mathfrak{d}_{T}(l)\}\right) ^{-1}}}\right)  \notag
\end{eqnarray}%
with $x_{0}\doteq y\in \mathfrak{L}$ and $x_{n+1}\doteq x\in \mathfrak{L}$.
If $\Psi \in \mathcal{W}$ and Condition (\ref{(3.3) NS generalized0}) holds
true, then one easily verifies (\ref{assumption boundedness2}) with $\mathbf{%
v}_{m}=\left( 1+m\right) ^{\varsigma }$. Recall also that the condition $%
\varsigma >\left( n+1\right) d$ yields (\ref{assumption boundedness3bis})
with $k=n+1$. Using these observations, one directly arrives at (\ref{ineq
sup macro cond0}), starting from (\ref{ineq sup macro cond2}).

Remark that $\mathcal{U}_{0}\subseteq \mathrm{Dom}(\delta ^{n})$ is proven
exactly in the same way. In fact, it is easier to prove and only requires
the condition $\varsigma >nd$ because we have in this case
multi--commutators of only order $n+1$.

\noindent (ii) The proof is very similar to the polynomial case. We omit the
details. See Theorem \ref{theorem exp tree decay copy(1)} (ii) and (\ref{ki
trivial estimate2}), and in the case (\ref{(3.3) NS generalized}) holds and $%
\Psi \in \mathcal{W}$, note again that Condition (\ref{assumption
boundedness2}) is satisfied with $\mathbf{v}_{m}=\mathrm{e}^{m\varsigma }$.
\end{proof}

\noindent We are now in position to prove\ Theorem \ref{Thm para regularite}%
. \medskip

\begin{proof}
Fix $k,q\in \{1,\ldots ,d\}$, $t\in \mathbb{R}$ and $m\in \mathbb{N}$. By
Theorem \ref{Theorem Lieb-Robinson copy(3)} (i), $\tau ^{(\omega )}\doteq
\{\tau _{t}^{(\omega )}\}_{t\in {\mathbb{R}}}$ is a $C_{0}$--group of $\ast $%
--auto%
\-%
morphisms with generator $\delta ^{(\omega )}$. It is, indeed, associated
with the interaction (\ref{full interaction}) and the potential defined by (%
\ref{static potential random}). If $\Psi ^{\mathrm{IP}}$ satisfies (\ref%
{(3.3) NS generalized0}), then Condition (\ref{(3.3) NS generalized0}) also
holds true for the full interaction (\ref{full interaction}). A similar
observation can be made when $\Psi ^{\mathrm{IP}}$ satisfies (\ref{(3.3) NS
generalized}).

Paramagnetic current observables (\ref{current observable}) are local
elements, i.e.,$\ I_{(x,y)}\in \mathcal{U}_{0}$ for any $(x,y)\in \mathfrak{L%
}^{2}$. Then, by Lemma \ref{Lemma utile para}, we thus compute from (\ref%
{defininion para coeff observable}) that, for any $m\in \mathbb{N}$ such
that $\mathcal{U}_{0}\subseteq \mathrm{Dom}(\delta ^{m})$,%
\begin{eqnarray}
&&\partial _{t}^{m+1}\left\{ \mathbb{E}\left[ \varrho ^{(\omega )}(\mathcal{C%
}_{\mathrm{p},l}^{(\omega )}\left( t\right) )\right] \right\} _{k,q}
\label{derivee m fois} \\
&=&-\frac{1}{\left\vert \Lambda _{l}\right\vert }\underset{x,y\in \Lambda
_{l}}{\sum }\mathbb{E}\left[ \varrho ^{(\omega )}\left( i[\tau
_{-t}^{(\omega )}\circ (\delta ^{(\omega )})^{m}(I_{\left( y+e_{q},y\right)
}),I_{\left( x+e_{k},x\right) }]\right) \right] \ .  \notag
\end{eqnarray}%
The last function of $\omega \in \Omega $ in the expectation value $\mathbb{E%
}[\ \cdot \ ]$ (associated with $\mathfrak{a}_{\Omega }$) is measurable,
because $\omega \mapsto \varrho ^{(\omega )}$ is, by definition, a random
invariant state while one can check that the map
\begin{equation*}
\omega \mapsto i[\tau _{-t}^{(\omega )}\circ (\delta ^{(\omega
)})^{m}(I_{\left( y+e_{q},y\right) }),I_{\left( x+e_{k},x\right) }]
\end{equation*}%
from $\Omega $ to $\mathcal{U}$ is continuous, using Theorem \ref{Theorem
Lieb-Robinson copy(3)} and the second Trotter--Kato approximation theorem
\cite[Chap. III, Sect. 4.9]{EngelNagel}. Additionally, if $\varrho ^{(\omega
)}$ is a passive state w.r.t. to $\tau ^{(\omega )}$ for any $\omega \in
\Omega $ then $\varrho ^{(\omega )}=\varrho ^{(\omega )}\circ \tau
_{t}^{(\omega )}$, see \cite[Theorem 1.1]{PW}. Therefore, it follows from (%
\ref{derivee m fois}) that
\begin{eqnarray}
&&\partial _{t}^{m+1}\left\{ \bar{\varrho}\left( \mathcal{C}_{\mathrm{p}%
,l}^{(\omega )}\left( t\right) \right) \right\} _{k,q}
\label{derivee m fois1} \\
&=&\frac{1}{\left\vert \Lambda _{l}\right\vert }\underset{x,y\in \Lambda _{l}%
}{\sum }\mathbb{E}\left[ \varrho ^{(\omega )}\left( i[\tau _{t}^{(\omega
)}\left( I_{\left( x+e_{k},x\right) }\right) ,(\delta ^{(\omega
)})^{m}(I_{\left( y+e_{q},y\right) })]\right) \right] \ .  \notag
\end{eqnarray}%
Now, if (\ref{static potential0}) and (\ref{translation invariance random
state}) hold true, then, by using the fact that $\mathfrak{a}_{\Omega }$ is
also a translation invariant probability measure (it is even ergodic), we
obtain from (\ref{derivee m fois1}) that, for any $m\in \mathbb{N}$ such
that $\mathcal{U}_{0}\subseteq \mathrm{Dom}(\delta ^{m})$,%
\begin{eqnarray}
&&\partial _{t}^{m+1}\left\{ \bar{\varrho}\left( \mathcal{C}_{\mathrm{p}%
,l}^{(\omega )}\left( t\right) \right) \right\} _{k,q}
\label{derivee seconde1derivee seconde1} \\
&=&\underset{y\in \mathfrak{L}}{\sum }\xi _{l}\left( y\right) \mathbb{E}%
\left[ \varrho ^{(\omega )}\left( i[\tau _{t}^{(\omega )}\left( I_{\left(
e_{k},0\right) }\right) ,(\delta ^{(\omega )})^{m}\circ \chi _{y}(I_{\left(
e_{q},0\right) })]\right) \right]  \notag
\end{eqnarray}%
with%
\begin{equation*}
\xi _{l}\left( y\right) \doteq \frac{1}{\left\vert \Lambda _{l}\right\vert }%
\underset{x\in \Lambda _{l}}{\sum }\mathbf{1}_{\left\{ y\in \Lambda
_{l}-x\right\} }\in \left[ 0,1\right] \ ,\qquad y\in \mathfrak{L}\ ,\ l\in
\mathbb{R}^{+}\ .
\end{equation*}%
For any $l\in \mathbb{R}^{+}$, the map $y\mapsto \xi _{l}\left( y\right) $
on $\mathfrak{L}$ has finite support and, for any $y\in \mathfrak{L}$,
\begin{equation}
\underset{l\rightarrow \infty }{\lim }\xi _{l}\left( y\right) =1\ .
\label{limit xi}
\end{equation}%
As a consequence, if (i) $\Psi ^{\mathrm{IP}}$ satisfies (\ref{(3.3) NS
generalized0}) and $d(m+1)<\varsigma $ or (ii) $\Psi ^{\mathrm{IP}}$
satisfies (\ref{(3.3) NS generalized}), then, by combining Lemma \ref{Lemma
utile para} with Lebesgue's dominated convergence theorem, one gets from (%
\ref{paramagnetic conductivity}) and (\ref{derivee seconde1derivee seconde1}%
)--(\ref{limit xi}) that the map
\begin{equation*}
t\mapsto \partial _{t}^{m+1}\left\{ \mathbb{E}\left[ \varrho ^{(\omega )}(%
\mathcal{C}_{\mathrm{p},l}^{(\omega )}\left( t\right) )\right] \right\} =%
\mathbb{E}\left[ \partial _{t}^{m+1}\varrho ^{(\omega )}(\mathcal{C}_{%
\mathrm{p},l}^{(\omega )}\left( t\right) )\right]
\end{equation*}%
converges uniformly on compacta, as $l\rightarrow \infty $, to the
continuous function $\partial _{t}^{m+1}\mathbf{\Xi }_{\mathrm{p}}\in C(%
\mathbb{R};\mathcal{B}(\mathbb{R}^{d}))$.
\end{proof}

\subsection{AC--Conductivity Measure\label{Sect AC}}

By applying \cite[Theorems 5.2 and 5.6 (p), Remark 5.3]{OhmVI} to the
interacting fermion system under consideration we get a \emph{L\'{e}%
vy--Khintchine representation} of the paramagnetic (in--phase) conductivity $%
\mathbf{\Xi }_{\mathrm{p}}$: Assume $\Psi ^{\mathrm{IP}}$ satisfies (\ref%
{(3.3) NS generalized0}) with $\varsigma >2d$ (polynomial decay). Then,
there is a unique finite and symmetric $\mathcal{B}_{+}(\mathbb{R}^{d})$%
--valued measure $\mathbf{\mu }$ on $\mathbb{R}$ such that, for any $t\in
\mathbb{R}$,%
\index{Paramagnetic conductivity!L\'{e}vy--Khintchine representation}
\begin{equation}
\mathbf{\Xi }_{\mathrm{p}}\left( t\right) =-%
\frac{t^{2}}{2}\mathbf{\mu }\left( \left\{ 0\right\} \right) +\int\nolimits_{%
\mathbb{R}\backslash \left\{ 0\right\} }\left( \cos \left( t\nu \right)
-1\right) \nu ^{-2}\mathbf{\mu }\left( \mathrm{d}\nu \right) \ .
\label{Levy--Khintchine}
\end{equation}%
Here, $\mathcal{B}_{+}(\mathbb{R}^{d})\subset \mathcal{B}(\mathbb{R}^{d})$
stands for the set of positive linear operators on $\mathbb{R}^{d}$, i.e.,
symmetric operators w.r.t. to the canonical scalar product of $\mathbb{R}%
^{d} $ with positive eigenvalues. The (in--phase)\ \emph{AC--cond%
\-%
uctivity measure} is defined from the measure $\mathbf{\mu }$ as follows:

\begin{definition}[AC--conductivity measure]
\label{def second law2 copy(2)}\mbox{ }\newline
We name the L\'{e}vy measure $\mu _{\mathrm{AC}}$, the restriction of $\nu
^{-2}\mathbf{\mu }\left( \mathrm{d}\nu \right) $ to $\mathbb{R}\backslash
\{0\}$, the (in--phase)\emph{\ }AC--cond%
\-%
uctivity measure.%
\index{AC--conductivity measure}
\end{definition}

Indeed, by \cite[Theorems 5.1 and 5.6 (p)]{OhmVI}, one checks that $\mu _{%
\mathrm{AC}}$ quantifies the energy (or heat) production $Q$ per unit volume
due to the component of frequency $\nu \in \mathbb{R}\backslash \{0\}$ of
the electric field, in accordance with Joule's law in the AC--regime:
Indeed, for any smooth electric field $E\left( t\right) =\mathcal{E}\left(
t\right)
\vec{w}$ with $\vec{w}\in \mathbb{R}^{d}$, $\mathcal{E}\doteq -\partial _{t}%
\mathcal{A}(t)$ and $\mathcal{A}\in C_{0}^{\infty }(\mathbb{R};\mathbb{R})$,
the total heat per unit volume produced by the electric field (after being
switch off) is equal to%
\begin{equation*}
Q=\frac{1}{2}\int\nolimits_{\mathbb{R}}\mathrm{d}s_{1}\int\nolimits_{\mathbb{%
R}}\mathrm{d}s_{2}\mathcal{E}_{s_{2}}\mathcal{E}_{s_{1}}\langle \vec{w},%
\mathbf{\Xi }_{\mathrm{p}}\left( s_{1}-s_{2}\right) \vec{w}\rangle _{\mathbb{%
R}^{d}}\ .
\end{equation*}%
If the Fourier transform $\mathcal{\hat{E}}$ of $\mathcal{E}\in
C_{0}^{\infty }(\mathbb{R};\mathbb{R})$ has support away from $\nu =0$, then
\begin{equation*}
Q=\frac{1}{2}\int_{\mathbb{R}\backslash \{0\}}|\mathcal{\hat{E}}\left( \nu
\right) |^{2}\ \langle \vec{w},\mu _{\mathrm{AC}}\left( \mathrm{d}\nu
\right) \vec{w}\rangle _{\mathbb{R}^{d}}\ .
\end{equation*}%
Moreover, by\ using \cite[Theorems 4.2 and 5.6 (p)]{OhmVI} together with
simple computations, one checks that the in--phase linear response currents $%
J_{\mathrm{in}}$, which is the component of the total current producing
heat, also called active current, is equal in this case to%
\begin{equation*}
J_{\mathrm{in}}(t)=\int\nolimits_{\mathbb{R}\backslash \{0\}}\mathcal{\hat{E}%
}\left( \nu \right) \ \mathrm{e}^{i\nu t}\ \mu _{\mathrm{AC}}\left( \mathrm{d%
}\nu \right) \vec{w}\ .
\end{equation*}

By (\ref{Levy--Khintchine}) and Definition \ref{def second law2 copy(2)},
observe that the AC--conducti%
\-%
vity measure $\mu _{\mathrm{AC}}$ of the system under consideration is a L%
\'{e}vy measure. This is reminiscent of experimental observations of other
quantum phenomena like (subrecoil) laser cooling \cite{9780511755668}. In
fact, an alternative effective description of the pheno%
\-%
menon of linear conductivity by using L\'{e}vy processes in Fourier space is
discussed in \cite[Section 6]{OhmVI}.

The explicit form of the conductivity measure for concrete models (like the
Hubbard model, for instance) is still an open problem. However, in \cite[%
Section 5.3]{OhmVI}, we were able to qualitatively compare the
AC--conductivity measure associated with the celebrated Drude model with the
L\'{e}vy measure $\mu _{\mathrm{AC}}$ given by Definition \ref{def second
law2 copy(2)}. Indeed, the (in--phase) AC--conductivity measure obtained
from the Drude model is absolutely continuous w.r.t. the Lebesgue measure
with the function%
\index{AC--conductivity measure!Drude model}
\begin{equation}
\nu \mapsto \vartheta _{\mathrm{T}}\left( \nu \right) \sim
\frac{\mathrm{T}}{1+\mathrm{T}^{2}\nu ^{2}}  \label{drude function}
\end{equation}%
being the corresponding Radon--Nikodym derivative. Here, the \emph{%
relaxation time} $\mathrm{T}>0$ is related to the mean time interval between
two collisions of a charged carrier with defects in the crystal. See for
instance \cite[Section 1]{OhmIV} for more discussions. This measure \emph{%
heavily overestimates} $\mu _{\mathrm{AC}}$ at high frequencies. Indeed, as
explained in \cite[Section 5.3]{OhmVI}, by finiteness of the positive
measure $\mathbf{\mu }$, the AC--conductivity measure satisfies
\begin{equation}
\mu _{\mathrm{AC}}\left( \left[ \nu ,\infty \right) \right) \leq \nu ^{-2}%
\mathbf{\mu }\left( \left[ \nu ,\infty \right) \right) \leq \nu ^{-2}\mathbf{%
\mu }\left( \mathbb{R}\right) \ ,\qquad \nu \in \mathbb{R}^{+}\ ,
\label{AC behavior}
\end{equation}%
provided $\Psi ^{\mathrm{IP}}$ satisfies (\ref{(3.3) NS generalized0}) with $%
\varsigma >2d$. The same property of course holds for negative frequencies,
by symmetry of $\mathbf{\mu }$ (w.r.t. $\nu $). Compare (\ref{AC behavior})
with (\ref{drude function}). From Theorem \ref{Thm para regularite}, much
stronger results on the frequency decay of $\mu _{\mathrm{AC}}$ can be
obtained if the interaction $\Psi ^{\mathrm{IP}}$ is fast decaying in space:

\begin{satz}[Moments of AC--conductivity measures]
\label{thm moment mu}\mbox{
}\newline
Let $\lambda \in \mathbb{R}_{0}^{+}$, $\Psi ^{\mathrm{IP}}\in \mathcal{W}$
satisfying (\ref{static potential0}), and assume that the map $\omega
\mapsto \varrho ^{(\omega )}$ is a random invariant passive state.
\index{AC--conductivity measure}\newline
\emph{(i)} Polynomial decay: Assume $\Psi ^{\mathrm{IP}}$ satisfies (\ref%
{(3.3) NS generalized0}) with $\varsigma >2d$. Then, for any $m\in \mathbb{N}
$ satisfying $d(m+1)<\varsigma $,
\begin{equation}
\int\nolimits_{\mathbb{R}\backslash \{0\}}\nu ^{m+1}\mu _{\mathrm{AC}}\left(
\mathrm{d}\nu \right) \in \mathcal{B}_{+}(\mathbb{R}^{d})\ ,  \label{moemnt}
\end{equation}%
i.e., the $(m+1)$-th moment of the measure $\mu _{\mathrm{AC}}$ exists.%
\newline
\emph{(ii)} Exponential decay: Assume $\Psi ^{\mathrm{IP}}$ satisfies (\ref%
{(3.3) NS generalized}). Then, (\ref{moemnt}) holds true for all $m\in
\mathbb{N}$.
\end{satz}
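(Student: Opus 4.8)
The plan is to read the moment estimate off the L\'{e}vy--Khintchine representation (\ref{Levy--Khintchine}) by differentiating it twice in $t$ and feeding in the regularity of $\mathbf{\Xi}_{\mathrm{p}}$ from Theorem \ref{Thm para regularite}. Since $\varsigma>2d$, the finite symmetric $\mathcal{B}_{+}(\mathbb{R}^{d})$--valued measure $\mathbf{\mu}$ of (\ref{Levy--Khintchine}) exists, and finiteness of $\mathbf{\mu}$ lets one differentiate (\ref{Levy--Khintchine}) twice under the integral (dominated convergence, with dominating constant $\mathbf{\mu}(\mathbb{R})$ at the second step) to get, for all $t\in\mathbb{R}$,
\begin{equation*}
-\partial_{t}^{2}\mathbf{\Xi}_{\mathrm{p}}(t)=\widehat{\mathbf{\mu}}(t)\doteq\int_{\mathbb{R}}\cos(t\nu)\,\mathbf{\mu}(\mathrm{d}\nu)\ ,
\end{equation*}
so $-\partial_{t}^{2}\mathbf{\Xi}_{\mathrm{p}}$ is the cosine transform of the finite measure $\mathbf{\mu}$. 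Because $\nu^{2}\mu_{\mathrm{AC}}(\mathrm{d}\nu)$ is the restriction of $\mathbf{\mu}$ to $\mathbb{R}\backslash\{0\}$, one has $\int_{\mathbb{R}\backslash\{0\}}|\nu|^{m+1}\mu_{\mathrm{AC}}(\mathrm{d}\nu)=\int_{\mathbb{R}\backslash\{0\}}|\nu|^{m-1}\mathbf{\mu}(\mathrm{d}\nu)$, and since $\mathbf{\mu}$ is finite and $m\geq1$ the part near $\nu=0$ is harmless; so the assertion reduces to finiteness of the $(m-1)$--th absolute moment of $\mathbf{\mu}$, and positivity of (\ref{moemnt}) is then inherited from $\mathbf{\mu}\geq0$ (e.g.\ by testing against vectors of $\mathbb{R}^{d}$).

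Next I would use the smoothness of $\mathbf{\Xi}_{\mathrm{p}}$: Theorem \ref{Thm para regularite}(i) gives $\mathbf{\Xi}_{\mathrm{p}}\in C^{m+1}$ when $d(m+1)<\varsigma$, hence $\widehat{\mathbf{\mu}}=-\partial_{t}^{2}\mathbf{\Xi}_{\mathrm{p}}\in C^{m-1}$, while Theorem \ref{Thm para regularite}(ii) gives $\mathbf{\Xi}_{\mathrm{p}}\in C^{\infty}$, hence $\widehat{\mathbf{\mu}}\in C^{\infty}$, under exponential decay. The even absolute moments of $\mathbf{\mu}$ then follow by induction on $k$: assuming $\mathbf{\mu}_{2k-2}\doteq\nu^{2k-2}\mathbf{\mu}$ is finite, one has $\widehat{\mathbf{\mu}_{2k-2}}=(-1)^{k-1}\partial_{t}^{2k-2}\widehat{\mathbf{\mu}}\in C^{2}$ as soon as $\widehat{\mathbf{\mu}}\in C^{2k}$, and Fatou's lemma applied to the nonnegative quantities $\int\frac{1-\cos(h\nu)}{h^{2}}\,\mathbf{\mu}_{2k-2}(\mathrm{d}\nu)$ — whose limit as $h\to0$ is $-\tfrac{1}{2}\partial_{t}^{2}\widehat{\mathbf{\mu}_{2k-2}}(0)<\infty$ by the second--order Taylor expansion of the even $C^{2}$ function $\widehat{\mathbf{\mu}_{2k-2}}$ — gives $\int\nu^{2k}\mathbf{\mu}<\infty$ for every $2k\leq m-1$. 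If $m-1$ is even, this closes case (i); under exponential decay this already delivers all even moments.

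The remaining, and most delicate, case is $m-1=2k+1$ odd, where $\int\nu^{2k}\mathbf{\mu}<\infty$ does not suffice: one needs $\int|\nu|\,\mathbf{\mu}_{2k}<\infty$ for the finite measure $\mathbf{\mu}_{2k}=\nu^{2k}\mathbf{\mu}$. Using Tonelli's theorem together with $\int_{\mathbb{R}}|t|^{-2}(1-\cos(t\nu))\,\mathrm{d}t=c_{1}|\nu|$, one has
\begin{equation*}
\int_{\mathbb{R}}|\nu|\,\mathbf{\mu}_{2k}(\mathrm{d}\nu)=c_{1}\int_{\mathbb{R}}\frac{\mathbf{\mu}_{2k}(\mathbb{R})-\widehat{\mathbf{\mu}_{2k}}(t)}{t^{2}}\,\mathrm{d}t\ ,\qquad \widehat{\mathbf{\mu}_{2k}}=(-1)^{k+1}\partial_{t}^{m}\mathbf{\Xi}_{\mathrm{p}}\ .
\end{equation*}
Away from $t=0$ the integrand is bounded (since $\widehat{\mathbf{\mu}_{2k}}$ is bounded and $t^{-2}$ is integrable at infinity), so everything hinges on $t\to0$. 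Plain $C^{1}$ of $\widehat{\mathbf{\mu}_{2k}}$ gives only $\widehat{\mathbf{\mu}_{2k}}(0)-\widehat{\mathbf{\mu}_{2k}}(t)=o(|t|)$, i.e.\ an $o(|t|^{-1})$ integrand, which is not integrable; what is needed — and this is the genuine extra input over bare $C^{m+1}$ — is a little H\"{o}lder regularity of the top derivative $\partial_{t}^{m+1}\mathbf{\Xi}_{\mathrm{p}}=\pm\,\widehat{\mathbf{\mu}_{2k}}{}'$ near $0$. I would extract it from the estimates underlying Lemma \ref{Lemma utile para} by interpolating, for $t$ near $t'$, between the crude Lipschitz bound $\Vert\tau_{t}(B_{1})-\tau_{t'}(B_{1})\Vert_{\mathcal{U}}\leq|t-t'|\,\Vert\delta(B_{1})\Vert_{\mathcal{U}}$ and the polynomial spatial tail furnished by the Lieb--Robinson bounds for multi--commutators (Theorems \ref{Theorem Lieb-Robinson copy(1)}--\ref{theorem exp tree decay copy(1)}): splitting the lattice sum over $y$ at radius $R$, the near part costs $\sim R^{d}|t-t'|$ and the far part $\sim R^{-(\varsigma-dm)}$, and optimizing over $R$ yields a bound $\leq D\,|t-t'|^{\gamma}$ with $\gamma=\gamma(d,m,\varsigma)>0$ precisely because the strict inequality $d(m+1)<\varsigma$ gives $\varsigma-dm>d>0$.

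With $\widehat{\mathbf{\mu}_{2k}}{}'$ H\"{o}lder--$\gamma$ at $0$ and $\widehat{\mathbf{\mu}_{2k}}{}'(0)=0$ (by evenness of $\widehat{\mathbf{\mu}_{2k}}$), one gets $\mathbf{\mu}_{2k}(\mathbb{R})-\widehat{\mathbf{\mu}_{2k}}(t)=O(|t|^{1+\gamma})$, the displayed integral converges, and $\int|\nu|^{m-1}\mathbf{\mu}<\infty$; combined with the even case this proves (i), and positivity of (\ref{moemnt}) follows from $\mathbf{\mu}\geq0$ together with symmetry of $\mathbf{\mu}$ in $\nu$. The exponential--decay case (ii) runs along the same lines and is easier: $\mathbf{\Xi}_{\mathrm{p}}\in C^{\infty}$ makes $\partial_{t}^{m+1}\mathbf{\Xi}_{\mathrm{p}}$ locally Lipschitz for every $m$, so the above scheme produces all moments. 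I expect the main obstacle to be exactly this upgrade from $C^{m+1}$ to a H\"{o}lder (or Lipschitz) modulus for the top time--derivative of $\mathbf{\Xi}_{\mathrm{p}}$; the interpolation between temporal Lipschitz continuity and the spatial decay of the tree--decay bounds is the natural way to get it, and the slack in $d(m+1)<\varsigma$ is precisely what it relies on.
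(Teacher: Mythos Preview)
Your overall strategy coincides with the paper's: differentiate the L\'evy--Khintchine representation (\ref{Levy--Khintchine}) twice to identify $-\partial_t^2\mathbf{\Xi}_{\mathrm{p}}$ as the Fourier (cosine) transform of the finite symmetric measure $\mathbf{\mu}$, then feed in the regularity of $\mathbf{\Xi}_{\mathrm{p}}$ from Theorem~\ref{Thm para regularite} to deduce moment bounds for $\mathbf{\mu}$, and finally translate back to $\mu_{\mathrm{AC}}$ via Definition~\ref{def second law2 copy(2)}. Your inductive Fatou argument for the even moments is essentially the textbook proof of the result the paper simply cites (Durrett, Klenke): $C^{n}$ of the characteristic function with $n$ even implies existence of the $n$th moment.

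The one substantive divergence is your treatment of the odd exponent $m-1$. The paper dispatches this in one line: $\mathbf{\mu}$ is \emph{symmetric}, so $\int_{\mathbb{R}}\nu^{m-1}\mathbf{\mu}(\mathrm{d}\nu)=0\in\mathcal{B}_+(\mathbb{R}^d)$ whenever $m-1$ is odd. Your proposed route through H\"older regularity of $\partial_t^{m+1}\mathbf{\Xi}_{\mathrm{p}}$ via interpolation between the temporal Lipschitz bound and the spatial tree--decay tails is a significant detour. It is aimed at the absolute integrability $\int|\nu|^{m-1}\mathbf{\mu}(\mathrm{d}\nu)<\infty$, which the paper does not address explicitly; your concern here is legitimate if one reads ``the moment exists'' as ``the Lebesgue integral converges absolutely'', and your interpolation sketch is plausible since the strict inequality $d(m+1)<\varsigma$ indeed leaves slack. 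But the paper simply reads the odd moment as zero by symmetry (the lower even moments being already finite), and this is the intended shortcut. So: same architecture, same even case, but for the odd case the paper's argument is one line of symmetry while yours is a page of H\"older estimates --- correct in spirit, but far more than what is used.
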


\begin{proof}
By (\ref{Levy--Khintchine}) and Lebesgue's dominated convergence theorem,
for any $t\in \mathbb{R}$,
\begin{equation*}
\partial _{t}^{2}\mathbf{\Xi }_{\mathrm{p}}\left( t\right) =-\int\nolimits_{%
\mathbb{R}}\cos \left( t\nu \right) \mathbf{\mu }\left( \mathrm{d}\nu
\right) =-\ \int\nolimits_{\mathbb{R}}\mathrm{e}^{it\nu }\mathbf{\mu }\left(
\mathrm{d}\nu \right)
\text{ },
\end{equation*}%
provided $\varsigma >2d$ in (\ref{(3.3) NS generalized0}) (with $\Psi =\Psi
^{\mathrm{IP}}$). In other words, the finite and symmetric $\mathcal{B}_{+}(%
\mathbb{R}^{d})$--valued measure $\mathbf{\mu }$ on $\mathbb{R}$ can be seen
as the Fourier transform of $-\partial _{t}^{2}\mathbf{\Xi }_{\mathrm{p}%
}\left( t\right) $ or, that is, as the characteristic function of $\mathbf{%
\mu }$. Therefore, by well--known properties of characteristic functions
(see, e.g., \cite[Theorem 3.3.9.]{Durrett} for the special case $n=2$ and
\cite[Theorem 15.34]{Klenke} for the general case $n\in 2\mathbb{N}_{0}$),
for any even $n\in \mathbb{N}_{0}$, $\partial _{t}^{2}\mathbf{\Xi }_{\mathrm{%
p}}\in C^{n}(\mathbb{R};\mathcal{B}(\mathbb{R}^{d}))$ implies that%
\begin{equation*}
\int\nolimits_{\mathbb{R}}\nu ^{n}\mathbf{\mu }\left( \mathrm{d}\nu \right)
\in \mathcal{B}_{+}(\mathbb{R}^{d})\ .
\end{equation*}%
If $m\in \mathbb{N}_{0}$ is odd, then, by the above assertion for $n<m$ and
the symmetry of the measure $\mathbf{\mu }$ (which follows from the symmetry
of $\mu _{\mathrm{AC}}$), we conclude that%
\begin{equation*}
\int\nolimits_{\mathbb{R}}\nu ^{m}\mathbf{\mu }\left( \mathrm{d}\nu \right)
=0\in \mathcal{B}_{+}(\mathbb{R}^{d})\ .
\end{equation*}%
This observation combined with Theorem \ref{Thm para regularite} and\
Definition \ref{def second law2 copy(2)} yields Assertions (i)--(ii).
\end{proof}

\begin{bemerkung}[Fermion systems with random Laplacians]
\label{Fermion systems with random Laplacians copy(1)}\mbox{
}\newline
The same assertion holds for the random models treated in \cite{OhmVI}. See
also Remark \ref{Fermion systems with random Laplacians}.
\end{bemerkung}

\noindent This last theorem is a significant improvement of the asymptotics (%
\ref{AC behavior}) of \cite{OhmVI} and is a straightforward application of
Lieb--Robinson bounds for multi--com%
\-%
mutators of high orders (Theorems \ref{Theorem Lieb-Robinson copy(1)}--\ref%
{theorem exp tree decay copy(1)}), see Lemma \ref{Lemma utile para}.\bigskip

\noindent \textit{Acknowledgments:} This research is supported by the agency
FAPESP under Grant 2013/13215-5 as well as by the Basque Government through
the grant IT641-13 and the BERC 2014-2017 program and by the Spanish
Ministry of Economy and Competitiveness MINECO: BCAM Severo Ochoa
accreditation SEV-2013-0323, MTM2014-53850.

\printindex%

\end{document}